\newtheorem{theorem}{Theorem}[section]
\newtheorem{lemma}[theorem]{Lemma}
\newtheorem{proposition}[theorem]{Proposition}
\newtheorem{definition}[theorem]{Definition}
\newtheorem{corollary}[theorem]{Corollary}
\theoremstyle{definition}
\newtheorem{example}[theorem]{Example}
\numberwithin{equation}{section}
\begin{document}

\bibliographystyle{alpha}


\pagestyle{empty}

%
%
%
%

\null\vfill
  \begin{center}
    { \Huge {\bfseries {Program Equivalence for \\[1ex] Algebraic Effects via Modalities}} \par}
{\large \vspace*{40mm} 
	{\font\beltcrestfont=oxbeltcrest {\beltcrestfont \char0} \par}
	\vspace*{25mm}}
    {{\Large Cristina Matache} \par}
{\large \vspace*{1ex}
    {{St Cross College} \par}
\vspace*{1ex}
    {University of Oxford \par}
\vspace*{25mm}
    {{A thesis submitted for the degree of} \par}
\vspace*{1ex}
    {\it {MSc in Computer Science} \par}
\vspace*{2ex}
    {Trinity 2018}}
  \end{center}
  \null\vfill

%


\newpage
\thispagestyle{empty}
\mbox{}

\pagestyle{plain}

\newpage
\vspace*{3cm}
{\centering
\section*{Abstract}
}

This dissertation is concerned with the study of program equivalence and algebraic effects as they arise in the theory of programming languages. Algebraic effects represent impure behaviour in a functional programming language, such as input and output, exceptions, nondeterminism etc.~all treated in a generic way. Program equivalence aims to identify which programs can be considered equal in some sense. This question has been studied for a long time but has only recently been extended to languages with algebraic effects, which are a newer development. Much work remains to be done in order to understand program equivalence in the presence of algebraic effects. In particular, there is no characterisation of contextual equivalence using a logic.

We define a logic whose formulas express properties of higher-order programs with algebraic effects. We then investigate three notions of program equivalence for algebraic effects: \emph{logical equivalence} induced by the aforementioned logic, \emph{applicative bisimilarity} and \emph{contextual equivalence}. For the programming language used in this dissertation, we prove that they all coincide.

Therefore, the main novel contribution of the dissertation is defining the first logic for algebraic effects whose induced program equivalence coincides with contextual equivalence.
\vspace*{\fill}

\newpage
\vspace*{3cm}
{\centering
\section*{Acknowledgements}
}

I would like to thank my supervisor, Sam Staton, for his guidance and patience without which this project would not have been possible. I am also grateful to Alex Simpson and Niels Voorneveld for insightful discussions about their work. Finally, I want to thank my parents for their support and for making this year in Oxford possible.
\vspace*{\fill}

\tableofcontents
\listoffigures


\pagestyle{headings}

\chapter{Introduction}

This dissertation is a theoretical study of program equivalence for a higher-order language with algebraic effects. In particular, we are interested in finding a logic of program properties that characterises contextual equivalence. This chapter reviews the history of program equivalence and explains why it is a hard but nevertheless interesting problem. At the same time, we outline the main research questions that led to the present work. Finally, we summarise the contributions of our work and the structure of the dissertation.

\section{Motivation}

Although undecidable in general, program equivalence is a fundamental problem both in the theory of programming languages and in formal verification. Programming language researchers are concerned with  theoretical ways of reasoning about program equivalence based on formal semantics, like denotational or operational semantics. We adopt this point of view in the dissertation. Verification combines theory with building automated tools for checking program equivalence whenever possible.

Program equivalence seeks to establish when two programs are interchangeable, or when they behave the same, for some definition of behaviour. For the semantics of programming languages, this is useful because one can regard the meaning of a program as the equivalence class of programs that it belongs to. From a more practical perspective, checking program equivalence can determine whether a program implements a specification, which is itself given as a more familiar program. For example, establishing whether certain compiler optimisations are safe can be done in this way.

The first definition of program equivalence that springs to mind is that two programs are equivalent when they return the same result. 
In the case of a Turing-complete language, one would also like to take into account the possibility of divergence. Therefore, naively two programs are equivalent if they either return the same result or they both diverge.

This definition has two problems. Firstly, establishing divergence of programs is undecidable due to the halting problem, so program equivalence is in general hard to calculate. Working around this issue is mainly the focus of the formal verification community. Secondly, the above naive definition of program equivalence becomes unclear in the presence of higher-order functions.
A higher-order function can receive as arguments and return as results other functions, rather than just ground data, such as natural numbers or booleans. As an example, consider the following programs:
\begin{align*}
one&=\lambda f.\lambda y.(f\ y)	\\
two&=\lambda f.\lambda y.f\ (f\ y)	\\
two'&=\lambda f.\lambda y.f\ (one\ f\ y)
\end{align*}
where $one$ and $two$ represent the first two Church numerals. Morally, the functions $two$ and $two'$ should be equivalent in the $\lambda$-calculus, although they are not syntactically equal. Therefore the problem is: what does it mean for two higher-order functions to be the same?

Over the years, the programming languages community has proposed many answers to this question. At the beginning, they focussed  on pure higher-order languages, with no side effects, such as the $\lambda$-calculus and PCF \cite{DBLP:journals/tcs/Plotkin77}. PCF is a simply-typed extension of the $\lambda$-calculus with general recursion and a datatype of natural numbers and is therefore Turing-complete. These languages were chosen for their simplicity and  because they  have had well-understood formal semantics for a long time. 

In this context, program equivalence based on denotational semantics was proposed: if two programs have the same denotation, they are equivalent. An alternative notion based on operational semantics is Morris-style contextual equivalence \cite{Mor69}: two programs are equivalent if they have the same \emph{observable} behaviour in all program contexts.
In this dissertation, we are not concerned with denotational equality; contextual equivalence will be discussed more in Section \ref{Sec_prog_equiv_intro}. Notice that it improves on the naive definition of program equivalence because, to test the equality of functions, a context can provide them with arguments and observe their reduction behaviour. 

To address some of the shortcomings of denotational equality and contextual equivalence, other notions of program equivalence that use operational semantics have been proposed. Examples include logical equivalence, that is, equivalence induced by satisfaction of formulas in a logic of program properties (e.g.~\cite{DBLP:journals/jacm/HennessyM85}), applicative bisimilarity \cite{Abr90} and logical relations \cite{DBLP:journals/jsyml/Tait67}. Logical equivalence is the main focus of this dissertation. Notably, it has strong connections with formal verification, where modal logics such as temporal logics \cite{Pnu77} and the modal $\mu$-calculus \cite{DBLP:journals/tcs/Kozen83} are used to specify and verify properties of programs. Applicative bisimilarity will be discussed in more detail in Section~\ref{Sec_prog_equiv_intro}, but we will not be concerned with logical relations for program equivalence.

Given the wide variety of program equivalences mentioned so far, a natural question is comparing them: do they coincide, is one included in the other or are they altogether different? This question has been studied as new definitions arose. For example, Plotkin showed that for PCF denotational equality implies contextual equivalence.

Gradually, the questions identified so far have been posed for more complex languages than the $\lambda$-calculus or PCF. An interesting addition are impure operations such as input and output, exceptions, nondeterminism, state or continuations, whose semantics has been studied more recently. They are known as \emph{computational effects}. In his influential work, Eugenio Moggi \cite{DBLP:journals/iandc/Moggi91} gave a general denotational semantics for computational effects using monads. 

Subsequently, Plotkin and Power \cite{DBLP:conf/fossacs/PlotkinP01, DBLP:conf/fossacs/PlotkinP02, DBLP:journals/acs/PlotkinP03} initiated a program of research concerned with \emph{algebraic effects}, a subset of computational effects whose behaviour can be axiomatised by a set of equations. All the example of effects mentioned before are algebraic with the exception of continuations and exception handling.
The purpose of algebraic effects is to give a unified, uniform treatment of the semantics of effects and of their combinations \cite{DBLP:journals/tcs/HylandPP06}. The latter cannot easily be achieved using Moggi's semantics.  

In this context, a new question becomes apparent: when are two higher-order programs \emph{exhibiting algebraic effects} equivalent? The natural starting point is to extend the existing notions of program equivalence to languages with algebraic effects, and once again compare the resulting relations. A lot of work has been done in this direction for specific effects, such as nondeterminism (e.g.\ \cite{LasPhd}) and probabilistic choice (e.g.\ \cite{DBLP:conf/esop/CrubilleL14}). Ideally however, a notion of program equivalence should be applicable to \emph{any} algebraic effect.

Several such notions of program equivalence for generic algebraic effects have been developed. Johann, Simpson and Voigtl{\"a}nder \cite{DBLP:conf/lics/JohannSV10} study contextual equivalence and a corresponding logical relation. Dal Lago, Gavazzo and Levy \cite{DBLP:conf/lics/LagoGL17} are concerned with applicative bisimilarity. Plotkin and Pretnar \cite{DBLP:conf/lics/PlotkinP08} propose a logic for algebraic effects that is sound with respect to other notions of program equivalence, but not complete in general. Simpson and Voorneveld \cite{SimV18} propose a modal logic whose induced program equivalence coincides with applicative bisimilarity, but not with contextual equivalence.

Thus, the main question of the dissertation arises: can we find a logic that \emph{characterises} contextual equivalence for a higher-order language with generic algebraic effects?

\section{Contributions}

The programming language we consider in this dissertation is named ECPS. It is a call-by-value continuation-passing variant of PCF with generic algebraic effects. It is a higher-order Turing-complete language. 

Being a continuation-passing language means that functions receive an additional argument which specifies how the computation should proceed once the function terminates. This argument is called a \emph{continuation}. Compared to the direct style of programming, continuation-passing style makes control flow and the order of evaluation explicit. Thanks to these features, continuation-passing languages are often used as intermediate languages inside compilers. So ECPS could be seen as a simple variant of an intermediate language.

In the dissertation, we are concerned with three notions of program equivalence for ECPS: logical equivalence, contextual equivalence and applicative bisimilarity, and with the relationship between them. Most importantly, we wish to define a logic whose induced program equivalence coincides with contextual equivalence.

Any notion of program equivalence needs to satisfy two key properties: being an  equivalence relation and being compatible. Compatibility means that equivalent programs can be substituted for a variable in a program equation, thus allowing compositional reasoning about program equivalence. These requirements are both explained in more detail in Section \ref{Sec_prog_equiv_intro}.

The novel contributions of the dissertation are the following:
\begin{enumerate}
\item We study the relationship between ECPS and the language used in the work of Simpson and Voorneveld \cite{SimV18}. We provide a correct embedding of their language into ECPS (Theorem \ref{Thm_CPS_correct}).

\item We develop applicative bisimilarity for ECPS and prove it is a compatible equivalence relation (Lemma \ref{Lemm_(bi)sim_preord} and Theorem \ref{Thm_sim_compat}).

\item We define a logic whose formulas express properties of ECPS programs. We prove that  program equivalence induced by the logic coincides with applicative bisimilarity (Theorem \ref{Cor_bisim_is_log_equiv}). Therefore, logical equivalence is compatible.

\item We present two equivalent definitions of contextual equivalence for ECPS, which are both equivalence relations and compatible. We prove that contextual equivalence coincides with applicative bisimilarity (Theorem \ref{Thm_bisim_is_ctx_equiv}). This leads to the main result of the dissertation: logical equivalence coincides with contextual equivalence (Corollary \ref{Cor_logeq_is_ctxeq}).
\end{enumerate}

\section{Structure of the Dissertation}

Chapter~\ref{Chap_expl_prob} starts with an informal introduction to program equivalence and algebraic effects. It then reviews the work of Simpson and Voorneveld \cite{SimV18} since it is the most closely related to this dissertation.  
Finally, it exemplifies the distinction between contextual equivalence and logical equivalence in the context of their development.

Chapter~\ref{Chap_descr_method} introduces the language ECPS and its operational semantics. It then reviews two proof techniques, namely coinduction and logical relations.

Chapters~\ref{Chap_rel-epcf-ecps} to~\ref{Chap_ctx_equiv} contain the novel technical content. Chapter~\ref{Chap_rel-epcf-ecps} is concerned with justifying the use of the ECPS language. It gives a translation from the programming language used by Simpson and Voorneveld into ECPS and proves the translation correct.

The following three chapters study program equivalence. Chapter~\ref{Chap_bisim} defines applicative bisimilarity and proves its main properties. Chapter~\ref{Chap_mod_logic} introduces a logic for ECPS and proves that logical equivalence coincides with applicative bisimilarity. Finally, Chapter~\ref{Chap_ctx_equiv} develops contextual equivalence and proves it coincides with logical equivalence.

The last chapter reviews the material in the dissertation and surveys previous work. It then sketches several directions for future work and closes with some personal remarks.

Chapters~\ref{Chap_expl_prob} through \ref{Chap_ctx_equiv} all end with an accessible summary of their most important points. For an overview of the dissertation, one can consult the ``Chapter Summaries''.
Almost all the mathematical proofs completed as part of the project appear in the dissertation. To facilitate reading, routine or overlong proofs appear in the appendices rather than in the main body of the text. Therefore, Chapters~\ref{Chap_rel-epcf-ecps} to~\ref{Chap_ctx_equiv} all have a corresponding appendix.

\chapter{Background and Explanation of Problem}\label{Chap_expl_prob}

This chapter starts with an informal discussion of program equivalence and algebraic effects. It then reviews in some detail the work of Simpson and Voorneveld \cite{SimV18} on program equivalence for algebraic effects. The syntax and operational semantics of a programming language with algebraic effects named EPCF is introduced. Two forms of program equivalence are discussed: logical equivalence and applicative bisimilarity, and they are compared to contextual equivalence.

\section{Program Equivalence} \label{Sec_prog_equiv_intro}

As Pitts observes \cite{Pit11}, for a notion of program equivalence, or equality, to be useful it should be a congruence, that is, satisfy two key properties: being an \emph{equivalence relation} and being \emph{compatible}. The former allows reasoning through a chain of equations in order to establish that two programs $P_1$ and $P_n$ are equal:
\begin{equation*}
P_1 \simeq P_2 \simeq \ldots \simeq P_n.
\end{equation*}
Assuming that programs can take parameters, $P(x)$, compatibility means that we can substitute equivalent programs for a parameter in an equation:
\begin{equation*}
P_1(x) \simeq P_2(x) \text{ and } Q_1 \simeq Q_2 \implies P_1(Q_1) \simeq P_2(Q_2).
\end{equation*}
This property is important because it allows us to reason compositionally about programs. In order to decide whether two programs are equivalent, it suffices to investigate whether its subphrases are equivalent. When the parameter is allowed to be a function or a process, rather than just ground data, like booleans or integers, compatibility becomes even more important but also harder to establish.

One of the most intuitive notions of program equivalence is Morris-style \emph{contextual equivalence} \cite{Mor69}. Two programs are contextually equivalent if and only if they have the same observable behaviour in all program contexts:
\begin{equation*}
P_1 \simeq P_2 \quad \Longleftrightarrow \quad \forall C.\ \mathfrak{O}(C[P_1])=\mathfrak{O}(C[P_2]).
\end{equation*}
In the case of the untyped $\lambda$-calculus, a possible definition for the ``observable behaviour'' of program $P$, $\mathfrak{O}(P)$, is whether or not $P$ terminates, written $P{\Downarrow}$. Therefore $\mathfrak{O}(C[P_1])=\mathfrak{O}(C[P_2])$ becomes:
\begin{equation*}
C[P_1]{\Downarrow}\ \Longleftrightarrow\ C[P_2]{\Downarrow}.
\end{equation*}

However, contextual equivalence is difficult to establish for particular programs because of the quantification over all contexts. Therefore, Abramsky \cite{Abr90} proposed another notion of equivalence for the untyped $\lambda$-calculus named \emph{applicative bisimilarity}.

Bisimilarity was first defined by Milner \cite{Mil80} for the process calculus CCS which models concurrency. The main idea is that two processes are bisimilar if whenever one of them can advance by one step, the other can perform a matching step, and the two resulting processes are again bisimilar. The circularity of this definition of bisimilarity suggests that it can be defined coinductively as the greatest relation with a certain property.

In the case of the $\lambda$-calculus, the steps that programs can take are $\beta$-reduction steps. Therefore, applicative similarity, the one-sided version of bisimilarity, is defined as:
\begin{flushleft}
The greatest relation $\precsim$, such that $P_1\precsim P_2$ implies
\begin{multline*}
P_1 \longrightarrow^* \lambda x.P'_1 \implies \\
\exists P'_2 \text{ such that } P_2\longrightarrow^* \lambda x.P'_2 \text{ and for any value } v \text{, } P'_1[v/x]\precsim P'_2[v/x].
\end{multline*}
\end{flushleft}
Applicative bisimilarity is defined analogously.

For the $\lambda$-calculus, applicative bisimilarity coincides with contextual equivalence e.g. \cite{Abr90, Pit11}. Therefore, checking whether two $\lambda$-terms are bisimilar is a sound and complete proof technique for establishing contextual equivalence, easier to use in practice.

Another approach to program equivalence is to define a logic $\mathcal{L}$ whose formulas $\phi$ represent program properties. In this setting, two programs are equivalent if and only if they satisfy the same formulas in this logic:
\begin{equation*}
P_1 \simeq P_2 \quad \Longleftrightarrow \quad (\forall \phi \text{ in }\mathcal{L}.\ P_1\models \phi \Longleftrightarrow P_2\models \phi).
\end{equation*}
An example of a logic that describes program properties is Hennessy-Milner logic \cite{DBLP:journals/jacm/HennessyM85}, which concerns CCS processes. In fact, bisimilarity for CCS coincides with the equivalence induced by Hennessy-Milner logic.

In this dissertation, we are most interested in logical equivalence, so we will consider some informal examples of formulas.

\begin{example}
Consider a call-by-value simply-typed $\lambda$-calculus with natural numbers. A logical formula could be:
\begin{equation*}
\phi = \{3\} \mapsto \{2\}.
\end{equation*}
A function $f$ satisfies $\phi$ if, given argument $3$, the expression $(f\ x)$ reduces to $2$. Consider for example the following function:
\begin{equation*}
f = \lambda n.\ \mathbf{if}\ n=3\ \mathbf{then}\ \mathbf{pred}\ n\ \mathbf{else}\ \mathbf{succ}\ n.
\end{equation*} 
We can see that $f$ indeed satisfies $\phi$, but $f$ does not satisfy $\{4\}\mapsto\{3\}$.
\end{example}

\section{Algebraic Effects and Logical Properties}\label{Sec_alg_effects}

In general, programming languages that are purely functional do not provide ``impure'' operations such as input and output, nondeterministic or probabilistic choice, global state etc. These features are known as algebraic effects. To include them in a programming language, it suffices to add relevant operations to the language. 

As an example consider nondeterministic choice. This can be implemented by adding an operation $or(-,-)$, where, in the term $or(t_1,t_2)$, an external agent chooses nondeterministically whether to execute term $t_1$ or $t_2$. 

\begin{example}\label{Eg_boxdi}
Consider the simply-typed $\lambda$-calculus from the previous example extended with $or$. Now there is more than one value that a term may reduce to. For example:
\begin{equation*}
(g\ 3) \quad \text{where\ } g= \lambda n.\ or(\mathbf{pred}\ n,\ \mathbf{succ}\ n)
\end{equation*}
may reduce to either $2$ or $4$. The reduction behaviour of $(g\ 3)$ could be represented as a tree:
\begin{center}
\begin{tikzpicture}[level distance=0.8cm]
\node {$or$} 
	child { node {$2$} }
	child { node {$4$} };
\end{tikzpicture}
\end{center}

Recall the formula $\{3\}\mapsto \{2\}$ from the previous section. We can interpret it either as: ``the function \emph{always} returns $2$'' or ``\emph{may} return $2$''. As a result we have two new formulas: $\phi_1=\{3\}\mapsto \Box\{2\}$ and $\phi_2=\{3\}\mapsto \Diamond\{2\}$. We can see that $g$ satisfies the latter but not the former.
\end{example}

\begin{example}\label{Eg_dihigher}
As another example, consider a higher-order function:
\begin{equation*}
h = \lambda f.\ \lambda n.\ or(f\ (\mathbf{pred}\ n),\ f\ (\mathbf{succ}\ n)) : (\mathbbm{N}\rightarrow\mathbbm{N})\rightarrow (\mathbbm{N}\rightarrow\mathbbm{N}).
\end{equation*}
If we apply it to arguments $g$ and $2$, and then $g$ and $4$, the trees of $(h\ g\ 2)$ and $(h\ g\ 4)$ respectively are:
\begin{center}
\begin{tikzpicture}[level distance=0.8cm]
\node (n) {$or$}
	child { node {$or$} 
		child { node {$0$} }
		child { node[left=0.2cm] {$2$} }}
	child { node {$or$} 
		child { node[right=0.2cm] {$2$} }
		child { node {$4$} }};
\node[right=5cm of n] {$or$}
	child { node {$or$} 
		child { node {$2$} }
		child { node[left=0.2cm] {$4$} }}
	child { node {$or$} 
		child { node[right=0.2cm] {$4$} }
		child { node {$6$} }};			
\end{tikzpicture}
\end{center}
Now consider the formula:
\begin{equation*}
\psi = (\{3\}\mapsto\Diamond\{2\})\mapsto (\{2,4\}\mapsto\Diamond\{2\}).
\end{equation*}
It says that, given a function $f$ that satisfies $\{3\}\mapsto\Diamond\{2\}$, $h$ returns another function which when given either $2$ or $4$ as argument \emph{may} return $2$. Function $h$ satisfies $\psi$ because in either case it may call $(f\ 3)$, which we know may return $2$.
\end{example}

\section{Program Equivalence for Algebraic Effects}

The notions of program equivalence presented in  Section~\ref{Sec_prog_equiv_intro} have been extended recently to programming languages with algebraic effects. Johann, Simpson and Voigtl{\"a}nder \cite{DBLP:conf/lics/JohannSV10} study contextual equivalence for a polymorphic language with recursion and generic effects. They characterise contextual equivalence using a logical relation and thus prove some of its fundamental properties.

Dal Lago, Gavazzo and Levy \cite{DBLP:conf/lics/LagoGL17} give an abstract account of applicative bisimilarity for an untyped $\lambda$-calculus with generic algebraic effects. They show that applicative bisimilarity is included in contextual equivalence, but they note that this inclusion is strict in general.

Simpson and Voorneveld \cite{SimV18} consider a simply-typed programming language with recursion and generic algebraic effects. They propose a modal logic in which formulas expressing program behaviour are very similar in spirit to the example formulas we have seen so far. They also define applicative bisimilarity following \cite{DBLP:conf/lics/LagoGL17}. The logical equivalence induced by the modal logic is then proved to coincide with applicative bisimilarity.

In the conclusion of their paper, Simpson and Voorneveld observe that logical equivalence is included in contextual equivalence. However, contextual equivalence equates more programs than logical equivalence does. Therefore, an open research direction is finding a logic that characterises contextual equivalence. This is the main goal of this dissertation.

\section{PCF with Effects -- EPCF}\label{Sec_epcf_def}

The programming language used in the work of Simpson and Voorneveld \cite{SimV18} is a call-by-value, simply-typed $\lambda$-calculus with recursion, a datatype of natural numbers and algebraic effects. Therefore, the language is a variant of Plotkin's PCF \cite{DBLP:journals/tcs/Plotkin77} extended with algebraic effects; in this work we will refer to it as EPCF.

In order to simplify their proofs, Simpson and Voorneveld formulate EPCF as fine-grained call-by-value \cite{LPT03}. This means that there is a distinction between terms that are values and terms that are computations; they form separate syntactic categories. For example, $\lbd{x}{\mathbbm{N}}{S(x)}$ and $3$ are values because they cannot reduce, while $(\lbd{x}{\mathbbm{N}}{S(x)})\ 3$ is a computation. Here $S(x)$ represents the successor of $x$. The fine-grained call-by-value formulation is equivalent to the usual call-by-value formulation.

\begin{definition}[EPCF]\label{Def_epcf}
Types and environments:
\begin{align*}
\tau,\rho &\coloneqq \mathbbm{1} \mid \mathbbm{N} \mid \rho\rightarrow\tau	\\
\Gamma &\coloneqq \emptyset \mid \Gamma,x:\tau.
\end{align*}
Values and computations are defined by the grammar:
\begin{align*}
V,W &\coloneqq \star \mid Z \mid S(V) \mid \lbd{x}{\tau}{M} \mid x	\\
M,N &\coloneqq V\ W \mid \return{V} \mid \letin{M}{x}{N} \mid \fix{V} \mid \casess{V}{M}{x}{N}.
\end{align*}
\end{definition}

The ground types are unit $\mathbbm{1}$ and natural numbers $\mathbbm{N}$. There is a countably infinite set of variables ranged over by $x$. The environment $\Gamma, x:\tau$ assumes that $x$ does not appear in $\Gamma$.

Terms $V,W$ represent values, and $M,N$ represent computations, that is, terms which can be evaluated. The intuitive semantics of computations is the following: $\return{V}$ immediately returns the value $V$. The construct $\letin{M}{x}{N}$ is a sequencing operation: first it evaluates $M$, if this returns a value $V$, $V$ is substituted for $x$ in $N$, then $N[V/x]$ is evaluated. The computation $\fix{V}$ calculates the fixed point of the function $V$. The $\textbf{case}\ V$ construct branches according to whether the natural number $V$ is zero or a successor.

The language EPCF incorporates effects in a general way. Instead of specifying all the effect operations in the language, the definition of EPCF is parametrised by a set of effect operations $\Sigma$. The set $\Sigma$ can be instantiated in turn for nondeterminism, probabilistic choice, global store, input and output etc. This is done in a series of examples at the end of the section.

Each operation $\sigma\in\Sigma$ has an arity which specifies what arguments the operation takes and what type the resulting computation has. The possible arities are:
\begin{equation*}
\alpha^n\rightarrow\alpha \qquad \mathbbm{N}\times \alpha^n \rightarrow \alpha \qquad \alpha^{\mathbbm{N}}\rightarrow\alpha \qquad \mathbbm{N}\times\alpha^{\mathbbm{N}}\rightarrow\alpha
\end{equation*}
where $\alpha$ can be regarded as a type variable. They should be interpreted as follows: $\sigma:\mathbbm{N}\times\alpha^n\rightarrow\alpha$ is an operation that takes a natural number $V$ and $n$ computations of type $\alpha$, $M_0, M_1,\ldots,M_{n-1}$. The resulting computation $\sigma(V;M_0,M_1,\ldots,M_{n-1})$ has type $\alpha$. The expression $\alpha^\mathbbm{N}$ represents a function from natural numbers to the type $\alpha$.

\begin{repdefinition}{Def_epcf}[EPCF -- continued]
Fix a set of effect operations $\Sigma$, with associated arities. The grammar of computations is extended as follows:
\begin{equation*}
M,N \coloneqq \ldots \mid \sigma(M_0,M_1,\ldots,M_{n-1}) \mid \sigma(V;M_0,M_1,\ldots,M_{n-1}) \mid \sigma(V) \mid \sigma(V;W).
\end{equation*}
The typing relations $\Gamma\vdash V:\tau$ and $\Gamma\vdash M:\tau$ are the least relations closed under the rules in Figure \ref{Fig_PFC_type}. The judgement $\Gamma\vdash V:\tau$ should be read as $V$ has type $\tau$ in environment $\Gamma$.
\end{repdefinition}

\begin{figure}
\begin{gather*}
\inferrule
	{ }
	{\Gamma,x:\tau \vdash x:\tau}
(\textsc{var}) \quad
\inferrule
	{ }
	{\Gamma \vdash \star : \mathbbm{1}}
(\textsc{unit}) \quad
\inferrule
	{ }
	{\Gamma \vdash Z : \mathbbm{N}}
(\textsc{zero}) \quad
\inferrule
	{\Gamma \vdash V : \mathbbm{N}}
	{\Gamma \vdash S(V) : \mathbbm{N}}
(\textsc{succ}) \\
\inferrule
	{\Gamma \vdash V : \tau}
	{\Gamma \vdash \return{V} : \tau}
(\textsc{ret}) \quad
\inferrule
	{\Gamma,x:\tau \vdash M : \rho} 
	{\Gamma \vdash \lbd{x}{\tau}{M} : \tau\rightarrow\rho}
(\textsc{lbd})	\quad
\inferrule
	{\Gamma \vdash V : \tau\rightarrow\rho \\ \Gamma \vdash W:\tau}
	{\Gamma \vdash V\ W : \rho}
(\textsc{app})	\\
\inferrule
	{\Gamma \vdash V : (\tau\rightarrow\rho)\rightarrow(\tau\rightarrow\rho)}
	{\Gamma \vdash \fix{V} : \tau\rightarrow\rho}
(\textsc{fix}) \quad
\inferrule
	{\Gamma \vdash M : \tau \\ \Gamma ,x:\tau \vdash N : \rho}
	{\Gamma \vdash \letin{M}{x}{N} : \rho}
(\textsc{let}) \\
\inferrule
	{\Gamma \vdash V : \mathbbm{N} \\ \Gamma \vdash M : \tau \\ \Gamma,x:\mathbbm{N} \vdash N : \tau}
	{\casess{V}{M}{x}{N} : \tau}
(\textsc{case}) \\
\inferrule
	{\sigma: \alpha^n\rightarrow\alpha \\ \Gamma \vdash M_i : \tau}
	{\Gamma \vdash \sigma(M_0,M_1,\ldots,M_{n-1}):\tau}
(\textsc{op1}) \quad
\inferrule
	{\sigma: \alpha^{\mathbbm{N}}\rightarrow\alpha \\ \Gamma\vdash V : \mathbbm{N}\rightarrow\tau}
	{\Gamma \vdash \sigma(V):\tau}
(\textsc{op2}) \\
\inferrule
	{\sigma: \mathbbm{N}\times\alpha^n\rightarrow\alpha \\ \Gamma \vdash V : \mathbbm{N} \\ \Gamma \vdash M_i : \tau}
	{\Gamma \vdash \sigma(V;M_0,M_1,\ldots,M_{n-1}) : \tau}
(\textsc{op3})	\\											
\inferrule
	{\sigma: \mathbbm{N}\times\alpha^{\mathbbm{N}}\rightarrow\alpha \\ \Gamma \vdash V : \mathbbm{N} \\ \Gamma \vdash W : \mathbbm{N}\rightarrow\tau}
	{\Gamma \vdash \sigma(V;W):\tau}
(\textsc{op4})	
\end{gather*}
\caption{Typing judgements for EPCF \cite{SimV18}.} \label{Fig_PFC_type}
\end{figure}

Substitution of values for free variables inside values, $W[V/x]$, and inside computations, $M[V/x]$, is defined by recursion on the structure of $W$ and $M$ in a standard way. An example for effect operations is:
\begin{equation*}
\sigma(W;M_0,M_1,\ldots,M_{n-1})[V/x] = \sigma(W[V/x];M_0[V/x],M_1[V/x],\ldots,M_{n-1}[V/x]).
\end{equation*} 

We use the notation $\textit{Val}(\tau)$ for the set of closed values of type $\tau$ and $\textit{Comp}(\tau)$ for the set of closed computations of type $\tau$. We denote natural numbers $S^n(Z)$ by $\overline{n}$. Everywhere in the dissertation we consider terms up to $\alpha$-conversion.

Below are examples of effect operations from \cite{SimV18} which will be used throughout the dissertation:

\begin{example}[Pure functional computation]\label{Eg_pure}
In this case, the language has no effects so the set $\Sigma$ is empty.
\end{example}

\begin{example}[Nondeterminism]\label{Eg_nondeterminism}
There is one effect operation representing binary choice $or:\alpha^2\rightarrow\alpha$ so $\Sigma=\{or\}$. It takes as arguments two computations of type $\alpha$ and chooses to run one of them. The choice is determined by an external agent.
\end{example}

\begin{example}[Probabilistic choice]\label{Eg_probabilistic}
Define $\Sigma=\{p\text{-}or\}$ where $p\text{-}or:\alpha^2\rightarrow\alpha$ is a binary choice operator. With probability $0.5$ it executes the first computation, otherwise the second computation.
\end{example}

\begin{example}[Global store]\label{Eg_globalstore}
Fix a finite set of locations $\mathbbm{L}$ for storing natural numbers. For each $l\in\mathbbm{L}$, $\Sigma$ contains the following operations:
\begin{gather*}
lookup_l : \alpha^\mathbbm{N}\rightarrow\alpha	\\ 
update_l : \mathbbm{N}\times\alpha\rightarrow\alpha.
\end{gather*}
The intuition is the following: the computation $lookup_l(V)$ looks up the number at location $l$ then passes it to the function $V$; $update_l(\overline{n},M)$ writes $n$ to location $l$ then runs the computation $M$.
\end{example}

\begin{example}[Input/output]\label{Eg_io}
In this case, $\Sigma=\{read,write\}$ where 
\begin{gather*}
read:\alpha^\mathbbm{N}\rightarrow\alpha	\\
write:\mathbbm{N}\times\alpha\rightarrow\alpha.
\end{gather*}
The computation $read(V)$ reads a natural number from the input channel and passes it to the function $V$, which then executes. The computation $write(\overline{n}, M)$ outputs the number $n$ then continues as $M$.

These operations seem very similar to those for global store. One difference is that updating location $l$ with value $n$ then immediately looking up the value will always yield $\overline{n}$. This is not the case for I/O operations $write$ and $read$. There are no guarantees about the values on the output and input channels.
\end{example}

\section{Operational Semantics of EPCF}

Simpson and Voorneveld define an operational semantics for EPCF where computations evaluate to trees, following Plotkin and Power \cite{DBLP:conf/fossacs/PlotkinP01}.

\begin{definition}\label{Def_ecps_oper_sem}
The operational semantics uses evaluation stacks $S$ to implement sequencing. They are defined as:
\begin{equation*}
S \coloneqq id\ |\ S \circ (\letin{(-)}{x}{M}).
\end{equation*}
Define the operation of `filling in the hole' of a stack with a closed computation as:
\begin{gather*}
id\{N\} = N	\\
(S \circ (\letin{(-)}{x}{M}))\{N\} = S\{\letin{N}{x}{M}\}.
\end{gather*}
Write $\textit{Stack}(\tau,\rho)$ for the set of stacks $S$ which when given a computation $N\in\textit{Comp}(\tau)$, return a computation $S\{N\}\in\textit{Comp}(\rho)$.

The operational semantics consists of two relations, one between closed computations, and one between configurations $(S,M)$, where $M\in\textit{Comp}(\tau)$ and $S\in\textit{Stack}(\tau,\rho)$.
\begin{gather*}
(\lbd{x}{\tau}{M})\ V \rightsquigarrow M[V/x]	\\
\fix{F} \rightsquigarrow \return{\lbd{x}{\tau}{\letin{F\ (\lbd{y}{\tau}{\letin{\fix{F}}{z}{z\ y}})}{w}{w\ x}}}	\\
\casess{Z}{M}{x}{N} \rightsquigarrow M	\\
\casess{S(V)}{M}{x}{N} \rightsquigarrow N[V/x]
\end{gather*}
\begin{gather*}
(S,\ \letin{N}{x}{M}) \rightarrowtail (S \circ \letin{(-)}{x}{M},\ N)	\\
(S \circ \letin{(-)}{x}{M},\ \return{V}) \rightarrowtail (S,\ M[V/x])	\\
(S,M) \rightarrowtail (S,M') \text{ if } M \rightsquigarrow M'
\end{gather*}
Denote by $\rightarrowtail^*$ the reflexive-transitive closure of $\rightarrowtail$.
\end{definition}

The reduction rule for the fixed point $\fix{F}$ is somewhat complicated by the syntactic restrictions imposed by fine-grained call-by-value. Intuitively, we can think of $\fix{F}$ as reducing to the thunk of $F(\fix{F})$.

By inspecting the reduction relation $\rightarrowtail$ we can see that it is deterministic. There are two ways $\rightarrowtail$ can get stuck. If $(S,M)\rightarrowtail^* (id,\return{V})$; in this case there is nothing left to do so the computation should terminate. Or if $(S,M)\rightarrowtail^* (S',\sigma(\ldots))$. In this case, an effect operation should take place and the execution should continue from $S'$ with the computation chosen by the effect operation. It is also possible that $\rightarrowtail$ never terminates due to the presence of recursion.

This suggests that a computation of type $\tau$ should evaluate to an \emph{effect tree} with leaves values of type $\tau$. Denote the set of all such trees by $\textit{Trees}(\tau)$. A possibly infinite tree in $\textit{Trees}(\tau)$ can have:
\begin{itemize}
\item a leaf labelled by $\bot$, which signifies nontermination of $\rightarrowtail$;
\item a leaf labelled by a value $V\in\textit{Val}(\tau)$;
\item a node labelled $\sigma$ where $(\sigma:\alpha^n\rightarrow\alpha)\in\Sigma$; this has children $t_0,t_1,\ldots t_{n-1}$;
\item a node labelled $\sigma$ where $(\sigma:\alpha^\mathbbm{N}\rightarrow\alpha)\in\Sigma$; this has infinitely many children $t_0,t_1,\ldots$, one for each natural number;
\item a node labelled $\sigma_m$ where $(\sigma:\mathbbm{N}\times\alpha^n\rightarrow\alpha)\in\Sigma$; the label $m$ is the natural number that $\sigma$ takes as an argument; the node has children $t_0,t_1,\ldots t_{n-1}$;
\item a node labelled $\sigma_m$ where $(\sigma:\mathbbm{N}\times\alpha^\mathbbm{N}\rightarrow\alpha)\in\Sigma$; this has children $t_0,t_1,\ldots$.
\end{itemize}

Before defining the tree associated to each computation, we need to define a partial order on $\textit{Trees}(\tau)$ as follows:
\begin{center}
$tr_1 \leq tr_2 \quad\Longleftrightarrow\quad$ $tr_1$ can be obtained from $tr_2$ by replacing some of its subtrees by $\bot$.
\end{center}
This ordering endows $\textit{Trees}(\tau)$ with an $\omega$-CPO structure \cite{MPF16} with least element $\bot$. This means that every increasing chain $tr_1\leq tr_2\leq\ldots$ has a least upper bound $\bigsqcup_{n\in\mathbb{N}}tr_n$. 

\begin{repdefinition}{Def_ecps_oper_sem}[Continued]\label{Def_epcf_tree_chain}
Define a family of functions 
\begin{equation*}
\trees{-,-}_{(-)}:\textit{Stack}(\tau,\rho)\times\textit{Comp}(\tau)\times\mathbb{N}\longrightarrow\textit{Trees}(\rho).
\end{equation*}
The tree $\trees{S,M}_n$ represents the unfolding of computation $M$ for $n$ steps starting in stack $S$. The formal definition is:
\begin{align*}
\trees{S,M}_0 &= \bot	\\
\trees{S,M}_{n+1} &= \begin{cases}
					V	&\text{if } S=id \text{ and } M=\return{V}	\\
					\trees{S',M'}_n	&\text{if } (S,M)\rightarrowtail(S',M')	\\
					\sigma(\trees{S,M_0}_n,\ldots,\trees{S,M_{k-1}}_n) &\text{if } \sigma:\alpha^k \rightarrow \alpha  \\
						&\text{and } M=\sigma(M_0,M_1,\ldots,M_{k-1})	\\
					\sigma(\trees{S,V\ \overline{0}}_n, \trees{S,V\ \overline{1}}_n,\ldots)	&\text{if } \sigma:\alpha^\mathbbm{N} \rightarrow \alpha	\text{ and } M=\sigma(V) \\
					\sigma_m(\trees{S,M_0}_n,\ldots,\trees{S,M_{k-1}}_n) &\text{if } \sigma:\mathbbm{N}\times\alpha^k \rightarrow \alpha \\
						&\text{and } M=\sigma(\overline{m};M_0,M_1,\ldots,M_{k-1})	\\
					\sigma_m(\trees{S,V\ \overline{0}}_n, \trees{S,V\ \overline{1}}_n,\ldots)	&\text{if } \sigma:\mathbbm{N}\times\alpha^\mathbbm{N} \rightarrow \alpha \text{ and } M=\sigma(\overline{m};V) 	\\
					\bot	&\text{otherwise}.	
					\end{cases}
\end{align*}
From this definition we can see that $\trees{S,M}_n\leq \trees{S,M}_{n+1}$. Therefore, the effect tree associated with a computation is defined as:
\begin{gather*}
\trees{-}:\textit{Comp}(\tau)\longrightarrow \textit{Trees}(\tau)	\\
\trees{M} = \bigsqcup_{n\in\mathbb{N}}\trees{id,M}_n.
\end{gather*}
We call $\trees{M}$ a computation tree.
\end{repdefinition}

The intuitive interpretation of computation trees is that a path through the tree represents a potential execution path of the program. However, the operational semantics is not aware of this interpretation. In this sense, the effect operations are purely formal as in \cite{DBLP:conf/fossacs/PlotkinP01}. This will be illustrated in Example \ref{Eg_globalstore_tree} below.

\begin{example}[Pure functional computation]
In this case, there are no effect operations so all computation trees are leaves.
\end{example}

\begin{example}[Nondeterminism]\label{Eg_nondet_nat}
Consider a computation that generates a natural number nondeterministically (from \cite{SimV18}):
\begin{equation*}
?nat = \letin{\fix{(\lbd{f}{\mathbbm{1}{\rightarrow}{\mathbbm{N}}}{or(\lbd{y}{\mathbbm{1}}{Z},\ \lbd{y}{\mathbbm{1}}{\letin{f\ y}{u}{S(u)}})})}}{w}{w\ \star}.
\end{equation*}
Its computation tree is:
\begin{center}
\begin{tikzpicture}[level distance=0.8cm]
\node {$or$}
	child { node {$\overline{0}$} }
	child { node {$or$}
		child { node {$\overline{1}$} }
		child { node {$or$}
			child { node {$\overline{2}$} }
			child { node {} edge from parent[dashed] } } };
\end{tikzpicture}
\end{center}
Thus, the notion of a tree whose nodes are nondeterministic choice points, discussed informally in Section \ref{Sec_alg_effects}, is formalised by a computation tree.
\end{example}

\begin{example}[Probabilistic choice]\label{Eg_prob_epcf_tree}
Define the following computation whose execution never terminates:
\begin{equation*}
loop = \letin{(\fix{\lbd{f}{\mathbbm{N}{\rightarrow}\mathbbm{1}}{\return{f}}})}{g}{g\ Z}.
\end{equation*}
The computation tree of $p\text{-}or(loop,\return{\star}):\mathbbm{1}$ is therefore:
\begin{center}
\begin{tikzpicture}[level distance=0.8cm]
\node {$p\text{-}or$}
	child { node {$\bot$} }
	child { node {$\star$} };
\end{tikzpicture}
\end{center}
\end{example}

\begin{example}[Global store]\label{Eg_globalstore_tree} 
Consider a location $l_0\in\mathbb{L}$. The following computation writes $\overline{0}$ to location $l_0$ then immediately reads this value:
\begin{equation*}
\trees{update_{l_0}(\overline{0};\ lookup_{l_0}(\lbd{x}{\mathbbm{N}}{\casess{x}{\return{\star}}{y}{loop}}))}=
\end{equation*}
\begin{center}
\begin{tikzpicture}[level distance=1.1cm]
\node {$update_{l_0,0}$}
	child { node {$lookup_{l_0}$} 
		child { node {$\star$} }
		child { node {$\bot$} }
		child { node {$\bot$} }
		child { node[left=0.5cm] {$\ldots$}  edge from parent[draw=none] } 
		};
\end{tikzpicture}
\end{center}
According to the intuitive interpretation that we give to the $lookup$ and $update$ operations only the path that returns a $\star$ can occur. However, the tree contains a path for each natural number that could be in $l_0$ because the $\trees{-}$ function treats operation symbols as syntax without any interpretation. If we replaced $lookup$ and $update$ with the $read$ and $write$ operations from I/O then all paths would be relevant.
\end{example}

\begin{example}[Input/output]
The following computation reads a natural number from the input channel, then returns a function whose behaviour depends on this number:
\begin{equation*}
read(\lbd{x}{\mathbbm{N}}{\return{\lbd{f}{\mathbbm{N}{\rightarrow}\mathbbm{N}}{write(x;\ (f\ x))}}}) \quad :(\mathbbm{N}\rightarrow\mathbbm{N})\rightarrow\mathbbm{N}
\end{equation*}
\begin{center}
\begin{tikzpicture}[level distance=1.1cm, sibling distance=4.5cm]
\node {$read$}
		child { node {$\lbd{f}{\mathbbm{N}{\rightarrow}\mathbbm{N}}{write(\overline{0}, (f \overline{0}))}$} }
		child { node {$\lbd{f}{\mathbbm{N}{\rightarrow}\mathbbm{N}}{write(\overline{1}, (f \overline{1}))}$} }
		child { node {$\lbd{f}{\mathbbm{N}{\rightarrow}\mathbbm{N}}{write(\overline{2}, (f \overline{2}))}$} }
		child { node[left=1.5cm] {$\ldots$}  edge from parent[draw=none] }; 
\end{tikzpicture}
\end{center}
A path through the tree is not only a possible execution path of the computation, it also corresponds to an I/O trace. 
\end{example}

\section{Summary of Results about EPCF}\label{Sec_epcf_results}

The main contribution of Simpson's and Voorneveld's work \cite{SimV18} is that they define a modal logic whose formulas represent properties of EPCF programs. We will refer to it as EPCF logic. Moreover, they give a general analysis of the modalities involved.

In EPCF logic, there are two kinds of formulas all of which are attached an EPCF type. A formula $\phi:\tau$ describes a value of type $\tau$, while $\Phi:\tau$ describes a computation of type $\tau$. The definition of the logic starts from a set of basic formulas at each type which is then closed under negation, arbitrary conjunctions and disjunctions.

The basic formulas for values of type $\mathbbm{N}$ are:
\begin{equation*}
\{n\} \text{ where } n\in\mathbb{N}.
\end{equation*}
A closed value $W:\mathbbm{N}$ satisfies $\{n\}$, written $W\models\{n\}$, if and only if $W=\overline{n}$.

For values of function types $\tau\rightarrow\rho$, the basic formulas are:
\begin{equation*}
\phi\mapsto \Phi
\end{equation*}
where $\phi$ is a value formula of type $\tau$ and $\Phi$ is a computation formula of type $\rho$. The satisfaction $W\models \phi\mapsto \Phi$ holds if and only if:
\begin{equation*}
\forall V\models\phi.\ (W\ V) \models\Phi.
\end{equation*}

A value formula $\phi\mapsto\Phi$ tests the behaviour of a function when it is being applied. This behaviour is the fundamental property of a function, hence the choice of value formula is reasonable.

Finally, computation formulas make use of a set of \emph{modalities} $\mathcal{O}$. The set $\mathcal{O}$ contains sets of effect trees of type $\mathbbm{1}$, that is, $\mathcal{O}\subseteq\mathcal{P}(\textit{Trees}(\mathbbm{1}))$. A basic computation formula of type $\tau$ is:
\begin{equation*}
o\phi
\end{equation*}
where $o\in\mathcal{O}$ and $\phi$ is a value formula of type $\tau$. We can see that $o$ lifts a formula for values, $\phi$, to a formula for computations. This is why $o$ is named a \emph{modality}.

For a tree $tr\in\textit{Trees}(\tau)$ and a value formula $\phi:\tau$ denote by:
\begin{equation*}
tr[\models \phi]
\end{equation*}
the tree in $\textit{Trees}(\mathbbm{1})$ obtained by replacing the leaves $V$ of $tr$ by $\star$ if $V\models\phi$ and by $\bot$ otherwise. We can now define satisfaction of computation formulas as:
\begin{equation*}
M \models o\phi \quad \Longleftrightarrow \quad \trees{M}[\models\phi]\in o.
\end{equation*}
A computation formula $o\phi$ tests whether the possible return values of a computation satisfy $\phi$ and also tests the shape of the effect tree of the computation. These two pieces of information form the \emph{observable behaviour} of a computation, in the sense of Section \ref{Sec_prog_equiv_intro}.

The definition of $\mathcal{O}$ depends on the effects present in the language so we will look at the nondeterminism example. The other effects are treated in more detail in Chapter~\ref{Chap_bisim}, in the context of ECPS.

\begin{example}[Nondeterminism]\label{Eg_nondet_epcf_logic}
Define $\mathcal{O}=\{\Diamond,\Box\}$ where: 
\begin{align*}
\Diamond &= \{tr\in\textit{Trees}(\mathbbm{1}) \mid tr \text{ has some } \star \text{ leaf } \}	\\
\Box &= \{tr\in\textit{Trees}(\mathbbm{1}) \mid tr \text{ has finite height and every leaf is a }\star \}.
\end{align*}
The formula $\Diamond\phi$ says that a computation may return a value satisfying $\phi$, whereas $\Box\phi$ asserts that it must return such a value. We can see that $\Box$ and $\Diamond$ which we discussed informally in Examples~\ref{Eg_boxdi} and~\ref{Eg_dihigher} are now defined as modalities.

In fact, the formulas used in those examples are valid formulas in EPCF logic:
\begin{align*}
\phi_1&=\{3\}\mapsto\Box\{2\}	\\
\phi_2&=\{3\}\mapsto\Diamond\{2\}	\\
\psi &= (\{3\}\mapsto\Diamond\{2\})\mapsto ((\{2\}\lor\{4\})\mapsto\Diamond\{2\})
\end{align*}
and the example functions:
\begin{align*}
g &=\lbd{n}{\mathbbm{N}}{or(\mathbf{pred}\ n,\ \mathbf{succ}\ n)}	\\
h &= \lbd{f}{\mathbbm{N{\rightarrow}\mathbbm{N}}}{\ \lbd{n}{\mathbbm{N}}{\ or(f\ (\mathbf{pred}\ n),\ f\ (\mathbf{succ}\ n))}}
\end{align*}
are valid EPCF terms for suitable encodings of $\mathbf{succ}$ and $\mathbf{pred}$.
We can see that $(id,(g\ \overline{3}))$ may return either $\overline{2}$ or $\overline{4}$ so using the definition of logical satisfaction we indeed obtain:
\begin{equation*}
g\not\models\phi_1 \quad \text{and} \quad g\models\phi_2.	
\end{equation*}
Using the same definition we can also see that $h\models\psi$.
\end{example}

Simpson and Voorneveld \cite{SimV18} are concerned with two notions of program equivalence: applicative bisimilarity and logical equivalence induced by satisfaction in EPCF logic. Applicative bisimilarity is a family of relations defined between well-typed EPCF terms. The definition is technically involved so we omit it. However, some key features of applicative bisimilarity are:
\begin{itemize}
\item Two natural numbers are bisimilar if and only if they are equal.

\item Two function values are bisimilar if and only if for all arguments they yield bisimilar computations. This condition is the same as in the informal explanation of similarity from Section \ref{Sec_prog_equiv_intro}.

\item To specify when two computations are bisimilar, their effect trees are inspected. Roughly speaking, $M$ and $N$ are bisimilar if: the possible return values of $N$ approximate those of $M$ well enough to preserve the properties of the effect tree of $M$, and vice-versa.
The set of modalities $\mathcal{O}$ is used express properties of computation trees and to lift bisimilarity of values to a relation between trees. 
\end{itemize}
Because the definition of bisimilarity depends on the set of modalities $\mathcal{O}$, it is named applicative $\mathcal{O}$-bisimilarity.

\begin{definition}
Logical equivalence between well-typed EPCF terms is defined as:
\begin{align*}
V \equiv_{\textit{EPCF}} W \quad &\Longleftrightarrow \quad \forall\phi:\tau.\ V\models\phi \Longleftrightarrow W\models\phi	\\
M \equiv_{\textit{EPCF}} N \quad &\Longleftrightarrow \quad \forall\Phi:\tau.\ M\models\Phi \Longleftrightarrow N\models\Phi.
\end{align*}
\end{definition}

It is easy to prove that logical equivalence and applicative bisimilarity for EPCF are equivalence relations. In order to prove they are \emph{compatible}, Simpson and Voorneveld identify two sufficient conditions that the set of modalities $\mathcal{O}$ should satisfy in general:
\begin{itemize}
\item Each modality $o\in\mathcal{O}$ needs to be Scott-open. This condition will be defined later (Definition~\ref{Def_scott_open}). It refers to the Scott-topology on $\textit{Trees}(\mathbbm{1})$.
\item The set $\mathcal{O}$ needs to be decomposable. This roughly means that, for a valid computation tree, its subtrees can also be described using the modalities in $\mathcal{O}$. Decomposability is a notion introduced by Simpson and Voorneveld whose definition is very technical so we omit it.  
\end{itemize}
They show that for all the example effects in this dissertation $\mathcal{O}$ is decomposable and contains only Scott-open modalities.
Assuming that $\mathcal{O}$ is a decomposable set of Scott-open modalities, Simpson and Voorneveld prove the following theorems:

\begin{theorem}
Applicative $\mathcal{O}$-bisimilarity is compatible.
\end{theorem}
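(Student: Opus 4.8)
The plan is to prove compatibility of applicative $\mathcal{O}$-bisimilarity by the standard "Howe's method" argument, adapted to the effectful setting. First I would define the Howe closure $\precsim^{H}$ of applicative $\mathcal{O}$-similarity $\precsim$: it is the least relation, typed and closed under the term-forming operations (the congruence rules), such that $\precsim^{H}$ is closed under right-composition with $\precsim$ (i.e.\ if $M \mathrel{\precsim^{H}} N$ and $N \precsim N'$ then $M \mathrel{\precsim^{H}} N'$). From the construction one immediately gets that $\precsim^{H}$ is compatible (it contains the identity and is closed under all the syntactic constructors, including the effect operations $\sigma(\ldots)$), that $\precsim \,\subseteq\, \precsim^{H}$, and a substitutivity lemma: if $M \mathrel{\precsim^{H}} M'$ and $V \mathrel{\precsim^{H}} V'$ then $M[V/x] \mathrel{\precsim^{H}} M'[V'/x]$, proved by induction on the derivation of $M \mathrel{\precsim^{H}} M'$.

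The heart of the argument is the \emph{key lemma}: $\precsim^{H}$, restricted to closed terms, is itself an applicative $\mathcal{O}$-simulation; since $\precsim$ is the largest such, this forces $\precsim^{H}\,\subseteq\,\precsim$ on closed terms, and combined with $\precsim\,\subseteq\,\precsim^{H}$ and the compatibility of $\precsim^{H}$ we conclude that $\precsim$ (hence $\mathcal{O}$-bisimilarity $\sim$, as the symmetrisation) is compatible. To prove the key lemma for computations I would show that whenever $M \mathrel{\precsim^{H}} N$, the effect tree $\trees{M}$ is matched by $\trees{N}$ in the sense required by the simulation clause — roughly, that for every modality $o\in\mathcal{O}$ and every value-level $\precsim^{H}$-closed relation, $\trees{M}[\cdots]\in o$ implies $\trees{N}[\cdots]\in o$. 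This is done by induction on the number of steps $n$ in the approximants $\trees{id,M}_n$, doing a case analysis on the operational semantics ($\beta$-reduction, $\fix$, $\mathbf{case}$, $\mathbf{let}$-sequencing, and the effect-operation cases), using the substitutivity lemma for the $\beta$ and $\mathbf{let}$ cases and using the definition of $\trees{-}$ on $\sigma$-nodes for the effect cases.

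This is exactly where the two hypotheses on $\mathcal{O}$ enter, and I expect the main obstacle to be here. \textbf{Scott-openness} is what lets the step-indexed induction go through: membership $\trees{M}[\cdots]\in o$ is witnessed at some finite approximant $\trees{id,M}_n$, so one only ever has to reason about finite unfoldings, and openness is preserved under the limit $\bigsqcup_n$. \textbf{Decomposability} is what makes the $\mathbf{let}$/sequencing and the $\sigma$-node cases work: when $\trees{S\{\mathbf{let}\ N\ \ldots\}}$ satisfies a modality, decomposability lets one peel off the outer effect structure and find modalities describing the subtrees $t_i$, so that the induction hypothesis applies to each child and the matching tree can be reassembled on the $N$ side. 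Getting the bookkeeping right — tracking how a value-level relation on leaves interacts with $[\models\phi]$-style relabelling through an arbitrary $\sigma$-node, and invoking decomposability in the precise technical form Simpson and Voorneveld give — is the delicate part; the rest ($\beta$, $\fix$, $\mathbf{case}$, the closure/substitutivity lemmas) is routine Howe's-method bookkeeping. Finally, having compatibility for closed terms, one extends to open terms in the usual way by quantifying over closing value substitutions, using substitutivity of $\precsim^{H}$.
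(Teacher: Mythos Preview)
Your proposal is correct and follows essentially the same approach as the paper: Howe's method with a key lemma establishing that the Howe extension is a simulation, proved by step-indexed induction on the tree approximants, using Scott-openness to reduce to finite approximants and decomposability to handle the effect-operation nodes. Note that in this dissertation the theorem for EPCF is only cited from Simpson and Voorneveld \cite{SimV18}; the detailed proof given here is for the analogous ECPS result (Theorem~\ref{Thm_sim_compat}), and that proof follows exactly the outline you sketch, with one refinement worth noting: for bisimilarity the argument is not quite ``symmetrisation of similarity'' but rather applies Howe's construction directly to $\sim$, takes the reflexive-transitive closure $\sim^{\mathcal{H}*}$, and uses a lemma of Lassen to establish its symmetry before concluding $\sim^\circ = \sim^{\mathcal{H}*}$.
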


\begin{theorem}\label{Thm_sv_log_is_bisim}
Logical equivalence induced by EPCF logic coincides with applicative $\mathcal{O}$-bisimilarity. Hence, logical equivalence is compatible.
\end{theorem}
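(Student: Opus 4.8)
\section*{Proof proposal}

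The strategy is to establish the two inclusions between logical equivalence $\equiv_{\textit{EPCF}}$ and applicative $\mathcal{O}$-bisimilarity separately. Once they are shown to coincide, the final clause is immediate: the preceding theorem already gives that applicative $\mathcal{O}$-bisimilarity is compatible, and compatibility transfers along the identification of the two relations.

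For the inclusion of applicative $\mathcal{O}$-bisimilarity in $\equiv_{\textit{EPCF}}$, the plan is to prove that every formula is invariant under bisimilarity, by induction on the structure of formulas. Since the logic is generated from the basic formulas by negation and arbitrary conjunctions and disjunctions, the Boolean cases follow at once from the induction hypotheses. For a basic value formula $\{n\}$ the claim is immediate, as bisimilar natural numbers are equal. For a basic value formula $\phi \mapsto \Phi$ at a function type one unfolds the definition of satisfaction and uses that bisimilar functions send each argument to bisimilar computations, together with the induction hypothesis for $\Phi$. The heart of this direction is the basic computation formula $o\phi$: one must show that bisimilar computations $M$ and $N$ satisfy $\trees{M}[\models\phi] \in o \Longleftrightarrow \trees{N}[\models\phi] \in o$. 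Here one invokes the definition of bisimilarity for computations --- that the return values of each approximate those of the other closely enough to preserve membership in every modality --- together with the induction hypothesis for $\phi$ (so that ``$V \models \phi$'' is itself a bisimulation-invariant property of the leaves) and the Scott-openness of $o$ (Definition~\ref{Def_scott_open}), which ensures that approximation by finite subtrees suffices to transfer membership in $o$.

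For the converse inclusion, the plan is to show that $\equiv_{\textit{EPCF}}$ is itself an applicative $\mathcal{O}$-bisimulation, so that by coinduction it is contained in the greatest one. The main tool is the characteristic formula: for a closed value $U : \tau$ set $\phi_U = \bigwedge\{\phi : \tau \mid U \models \phi\}$; since the logic is closed under negation, $U' \models \phi_U$ holds precisely when $U'$ and $U$ satisfy the same formulas, i.e.\ when $U' \equiv_{\textit{EPCF}} U$. The bisimulation clause for natural numbers is then trivial, and the clause for functions reduces to testing an arbitrary computation formula $\Phi$ against $\phi_U \mapsto \Phi$; making this precise needs the auxiliary fact that application preserves logical equivalence in its argument, which is itself delicate at higher types. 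The remaining clause, for computations, asks one to reconstruct from the hypothesis $M \equiv_{\textit{EPCF}} N$ the tree-level correspondence demanded by the definition of computation bisimilarity. This is exactly where decomposability of $\mathcal{O}$ is used, to propagate the global agreement ``$\trees{M}$ and $\trees{N}$ lie in the same modalities'' down to their subtrees, so that the coinductive hypothesis becomes applicable.

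I expect the computation clause of this second inclusion to be the main obstacle: turning mere agreement on all formulas $o\phi$ into the structural matching of $\trees{M}$ and $\trees{N}$ that the bisimulation game requires. The abstract hypotheses that $\mathcal{O}$ be decomposable and consist of Scott-open modalities are precisely what is engineered to make this step go through, so the bulk of the work is a careful deployment of those two properties.
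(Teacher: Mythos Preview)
The paper does not itself prove this theorem; it is quoted from Simpson and Voorneveld~\cite{SimV18} as background. The dissertation does, however, prove the analogous statement for ECPS (Theorem~\ref{Cor_bisim_is_log_equiv}), and comparing your proposal to that proof exposes a real gap.

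Your first inclusion (bisimilarity implies logical equivalence, by induction on formulas) is the standard Hennessy--Milner direction and is fine. In the converse, you correctly locate the difficulty at the function clause: to show $\equiv_{\textit{EPCF}}$ is a bisimulation you need $V\equiv W$ to imply $V\,U \equiv W\,U$ for every argument $U$. You propose the characteristic-formula move, but to infer $V \models \phi_U \mapsto \Phi$ from $V\,U \models \Phi$ you would need $V\,U' \models \Phi$ for \emph{every} $U' \equiv U$ --- which is exactly the ``application preserves logical equivalence in its argument'' property you flag as delicate. That is not merely delicate; it is circular, presupposing the compatibility you are trying to establish. You have named the obstacle but not removed it.

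The dissertation's ECPS argument breaks the circle by passing through an auxiliary logic $\mathcal{V}$ whose function formulas carry concrete \emph{values} $(w_1,\ldots,w_n)\mapsto P$ rather than sub-formulas. For $\mathcal{V}$ the function bisimulation clause is immediate, so $\mathcal{V}$-equivalence coincides with bisimilarity directly (Proposition~\ref{Prop_sim_coinc_log}). One then shows $\mathcal{F}$ and $\mathcal{V}$ equi-expressive (Theorems~\ref{Thm_logic_preord_equiv} and~\ref{Thm_logic_equiv_equiv}), and it is in that translation --- the case $((w_1,\ldots,w_n)\mapsto P)^\sharp = (\chi_{w_1},\ldots,\chi_{w_n})\mapsto P$ --- that compatibility of similarity, already secured \emph{independently} via Howe's method, discharges what in your direct route is the circular step. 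The paper says explicitly at the end of Section~\ref{Sec_log_equiv_is_bisim} that a direct proof would first require a compatibility property of the logical preorder, which is why the detour through $\mathcal{V}$ is taken. Decomposability and Scott-openness enter through Howe's method (the Key Lemma~\ref{Lemm_howe_key}), not in the Hennessy--Milner argument itself; your placement of decomposability in the computation clause of the coinductive direction is off.
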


Thus, they obtain a logical characterisation of applicative bisimilarity for EPCF.

\section{Problem: Contextual vs.~Logical Equivalence}\label{Sec_epcf_ctx_not_log}

Simpson and Voorneveld \cite{SimV18} briefly note that, in EPCF, contextual equivalence equates more programs than logical equivalance and bisimilarity do. 
In this section we look at example programs, originally due to Lassen \cite{LasPhd} \footnote{I am grateful to Niels Voorneveld for pointing out this example.}, that are contextually equivalent but not logically equivalent.

Consider a computation $min(v,w)$ which returns the minimum of the natural numbers $v$ and $w$. This can be encoded in EPCF as:
\begin{center}
\begin{tabular}{c}
\begin{lstlisting}[mathescape=true]
$min(v,w)=$
  $\mathbf{let}\ (\mathbf{fix}\ \lambda f:\mathbbm{N}{\rightarrow}\mathbbm{N}{\rightarrow}\mathbbm{N}.$
          $\mathbf{return}\ \lambda x:\mathbbm{N}.$
            $\mathbf{return}\ \lambda y:\mathbbm{N}.$
              $\mathbf{case}\ x\ \mathbf{in}\ \{Z\Rightarrow \mathbf{return}\ v,$
                      $S(z_0)\Rightarrow\mathbf{case}\ y\ \mathbf{in}\ \{Z\Rightarrow \mathbf{return}\ w,$
                                     $S(z_1)\Rightarrow (\mathbf{let}\ f\ z_0\Rightarrow h\ \mathbf{in}\ h\ z_1)\}\}$
      $)\Rightarrow g\ \mathbf{in\ let}\ g\ v\Rightarrow g'\ \mathbf{in}\ g'\ w$.
\end{lstlisting}
\end{tabular}
\end{center} 

\begin{example}[Nondeterminism] Recall the computation $?nat$, from Example~\ref{Eg_nondet_nat}, which produces a natural number nondeterministically. Define:
\begin{align*}
&M = \return{\lbd{x}{\mathbbm{1}}{?nat}} \quad: \mathbbm{1}\rightarrow\mathbbm{N}	\\
&P = \letin{?nat}{x}{(\return{\lbd{z}{\mathbbm{1}}{\letin{?nat}{y}{min(x,y)}}})}\quad: \mathbbm{1}\rightarrow\mathbbm{N}.
\end{align*}
Consider the following computation formula in EPCF logic of type $\mathbbm{1}\rightarrow\mathbbm{N}$:
\begin{equation*}
\Phi = \Diamond(true\mapsto\land_{n\in\mathbb{N}}\Diamond\{n\}).
\end{equation*}
A computation that satisfies this formula is one such that: at least one of its possible return values is a function which, when given argument $\star$, \emph{may} return \emph{any} natural number. Note that for type $\mathbbm{1}$, the only formulas are $true$ and $false$, represented as the empty conjunction and disjunction respectively, and the only value is $\star$.

We can show that $M\models\Phi$. We need to check that $\trees{M}[\models true\mapsto\land_{n\in\mathbb{N}}\Diamond\{n\}]\in\Diamond$. Computation $M$ returns immediately so $\trees{M}$ is just a leaf labelled $\lbd{x}{\mathbbm{1}}{?nat}$. The set $\Diamond$ contains the trees with at least one $\star$ leaf. So it is sufficient to check:
\begin{equation*}
\lbd{x}{\mathbbm{1}}{?nat} \models true\mapsto\land_{n\in\mathbb{N}}\Diamond\{n\}
\end{equation*}
that is,
\begin{equation*}
?nat \models \land_{n\in\mathbb{N}}\Diamond\{n\}.
\end{equation*}
Recall that the computation tree of $?nat$ is:
\begin{center}
\begin{tikzpicture}[level distance=0.8cm]
\node {$or$}
	child { node {$\overline{0}$} }
	child { node {$or$}
		child { node {$\overline{1}$} }
		child { node {$or$}
			child { node {$\overline{2}$} }
			child { node {} edge from parent[dashed] } } };
\end{tikzpicture}
\end{center}
It has a leaf labelled with each natural number, therefore $?nat \models \land_{n\in\mathbb{N}}\Diamond\{n\}$ is true.

On the other hand, $P\not\models\Phi$. To see this note that $\trees{P}$ is:
\begin{center}
\begin{tikzpicture}[level distance=1.3cm, sibling distance=5cm]
\node {$or$}
	child { node [text width=5.7cm] {$\lbd{z}{\mathbbm{1}}{(\letin{?nat}{y}{min(\overline{0},y)})}$} }
	child { node {$or$}
		child { node {$\lbd{z}{\mathbbm{1}}{(\letin{?nat}{y}{min(\overline{1},y)})}$} }
		child { node {$or$}
			child { node {$\lbd{z}{\mathbbm{1}}{(\letin{?nat}{y}{min(\overline{2},y)})}$} }
			child { node {} edge from parent[dashed] } } };
\end{tikzpicture}
\end{center}
In order for $\trees{P}[\models true\mapsto\land_{n\in\mathbb{N}}\Diamond\{n\}]\in\Diamond$ to be true we need to find some $m\in\mathbbm{N}$ such that:
\begin{equation*}
\lbd{z}{\mathbbm{1}}{(\letin{?nat}{y}{min(\overline{m},y)})} \models true\mapsto\land_{n\in\mathbb{N}}\Diamond\{n\}
\end{equation*}
that is,
\begin{equation*}
\letin{?nat}{y}{min(\overline{m},y)} \models \land_{n\in\mathbb{N}}\Diamond\{n\}.
\end{equation*}
By contradiction, assume that such an $m$ exists. The tree of $\letin{?nat}{y}{min(\overline{m},y)}$ has as leaves all the numbers from $0$ to $m$, but none greater than $m$:
\begin{center}
\begin{tikzpicture}[level distance=0.8cm,
	norm/.style={edge from parent/.style={solid,draw}},
	emph/.style={edge from parent/.style={dashed,draw}} ]
\node {$or$}
	child { node {$\overline{0}$} }
	child { node {$or$}
		child { node {$\overline{1}$} }
		child[emph] { node {$or$}
			child[norm] { node {$\overline{m-1}$} }
			child[norm] { node {or} 
				child { node {$\overline{m}$} } 
				child { node {$or$} 
					child { node {$\overline{m}$} } 
					child[emph] { node {}  } } } } };
\end{tikzpicture}
\end{center}
So it is false that $\letin{?nat}{y}{min(\overline{m},y)} \models \Diamond\{m+1\}$. Therefore $P\not\models\Phi$ so $M$ and $P$ are not logically equivalent.
\end{example}

However, $M$ and $P$ are contextually equivalent. Since we have not defined contextual equivalence for EPCF rigorously, we only argue informally.


A context $C$ can compare computations $M$ and $P$ by supplying their return values, which are functions, with arguments and observing the possible results. The context is a syntactic device, hence it is finite. So it can only perform this test a finite number of times.

Computation $M$ returns a function that can generate any natural number, whereas $P$ returns a bounded number generator but \emph{the bound is arbitrarily large}. Because the context can only check for a finite set of natural numbers, say all smaller than $k$, $P$ and $M$ appear to be equivalent. It is always possible that $P$ returns a bounded number generator with bound larger than $k$.

In contrast, logical equivalence can test for an infinite number of outcomes at once. This is achieved in the formula $\Phi$ by using the infinite conjunction $\land_{n\in\mathbb{N}}\Diamond\{n\}$.

\section{Chapter Summary}

Section~\ref{Sec_prog_equiv_intro} discussed the notions of equivalence relation and compatibility in connection with program equivalence. These properties are useful when establishing equivalence of two particular programs because they allow chaining equations and substituting equals for equals inside an equation. Three kinds of program equivalence were then outlined: contextual equivalence, applicative bisimilarity and equivalence induced by a logic of program properties. These are studied in the rest of the dissertation.

In Section~\ref{Sec_alg_effects}, algebraic effects were briefly discussed. They specify behaviour such as nondeterminism, probabilistic choice, global state and I/O in a generic way. These behaviours are all triggered by a set of operations, for example, a  binary choice operator $or$ for nondeterminism. Because an effectful program has multiple possible execution paths, its execution can be pictured as a tree where the nodes are effect operations and the leaves are return values.

Next, a programming language named EPCF was introduced. It is a call-by-value extension of the simply-typed $\lambda$-calculus with recursion, natural numbers and algebraic effects. EPCF makes a syntactic distinction between values, terms which cannot reduce, and computations, which can reduce. Its operational semantics maps a computation to a tree. For example the computation $M$ where
\begin{align*}
F&= \lbd{f}{\mathbbm{N}{\rightarrow}\mathbbm{N}}{or(f\ \overline{1},\ or(f\ \overline{2},\ f\ \overline{3}))}	\\
M&= F\ (\lbd{n}{\mathbbm{N}}{\return{S(n)}})
\end{align*}
has tree:
\begin{center}
\begin{tikzpicture}[level distance=0.8cm]
\node {$or$}
	child { node {$\overline{2}$} }
	child { node {$or$} 
		child { node {$\overline{3}$} }
		child { node {$\overline{4}$} }};
\end{tikzpicture}
\end{center}

Simpson and Voorneveld \cite{SimV18} proposed a modal logic that expresses properties of EPCF programs, named EPCF logic (Section~\ref{Sec_epcf_results}). For each effect, there is a set of \emph{modalities} that express properties of computations which exhibit that effect. For nondeterminism these are $\Box$ and $\Diamond$. For example, $\Phi_1=\Box(\{2\}\lor\{3\}\lor\{4\})$ says that computation $M$ \emph{always} returns a result from the set $\{2,3,4\}$, whereas $\Phi_2=\Diamond\{2\}$ says that $M$ \emph{may} return $2$.

For functions, logical formulas have the form $\phi\mapsto\Phi$. This says that, if the argument of the function satisfies $\phi$, then the resulting application satisfies $\Phi$. For example, $F$ satisfies the following property: $F\models (\{1\}\mapsto\Box\{2\})\mapsto\Diamond\{2\}$. 

Simpson and Voorneveld defined applicative bisimilarity for EPCF using modalities and proved it compatible. They showed that program equivalence induced by EPCF logic coincides with applicative bisimilarity but not with contextual equivalence. Section~\ref{Sec_epcf_ctx_not_log} described two EPCF programs exhibiting nondeterminism that are contextually equivalent but not logically equivalent. The problem is that a context can only test a function on a finite number of arguments, whereas a logical formula can test it on infinitely many arguments, using infinitary connectives. Thus, we have identified the main problem of the dissertation: finding a logic that characterises contextual equivalence for a higher-order language with algebraic effects.

\chapter{Introducing the ECPS Language}\label{Chap_descr_method}

This chapter introduces the programming language ECPS and its operational semantics. ECPS will be used in the rest of the dissertation to study program equivalence. Moreover, a few results that will be used later  are outlined: a coinduction proof principle and general intuitions about logical relations.

\section{A New Language -- ECPS}\label{Sec_ecps_def}

To make it easier to formulate a logic that characterises contextual equivalence, we introduce a new programming language ECPS. It is a variant of EPCF in which programs are written in continuation-passing style (CPS)  \cite{DBLP:journals/lisp/Reynolds93}. This means that functions carry an additional argument, namely the \emph{continuation} to which they pass their result. The intuition is that a continuation specifies how the execution should proceed once a function has finished. 

Given a fixed return type $R$, a continuation has type $\alpha\rightarrow R$. It is waiting for an argument of type $\alpha$ to produce a return value of type $R$.
Consider for example a function that adds two natural numbers. Usually, it has type $\mathtt{nat}\rightarrow \mathtt{nat}\rightarrow\mathtt{nat}$. In continuation-passing style, this function would look like:
\begin{align*}
&\mathtt{addc} : \mathtt{nat}\rightarrow\mathtt{nat}\rightarrow (\mathtt{nat}\rightarrow R)\rightarrow R	\\
&\mathtt{addc}\ n\ m\ cont = cont\ (n+m).
\end{align*}
Instead of directly returning the result $n+m$, the function $\mathtt{addc}$ passes it to the continuation $cont$.

The key property of ECPS that allows the formulation of the new logic, which will be introduced in Section \ref{Sec_ecps_modal_log}, is that functions do not have a return type. Once a function has been applied, the resulting computation is expected to run forever. In other words, the return type $R$ of continuations is chosen to be $\bot$.  Thus, programs no longer return values that we can observe. We might however observe termination, which is now treated as an effect, or other side effects such as output values.

\begin{definition}[ECPS]
The types are defined by the following grammar:
\begin{equation*}
A, B \coloneqq \neg(A_1,\ldots,A_n)\ |\ \mathtt{nat}\ |\ \mathtt{unit}.
\end{equation*}
Fix a set $\Sigma$ of effect operations $\sigma$, each with arity $\mathtt{nat}\times\alpha^\mathtt{nat}\rightarrow\alpha$, where $\alpha$ stands for a computation. Values and computations are defined respectively as:
\begin{align*}
v, w &\coloneqq \mathtt{zero}\ |\ \mathtt{succ}(v)\ |\ \star\ |\ \lbd{(x_1,\ldots,x_n)}{(A_1,\ldots,A_n)}{t}\ |\ x	\\
t, u &\coloneqq v(w_1,\ldots, w_n)\ |\ (\mufix{x}{v})(\overrightarrow{w})\ |\ \sigma(v,x.t)\ |\ \downarrow\ |\ \caset{v}{t}{x}{u}.
\end{align*}
There are two typing relations, one for values $\Gamma\vdash v :\tau$, and one for computations which do not have a type, $\Gamma\vdash t$. These are the least relations closed under the rules in Figure~\ref{Fig_ECPS_type}.
\end{definition}

\begin{figure}
\begin{gather*}
\inferrule
	{ }
	{\Gamma ,x:A \vdash x:A}
(\textsc{var}) \quad
\inferrule
	{\Gamma, x_1:A_1, \ldots , x_n:A_n \vdash t}
	{\Gamma \vdash \lbd{(x_1,\ldots,x_n)}{(A_1,\ldots,A_n)}{t} : \neg(A_1, \ldots, A_n)}
(\textsc{lbd})	\\
\inferrule
	{\Gamma \vdash v : \neg (A_1,\ldots,A_n) \\ \Gamma \vdash w_1 : A_1 \\ \ldots \\ \Gamma \vdash w_n : A_n}
	{\Gamma \vdash v\ (w_1,\ldots,w_n)}
(\textsc{app})	\\
\inferrule
	{ }
	{\Gamma \vdash \mathtt{zero} : \mathtt{nat}}
(\textsc{zero}) \quad
\inferrule
	{\Gamma \vdash v : \mathtt{nat}}
	{\Gamma \vdash \texttt{succ}(v) : \mathtt{nat}}	
(\textsc{succ})	\quad
\inferrule
	{ }
	{\Gamma \vdash \star : \mathtt{unit}}
(\textsc{unit})	\\
\inferrule
	{\Gamma \vdash v : \mathtt{nat} \\ \Gamma \vdash t \\ \Gamma,x:\mathtt{nat} \vdash s}
	{\Gamma \vdash \caset{v}{t}{x}{s}}
(\textsc{case})	\\
\inferrule
	{\Gamma,x:\neg(\overrightarrow{A}) \vdash v : \neg(\overrightarrow{A}) \\ \Gamma \vdash \overrightarrow{w} : \overrightarrow{A}}
	{\Gamma \vdash (\mufix{x}{v})(\overrightarrow{w})}
(\textsc{mu}) \\
\inferrule
	{\Gamma,x:\mathtt{nat} \vdash t \\ \Gamma \vdash v : \mathtt{nat}}
	{\Gamma \vdash \sigma(v, x.t)}
\sigma\in\Sigma\ (\textsc{op})	\quad
\inferrule
	{ }
	{\Gamma \vdash \downarrow}
(\textsc{stop})			
\end{gather*}
\caption{Typing judgements for ECPS.}\label{Fig_ECPS_type}
\end{figure}

ECPS is a fine-grained call-by-value language; it makes a distinction between values and computations. Ignoring effects, ECPS is in fact a fragment of Levy's Jump-With-Argument programming language \cite{DBLP:conf/csl/LassenL07}.

The type $\neg(A_1,\ldots,A_n)$ is the type of a function which takes $n$ arguments of types $A_1,\ldots,A_n$ respectively and returns a computation. This computation is not expected to terminate so we can think of $\neg(A_1,\ldots,A_n)$ as $(A_1,\ldots,A_n)\rightarrow\bot$. We can also think of $k:\neg A$ as a continuation of type $A\rightarrow R$, where the return type of all continuations $R$ has been set to $\bot$. The base types $\mathtt{nat}$ and $\mathtt{unit}$ are the same as in EPCF.

In $\mufix{x}{v}$, the expression $v$ is constrained by the typing rules to be a function. Thus, intuitively $\mufix{x}{v}$ is a recursive definition of the function $v$, where $x$ represents $v$ and can appear free inside $v$. Computation $(\mufix{x}{v})(\overrightarrow{w})$ arises by applying this function to $\overrightarrow{w}$.

\begin{example}
We can now implement the function $\mathtt{addc}$ in ECPS using continuations and recursion. The type of this function is:
\begin{equation*}
\mathtt{addc} : \neg(\mathtt{nat},\ \mathtt{nat},\ \neg\mathtt{nat})
\end{equation*}
\begin{lstlisting}[mathescape=true]
$\mathtt{addc}= \lambda(x,y,k):(\mathtt{nat},\ \mathtt{nat},\ \neg\mathtt{nat}).$
   $(\mu f.\ \lambda(x',y',k'):(\mathtt{nat},\ \mathtt{nat},\ \neg\mathtt{nat}).$
      $\mathtt{case}\ x'\ \mathtt{in}\ \{\mathtt{zero}\Rightarrow k'\ y',$
               $\mathtt{succ}(x'')\Rightarrow\mathtt{case}\ y'\ \mathtt{in}$
                         $\{\mathtt{zero}\Rightarrow k'\ x',$
               	          $\mathtt{succ}(y'')\Rightarrow f\ (x'',y'',\ \lambda z:\mathtt{nat}.\ k'\ \mathtt{succ}(\mathtt{succ}(z)))\}\}$
   $)\ (x,y,k)$.
\end{lstlisting}
The behaviour of this function can be explained intuitively as follows: if $x$ is zero then the result of the addition is $y$, so $y$ is passed to the current continuation $k$. Similarly if $y$ is zero. If $x$ and $y$ are both greater than zero, add $x-1$ and $y-1$ and pass the result to continuation $\lambda z:\mathtt{nat}.\ k'\ \mathtt{succ}(\mathtt{succ}(z))$. This continuation adds two to $(x-1)+(y-1)$, then passes the result to the current continuation $k$. This explanation might become clearer if read in conjunction with the operational semantics from the next section.
\end{example}

ECPS does not have a $\mathbf{let}$ constructor for sequencing. Sequencing can be achieved instead by manipulating the continuation passed to a program.

As discussed above, termination in ECPS is an effect. Its associated effect operation is $\downarrow$, which does not take any arguments. 

The different arities for effect operations from EPCF are conflated into the most general one: $\mathtt{nat}\times\alpha^\mathtt{nat}\rightarrow\alpha$. Therefore, operation $\sigma$ takes as arguments a natural number $v$ and a function from a natural number $x$ to a computation $t$, and returns a computation $\sigma(v,x.t)$.

Substitution of values for free variables, $v[w/x]$ and $t[w/x]$, is defined in a standard way by recursion on the structure of $v$ and $t$. We will use $\overline{n}$ to denote the natural number $\mathtt{succ}^n(\mathtt{zero})$. Let $(\vdash)$ be the set of closed computations and $(\vdash A)$ the set of closed values of type $A$.

\section{Operational Semantics of ECPS}\label{Sec_ecps_oper_sem}

Next, we present the operational semantics of ECPS. It does not need to use stacks because any control flow is explicitly encoded inside a computation using continuations.

\begin{definition} 
The operational semantics is given by two families of relations on closed computation terms
\begin{align*}
(\longrightarrow) &\subseteq (\vdash) \times (\vdash)	\\
(\xrightarrow{\sigma(v)}) &\subseteq (\vdash) \times (\vdash)^{\mathbb{N}} \quad \text{for any }\sigma\in\Sigma \text{ and } \vdash v:\mathtt{nat}
\end{align*}
defined as:
\begin{equation*}
\sigma(v,x.t) \xrightarrow{\sigma(v)} (t[\overline{n}/x])_{n\in\mathbb{N}}
\end{equation*}
\begin{gather*}
(\lbd{(\overrightarrow{x})}{(\overrightarrow{A})}{t})\ (\overrightarrow{w}) \longrightarrow t[\overrightarrow{w}/\overrightarrow{x}]	\\
(\mufix{x}{v})\ (\overrightarrow{w}) \longrightarrow (v[(\lbd{(\overrightarrow{y})}{(\overrightarrow{A})}{(\mufix{x}{v})(\overrightarrow{y})})/x])\ (\overrightarrow{w})	\\
\caset{\mathtt{zero}}{s}{x}{t} \longrightarrow s	\\
\caset{\mathtt{succ}(v)}{s}{x}{t} \longrightarrow t[v/x].
\end{gather*}
Denote by $\longrightarrow^*$ the reflexive-transitive closure of $\longrightarrow$.
\end{definition}

We can see that the $\longrightarrow$ reduction relation can only get stuck when encountering an effect operation $\sigma(\ldots)$ or $\downarrow$. It might also be the case that $\longrightarrow$ never terminates.

There are no reduction rules of any kind for $\downarrow$ since it signifies termination. 
For $\sigma\in\Sigma$, the $\xrightarrow{\sigma(v)}$ reduction rule has on its right-hand-side a set of computations, one for each natural number. Therefore, repeated applications of this rule lead to the construction of an infinitely branching tree.

Given this observation we can define effect trees for ECPS. Denote the set of all effect trees by $\textit{Trees}_\Sigma$. A tree in this set can have:
\begin{itemize}
\item a leaf labelled $\bot$, which signifies nontermination of $\longrightarrow$;
\item a leaf labelled $\downarrow$, which signifies termination;
\item nodes labelled $\sigma_n$, where $\sigma\in\Sigma$ and $n\in\mathbb{N}$; such a node has an infinite number of children $t_0,t_1,\ldots$.
\end{itemize}

We can define a partial order on $\textit{Trees}_\Sigma$ which makes it an $\omega$-CPO. This is similar to the order on EPCF trees:
\begin{equation*}
tr_1 \leq tr_2 \quad \Longleftrightarrow \quad tr_1 \text{ can be obtained by replacing subtrees of } tr_2 \text{ by } \bot.
\end{equation*}
Using this order, we can give a domain theoretic definition of the tree associated to a closed computation. The tree might have infinite depth and width.

\begin{definition}[Computation trees for ECPS]\label{Def_ecps_tree_chain}
Define a family of maps 
\begin{equation*}
\treet{-}_{(-)}:(\vdash)\times\mathbb{N}\longrightarrow\textit{Trees}_\Sigma
\end{equation*}
\begin{align*}
&\treet{t}_0 = \bot	\\
&\treet{t}_{n+1} = \begin{cases}
					\treet{s}_n	&\text{if }t\longrightarrow s	\\
					\sigma_m(\treet{s[\overline{0}/x]}_n,\ldots,\treet{s[\overline{k}/x]}_n,\ldots)	&\text{if } t\xrightarrow{\sigma(\overline{m})}(s[\overline{k}/x])_{k\in\mathbb{N}}	\\
					\downarrow	&\text{if } t=\downarrow	\\
					\bot	&\text{otherwise}.
				   \end{cases}
\end{align*}
We can see that $\treet{t}_n \leq \treet{t}_{n+1}$ so we can define $\treet{-}:(\vdash)\longrightarrow\textit{Trees}_\Sigma$ as the least upper bound of the chain $\{\treet{t_n}\}_{n\in\mathbb{N}}$:  
\begin{equation*}
\treet{t} = \bigsqcup_{n\in\mathbb{N}}\treet{t}_n.
\end{equation*}
\end{definition}

\begin{example}[Pure functional computation]
In this case, the signature $\Sigma$ is empty and the only effect operation is $\downarrow$. Therefore, computation trees can only be leaves: $\downarrow$ for a computation that terminates and $\bot$ for one that does not. For example:
\begin{equation*}
\treet{loop} = \treet{(\mufix{f}{\lbd{x}{\mathtt{nat}}{(f\ x)}})\ \mathtt{zero}}=\bot.
\end{equation*} 
\end{example}

\begin{example}[Nondeterminism]
Define $\Sigma=\{or\}$. All effects have arity $\mathbbm{\mathtt{nat}}\times\alpha^\mathtt{nat}\rightarrow\alpha$. The intuitive interpretation of $or(v,x.t)$ is that it ignores $v$ and performs a nondeterministic choice between $t[\overline{0}/x]$ and $t[\overline{1}/x]$.

For example, consider the computation tree of:
\begin{multline*}
\treet{or(\overline{0},\ x.\caset{x}{\downarrow}{y}{\\ \caset{y}{or(\overline{1},z.\downarrow)}{w}{loop}})}=
\end{multline*}
\begin{center}
\begin{tikzpicture}[level distance=1cm]
\node {$or_0$}
	child { node {$\downarrow$} edge from parent[very thick]}
	child[sibling distance=1cm] { node {$or_1$} edge from parent[very thick] 
		child { node {$\downarrow$} } 
		child { node {$\downarrow$} }
		child[thin] { node {$\downarrow$} }		 
		child[thin] { node[left=0.2cm] {$\ldots$} edge from parent[draw=none] } }
	child { node[right=0.2cm] {$\bot$} }
	child { node {$\bot$} }
	child { node[left=0.5cm] {$\ldots$} edge from parent[draw=none] };	
\end{tikzpicture}
\end{center}
As far as the intuitive interpretation of this computation is concerned, the indices $0$ and $1$ are irrelevant, and only the paths highlighted in bold can occur. However, $\treet{-}$ treats $or$ as uninterpreted syntax, hence the paths that can never occur are still present in the effect tree.
\end{example}

\begin{example}[Probabilistic choice]
Define $\Sigma=\{p\text{-}or\}$. Intuitively, the operation $p\text{-}or(v,x.t)$ chooses between $t[\overline{0}/x]$ and $t[\overline{1}/x]$ with probability $0.5$. The rest of the branches can never occur. The following computation:
\begin{multline*}
\treet{p\text{-}or(\overline{0},\ x.\caset{x \\}{loop}{y}{\caset{y}{\downarrow}{z}{loop}})}=
\end{multline*}
\begin{center}
\begin{tikzpicture}[level distance=1cm]
\node {$p\text{-}or_0$}
	child { node {$\bot$} edge from parent[very thick]}
	child { node {$\downarrow$} edge from parent[very thick]}
	child { node {$\bot$} }
	child { node {$\bot$} }
	child { node[left=0.2cm] {$\ldots$} edge from parent[draw=none] };
\end{tikzpicture}
\end{center}
is analogous to the EPCF computation $p\text{-}or(loop,\return{\star})$ from Example \ref{Eg_prob_epcf_tree}, irrespective of the index of $p\text{-}or$, here $0$.
\end{example}

\begin{example}[Global store]
There is a finite set of locations $\mathbb{L}$ that can store natural numbers and $\Sigma=\{lookup_l,\ update_l \mid l\in\mathbb{L}\}$. The intuitive interpretation of $lookup_l(v,x.t)$ is that it ignores $v$, it looks up the value at location $l$, if this is $\overline{n}$ it continues with $t[\overline{n}/x]$.
For $update_l(v,x.t)$ the intuition is: write the number $v$ in location $l$ then continue with the computation $t[\overline{0}/x]$.

The EPCF computation tree from  Example \ref{Eg_globalstore_tree} can be adapted here:
\begin{equation*}
\treet{update_{l_0}(\overline{0},\ x.lookup_{l_0}(\overline{1},\ x'.\caset{x'}{\downarrow}{y}{loop}))}=
\end{equation*}
\begin{center}
\begin{tikzpicture}[level distance=1.1cm, sibling distance=3.5cm]
\node {$update_{l_0,0}$}
	child { node {$lookup_{l_0,1}$} edge from parent[very thick]
		child[sibling distance=1cm] { node {$\downarrow$} }
		child[sibling distance=1cm] { node {$\bot$} edge from parent[thin]}
		child[sibling distance=1cm] { node {$\bot$} edge from parent[thin]}
		child[sibling distance=1cm] { node[left=0.2cm] {$\ldots$}  edge from parent[draw=none] } }
	child { node {$lookup_{l_0,1}$} 
		child[sibling distance=1cm] { node {$\downarrow$} }
		child[sibling distance=1cm] { node {$\bot$} }
		child[sibling distance=1cm] { node {$\bot$} }
		child[sibling distance=1cm] { node[left=0.2cm] {$\ldots$}  edge from parent[draw=none] } }
	child { node {$lookup_{l_0,1}$} 
		child[sibling distance=1cm] { node {$\downarrow$} }
		child[sibling distance=1cm] { node {$\bot$} }
		child[sibling distance=1cm] { node {$\bot$} }
		child[sibling distance=1cm] { node[left=0.2cm] {$\ldots$}  edge from parent[draw=none] } }
	child { node[left=2cm] {$\ldots$}  edge from parent[draw=none] };
\end{tikzpicture}
\end{center}
Only the path in bold can occur in the computation above.
\end{example}

\begin{example}[Input/output]
Define $\Sigma=\{read, write\}$. Intuitively, the computation $read(v,x.t)$ ignores $v$, accepts as input a number $\overline{n}$ and continues with $t[\overline{n}/x]$. The computation $write(v,x.t)$ writes $v$ to the output channel then continues with computation $t[\overline{0}/x]$.

Below is a computation that inputs a number then outputs it immediately. Only the paths in bold can occur:
\begin{equation*}
\treet{read(\overline{0},\ x.write(x,\ x'.\caset{x}{\downarrow}{y}{loop}))}=
\end{equation*}
\begin{center}
\begin{tikzpicture}[level distance=1.1cm, sibling distance=3.5cm]
\node {$read_{0}$}
	child { node {$write_{0}$} edge from parent[very thick]
		child[sibling distance=1cm] { node {$\downarrow$} }
		child[sibling distance=1cm] { node {$\downarrow$} edge from parent[thin]}
		child[sibling distance=1cm] { node {$\downarrow$} edge from parent[thin]}
		child[sibling distance=1cm] { node[left=0.2cm, black] {$\ldots$}  edge from parent[draw=none] } }
	child { node {$write_{1}$} edge from parent[very thick] 
		child[sibling distance=1cm] { node {$\bot$} }
		child[sibling distance=1cm] { node {$\bot$} edge from parent[thin]}
		child[sibling distance=1cm] { node {$\bot$} edge from parent[thin]}
		child[sibling distance=1cm] { node[left=0.2cm] {$\ldots$}  edge from parent[draw=none] } }
	child { node {$write_{2}$} edge from parent[very thick]
		child[sibling distance=1cm] { node {$\bot$} }
		child[sibling distance=1cm] { node {$\bot$} edge from parent[thin]}
		child[sibling distance=1cm] { node {$\bot$} edge from parent[thin]}
		child[sibling distance=1cm] { node[left=0.2cm] {$\ldots$}  edge from parent[draw=none] } }
	child { node[left=2cm] {$\ldots$}  edge from parent[draw=none] };
\end{tikzpicture}
\end{center}
\end{example}

\section{A Coinduction Proof Principle}\label{Sec_coind_princ}

Coinduction is a proof techniques that will be used in the following chapter. We give an abstract overview of a coinduction proof principle using the basic notions of a category, functor, terminal object and coalgebra, all of which can be found in an introduction to category theory such as \cite{AbrT17}. The following definition appears in \cite{JacR11}:

\begin{definition}\label{Def_bisim_abstract}
Let $T:\mathbf{Set}\Longrightarrow\mathbf{Set}$ be a functor. Take two $T$-coalgebras $(X,\ a_X:X\longrightarrow T(X))$ and $(Y,\ a_Y:Y\longrightarrow T(Y))$. A $T$-\emph{bisimulation} between $(X,\ a_X)$ and $(Y,\ a_Y)$ is a relation $\mathcal{R}\subseteq X\times Y$ for which there exists a $T$-coalgebra structure $g:\mathcal{R}\longrightarrow T(\mathcal{R})$ such that the two projection functions $\pi_1:\mathcal{R}\longrightarrow X$ and $\pi_2:\mathcal{R}\longrightarrow Y$ are $T$-coalgebra morphisms:

\begin{equation*}
\begin{tikzcd}
X  \arrow[d, "a_X"] \arrow[r, leftarrow, "\pi_1"] & \mathcal{R} \arrow[r, "\pi_2"]  \arrow[d, "g"] & Y \arrow[d, "a_Y"]	\\
T(X) \arrow[r, leftarrow, "T(\pi_1)"] & T(\mathcal{R})  \arrow[r, "T(\pi_2)"] & T(Y)
\end{tikzcd}
\end{equation*}
Use the following notation:
\begin{equation*}
x\cong y \quad \Longleftrightarrow \quad \text{there exists a } T\text{-bisimulation } \mathcal{R}\subseteq X\times Y \text{ with } (x,y)\in\mathcal{R}.
\end{equation*}
\end{definition}

We can now formulate the following coinduction proof principle:

\begin{proposition}[From \cite{JacR11}]\label{Prop_coind_princip}
Consider the final $T$-coalgebra $(Z,\ c:Z\longrightarrow T(Z))$, if it exists. Let $\alpha_X : X\longrightarrow Z$ and $\alpha_Y : Y\longrightarrow Z$ be coalgebra morphisms. For all $x\in X$ and $y\in Y$:
\begin{center}
if $x\cong y$ then $\alpha_X(x)=\alpha_Y(y)$.
\end{center}
\end{proposition}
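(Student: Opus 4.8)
This is a standard coinduction argument following from the universal property of the final coalgebra.

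\textbf{Proof plan.}
The plan is to exploit the universal property of the final $T$-coalgebra $(Z, c)$, namely that it admits a \emph{unique} coalgebra morphism from any $T$-coalgebra. First I would take a $T$-bisimulation $\mathcal{R}\subseteq X\times Y$ with $(x,y)\in\mathcal{R}$, which exists by the hypothesis $x\cong y$. By Definition~\ref{Def_bisim_abstract}, there is a coalgebra structure $g:\mathcal{R}\longrightarrow T(\mathcal{R})$ making the two projections $\pi_1:\mathcal{R}\longrightarrow X$ and $\pi_2:\mathcal{R}\longrightarrow Y$ into $T$-coalgebra morphisms.

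The key step is to chase morphisms into $Z$. Since $(Z,c)$ is final, there is a unique coalgebra morphism $\alpha_{\mathcal{R}}:\mathcal{R}\longrightarrow Z$. Now consider the two composites $\alpha_X\circ\pi_1$ and $\alpha_Y\circ\pi_2$, both going from $\mathcal{R}$ to $Z$. Each is a composite of coalgebra morphisms ($\pi_1$ and $\alpha_X$ in the first case, $\pi_2$ and $\alpha_Y$ in the second), hence each is itself a coalgebra morphism $(\mathcal{R},g)\longrightarrow(Z,c)$. By finality of $(Z,c)$, such a morphism is unique, so
\begin{equation*}
\alpha_X\circ\pi_1 \;=\; \alpha_{\mathcal{R}} \;=\; \alpha_Y\circ\pi_2.
\end{equation*}
Evaluating this equality of functions at the element $(x,y)\in\mathcal{R}$ gives $\alpha_X(\pi_1(x,y)) = \alpha_Y(\pi_2(x,y))$, i.e.\ $\alpha_X(x) = \alpha_Y(y)$, which is exactly what we want.

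\textbf{Main obstacle.}
There is no real obstacle here; the argument is essentially a diagram chase. The only point requiring a small amount of care is verifying that the composite of two $T$-coalgebra morphisms is again a $T$-coalgebra morphism — this is a routine check using functoriality of $T$ (pasting the two defining commutative squares) and should be stated explicitly, perhaps as a preliminary remark, since it is invoked twice. One should also be mindful that the conclusion concerns \emph{elements} $x\in X$, $y\in Y$, so the final move from an equation between morphisms in $\mathbf{Set}$ to an equation between their values is legitimate precisely because we are working in $\mathbf{Set}$.
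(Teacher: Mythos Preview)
Your proof is correct and follows essentially the same route as the paper: use the bisimulation's coalgebra structure to make $\alpha_X\circ\pi_1$ and $\alpha_Y\circ\pi_2$ into coalgebra morphisms $(\mathcal{R},g)\to(Z,c)$, then invoke finality to conclude they coincide. If anything, your write-up is slightly more careful, since you make explicit the closure of coalgebra morphisms under composition and the final evaluation at $(x,y)$.
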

\begin{proof}
It suffices to show that the following diagram commutes, where $\mathcal{R}\subseteq X\times Y$ is a bisimulation and $(x,y)\in\mathcal{R}$:
\begin{equation*}
\begin{tikzcd}
Z \arrow[d, "c"] \arrow[r, leftarrow, "\alpha_X"] & X  \arrow[d, "a_X"] \arrow[r, leftarrow, "\pi_1"] & \mathcal{R} \arrow[r, "\pi_2"]  \arrow[d, "g"] & Y \arrow[d, "a_Y"]	\arrow[r, "\alpha_Y"] & Z \arrow[d, "c"] \\
T(Z) \arrow[r, leftarrow, "T(\alpha_X)"] & T(X) \arrow[r, leftarrow, "T(\pi_1)"] & T(\mathcal{R})  \arrow[r, "T(\pi_2)"] & T(Y) \arrow[r, "T(\alpha_Y)"] & T(Z)
\end{tikzcd}
\end{equation*}
This is true because all the small squares commute. Then both $\alpha_X\circ\pi_1$ and $\alpha_Y\circ\pi_2$ are coalgebra morphisms into $Z$. By finality of $Z$, they must be equal.
\end{proof}

\section{Logical Relations}\label{Sec_log_rel_intro}

``Logical relations'' is a proof technique that involves defining a family of relations by induction on the types of a programming language. A relation for type $\tau$ contains only pairs of terms of type $\tau$. The relation is \emph{logical} if, given related functions $f_1$ and $f_2$ of type $\rho\rightarrow\tau$ and related arguments $x_1:\rho$ and $x_2:\rho$, the terms $f_1\ x_1$ and $f_2\ x_2$ are related.

Logical predicates, the unary version of logical relations, have been used to prove strong normalisation of the simply-typed $\lambda$-calculus \cite{DBLP:journals/jsyml/Tait67}, \cite[Chapter 6]{Gir89}. Other versions of logical relations have been used, for example, to characterise program equivalence  \cite{DBLP:conf/esop/Ahmed06} and to prove compiler correctness \cite{DBLP:conf/icfp/BentonH09}. These are examples of syntactic logical relations, based on the operational semantics of a language. These are the kinds of relations used in the next chapter to prove that a continuation-passing translation of EPCF into ECPS is correct. 

There are also logical relations based on denotational models (e.g.~\cite{DBLP:journals/iandc/Pitts96, MPF16}). One of their applications is proving that a denotational model of a programming language in computationally adequate. Adequacy, means that denotational equality is a sound technique for establishing contextual equivalence of programs.

The particular flavour of logical relations we will need is \emph{step-indexed biorthogonal} logical relations. As Jaber and Tabareau explain \cite{JabT11}, biorthogonality allows us to define which terms should be related by specifying their interaction with program contexts. 

A biorthogonal logical relation contains  a collection of relations on values, $\mathcal{V}_\tau$, for each type $\tau$. There is a collection of relations on program contexts defined using the relations on values:
\begin{equation*}
\mathcal{C}_\tau = \{(C_1,C_2) \mid \forall(v_1,v_2)\in\mathcal{V}_\tau.\ \mathfrak{O}(C_1[v_1],C_2[v_2])\}.
\end{equation*}
And finally, a collection of relations on terms defined using the relations on contexts:
\begin{equation*}
\mathcal{T}_\tau = \{(t_1,t_2) \mid \forall(C_1,C_2)\in\mathcal{C}_\tau.\ \mathfrak{O}(C_1[t_1],C_2[t_2])\}.
\end{equation*}
The notation $\mathfrak{O}(C_1[t_1],C_2[t_2])$ stands for an observation about programs $C_1[t_1]$ and $C_2[t_2]$. The notion of observation is chosen on a case-by-case basis, but it usually involves the reduction behaviour of the terms under consideration.

Step-indexing was introduced to deal with programming languages with recursion. This approach, instead of defining a single relation for type $\tau$, defines a family of relations for type $\tau$ indexed by natural numbers. Intuitively, terms in $\mathcal{T}_\tau^n$ are allowed to reduce at most $n$ steps. The natural number indices help to break the vicious circle introduced by recursion in proofs about the logical relation.

Step-indexing and biorthogonality can be combined in a straightforward way as explained for example by Pitts \cite{Pit10}. Because EPCF contains recursion, and because computations can only be evaluated in a stack, it is useful to use both step-indexing and biorthogonality when proving correctness of the translation from EPCF to ECPS. This will be explained in the next chapter.

\section{Chapter Summary}

This chapter introduced a new language ECPS in Sections~\ref{Sec_ecps_def} and \ref{Sec_ecps_oper_sem}. ECPS is a continuation-passing variant of EPCF which will be used in the rest of the dissertation to study program equivalence. 

Being a continuation-passing language means that functions receive an additional argument, which specifies how the computation should proceed once the function has terminated. For example, the successor function in ECPS is:
\begin{equation}
f = \lbd{(n,k)}{(\mathtt{nat},\neg\mathtt{nat})}{(k\ \mathtt{succ}(n))} : \neg(\mathtt{nat},\neg\mathtt{nat}). \label{Eq_ecps_succ_func}
\end{equation}
Here $k:\neg\mathtt{nat}$ is a \emph{continuation}, a function that is waiting for a result of type $\mathtt{nat}$, but is not expected to return.

Because everything is written in continuation-passing style, ECPS computations do not usually return. Therefore, the operational semantics maps an ECPS computation to a tree whose nodes are effect operations and leaves are either $\downarrow$, which signifies the termination effect, or $\bot$ for nontermination. For example, consider the tree of:
\begin{multline*}
m = (\lbd{(n,k)}{(\mathtt{nat},\neg\mathtt{nat})}{\ or(\overline{5},\ y.\caset{y}{(k\ \mathtt{succ}(n))}{y'}{\\ \caset{y'}{(k\ \overline{0})}{y''}{loop}})})\ (\overline{3},\ \lbd{x}{\mathtt{nat}}{\downarrow})
\end{multline*}
\begin{center}
\begin{tikzpicture}[level distance = 1cm, sibling distance=0.8cm]
\node {$or_5$}
	child { node {$\downarrow$} edge from parent[very thick]}
	child { node {$\downarrow$} edge from parent[very thick]}
	child { node {$\bot$} }
	child { node {$\bot$} }
	child { node[left=0.1cm] {$\ldots$} edge from parent[draw=none]};	
\end{tikzpicture}
\end{center}
In ECPS all effect operations have arity $\mathtt{nat}\times\alpha^\mathtt{nat}\rightarrow\alpha$, where $\alpha$ stands for a computation. Therefore, each node in a tree has a child for each natural number. In the case of $or$, the tree carries redundant information: only the paths in bold can occur during computation, and the index $5$ is irrelevant.

Finally, two proof methods were reviewed: coinduction (Section~\ref{Sec_coind_princ}) and logical relations (Section~\ref{Sec_log_rel_intro}), both of which will be used in the next chapter. 

\chapter{Comparison: ECPS vs.~EPCF}\label{Chap_rel-epcf-ecps}

This chapter presents a continuation-passing translation from the language PCF with effects (EPCF), introduced in Chapter \ref{Chap_expl_prob}, to its continuation-passing variant (ECPS), defined in Chapter \ref{Chap_descr_method}. This translation is proved correct using logical relations and coinduction. The last section is an informal argument for why translating ECPS into EPCF is not possible. The contents of this chapter justify the choice of the ECPS language, so the following chapters are only concerned with ECPS.

\section{CPS Translation}\label{Sec_cps_trans}

Preliminary ideas for the translation from EPCF to ECPS appear in a technical report by Lafont, Reus and Streicher \cite{LRS93}, but for simpler languages. In order to simplify the translation, we replace all the EPCF effect operations by operations with arity $\mathbbm{N}\times\alpha^\mathbbm{N} \rightarrow \alpha$, as follows: 
\begin{align*}
&\sigma(M_0,\ldots,M_{n-1})\ \text{becomes}\ \sigma(Z;W)	\\
&\sigma(V;M_0,\ldots,M_{n-1})\ \text{becomes}\ \sigma(V;W)	\\
&\sigma(V)\ \text{becomes}\ \sigma(Z;V)
\end{align*}
where
\begin{align*}
W=\lbd{x}{\mathbbm{N}}{\casess{x}{M_0}{x_1&}
								{\\ \casess{x_1}{M_1}{x_2&}{\ldots \\ \casess{x_{n-1}&}{M_{n-1}}{x_n}{loop} \ldots}}}	\\
loop = \letin{(\fix{\lbd{f}{\mathbbm{N}{\rightarrow}\tau}{\return{f}}})}{x&}{x\ Z}	\tag{where $\tau$ is the type of $M_i$}.	
\end{align*}
The function $W$ chooses between the arguments $M_0,\ldots,M_{n-1}$, just as $\sigma$ used to do. This simplification means that EPCF computation trees carry more redundant information. For example:

\begin{example}[Nondeterminism]
The computation tree of $?nat$

\begin{minipage}[t]{0.49\textwidth}
\begin{center}
\begin{tikzpicture}[level distance=0.8cm]
\node (tr) {$or$}
	child { node {$\overline{0}$} }
	child { node {$or$}
		child { node {$\overline{1}$} }
		child { node {$or$}
			child { node {$\overline{2}$} }
			child { node {} edge from parent[dashed] } } };
\node (comp) [left= 1cm of tr]  {$\trees{?nat}=$};
\node [right=4cm of comp] {becomes};			
\end{tikzpicture}
\end{center}
\end{minipage}\hfill
\begin{minipage}[t]{0.47\textwidth}
\begin{center}
\begin{tikzpicture}[level distance=1cm, sibling distance=1cm]
\node {$or$}
	child { node {$\overline{0}$} edge from parent[very thick] }
	child { node {$or$} edge from parent[very thick]
		child { node {$\overline{1}$} }
		child { node {$or$}
			child { node {$\overline{2}$} }
			child { node {} edge from parent[dashed] } 
			child[thin] { node {} edge from parent[dashed] }
			child[thin] { node {} edge from parent[dashed] } 
			child[thin] { node[left=0.1cm] {$\ldots$} edge from parent[draw=none] } } 
		child[thin] { node {$\bot$} }
		child[thin] { node {$\bot$} }
		child[thin] { node[left=0.1cm] {$\ldots$} edge from parent[draw=none] }	}
	child { node {$\bot$} }
	child { node {$\bot$} }
	child { node[left=0.1cm] {$\ldots$} edge from parent[draw=none] }	;	
\end{tikzpicture}
\end{center}
\end{minipage}
where only the paths in bold can occur. The definitions of modalities in $\mathcal{O}$ need to be adjusted to take this redundancy into account. For the effects considered in the dissertation, Scott-openness, decomposability and program equivalence are preserved.
\end{example}

The CPS translation for types is defined as follows:
\begin{align*}
\mathbbm{N}^* &= \mathtt{nat}	\\
\mathbbm{1}^* &= \mathtt{unit}	\\
(\rho\rightarrow\tau)^* &= \neg(\rho^*,\neg\tau^*). 
\end{align*}

To translate contexts, assume that all variable names from EPCF appear in ECPS:
\begin{align*}
\emptyset^* &=\emptyset	\\
(\Gamma,x:\tau)^* &= \Gamma^*,x:\tau^*.
\end{align*}

The CPS translation for values and computations is given in Figures \ref{Fig_CPS_trans} and \ref{Fig_CPS_trans_continued}. It relates values in context $\Gamma \vdash V:\tau$ to ECPS values $\Gamma^* \vdash V^*:\tau^*$. A function $F=\lbd{x}{\rho}{M}$ of type $\rho\rightarrow\tau$ is translated to $F^*=\lbd{(x,k)}{(\rho^*,\neg\tau^*)}{M^*\ k}$ of type $\neg(\rho^*,\neg\tau^*)$. The function $F^*$ is waiting for an argument of type $\rho^*$ and a continuation of type $\neg\tau^*$. The continuation is itself waiting for the result of computation $M$.

\begin{figure}
\begin{align*}
(\Gamma \vdash Z : \mathbbm{N})^* &= \Gamma^* \vdash \mathtt{zero} : \mathtt{nat} 	\\
(\Gamma \vdash S(V) : \mathbbm{N})^* &= \Gamma^* \vdash \mathtt{succ}(V^*) : \mathtt{nat}	\\
(\Gamma \vdash x : \tau)^* &= \Gamma^* \vdash x : \tau^*	\\
(\Gamma \vdash \star : \mathbbm{1})^* &= \Gamma^* \vdash \star : \mathtt{unit}	\\
(\Gamma \vdash \lbd{x}{\rho}{M} : \rho\rightarrow\tau)^* &= \Gamma^* \vdash \lbd{(x,k)}{(\rho^*,\neg\tau^*)}{(M^*\ k)} : \neg(\rho^*,\neg\tau^*)	\\
(\Gamma \vdash V\ W : \tau)^* &= \Gamma^* \vdash \lbd{k}{\neg\tau^*}{V^*\ (W^*,k)} : \neg\neg\tau^*	\\
(\Gamma \vdash \return{V} : \tau)^* &= \Gamma^* \vdash \lbd{k}{\neg\tau^*}{(k\ V^*)} : \neg\neg\tau^*	\\
(\Gamma \vdash \letin{M}{x}{N} : \tau)^* &= \Gamma^* \vdash \lbd{k}{\neg\tau^*}{M^*\ \lbd{x}{\rho^*}{(N^*\ k)}} : \neg\neg\tau^*	\\
(\Gamma \vdash \casess{V}{M}{x}{N} : \tau)^* &= 	\\
\Gamma^* \vdash\lbd{k}{\neg\tau^*}{\caset{V^*}{&M^*\ k}{x}{N^*\ k} : \neg\neg\tau^*}	\\
(\Gamma \vdash \sigma(V;W) : \tau)^* &= \Gamma^* \vdash \lbd{k}{\neg\tau^*}{\sigma(V^*,x.W^*(x,k))} : \neg\neg\tau^*
\end{align*}
\caption{CPS translation of EPCF into ECPS -- first part.} \label{Fig_CPS_trans}
\end{figure}

\begin{figure}
\begin{lstlisting}[mathescape=true]
$(\Gamma \vdash \mathbf{fix}\ F : \tau\rightarrow\rho)^* =$ 
$\Gamma^* \vdash \lambda k{:}\neg(\neg(\tau^*,\neg\rho^*)).$
   $(\mu x. \lambda c{:}\neg(\neg(\tau^*,\neg\rho^*)).$
       $c\ \lambda(x',k'){:}(\tau^*,\neg\rho^*).$
          $\lambda k''{:}\neg\rho^*.\lambda l{:}\neg(\neg(\tau^*,\neg\rho^*)).(F^*\ (\lambda(y,l'){:}(\tau^*,\neg\rho^*).$
                                  $\lambda p{:}\neg\rho^*.x\ (\lambda z{:}\neg(\tau^*,\neg\rho^*).(\lbd{p'}{\neg\rho^*}{z\ (y,p')})$
                                                     $p)$
                                   $l',$
                                 $l)$
                             $)$
                $\lambda w{:}\neg(\tau^*,\neg\rho^*).((\lambda l''{:}\neg\rho^*.w\ (x',l''))\ k'')$
          $k'$
   $)\ k$ 
   $: \neg\neg(\neg(\tau^*,\neg\rho^*))$
\end{lstlisting}
\caption{CPS translation of EPCF into ECPS -- continued.} \label{Fig_CPS_trans_continued}
\end{figure}

A computation $\Gamma\vdash M:\tau$ is translated to a function $\Gamma^*\vdash M^*:\neg\neg\tau^*$. This is because $M^*$ is waiting for a continuation of type $\neg\tau^*$ to which it can pass its result.

Computation $\return{V}:\tau$ becomes $\lbd{k}{\neg\tau^*}{k\ V^*}$. So value $V^*$ is passed to the current continuation. Application $V\ W:\tau$ becomes $\lbd{k}{\neg\tau^*}{V^*\ (W^*,k)}$, which means that $V^*$ is given arguments $W^*$ and a continuation that is waiting for the result of the application. Sequencing $\letin{M}{x}{N}:\tau$ is implemented as $\lbd{k}{\neg\tau^*}{M^*\ \lbd{x}{\rho^*}{(N^*\ k)}}$. First, $M^*$ is evaluated and its result is passed to the continuation $\lbd{x}{\rho^*}{(N^*\ k)}$ which executes $N^*$ and passes the result to $k$.

To translate effects, assume that the set of effect operations $\Sigma$ is the same in EPCF and ECPS. The translation of $\sigma(V;W):\tau$ is as expected, the function $W^*$ is passed a continuation in addition to argument $x$.

The translation of $(\fix{F}):\tau\rightarrow\rho$ is complicated by the reduction behaviour of $\fix{F}$. Recall that:
\begin{equation*}
\fix{F} \rightsquigarrow \return{\lbd{x}{\tau}{\letin{F\ (\lbd{y}{\tau}{\letin{\fix{F}}{z}{z\ y}})}{w}{w\ x}}}.
\end{equation*}
As a sanity check for the translation, it can be shown that
\begin{equation*}
(\fix{F})^*\ k\longrightarrow^* k\ (\lbd{x}{\tau}{\letin{F\ (\lbd{y}{\tau}{\letin{\fix{F}}{z}{z\ y}})}{w}{w\ x}})^*
\end{equation*}
which matches:
\begin{multline*}
(\return{\lbd{x}{\tau}{\letin{F\ (\lbd{y}{\tau}{\letin{\fix{F}}{z}{z\ y}})}{w}{w\ x}}})^*\ k \longrightarrow 	\\
k\ (\lbd{x}{\tau}{\letin{F\ (\lbd{y}{\tau}{\letin{\fix{F}}{z}{z\ y}})}{w}{w\ x}})^*.
\end{multline*}

\section{Typing and CPS Translation for Stacks}

In the CPS translation $(\return{V})^*=\lbd{k}{\neg\tau^*}{k\ V^*}$, the continuation $k$ plays a similar role to the stack $S\circ\letin{(-)}{x}{M},$ in the reduction:
\begin{equation*}
(S\circ\letin{(-)}{x}{M},\ \return{V}) \rightarrowtail (S,\ M[V/x]).
\end{equation*}
This suggests that EPCF stacks can be translated to ECPS continuations. Since the CPS translation is typed, we allow stacks to have free variables and introduce typing judgements for stacks. The typing judgement $\Gamma \vdash S : \tau\Rightarrow\rho$  says that by substituting a \emph{closed} computation $M$ of type $\tau$ for the hole in $S$ we obtain a computation $\Gamma\vdash S\{M\}:\rho$.
\begin{equation*}
\inferrule
	{ }
	{\Gamma \vdash id : \rho\Rightarrow\rho}
(\textsc{sid}) \qquad
\inferrule
	{\Gamma,x:\tau \vdash M :\tau' \\ \Gamma \vdash S : \tau'\Rightarrow\rho}
	{\Gamma \vdash S \circ (\letin{(-)}{x}{M}) : \tau\Rightarrow\rho}
(\textsc{slet})		
\end{equation*}

The substitution of closed values for free variables can be extended to stacks:
\begin{align*}
id[V/y] &= id	\\
(S \circ \letin{(-)}{x}{M}) &= S[V/y]\circ(\letin{(-)}{x}{M[V/y]}).
\end{align*}

Now EPCF stacks $\Gamma \vdash S : \tau\Rightarrow\rho$ can be translated to ECPS function values $\Gamma^* \vdash S^* : \neg\tau^*$, which are in fact continuations:
\begin{align*}
(\Gamma \vdash id :\rho\Rightarrow\rho)^* &= \Gamma^* \vdash \lbd{x}{\rho^*}{\downarrow}\ : \neg\rho^*	\\
(\Gamma \vdash S \circ (\letin{(-)}{x}{M}) : \tau\Rightarrow\rho)^* &= \Gamma^* \vdash (\lbd{x}{\tau^*}{M^*\ S^*}) : \neg\tau^*.
\end{align*}

The empty stack $id$ is equivalent to a continuation which given any value terminates. Here, we can see an important difference between EPCF and ECPS: the stack $id$ returns the value given to it, while $id^*$ discards the value so we cannot observe it.
The translation of $S \circ (\letin{(-)}{x}{M})$ is a continuation which, when given value $x$, executes $M^*$ and passes the result to $S^*$.

\section{Correctness of the CPS Translation}\label{Sec_cps_correct}

The goal of this section is to prove that EPCF computation trees  preserve their shape when translated into ECPS. This is stated as:

\begin{theorem}[The CPS translation is correct]\label{Thm_CPS_correct}
For any closed computation $M:\tau$ in EPCF and any stack $S:\tau\Rightarrow\rho$ the following holds:
\begin{equation*}
\treet{M^*\ S^*} = \trees{S,M}[\downarrow/l_1,\downarrow/l_2,\ldots].
\end{equation*}
That is, all the value leaves of $\trees{S,M}$ are replaced by $\downarrow$, but the nodes and the $\bot$-leaves stay the same.
\end{theorem}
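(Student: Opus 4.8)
The plan is to establish a tight, step-by-step correspondence between the EPCF evaluation of a configuration $(S,M)$ and the ECPS evaluation of the computation $M^{*}\,S^{*}$, and then to lift that correspondence from single reduction steps to whole effect trees. Because EPCF has recursion and because an EPCF computation only runs inside a stack, the correspondence is most robustly organised as a \emph{step-indexed biorthogonal logical relation} between EPCF syntax and its CPS image, with the passage to the limit trees handled \emph{coinductively}. Before starting I would record two routine preliminaries: (i) a substitution lemma, $(P[V/x])^{*}=P^{*}[V^{*}/x]$ for $P$ a value, computation, or stack, by structural induction; and (ii) invariance of effect trees under reduction, namely $\trees{S,M}=\trees{S',M'}$ whenever $(S,M)\rightarrowtail(S',M')$, and $\treet{t}=\treet{t'}$ whenever $t\longrightarrow t'$ --- immediate from $\treet{t}_{n+1}=\treet{t'}_{n}$ and its EPCF analogue, which also shows it suffices for $t$ and $t'$ merely to share a reduct (the situation for $\fix{F}$, where $(\fix{F})^{*}\,S^{*}$ and $(\return{\ldots})^{*}\,S^{*}$ both reduce to $S^{*}\,(\ldots)^{*}$ by the sanity check of Section~\ref{Sec_cps_trans}).

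Next I would define, by simultaneous induction on types and on a step index $n$, three families: a value relation $\mathcal{V}^{n}_{\tau}$ between closed EPCF values of type $\tau$ and closed ECPS values of type $\tau^{*}$; a continuation (``stack'') relation $\mathcal{S}^{n}_{\tau}$ between closed EPCF stacks $S:\tau\Rightarrow\rho$ and closed ECPS values $k:\neg\tau^{*}$; and a computation relation between closed EPCF computations of type $\tau$ and closed ECPS values of type $\neg\neg\tau^{*}$. On ground types $\mathcal{V}^{n}$ is just equality of the obvious normal forms; on $\rho\rightarrow\tau$, a function value $F$ is related to $v$ iff for every $m\le n$, every $(W,w)\in\mathcal{V}^{m}_{\rho}$ and every $(S,k)\in\mathcal{S}^{m}_{\tau}$ the effect trees of $(S,F\,W)$ and of $v\,(w,k)$ agree up to $m$ reduction steps (in the $\omega$-CPO sense: the $m$-th approximant of each, with the value leaves on the EPCF side overwritten by $\downarrow$, is dominated by the other tree) --- this biorthogonal clause specifies a function by how it interacts with an argument and a continuation. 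Dually, $S$ is related to $k$ iff for every $m<n$ and $(V,v)\in\mathcal{V}^{m}_{\tau}$ the trees of $(S,\return{V})$ and $k\,v$ agree up to $m$ steps; and $M$ is related to $\hat M$ iff for every $m\le n$ and $(S,k)\in\mathcal{S}^{m}_{\tau}$ the trees of $(S,M)$ and $\hat M\,k$ agree up to $m$ steps. The step index, which strictly decreases in the function clause and at every $\fix$-unfolding, is what makes the definition well founded and what later breaks the recursion in the proof.

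The core is then the \textbf{Fundamental Lemma}: the CPS translation sends every well-typed EPCF value and computation to a related one, at every step index and under every pair of related substitutions for the free variables. I would prove it by induction on the typing derivation, using the step index for \textsc{fix}. The cases \textsc{var}, \textsc{unit}, \textsc{zero}, \textsc{succ}, \textsc{ret} are immediate. For \textsc{app}, $(V\,W)^{*}\,k\longrightarrow V^{*}\,(W^{*},k)$, and since $V$ is a closed function value this reduces once more to $(M'[W/x])^{*}\,k$; the induction hypotheses, the substitution lemma and tree-invariance close the case. For \textsc{let}, $(\letin{M}{x}{N})^{*}\,k\longrightarrow M^{*}\,(\lbd{x}{\rho^{*}}{(N^{*}\,k)})$, and $\lbd{x}{\rho^{*}}{(N^{*}\,k)}$ is precisely the CPS image of the enlarged stack $S\circ(\letin{(-)}{x}{N})$, so the case follows from the hypothesis for $M$ against a continuation assembled from the hypothesis for $N$. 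The \textsc{case} case is similar, splitting on whether the closed scrutinee is $\mathtt{zero}$ or a successor. For an effect operation, $(\sigma(\overline{m};W))^{*}\,k\longrightarrow\sigma(\overline{m},x.\,W^{*}(x,k))$, so both trees have root $\sigma_{m}$ and their $j$-th children are the trees of $(S,W\,\overline{j})$ and of $W^{*}(\overline{j},k)$ --- which, as $(W\,\overline{j})^{*}\,k\longrightarrow W^{*}(\overline{j},k)$, are related at the smaller index by the induction hypothesis. The \textbf{hard case is \textsc{fix}}: here I would use the explicit reduction $(\fix{F})^{*}\,k\longrightarrow^{*}k\,t_{F}$ of Section~\ref{Sec_cps_trans}, where $t_{F}$ is the CPS image of the thunk to which $\fix{F}$ reduces, together with the hypothesis for $F$ and a nested induction on the step index, to show that the $\mu$-recursion of $(\fix{F})^{*}$ unfolds in lockstep with the EPCF unfoldings of $\fix{F}$. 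Making step-indexing (which licenses the recursive appeal) mesh cleanly with the large, self-referential $\fix$ translation and with the biorthogonal quantifiers is where the real bookkeeping lies, and I expect this to be the main obstacle.

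Finally I would conclude. Instantiating the Fundamental Lemma at a closed $M:\tau$, together with a short induction (over \textsc{sid}/\textsc{slet}) showing that any closed stack $S:\tau\Rightarrow\rho$ is related to $S^{*}$ at every index, yields for every $n$ both $\trees{S,M}_{n}[\downarrow/l_{1},\downarrow/l_{2},\ldots]\le\treet{M^{*}\,S^{*}}$ and $\treet{M^{*}\,S^{*}}_{n}\le\trees{S,M}[\downarrow/l_{1},\downarrow/l_{2},\ldots]$. Since the leaf-rewriting map $\textit{Trees}(\rho)\to\textit{Trees}_{\Sigma}$ is Scott-continuous, the least upper bound of the left-hand sides of the first family is $\trees{S,M}[\downarrow/l_{1},\downarrow/l_{2},\ldots]$, while the least upper bound in the second family is $\treet{M^{*}\,S^{*}}$ by definition; hence $\trees{S,M}[\downarrow/l_{1},\downarrow/l_{2},\ldots]=\treet{M^{*}\,S^{*}}$, which is the claim. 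An alternative way to extract the equality from the same correspondence is to check that $\{(\trees{S,M}[\downarrow/\ldots],\treet{M^{*}\,S^{*}})\}$ is a bisimulation for the tree functor $X\mapsto 1+1+\coprod_{\sigma\in\Sigma}\mathbb{N}\times X^{\mathbb{N}}$ --- its three clauses corresponding exactly to $(S,M)$ diverging silently, reaching $(id,\return{V})$, or reaching an effect node $\sigma(\overline{m};W)$ --- and to invoke the coinduction proof principle of Proposition~\ref{Prop_coind_princip}.
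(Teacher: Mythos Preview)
Your proposal is correct and follows essentially the same architecture as the paper: a step-indexed biorthogonal logical relation between EPCF terms and their CPS images (with the \textsc{fix} case driven by the decreasing index), followed by a limit/coinduction argument to pass from finite approximants to full effect trees. The only packaging differences are that the paper factors your ``trees agree up to $m$ steps'' observation through an explicit operational step-indexed similarity relation $\mathcal{S}^{n}$ (and its dual $\mathcal{S'}^{n}$ for the other direction), and concludes via the coalgebraic route you mention as an alternative (showing $\treet{-}$ and $\trees{-,-}^{*}$ are coalgebra morphisms into the final $T$-coalgebra $\textit{Trees}_{\Sigma}$ and that $\mathcal{S}\cap\mathcal{S'}$ is a $T$-bisimulation) rather than your primary Scott-continuity argument.
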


\begin{example}[Probabilistic choice]
As explained in Section~\ref{Sec_cps_trans}, EPCF operation $p\text{-}or$ is replaced by its version with arity $\mathbbm{N}\times\alpha^\mathbbm{N}\rightarrow\alpha$. So computation $p\text{-}or(loop,\return{\overline{3}})$ becomes:
\begin{equation*}
p\text{-}or(Z;\ \lbd{x}{\mathbbm{N}}{\casess{x}{loop}{x'}{\casess{x'}{\return{\overline{3}}}{x''}{loop}}}).
\end{equation*}
However, we will still use notation $p\text{-}or(loop,\return{\overline{3}})$ for convenience, even when the version with arity $\mathbbm{N}\times\alpha^\mathbbm{N}\rightarrow\alpha$ is meant. Its computation tree is:
\begin{center}
\begin{tikzpicture}[level distance=1cm, sibling distance=0.8cm]
\node {$p\text{-}or_0$}
	child { node {$\bot$} edge from parent[very thick]}
	child { node {$\overline{3}$} edge from parent[very thick]}
	child { node {$\bot$} }
	child { node {$\bot$} }
	child { node[left=0.05cm] {$\ldots$} edge from parent[draw=none] };
\node (tr) [left=1cm] {$\trees{id,\ p\text{-}or(loop,\return{\overline{3}})}=$}	;
\end{tikzpicture}
\end{center}
According to Theorem \ref{Thm_CPS_correct}, when translated to ECPS, this computation tree becomes:
\begin{lstlisting}[mathescape=true, escapechar = !]
$\treet{(p\text{-}or(loop,\return{\overline{3}}))^*\ id^*}=$
  $\llbracket(\lambda k{:}\mathtt{nat}.\ p\text{-}or(\overline{0},\ y.(\lambda(x,k'){:}(\mathtt{nat},\neg\mathtt{nat}).$
                  $(\lambda k''{:}\mathtt{nat}.\ \mathtt{case}\ x\ \mathtt{in}\ \{\mathtt{zero}\Rightarrow loop^*\ k'',$
                     $\mathtt{succ}(x')\Rightarrow(\lambda k'''{:}\mathtt{nat}.\ \mathtt{case}\ x'\ \mathtt{in}\ \{\mathtt{zero}\Rightarrow(\lambda l{:}\mathtt{nat}.l\ \overline{3})\ k''', $
                                 $\mathtt{succ}(x'')\Rightarrow loop^*\ k'''\}$
                              $)\ k''\}$
                  $)\ k'$
                $)\ (y,k)$
            $)$
  $)\ (\lbd{x}{\mathtt{nat}}{\downarrow})\rrbracket=$
\end{lstlisting}
\begin{center}
\begin{tikzpicture}[level distance=1cm, sibling distance=0.8cm, auto]
\begin{scope}[transform canvas={xshift = -3.5cm, yshift=1.1cm}]
\node {$p\text{-}or_0$}
	child { node {$\bot$} edge from parent[very thick]}
	child { node {$\downarrow$} edge from parent[very thick]}
	child { node {$\bot$} }
	child { node {$\bot$} }
	child { node[left=0.05cm] {$\ldots$} edge from parent[draw=none] };
\end{scope}	
\end{tikzpicture}
\end{center}
\end{example}

In order to prove Theorem \ref{Thm_CPS_correct}, we introduce a step-indexed biorthognal logical relation, as discussed in Section \ref{Sec_log_rel_intro}. The purpose of the logical relation is to relate EPCF and ECPS computations that have similar reduction behaviour. Because EPCF computations reduce in a stack, it is useful to use biorthogonality, and have a relation between stacks and continuations as well. Step-indexing is needed to deal with the $\mathbf{fix}$ constructor.

The notion of observation $\mathfrak{O}((S,M),\ t)$ that the logical relation uses is a \emph{step-indexed similarity} relation $\mathcal{S}$: 
\begin{equation*}
\mathcal{S}^n_{\tau,\rho} \subseteq (Stack(\tau,\rho)\times Comp(\tau)) \times (\vdash)
\end{equation*}
defined for each pair of EPCF types $\tau,\rho$ and each $n\in\mathbb{N}$. This is not related to the applicative similarity discussed before because it contains pairs of programs from two different languages.  However, we will show that $\mathcal{S}$ is related to the abstract notion of bisimulation from Section \ref{Sec_coind_princ}. First, we recall the definition of a step-indexed relation from \cite{Pit10} and then define step-indexed similarity:

\begin{definition}\label{Def_step_indexed}
A step-indexed relation on the set $X$ is an $\mathbb{N}$-indexed family of sets:
\begin{equation*}
R = (R_n \mid n\in\mathbb{N}) \text{ satisfying } X \supseteq R_0 \supseteq R_1 \supseteq R_2 \supseteq\ldots.
\end{equation*}
\end{definition}

\begin{definition}\label{Def_step_indexed_sim}
The step-indexed similarity relation $\mathcal{S}$ is the family of greatest relations such that $(S,M)\mathrel{\mathcal{S}^n_{\tau,\rho}} t$ implies:
\begin{enumerate}
\item \label{Def_step_indexed_sim_1} $\forall p<n.\ ((S,M)\rightarrowtail^p(S',\sigma(V;W)) \implies t\longrightarrow^*\sigma(v,x.t')$ where $V=\overline{m}$ and $v=\overline{m}$ and $\forall l\in\mathbb{N}.\ (S',W\ \overline{l})\mathrel{\mathcal{S}^{n-p}_{\tau',\rho}} t'[\overline{l}/x])$.
\item \label{Def_step_indexed_sim_2} $\forall p\leq n.\ ((S,M)\rightarrowtail^p(id,\return{V})$ for some $V:\rho \implies \tau\longrightarrow^*\downarrow)$.
\end{enumerate} 
\end{definition}


The statement $(S,M)\mathrel{\mathcal{S}^n_{\tau,\rho}} t$ means that the ECPS computation $t$ simulates the reduction of the EPCF configuration $(S,M)$ for $n$ steps. The novel feature of this notion of observation is that it tracks effect operations, not only the termination behaviour of related programs, as compared to previous work on logical relations \cite{DBLP:conf/esop/Ahmed06, DBLP:conf/icfp/BentonH09, Pit10, JabT11}.

\begin{example}[Probabilistic choice]
For example, the ECPS computation:
\begin{multline*}
t=(\lbd{k}{\neg\mathtt{nat}}{\ p\text{-}or(\overline{0},\ x.\caset{x}{k\ \overline{2} \\}{x'}{\caset{x'}{k\ \overline{3}}{x''}{loop}})})\ (\lbd{x}{\mathtt{nat}}{\downarrow})
\end{multline*}
with computation tree
\begin{center}
\begin{tikzpicture}[level distance=1cm, sibling distance=1cm]
\node {$p\text{-}or_0$}
	child { node {$\downarrow$} edge from parent[very thick]}
	child { node {$\downarrow$} edge from parent[very thick]}
	child { node {$\bot$} }
	child { node {$\bot$} }
	child { node {$\ldots$} edge from parent[draw=none]};
\end{tikzpicture}
\end{center}
and EPCF computation $M = p\text{-}or(loop,\return{\overline{3}})$ satisfy $(id, M)\mathrel{\mathcal{S}^n_{\mathbbm{N},\mathbbm{N}}} t$ for any $n>0$.
\end{example}

Intuitively, as the index $n$ increases, step-indexed similarity checks for more and more matching reduction steps. Therefore, $\mathcal{S}$ respects the definition of a step-indexed relation. Below is a useful fact about step-indexed similarity which will be used later. The proof appears in Appendix \ref{App_cps}.

\begin{lemma}\label{Lem_sim_eval}
If $(S,M)\rightarrowtail^k(S',M')$ and $t\longrightarrow^*t'$ where $k<n$ then:
\begin{equation*}
(S,M)\mathrel{\mathcal{S}^n_{\tau,\rho}} t \iff (S',M')\mathrel{\mathcal{S}^{n-k}_{\tau,\rho}} t'.
\end{equation*}
\end{lemma}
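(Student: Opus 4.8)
The plan is to use that $\mathcal{S}$, being the greatest family of relations closed under clauses~\ref{Def_step_indexed_sim_1} and~\ref{Def_step_indexed_sim_2} of Definition~\ref{Def_step_indexed_sim}, is in particular a fixed point of the associated monotone operator: $(S,M)\mathrel{\mathcal{S}^n_{\tau,\rho}} t$ holds \emph{exactly when} the pair $((S,M),t)$ satisfies clauses~\ref{Def_step_indexed_sim_1} and~\ref{Def_step_indexed_sim_2} at index $n$, with $\mathcal{S}$ itself substituted on their right-hand sides. For each direction of the bi-implication I would then just verify those two clauses for the relevant pair. The two ingredients that make this go through are: (i) both $\rightarrowtail$ and $\longrightarrow$ are deterministic, so every configuration, resp.\ closed term, determines a unique (possibly infinite) reduction sequence; and (ii) the configurations $(S_1,\sigma(V;W))$ and $(id,\return{V})$ are $\rightarrowtail$-stuck, while the ECPS terms $\sigma(v,x.t)$ and $\downarrow$ are $\longrightarrow$-stuck; all of this is immediate by inspecting the reduction rules.

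For the forward implication, assuming $(S,M)\mathrel{\mathcal{S}^n_{\tau,\rho}} t$, I would check clause~\ref{Def_step_indexed_sim_1} for $((S',M'),t')$ at index $n-k$ as follows. Given a reduction $(S',M')\rightarrowtail^p(S_1,\sigma(V;W))$ with $p<n-k$, prepending the $k$ given steps yields $(S,M)\rightarrowtail^{k+p}(S_1,\sigma(V;W))$ with $k+p<n$, so clause~\ref{Def_step_indexed_sim_1} for $((S,M),t)$ supplies $t\longrightarrow^*\sigma(v,x.t'')$ with $V=\overline{m}=v$ and $(S_1,W\ \overline{l})\mathrel{\mathcal{S}^{\,n-(k+p)}_{\tau',\rho}} t''[\overline{l}/x]$ for every $l$. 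Because $t\longrightarrow^*t'$, $t\longrightarrow^*\sigma(v,x.t'')$ and $\sigma(v,x.t'')$ is $\longrightarrow$-stuck, determinism of $\longrightarrow$ forces $t'$ to lie on the reduction path from $t$ to $\sigma(v,x.t'')$, whence $t'\longrightarrow^*\sigma(v,x.t'')$; and since $n-(k+p)=(n-k)-p$, the term $t''$ is already the witness required at index $n-k$. Clause~\ref{Def_step_indexed_sim_2} is handled identically, with $\downarrow$ in place of $\sigma(v,x.t'')$.

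For the backward implication, assuming $(S',M')\mathrel{\mathcal{S}^{\,n-k}_{\tau,\rho}} t'$, I would check the two clauses for $((S,M),t)$ at index $n$. Given $(S,M)\rightarrowtail^p(S_1,\sigma(V;W))$ with $p<n$, I would split on whether $p\ge k$ or $p<k$. If $p<k$, then by determinism of $\rightarrowtail$ the configuration $(S_1,\sigma(V;W))$ would have to reduce in $k-p>0$ further steps to reach $(S',M')$, contradicting that it is $\rightarrowtail$-stuck; so this subcase cannot arise. If $p\ge k$, determinism of $\rightarrowtail$ gives $(S',M')\rightarrowtail^{p-k}(S_1,\sigma(V;W))$ with $p-k<n-k$, so clause~\ref{Def_step_indexed_sim_1} for $((S',M'),t')$ yields $t'\longrightarrow^*\sigma(v,x.t'')$ together with the required return-value and subterm data at index $(n-k)-(p-k)=n-p$; then $t\longrightarrow^*t'\longrightarrow^*\sigma(v,x.t'')$ closes the clause. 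Clause~\ref{Def_step_indexed_sim_2} is analogous, the $p<k$ subcase being dispatched by the fact that $(id,\return{V})$ is $\rightarrowtail$-stuck.

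The one delicate point I anticipate is the ``sliding forward'' of the ECPS reduction in the forward direction: from $t\longrightarrow^*\sigma(v,x.t'')$ one must recover $t'\longrightarrow^*\sigma(v,x.t'')$, and this genuinely uses both determinism of $\longrightarrow$ and the stuckness of $\sigma(v,x.t'')$ (and of $\downarrow$), so that $t'$ cannot have overshot the stuck term on the unique reduction path out of $t$. The secondary point is the vacuity of the $p<k$ subcases in the backward direction, which rests on the observation that configurations headed by an effect operation, and the configuration $(id,\return{V})$, admit no $\rightarrowtail$-step. Everything else is index bookkeeping; and since every appeal to clauses~\ref{Def_step_indexed_sim_1}--\ref{Def_step_indexed_sim_2} concerns properties $\mathcal{S}$ genuinely enjoys, there is no hidden circularity, so the lemma follows.
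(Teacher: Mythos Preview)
Your proposal is correct and follows essentially the same approach as the paper's proof: verify the two clauses of Definition~\ref{Def_step_indexed_sim} directly for the target pair, prepending or stripping the given $k$ steps on the EPCF side and sliding along the ECPS reduction. You are in fact more careful than the paper about two points the paper glosses over: the determinism-plus-stuckness argument needed to pass from $t\longrightarrow^*\sigma(v,x.t'')$ to $t'\longrightarrow^*\sigma(v,x.t'')$, and the explicit $p<k$ versus $p\ge k$ case split in the backward direction (the paper simply writes ``the reverse implication is proved similarly'').
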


We can define a well-typed relation $\mathcal{S}$ which encodes the fact that an ECPS computation $t$ simulates configuration $(S,M)$ for \emph{any} number of reduction steps. Step-indexed similarity implies this new notion of similarity.

\begin{definition}\label{Def_sim}
Define similarity as the family $\mathcal{S}$ of greatest relations indexed by EPCF types $\tau$ and $\rho$ such that $(S,M)\mathrel{\mathcal{S}_{\tau,\rho}} t$ implies:
\begin{enumerate}
\item $(S,M)\rightarrowtail^*(S',\sigma(V;W)) \implies t\longrightarrow^*\sigma(v,x.t')$ where $V=\overline{m}$ and $v=\overline{m}$ and $\forall l\in\mathbb{N}.\ (S',W\ \overline{l})\mathrel{\mathcal{S}_{\tau',\rho}} t'[\overline{l}/x]$.
\item $(S,M)\rightarrowtail^*(id,\return{V})$ for some $V:\rho \implies \tau\longrightarrow^*\downarrow$.
\end{enumerate}
\end{definition}

\begin{lemma}\label{Lem_step_simp_impl_sim}
For any configuration $(S,M)$ and closed computation $t$:
\begin{equation*}
(\forall n\in\mathbb{N}.\ (S,M)\mathrel{\mathcal{S}^n_{\tau,\rho}} t) \implies (S,M)\mathrel{ \mathcal{S}_{\tau,\rho}} t.
\end{equation*}
\end{lemma}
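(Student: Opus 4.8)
The plan is to prove the lemma by coinduction, using the hypothesis family itself as a witness for the greatest relation $\mathcal{S}$ of Definition~\ref{Def_sim}. For each pair of EPCF types $\tau,\rho$ define
\[
\mathcal{R}_{\tau,\rho} = \{((S,M),t) \mid \forall n\in\mathbb{N}.\ (S,M)\mathrel{\mathcal{S}^n_{\tau,\rho}} t\},
\]
so that the lemma is exactly the inclusion $\mathcal{R}_{\tau,\rho}\subseteq\mathcal{S}_{\tau,\rho}$. Since $\mathcal{S}$ is the greatest family of relations closed under clauses~(1) and~(2) of Definition~\ref{Def_sim}, it suffices to show that $\mathcal{R}$ is itself closed under those clauses, i.e.\ that $\mathcal{R}$ is a \emph{post-fixed point}. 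Clause~(2) is immediate: if $((S,M),t)\in\mathcal{R}_{\tau,\rho}$ and $(S,M)\rightarrowtail^{p}(id,\return{V})$, then instantiating the hypothesis $(S,M)\mathrel{\mathcal{S}^{p}_{\tau,\rho}}t$ with clause~(2) of Definition~\ref{Def_step_indexed_sim} (valid since $p\le p$) yields $t\longrightarrow^*\downarrow$.

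For clause~(1), suppose $((S,M),t)\in\mathcal{R}_{\tau,\rho}$ and $(S,M)\rightarrowtail^{p}(S',\sigma(V;W))$. I would first apply clause~(1) of Definition~\ref{Def_step_indexed_sim} to $(S,M)\mathrel{\mathcal{S}^{p+1}_{\tau,\rho}}t$ (legitimate since $p<p+1$), obtaining $t\longrightarrow^*\sigma(v,x.t')$ with $V=\overline m=v$ for some $m$. Here one must check that the effect node $\sigma(v,x.t')$ thus obtained does not depend on the index used to extract it: the ECPS reduction $\longrightarrow$ is deterministic and a term of the form $\sigma(v,x.t')$ is $\longrightarrow$-normal, so any two reducts of $t$ of that shape lie on the one deterministic reduction sequence out of $t$ and must coincide (up to $\alpha$-conversion). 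Granting this, fix arbitrary $l,j\in\mathbb{N}$ and apply clause~(1) of Definition~\ref{Def_step_indexed_sim} to $(S,M)\mathrel{\mathcal{S}^{p+j+1}_{\tau,\rho}}t$: it produces the same $\sigma(v,x.t')$ together with $(S',W\ \overline l)\mathrel{\mathcal{S}^{j+1}_{\tau',\rho}} t'[\overline l/x]$ for the relevant $\tau'$, and since $(\mathcal{S}^k_{\tau',\rho})_k$ is a step-indexed relation we also get $(S',W\ \overline l)\mathrel{\mathcal{S}^{j}_{\tau',\rho}} t'[\overline l/x]$. As $l$ and $j$ are arbitrary, $((S',W\ \overline l),t'[\overline l/x])\in\mathcal{R}_{\tau',\rho}$ for every $l$, which is exactly what clause~(1) of Definition~\ref{Def_sim} demands of $\mathcal{R}$.

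Having checked both clauses, $\mathcal{R}$ is a post-fixed point, so $\mathcal{R}\subseteq\mathcal{S}$ by coinduction, which is the statement of the lemma. The only genuine obstacle is the uniformity argument in clause~(1) — that the extracted $\sigma(v,x.t')$ is the same for every sufficiently large index — which rests on determinism and $\longrightarrow$-normality of effect terms; the rest is routine arithmetic on indices, using the monotonicity $\mathcal{S}^{k+1}\subseteq\mathcal{S}^{k}$.
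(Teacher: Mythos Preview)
Your proof is correct and takes essentially the same approach as the paper: your candidate $\mathcal{R}_{\tau,\rho}$ is precisely the paper's $\bigcap_{n\in\mathbb{N}}\mathcal{S}^n_{\tau,\rho}$, and both arguments verify that this intersection satisfies the clauses of Definition~\ref{Def_sim}. If anything, you are more careful than the paper in explicitly justifying, via determinism of $\longrightarrow$ and normality of effect terms, that the witness $\sigma(v,x.t')$ extracted in clause~(1) is independent of the index used; the paper's proof tacitly assumes this.
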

\begin{proof}
The strategy is to show $\bigcap_{n\in\mathbb{N}}\mathcal{S}^n$ is a simulation relating $(S,M)$ and $t$, and thus it is included in $\mathcal{S}$. The rest of the proof appears in Appendix \ref{App_cps}.
\end{proof}

Now we are ready to define the logical relation. It is in fact a collection of relations $(\mathcal{R}^{\mathfrak{v},n}_{\tau},\mathcal{R}^{\mathfrak{c},n}_\tau,\mathcal{R}^{\mathfrak{s},n}_{\tau,\rho})$ for all EPCF types $\tau,\rho$ and $n\in\mathbb{N}$.
\begin{align*}
\mathcal{R}^{\mathfrak{v},n}_{\tau} &\subseteq Val(\tau)\times(\vdash\tau^*)	\\
\mathcal{R}^{\mathfrak{c},n}_\tau &\subseteq Comp(\tau)\times(\vdash\neg\neg\tau^*)	\\
\mathcal{R}^{\mathfrak{s},n}_{\tau,\rho} &\subseteq Stack(\tau,\rho)\times(\vdash\neg\tau^*).
\end{align*}
It is defined by well-founded induction on the natural numbers and on EPCF types, following a tutorial paper by Pitts \cite{Pit10} which deals with the untyped $\lambda$-calculus. 

\begin{definition}[Logical Relation]\label{Def_log_rel}
\begin{gather*}
\mathcal{R}^{\mathfrak{v},n}_{\mathbbm{N}} = \{ (V,v) \mid V=\overline{m} \text{ and } v=\overline{m} \text{ for some } m\in\mathbb{N} \}	\\
\mathcal{R}^{\mathfrak{v},n}_{\mathbbm{1}} = \{ (V,v) \mid V=\star \text{ and } v=\star \}	\\
\mathcal{R}^{\mathfrak{v},n}_{\tau{\rightarrow}\rho} = \{ (V,v) \mid \forall p<n.\ \forall(V_1,v_1)\in\mathcal{R}^{\mathfrak{v},p}_{\tau}.\ (V\ V_1,\ \lbd{k}{\neg\rho^*}{v\ (v_1,k)})\in\mathcal{R}^{\mathfrak{c},p}_{\rho} \}	\\
\mathcal{R}^{\mathfrak{s},n}_{\tau,\rho} = \{ (S,k) \mid p\leq n.\ \forall(V,v)\in\mathcal{R}^{\mathfrak{v},p}_{\tau}.\ (S,\return{V})\mathrel{\mathcal{S}^p_{\tau,\rho}} (k\ v) \}	\\
\mathcal{R}^{\mathfrak{c},n}_{\tau} = \{ (M,v) \mid \forall p\leq n.\ \forall(S,k)\in\mathcal{R}^{\mathfrak{s},p}_{\tau,\rho}.\ (S,M)\mathrel{\mathcal{S}^p_{\tau,\rho}} (v\ k) \}.
\end{gather*}
These relations satisfy the definition of a step-indexed relation at each type, $\mathcal{R}^{n+1}_\tau\subseteq\mathcal{R}^n_\tau$, because, as the index of $X\mathrel{\mathcal{R}^n_\tau} x$ increases, the pair $(X,x)$ needs to satisfy more conditions.
\end{definition}

The logical relation is defined for closed values, stacks and computations and makes use of the step-indexed similarity relation $\mathcal{S}^n$. As expected, related functions map related values to related computations. A stack and a continuation are related if, given any related values, they have a similar reduction behaviour. Finally, computations are related if they have similar reduction behaviour in all related stack-continuation pairs.

We can extend the logical relation to open terms and stacks by substituting in related values. The definition below generalises over all indices:
\begin{multline*}
\overrightarrow{x_i:\tau_i} \vdash V\mathrel{\mathcal{R}^{\mathfrak{v}}_{\rho}} v \iff \overrightarrow{x_i:\tau_i} \vdash V :\rho \text{ and } (\overrightarrow{x_i:\tau_i})^* \vdash v : \rho^*	\text{ and }	\\
(\forall n\in\mathbb{N}.\ \forall\overrightarrow{(V_i,v_i):\tau_i}.\ (\overrightarrow{(V_i,v_i)\in\mathcal{R}^{\mathfrak{v},n}_{\tau_i}} \implies (V[\overrightarrow{V_i/x_i}],v[\overrightarrow{v_i/x_i}])\in\mathcal{R}^{\mathfrak{v},n}_{\rho}))
\end{multline*}
\begin{multline*}
\overrightarrow{x_i:\tau_i} \vdash M\mathrel{\mathcal{R}^{\mathfrak{c}}_{\rho}} v \iff \overrightarrow{x_i:\tau_i} \vdash M :\rho \text{ and } (\overrightarrow{x_i:\tau_i})^* \vdash v : \neg\neg\rho^*	\text{ and }	\\
(\forall n\in\mathbb{N}.\ \forall\overrightarrow{(V_i,v_i):\tau_i}.\ (\overrightarrow{(V_i,v_i)\in\mathcal{R}^{\mathfrak{v},n}_{\tau_i}}\implies (M[\overrightarrow{V_i/x_i}],v[\overrightarrow{v_i/x_i}])\in\mathcal{R}^{\mathfrak{c},n}_{\rho}))
\end{multline*}
\begin{multline*}
\overrightarrow{x_i:\tau_i} \vdash S\mathrel{\mathcal{R}^{\mathfrak{s}}_{\rho,\rho'}} k \iff \overrightarrow{x_i:\tau_i} \vdash S :\rho\Rightarrow\rho' \text{ and } (\overrightarrow{x_i:\tau_i})^* \vdash k : \neg\rho^*	\text{ and }	\\
(\forall n\in\mathbb{N}.\ \forall\overrightarrow{(V_i,v_i):\tau_i}.\ (\overrightarrow{(V_i,v_i)\in\mathcal{R}^{\mathfrak{v},n}_{\tau_i}} \implies (S[\overrightarrow{V_i/x_i}],k[\overrightarrow{v_i/x_i}])\in\mathcal{R}^{\mathfrak{s},n}_{\rho,\rho'})).
\end{multline*}

An important property of the logical relation is reflexivity. Since $\mathcal{R}$ relates EPCF terms to ECPS terms the property is formulated as below:

\begin{lemma}[Fundamental property of the logical relation]\label{Lem_fund_prop}
For any value $\overrightarrow{x_i:\tau_i} \vdash V :\rho$, any computation $\overrightarrow{x_i:\tau_i} \vdash M :\rho$ and any stack $\overrightarrow{x_i:\tau_i} \vdash S :\rho\Rightarrow\rho'$ in EPCF:
\begin{enumerate}
\item $\overrightarrow{x_i:\tau_i} \vdash V\mathrel{\mathcal{R}^\mathfrak{v}_\rho} V^*$.
\item $\overrightarrow{x_i:\tau_i} \vdash M\mathrel{\mathcal{R}^\mathfrak{c}_\rho} M^*$.
\item $\overrightarrow{x_i:\tau_i} \vdash S\mathrel{\mathcal{R}^\mathfrak{s}_{\rho,\rho'}} S^*$.
\end{enumerate}
\end{lemma}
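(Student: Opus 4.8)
The plan is to prove the three statements \emph{simultaneously}, by a lexicographic induction whose outer parameter is the step index $n$ and whose inner parameter is the structure of the EPCF typing derivation of $V$, $M$ or $S$; the outer induction on $n$ is what makes the $\mathbf{fix}$ clause go through, since the logical relation at index $n$ only ever refers to strictly smaller indices. After fixing $n$ and a family of related closing pairs $(V_i,v_i)\in\mathcal{R}^{\mathfrak{v},n}_{\tau_i}$, I must show $(V[\overrightarrow{V_i/x_i}],\,V^*[\overrightarrow{v_i/x_i}])\in\mathcal{R}^{\mathfrak{v},n}_{\rho}$ and the analogous facts for $M$ and $S$, by cases on the last typing rule. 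The constant workhorse is Lemma~\ref{Lem_sim_eval}: whenever I need $(S,M)\mathrel{\mathcal{S}^m}t$ and both sides are a bounded number of deterministic $\rightarrowtail$/$\longrightarrow$ steps from a configuration I already control, the lemma lets me pass between them, at the cost of a decrease in the index which is always absorbed because $\mathcal{S}$ is decreasing in its index and I have plenty of slack ($q\le p\le n$, etc.). The base cases are immediate: variables hold since the closing substitution is related by hypothesis; $\star$, $Z$ and $\downarrow$ unfold directly against the definitions of $\mathcal{R}^{\mathfrak{v},n}_{\mathbbm{1}}$ and $\mathcal{R}^{\mathfrak{v},n}_{\mathbbm{N}}$; $S(V)$ follows from the IH on $V$ since $\mathtt{succ}(-)$ is applied on both sides.

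For $\lambda x{:}\tau.M$ I unfold $\mathcal{R}^{\mathfrak{v},n}_{\tau\rightarrow\rho}$: given $p<n$ and $(V_1,v_1)\in\mathcal{R}^{\mathfrak{v},p}_{\tau}$ I must place $(V\,V_1,\ \lbd{k}{\neg\rho^*}{V^*(v_1,k)})$ in $\mathcal{R}^{\mathfrak{c},p}_{\rho}$; taking arbitrary $q\le p$ and $(S,k)\in\mathcal{R}^{\mathfrak{s},q}_{\rho,\rho'}$, the configuration $(S,(\lambda x.M)\,V_1)$ steps to $(S,M[V_1/x])$ while $(\lbd{k}{\neg\rho^*}{V^*(v_1,k)})\,k$ reduces in two steps to $M^*[v_1/x]\,k$, so Lemma~\ref{Lem_sim_eval} reduces the goal to an instance of the IH for the body $M$ (the open-computation clause) with the substitution extended by $(V_1,v_1)$, unfolded at stack $(S,k)$. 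The clauses for $\mathbf{return}\,V$, application $V\,W$ and $\mathbf{case}$ are handled the same way — unfold the relevant relation, perform the handful of forced reductions on each side, appeal to the IH on the immediate subterms — with $\mathbf{case}$ additionally splitting on whether the related scrutinees are $\mathtt{zero}$ or $\mathtt{succ}(-)$ and using the corresponding $\longrightarrow$ rules. For $\mathbf{return}\,V$ one uses that $(S,k)\in\mathcal{R}^{\mathfrak{s},p}_{\tau,\rho}$ says precisely $(S,\return{V})\mathrel{\mathcal{S}^{p'}}(k\,v)$ for related $(V,v)$, combined with the IH $(V,V^*)$.

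Two clauses carry the real content. For $\letin{M}{x}{N}$, with translation $\lbd{k}{\neg\tau^*}{M^*\,(\lbd{x}{\rho^*}{N^*\,k})}$, I first prove that a related stack can be extended: from $(S,k)\in\mathcal{R}^{\mathfrak{s},p}_{\tau,\rho}$ and the IH on $N$ I obtain $(S\circ(\letin{(-)}{x}{N}),\ \lbd{x}{\rho^*}{N^*\,k})\in\mathcal{R}^{\mathfrak{s},p}_{\tau',\rho}$, unfolding the stack relation and using $(S\circ(\letin{(-)}{x}{N}),\return{V})\rightarrowtail(S,N[V/x])$ against $(\lbd{x}{\rho^*}{N^*\,k})\,v\longrightarrow N^*[v/x]\,k$; feeding this new pair into the IH on $M$ finishes the case. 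This is the biorthogonality step. The genuinely hard case is $\mathbf{fix}\,V$, whose translation (Figure~\ref{Fig_CPS_trans_continued}) is an elaborate term. Here the outer induction on $n$ is used: appealing to the sanity check from Section~\ref{Sec_cps_trans}, namely $(\mathbf{fix}\,V)^*\,k\longrightarrow^* k\,(\lbd{x}{\tau}{\letin{V\,(\lbd{y}{\tau}{\letin{\mathbf{fix}\,V}{z}{z\,y}})}{w}{w\,x}})^*$ matching $\mathbf{fix}\,V\rightsquigarrow\return{(\lbd{x}{\tau}{\ldots})}$, I reduce both sides (for a related $(S,k)$ at index $p\le n$) via Lemma~\ref{Lem_sim_eval} to the obligation that the returned lambda is related at index $p-1$; unfolding its function relation takes the index strictly below $p-1$, and the recursive occurrence of $\mathbf{fix}\,V$ in its body sits under a $\lambda y$, so it is forced only after a further $\rightarrowtail$ step, at which point the index is $<n$ and the outer IH supplies $\mathbf{fix}\,V\mathrel{\mathcal{R}^{\mathfrak{c},m}}(\mathbf{fix}\,V)^*$. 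I expect this case to be the main obstacle: verifying that the large translated term reduces exactly as the sanity check claims, and tracking the step index carefully enough that every use of Lemma~\ref{Lem_sim_eval} genuinely decreases it before the outer IH is invoked. Finally, $\sigma(V;W)$ is dispatched by unfolding Definition~\ref{Def_step_indexed_sim}: $(S,\sigma(V;W))$ is already in the ``stuck on an effect'' form, its translation $\beta$-reduces to $\sigma(V^*,x.\,W^*(x,k))$ with $V^*=\overline{m}$ by the IH on $V$, and for each $l\in\mathbb{N}$ the continuation obligation $(S,W\,\overline{l})\mathrel{\mathcal{S}}W^*(\overline{l},k)$ follows from the IH on $W$ at function type $\mathbbm{N}\rightarrow\tau$ together with Lemma~\ref{Lem_sim_eval} to absorb the one administrative reduction $(\lbd{k}{\neg\tau^*}{W^*(\overline{l},k)})\,k\longrightarrow W^*(\overline{l},k)$. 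The extension to open terms in the statement is then immediate from the closed case by the definition of the open logical relation via related substitutions.
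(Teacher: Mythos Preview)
Your proposal is essentially correct and hits all the cases the paper treats, with the same use of Lemma~\ref{Lem_sim_eval} as the workhorse and the same identification of $\mathbf{fix}$ as the only genuinely delicate case. The difference is organisational: the paper does a single structural induction on typing derivations, so that the inductive hypothesis for a subterm gives the \emph{open} relation (quantified over all indices $n$), and only inside the $\mathbf{fix}$ case performs a nested induction on $n$. You instead set up a lexicographic induction with $n$ outer and structure inner. Both schemes are standard for step-indexed logical relations and both go through here; your version has the advantage that the $\mathbf{fix}$ case is not special (it just appeals to the outer hypothesis at a strictly smaller index), while the paper's version keeps the induction hypothesis uniform (``the open relation holds for every subterm'') and confines the numeric reasoning to the one place it is needed.

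Two small remarks. First, $\downarrow$ is an ECPS constant, not an EPCF term, so it does not appear among the base cases here; the EPCF base cases are $\star$, $Z$, $S(V)$ and variables. Second, you do not spell out the two stack clauses (\textsc{sid}) and (\textsc{slet}) required for part~3, though the paper treats them explicitly; your stack-extension step inside $\mathbf{let}$ is effectively the (\textsc{slet}) argument, and (\textsc{sid}) is immediate from the second clause of Definition~\ref{Def_step_indexed_sim} since $(id,\return V)\not\rightarrowtail$ and $(\lbd{x}{\rho^*}{\downarrow})\,v\longrightarrow{\downarrow}$.
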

\begin{proof}
The proof is by induction on the typing derivations of $V$, $M$ and $S$. The most interesting case is $M=\fix{W}$. It uses the indices of the logical relation to reason about the number of times the fixed point of $W$ is unfolded. All the cases appear in Appendix~\ref{App_cps}.
\end{proof}

The similarity relation $\mathcal{S}^n$ expresses the fact that computation $t$ simulates configuration $(S,M)$ for $n$ steps. Analogously, we can define a notion of similarity $\mathcal{S'}^n$ which says that $(S,M)$ simulates $t$ for $n$ steps. This will be a family of relations parametrised by $n\in\mathbb{N}$, and EPCF types $\tau$ and $\rho$: 
\begin{equation*}
\mathcal{S'}^n_{\tau,\rho} \subseteq (Stack(\tau,\rho)\times Comp(\tau)) \times (\vdash).
\end{equation*} 

\begin{definition}
The step-indexed similarity relation $\mathcal{S'}$ is the  family of greatest relations such that $(S,M)\mathrel{\mathcal{S'}^n_{\tau,\rho}} t$ implies:
\begin{enumerate}
\item $\forall p<n.\ (t\longrightarrow^p\sigma(v,x.t')  \implies (S,M)\rightarrowtail^*(S',\sigma(V;W))$ where $V=\overline{m}$ and $v=\overline{m}$ and $\forall l\in\mathbb{N}.\ (S',W\ \overline{l})\mathrel{\mathcal{S'}^{n-p}_{\tau',\rho}} t'[\overline{l}/x])$.
\item $\forall p\leq n.\ (t\longrightarrow^p\downarrow\implies (S,M)\rightarrowtail^*(id,\return{V})$ for some $V:\rho)$.
\end{enumerate} 
\end{definition}

Using $\mathcal{S'}^n$, we can define a family of relations $\mathcal{S'}$ indexed by types $\tau$ and $\rho$, such that $(S,M)\mathrel{\mathcal{S'}}t$ says that $(S,M)$ simulates $t$ for any number of steps. The definition is similar to that of $\mathcal{S}$ (Definition~\ref{Def_sim}). The analogous of Lemma \ref{Lem_step_simp_impl_sim} can be proved for $\mathcal{S'}^n$ and $\mathcal{S'}$.

Finally, we can define a logical relation $\mathcal{R'}$ exactly as $\mathcal{R}$ was defined but using $\mathcal{S'}$ instead of $\mathcal{S}$. The logical relation $\mathcal{R'}$ also has the fundamental property.

The next step is proving that the trees of a configuration and a computation which simulate each other are closely related. The proof is done by coinduction. We first discuss some properties of the sets of closed ECPS computations and EPCF stack-computation pairs.

Consider the functor:
\begin{equation*}
T : \mathbf{Set}\Longrightarrow\mathbf{Set} \text{ where } T(X)=\{\downarrow,\bot\}+(\Sigma\times\mathbb{N}\times X^\mathbb{N}).
\end{equation*}
Consider the following coalgebra for this functor: $(\textit{Trees}_\Sigma,c:\textit{Trees}_\Sigma\longrightarrow T(\textit{Trees}_\Sigma))$ where
\begin{equation*}
c(tr) = \begin{cases}
		\downarrow	&	\text{if } tr=\downarrow	\\
		\bot		&	\text{if } tr=\bot	\\
		(\sigma,n,\overrightarrow{tr'})	&	\text{if } tr=\sigma_n(\overrightarrow{tr'}).
	   \end{cases}
\end{equation*}
It is a standard result that the category of coalgebras and coalgebra morphisms for the functor $T$ has a terminal object. This is shown for example by Jacobs \cite[Theorem~2.3.9]{Jac16}. Moreover, the terminal object is $(\textit{Trees}_\Sigma,c)$. The proof of this is very similar to the proof of Proposition 2.3.5 from \cite{Jac16}.

\begin{lemma}\label{Lem_ecps_tree_coalgmor}
The function $\treet{-}:(\vdash)\longrightarrow \textit{Trees}_\Sigma$  is a coalgebra morphism in the category of coalgebras for the functor $T$.
\end{lemma}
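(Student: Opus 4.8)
The plan is to equip the set $(\vdash)$ of closed computations with its natural $T$-coalgebra structure and then verify, by a case analysis mirroring Definition~\ref{Def_ecps_tree_chain}, that the square expressing the coalgebra-morphism condition commutes. For the coalgebra structure, recall that $\longrightarrow$ is deterministic, so every closed computation $t$ has a unique maximal $\longrightarrow$-reduction sequence; using the fact (noted just after the operational semantics) that $\longrightarrow$ can only get stuck at $\downarrow$ or at an effect operation, exactly one of the following holds: $t$ diverges, or $t\longrightarrow^*\downarrow$, or $t\longrightarrow^*\sigma(\overline{m},x.t')$ for unique $\sigma\in\Sigma$, $m\in\mathbb{N}$ and, up to $\alpha$-equivalence, $t'$. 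I would therefore define $a\colon(\vdash)\longrightarrow T((\vdash))$ by
\begin{equation*}
a(t)=\begin{cases}
\downarrow & \text{if } t\longrightarrow^*\downarrow,\\
(\sigma,m,(t'[\overline{n}/x])_{n\in\mathbb{N}}) & \text{if } t\longrightarrow^*\sigma(\overline{m},x.t'),\\
\bot & \text{if } t \text{ diverges},
\end{cases}
\end{equation*}
noting that inversion on typing makes each $t'[\overline{n}/x]$ a closed computation, so $a$ lands in $T((\vdash))$. It then remains to prove $c\circ\treet{-}=T(\treet{-})\circ a$.

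I would first record three facts about $\treet{-}$, each an easy consequence of Definition~\ref{Def_ecps_tree_chain} together with the observation that discarding finitely many initial terms of an $\omega$-chain does not change its least upper bound: (i) if $t\longrightarrow s$ then $\treet{t}=\treet{s}$, since $\treet{t}_{n+1}=\treet{s}_n$; (ii) $\treet{\downarrow}=\downarrow$, and if $t$ diverges then $\treet{t}=\bot$ (an induction on $n$ shows $\treet{t}_n=\bot$ for all $n$); and (iii) $\treet{\sigma(\overline{m},x.t')}=\sigma_m\bigl(\treet{t'[\overline{0}/x]},\treet{t'[\overline{1}/x]},\dots\bigr)$, because $\treet{\sigma(\overline{m},x.t')}_{n+1}=\sigma_m(\treet{t'[\overline{0}/x]}_n,\dots)$ and $\sigma_m(-)\colon(\textit{Trees}_\Sigma)^{\mathbb{N}}\to\textit{Trees}_\Sigma$ preserves least upper bounds of $\omega$-chains.

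The commutativity of the square then follows from a three-way case split on $a(t)$. If $t$ diverges, then $a(t)=\bot$ and, by (ii), $\treet{t}=\bot$, so $c(\treet{t})=\bot=T(\treet{-})(\bot)=T(\treet{-})(a(t))$. If $t\longrightarrow^*\downarrow$, applying (i) repeatedly along the reduction and then (ii) gives $\treet{t}=\downarrow$, so $c(\treet{t})=\downarrow=T(\treet{-})(a(t))$. If $t\longrightarrow^*\sigma(\overline{m},x.t')$, applying (i) repeatedly and then (iii) gives $\treet{t}=\sigma_m(\treet{t'[\overline{n}/x]})_{n\in\mathbb{N}}$, hence $c(\treet{t})=(\sigma,m,(\treet{t'[\overline{n}/x]})_{n})$, which is precisely $T(\treet{-})\bigl((\sigma,m,(t'[\overline{n}/x])_{n})\bigr)=T(\treet{-})(a(t))$. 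In every case the square commutes, so $\treet{-}$ is a $T$-coalgebra morphism.

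The only non-clerical ingredient is the domain-theoretic fact used in (iii): that forming a $\sigma_m$-node is continuous, i.e. the least upper bound of an $\omega$-chain of trees all rooted at $\sigma_m$ is the tree rooted at $\sigma_m$ whose children are the least upper bounds of the corresponding children chains. This is a direct unfolding of the order ``replace subtrees by $\bot$'', so I expect no real difficulty; everything else is bookkeeping dictated by the shape of the functor $T$ and the recursion in Definition~\ref{Def_ecps_tree_chain}. A secondary point to handle carefully is the well-definedness of $a$, which relies on the progress-style claim — asserted but not proved in the excerpt — that a closed well-typed ECPS computation is stuck only at $\downarrow$ or an effect operation.
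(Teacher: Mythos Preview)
Your proposal is correct and follows essentially the same approach as the paper: define the coalgebra $a$ on $(\vdash)$ by case analysis on the eventual reduction behaviour of $t$, and then verify the square commutes by the same three-way case split. The only organisational difference is that the paper explicitly endows $T(\textit{Trees}_\Sigma)$ with an $\omega$-CPO structure and proves $c$ continuous, whereas you encapsulate the needed continuity in your fact~(iii) about $\sigma_m(-)$; the mathematical content is the same.
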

\begin{proof}
First, note that $(\vdash)$ is indeed a coalgebra by considering the following function $a:(\vdash)\longrightarrow T(\vdash)$ on closed ECPS computations:
\begin{equation*}
a(t) = \begin{cases}
		\downarrow	&	\text{if } t\longrightarrow^*\downarrow	\\
		\bot	&	\text{if } t\longrightarrow^\infty	\\		
		(\sigma,k,\overrightarrow{t'[\overline{n}/x]})	&	\text{if } t\longrightarrow^*\sigma(\overline{k},x.t').
	   \end{cases}
\end{equation*}
By definition of $\longrightarrow$ exactly one of the cases above will occur, so $a$ is a well-defined function.

To prove that $\treet{-}$ is a coalgebra morphism, we use standard domain theoretic techniques. All the necessary results can be found for example in \cite{MPF16}. The proof involves making the order $\leq$ on $\textit{Trees}_\Sigma$ more precise, and defining an order $\leq_T$ on $T(\textit{Trees}_\Sigma)$ which makes it an $\omega$-CPO. The full development can be found in Appendix \ref{App_cps}.
\end{proof}

Define the following function indexed by EPCF types $\tau$ and $\rho$:
\begin{equation*}
\trees{-,-}_{(-)}^*:Stack(\tau,\rho)\times Comp(\tau)\times \mathbb{N} \longrightarrow \textit{Trees}_\Sigma.
\end{equation*}
similarly to how $\trees{-,-}_{(-)}$ was defined. Assuming all EPCF effect operations have been replaced with operations of arity $\mathbbm{N}\times\alpha^\mathbbm{N}\rightarrow\alpha$, the definition is:
\begin{align*}
\trees{S,M}_0^* &= \bot	\\
\trees{S,M}^*_{n+1} &= \begin{cases}
					\downarrow	&\text{if } S=id \text{ and } M=\return{V}	\\
					\trees{S',M'}_n^*	&\text{if } (S,M)\rightarrowtail(S',M')	\\
					\sigma_k(\trees{S,V\ \overline{0}}_n^*, \trees{S,V\ \overline{1}}_n^*,\ldots)	&\text{if } \sigma:\mathbbm{N}\times\alpha^\mathbbm{N} \rightarrow \alpha \text{ and } M=\sigma(\overline{k};V) 	\\
					\bot	&\text{otherwise}.	
					\end{cases}
\end{align*}
The tree $\trees{S,M}_{n}^*$ is different from $\trees{S,M}_{n}$ because all value leaves are replaced by $\downarrow$. We can see that $\trees{S,M}_n^*\leq \trees{S,M}_{n+1}^*$ in $\textit{Trees}_\Sigma$ so we can define:
\begin{gather*}
\trees{-,-}^*:Stack(\tau,\rho)\times\textit{Comp}(\tau)\longrightarrow \textit{Trees}_\Sigma	\\
\trees{S,M}^* = \bigsqcup_{n\in\mathbb{N}}\trees{S,M}_n^*.
\end{gather*}

Define the set of all well-formed stack-computation pairs as:
\begin{equation*}
Stack\times Comp = \{(S,M) \mid S\in Stack(\tau,\rho),\ M\in Comp(\tau) \text{ for some EPCF types }\tau,\rho\}
\end{equation*}
and extend the function $\trees{-,-}^*$ to this set:
\begin{gather*}
\beta: Stack\times Comp \longrightarrow \textit{Trees}_\Sigma	\\
\beta (S,M)= \trees{S,M}^*.
\end{gather*}

\begin{lemma}\label{Lem_epcf_tree_colagmor}
The function $\beta:Stack\times Comp\longrightarrow \textit{Trees}_\Sigma$  is a coalgebra morphism in the category of coalgebras for the functor $T$.
\end{lemma}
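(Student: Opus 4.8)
The plan is to mirror the proof of Lemma~\ref{Lem_ecps_tree_coalgmor}. First I would equip $Stack\times Comp$ with a $T$-coalgebra structure $b:Stack\times Comp\longrightarrow T(Stack\times Comp)$, defined by case analysis on the reduction behaviour of a pair $(S,M)$:
\begin{equation*}
b(S,M) = \begin{cases}
\downarrow & \text{if } (S,M)\rightarrowtail^*(id,\return{V}) \text{ for some } V	\\
\bot & \text{if the reduction of } (S,M) \text{ never terminates}	\\
(\sigma,k,\overrightarrow{(S',\,V\,\overline{n})}) & \text{if } (S,M)\rightarrowtail^*(S',\sigma(\overline{k};V)).
\end{cases}
\end{equation*}
Here I use the convention from Section~\ref{Sec_cps_trans} that every EPCF effect operation has arity $\mathbbm{N}\times\alpha^\mathbbm{N}\rightarrow\alpha$, so the third clause accounts for all effect nodes. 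Since $\rightarrowtail$ is deterministic and a configuration can get stuck only in the shape $(id,\return{V})$ or $(S',\sigma(\ldots))$, exactly one clause applies to each $(S,M)$; hence $b$ is a well-defined function and $(Stack\times Comp,\,b)$ is a $T$-coalgebra.

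Next I would show that $\beta$ satisfies $c\circ\beta = T(\beta)\circ b$. As in Lemma~\ref{Lem_ecps_tree_coalgmor}, this is carried out with the domain-theoretic tools of \cite{MPF16}: one spells out the order $\leq$ on $\textit{Trees}_\Sigma$, equips $T(\textit{Trees}_\Sigma)$ with the induced $\omega$-CPO order $\leq_T$ (so that $c$ becomes continuous), and uses that $\beta(S,M)=\trees{S,M}^* = \bigsqcup_{n}\trees{S,M}^*_n$ presents $\beta$ as a supremum of a chain of approximants. The square is then verified against the three cases of $b(S,M)$ by unfolding one layer of $\trees{S,M}^*_{n+1}$: if $b(S,M)=\downarrow$, then $(S,M)$ reaches $(id,\return{V})$ after finitely many $\rightarrowtail$-steps, and the clause $\trees{S,M}^*_{n+1}=\trees{S',M'}^*_n$ for $(S,M)\rightarrowtail(S',M')$, together with $\trees{id,\return{V}}^*_{m+1}=\downarrow$, forces $\beta(S,M)=\downarrow$; the $\bot$ case is analogous; and if $b(S,M)=(\sigma,k,\overrightarrow{(S',V\,\overline{n})})$, then after finitely many steps $\trees{-,-}^*$ emits the node $\sigma_k$ with children $\trees{S',V\,\overline{n}}^* = \beta(S',V\,\overline{n})$, which is precisely $T(\beta)$ applied to $b(S,M)$. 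Continuity of $\beta$ and of $c$ then transports these finite-stage identities to the limit.

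I expect the main obstacle to be the bookkeeping induced by reductions taking place \emph{inside a stack} rather than on bare computations: unlike the ECPS case, a single $\rightarrowtail$-step may shuffle material between $S$ and $M$ (pushing a $\mathbf{let}$ frame, or popping one when a $\return{V}$ surfaces), so some care is needed to confirm that the approximants $\trees{S,M}^*_n$ really do form an increasing chain and that a finite block of $\rightarrowtail$-steps is absorbed by the ``$\trees{S,M}^*_{n+1}=\trees{S',M'}^*_n$'' clause before any effect node or leaf is emitted. Once the $\omega$-CPO structures are fixed this is routine, and it is closely parallel to --- indeed slightly lighter than --- the continuity argument already performed for $\treet{-}$ in Lemma~\ref{Lem_ecps_tree_coalgmor}; the two coalgebra morphisms $\beta$ and $\treet{-}$ are then to be compared through the logical relation $\mathcal{R}$ and the coinduction principle of Proposition~\ref{Prop_coind_princip} to conclude Theorem~\ref{Thm_CPS_correct}.
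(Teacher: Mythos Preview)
Your proposal is correct and matches the paper's approach exactly: the paper defines the same coalgebra structure $b$ on $Stack\times Comp$ by case analysis on the reduction behaviour of $(S,M)$, notes that determinacy of $\rightarrowtail$ makes $b$ well-defined, and then states that the coalgebra-morphism square commutes by the same $\omega$-CPO argument as for $\treet{-}$ in Lemma~\ref{Lem_ecps_tree_coalgmor}. Your anticipated bookkeeping about stack reductions is indeed absorbed by the clause $\trees{S,M}^*_{n+1}=\trees{S',M'}^*_n$ when $(S,M)\rightarrowtail(S',M')$, just as you describe.
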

\begin{proof}
To see that $Stack\times Comp$ is indeed a coalgebra, consider the following function $(Stack\times Comp,\ b:Stack\times Comp\longrightarrow T(Stack\times Comp))$:
\begin{equation*}
b(S,M) = \begin{cases}
			\downarrow	&	\text{if } (S,M)\rightarrowtail^*(id,\return{V})	\\
			\bot	&	\text{if } (S,M)\rightarrowtail^\infty	\\			
			(\sigma,k,((S',V\overline{0}),(S',V\overline{1}),\ldots))	&	\text{if } (S,M)\rightarrowtail^*(S',\sigma(\overline{k};V)).
		 \end{cases}
\end{equation*}
By definition of $\rightarrowtail$ the cases above are exhaustive, and by determinacy only one of them can occur, so $b$ is a well-defined function.

The proof that $\beta$ is a coalgebra morphism is very similar to the proof for $\treet{-}$ so we omit it. It uses the $\omega$-CPO structure of $\textit{Trees}_\Sigma$ and $T(\textit{Trees}_\Sigma)$ to show that the coalgebra morphism diagram commutes.
\end{proof}

\begin{proposition}\label{Prop_bisim_impl_trees}
For any well-typed EPCF configuration $(S,M)$, where $S:\tau\Rightarrow\rho$, and any ECPS computation $t$:
\begin{equation*}
((S,M),t) \in \mathcal{S}_{\tau,\rho} \cap \mathcal{S'}_{\tau,\rho} \implies \treet{t} = \trees{S,M}[\downarrow/l_1,\downarrow/l_2,\ldots].
\end{equation*}
\end{proposition}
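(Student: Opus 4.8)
The plan is to derive the statement from the coinduction proof principle, Proposition~\ref{Prop_coind_princip}, applied to the two coalgebra morphisms $\beta : Stack\times Comp \longrightarrow \textit{Trees}_\Sigma$ (Lemma~\ref{Lem_epcf_tree_colagmor}) and $\treet{-} : (\vdash) \longrightarrow \textit{Trees}_\Sigma$ (Lemma~\ref{Lem_ecps_tree_coalgmor}), both of which are morphisms into the terminal $T$-coalgebra $(\textit{Trees}_\Sigma, c)$. A routine $\omega$-continuity argument shows that $\beta(S,M) = \trees{S,M}^* = \trees{S,M}[\downarrow/l_1,\downarrow/l_2,\ldots]$: each $\trees{S,M}^*_n$ is $\trees{S,M}_n$ with its value leaves relabelled by $\downarrow$, relabelling of value leaves is Scott-continuous, and it therefore commutes with the least upper bounds defining $\trees{S,M}^*$ and $\trees{S,M}$. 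Hence it suffices to produce a $T$-bisimulation $\mathcal{R}$ between the coalgebras $(Stack\times Comp, b)$ and $((\vdash), a)$ that contains $((S,M),t)$ whenever $((S,M),t)\in\mathcal{S}_{\tau,\rho}\cap\mathcal{S'}_{\tau,\rho}$.

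I take $\mathcal{R} = \bigcup_{\tau,\rho}\bigl(\mathcal{S}_{\tau,\rho}\cap\mathcal{S'}_{\tau,\rho}\bigr) \subseteq (Stack\times Comp)\times(\vdash)$ and build a coalgebra structure $g:\mathcal{R}\longrightarrow T(\mathcal{R})$ for which $\pi_1$ and $\pi_2$ are coalgebra morphisms, i.e. $T(\pi_1)\circ g = b\circ\pi_1$ and $T(\pi_2)\circ g = a\circ\pi_2$. Given $((S,M),t)\in\mathcal{S}_{\tau,\rho}\cap\mathcal{S'}_{\tau,\rho}$, I case-split on the deterministic reduction of $(S,M)$, which (as in the definition of $b$) falls into exactly one of three cases. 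If $(S,M)\rightarrowtail^*(id,\return{V})$ then $b(S,M)=\downarrow$, and clause~2 of $\mathcal{S}_{\tau,\rho}$ gives $t\longrightarrow^*\downarrow$, so $a(t)=\downarrow$; put $g((S,M),t)=\downarrow$. If $(S,M)\rightarrowtail^*(S',\sigma(\overline{m};W))$ then $b(S,M)=(\sigma,m,((S',W\overline{0}),(S',W\overline{1}),\ldots))$; clause~1 of $\mathcal{S}_{\tau,\rho}$ gives a reduction $t\longrightarrow^*\sigma(\overline{m},x.t')$ with $(S',W\overline{l})\mathrel{\mathcal{S}_{\tau',\rho}}t'[\overline{l}/x]$ for all $l$, so $a(t)=(\sigma,m,(t'[\overline{l}/x])_{l\in\mathbb{N}})$; applying clause~1 of $\mathcal{S'}_{\tau,\rho}$ to this very reduction of $t$ (determinacy of $\longrightarrow$ makes it the one the clause speaks about) and using determinacy of $\rightarrowtail$ to identify the resulting stuck configuration with $(S',\sigma(\overline{m};W))$, I also get $(S',W\overline{l})\mathrel{\mathcal{S'}_{\tau',\rho}}t'[\overline{l}/x]$ for all $l$; hence each $\bigl((S',W\overline{l}),\,t'[\overline{l}/x]\bigr)\in\mathcal{R}$, and I set $g((S,M),t)=(\sigma,m,(((S',W\overline{l}),t'[\overline{l}/x]))_{l\in\mathbb{N}})$. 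Finally, if $(S,M)\rightarrowtail^\infty$ then $b(S,M)=\bot$, and clauses~1 and~2 of $\mathcal{S'}_{\tau,\rho}$ rule out $t\longrightarrow^*\sigma(v,x.t')$ and $t\longrightarrow^*\downarrow$ respectively, so $a(t)=\bot$; put $g((S,M),t)=\bot$. In each case the required commutations hold by construction, so $\mathcal{R}$ is a $T$-bisimulation, and $(S,M)\cong t$ in the sense of Definition~\ref{Def_bisim_abstract}.

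Applying Proposition~\ref{Prop_coind_princip} with $\alpha_X=\beta$ and $\alpha_Y=\treet{-}$ then yields $\beta(S,M)=\treet{t}$, that is $\treet{t}=\trees{S,M}[\downarrow/l_1,\downarrow/l_2,\ldots]$, which is the claim.

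I expect the main obstacle to be the effect-node case: reconciling the witnesses supplied by $\mathcal{S}_{\tau,\rho}$ (which drives the EPCF configuration forward and thereby constrains $t$) with those supplied by $\mathcal{S'}_{\tau,\rho}$ (which drives $t$ forward and thereby constrains the configuration), so that the child pairs genuinely lie back in $\mathcal{R}=\mathcal{S}\cap\mathcal{S'}$. This hinges on the determinacy of both $\rightarrowtail$ and $\longrightarrow$ to guarantee that the two analyses refer to the same reduct $\sigma(\overline{m},x.t')$ of $t$ and the same stuck configuration $(S',\sigma(\overline{m};W))$; everything else is bookkeeping, including the verification that $\beta$ indeed relabels value leaves by $\downarrow$.
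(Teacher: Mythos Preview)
Your proposal is correct and follows essentially the same approach as the paper: you take the same bisimulation candidate $\mathcal{R}=\bigcup_{\tau,\rho}(\mathcal{S}_{\tau,\rho}\cap\mathcal{S'}_{\tau,\rho})$, define the coalgebra structure by the same three-way case split on the reduction behaviour of $(S,M)$, and invoke Proposition~\ref{Prop_coind_princip} with the coalgebra morphisms $\beta$ and $\treet{-}$. You are in fact slightly more explicit than the paper in verifying that the child pairs in the effect case lie back in $\mathcal{R}$ (using determinacy of both reduction relations to reconcile the $\mathcal{S}$ and $\mathcal{S'}$ witnesses), and in justifying $\beta(S,M)=\trees{S,M}[\downarrow/l_1,\ldots]$ via continuity, whereas the paper treats both points as immediate from the definitions.
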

\begin{proof}
Apply the coinduction proof principle from Proposition~\ref{Prop_coind_princip}, use Lemmas~\ref{Lem_ecps_tree_coalgmor} and \ref{Lem_epcf_tree_colagmor} and the fact that $(\textit{Trees$_\Sigma$}, c:\textit{Trees$_\Sigma$}\longrightarrow T(\textit{Trees$_\Sigma$}))$ is a final coalgebra. The full proof appears in Appendix~\ref{App_cps}.
\end{proof}

Finally, we can prove the correctness of the CPS translation:

\begin{reptheorem}{Thm_CPS_correct}
For any closed computation $M:\tau$ in EPCF and any stack $S:\tau\Rightarrow\rho$ the following holds:
\begin{equation*}
\treet{M^*\ S^*} = \trees{S,M}[\downarrow/l_1,\downarrow/l_2,\ldots].
\end{equation*}
That is, all the value leaves of $\trees{S,M}$ are replaced by $\downarrow$, but the nodes and the $\bot$-leaves stay the same.
\end{reptheorem}
\begin{proof}
From the fundamental property of the logical relation (Lemma~\ref{Lem_fund_prop} and its analogue for $\mathcal{R'}$) we know that for any closed configuration $(S,M)$, where $S:\tau\Rightarrow\rho$:
\begin{equation*}
\forall n\in\mathbb{N}.\ ((S,M),(M^*\ S^*)) \in \mathcal{S}_{\tau,\rho}^n \cap \mathcal{S'}_{\tau,\rho}^n.
\end{equation*}

Then using Lemma \ref{Lem_step_simp_impl_sim} and its analogue for $\mathcal{S'}$ we can deduce that:
\begin{equation*}
((S,M),(M^*\ S^*)) \in \mathcal{S}_{\tau,\rho} \cap \mathcal{S'}_{\tau,\rho}.
\end{equation*}

Applying Proposition \ref{Prop_bisim_impl_trees} we know that:
\begin{equation*}
\treet{M^*\ S^*} = \lvert S,M \rvert[\downarrow/l_1,\downarrow/l_2,\ldots].
\end{equation*}
\end{proof}

\section{ECPS Is More Expressive than EPCF}\label{Sec_callcc}

From the point of view of the propositions-as-types correspondence, or the Curry-Howard isomorphism, the $\lambda$-calculus corresponds to intuitionistic logic \cite[Chapter 6]{SorU06}. Griffin showed that a language with \emph{control operators} can extend this correspondence to classical logic \cite{DBLP:conf/popl/Griffin90}.

One such control operator is \texttt{call-cc} (call-with-current-continuation) from the programming language Scheme. Griffin showed that \texttt{call-cc} can be assigned the type of Peirce's law, a proposition which is only provable classically:
\begin{equation*}
((A\rightarrow B) \rightarrow A)\rightarrow A.
\end{equation*}

It is known that PCF and EPCF do not contain any control operators but we will show that ECPS does. In ECPS we can write a term which has the type of Peirce's law and the reduction behaviour of \texttt{call-cc}. In this sense, ECPS is more expressive than EPCF. Below is an informal explanation of \texttt{call-cc} and how it arises in ECPS.

One way of adding \texttt{call-cc} to PCF \cite{DBLP:conf/icalp/RieckeT99} is to add a new type $Cont(A)$ which stands for a continuation waiting for a term of type $A$. Such a continuation can be written as $\gamma x{:}A.E[x]$ where $E$ is a PCF evaluation context and $E[x]$ is a PCF term. Then add the following term constructors to PCF:
\begin{equation*}
\inferrule{ }{\Gamma \vdash \mathtt{callcc}:(Cont(A)\rightarrow A)\rightarrow A} \qquad
\inferrule{ }{\Gamma \vdash \mathtt{throw}:Cont(A)\rightarrow A \rightarrow B}
\end{equation*}

The reduction rules for these new constructors are:
\begin{align}
&E[\mathtt{callcc}\ (\lbd{x}{Cont(A)}{M})] \longrightarrow E[M[x\coloneqq (\gamma y{:}A.E[y])]]	\label{eq_callcc_1}\\
&E'[\mathtt{throw}\ (\gamma y{:}A.E[y])\ V] \longrightarrow E[V]. \label{eq_calcc_2}
\end{align}
The first rule says that when $\mathtt{callcc}$ is encountered the current evaluation context $E$ is bound to the continuation $x=\gamma y{:}A.E[y]$. Then evaluation of $M$ proceeds normally. If $M$ never throws, the control flow is not changed, the return value of $\mathtt{callcc}$ was $M$. If at some point $M$ invokes $\mathtt{throw}\ (\gamma y{:}A.E[y])\ V$, the current evaluation context $E'$ is abandoned. Instead, the context $E$ is restored with value $V$. This looks as if the return value of $\mathtt{callcc}$ was $V$.

The type $Cont(A)$ can be interpreted as $A\rightarrow B$ for some type $B$. Thus $\mathtt{callcc}$ has type $((A\rightarrow B)\rightarrow A)\rightarrow A$. And $\mathtt{throw}$ has type $(A\rightarrow B)\rightarrow A \rightarrow B$. When the continuation $(\gamma y{:}A.E[y]):Cont(A)$ is thrown, the current context $E'$ that is waiting for type $B$ is discarded. Therefore, $B$ can be anything.

In ECPS, the PCF type $A\rightarrow B$ is interpreted as $\neg(A,\neg B)$. In other words, the implication $A\rightarrow B$ is expressed as $\neg(A\land\neg B)$. Peirce's law then becomes:
\begin{equation*}
\neg(\ \mathbf{\neg(\neg(A\land\neg B)\land\neg A)}\ \land\ \neg A).
\end{equation*}
In ECPS, a term of this type is:
\begin{equation*}
\mathtt{callcc}^* = \lbd{(f,k)}{(\mathbf{\neg(\neg(A, \neg B),\ \neg A)},\ \neg A)}{\ f\ (\lbd{(y,k')}{(A, \neg B)}{k\ y},\ k)}.
\end{equation*}
A term with the type of $\mathtt{throw}$:
\begin{equation*}
(A\rightarrow B)\rightarrow A \rightarrow B = \neg(\neg(A\land\neg B)\land\ \neg\neg(A\land\neg B))
\end{equation*}
is
\begin{equation*}
\mathtt{throw}^*= \lbd{(k,l)}{(\neg(A,\neg B),\ \neg\neg(A,\neg B))}{(l\ k)}.
\end{equation*}

To illustrate the reduction behaviour of $\mathtt{callcc}^*$ and $\mathtt{throw}^*$ consider the following examples where $A=\mathbbm{N}$:
\begin{equation*}
f_1 = \lbd{(x,k)}{(\neg(\mathbbm{N},\neg B),\ \neg \mathbbm{N})}{\ \mathtt{throw}^*\ (x,\ \lbd{g}{\neg(\mathbbm{N},\neg B)}{g\ (\overline{3},\ \lbd{w}{B}{loop})})}.
\end{equation*}
Here $x$ is the analogous of $x$ from $\lbd{x}{Cont(A)}{M}$, equation~\ref{eq_callcc_1}. The function $f_1$ throws continuation $x$ with value $V=\overline{3}$, in the context $\lbd{w}{B}{loop}$ analogous to $E'$, from equation~\ref{eq_calcc_2}. 

Now consider the following computation, where $\lbd{z}{\mathbbm{N}}{\downarrow}$ stands for the context $E$ from equation \ref{eq_callcc_1}:
\begin{align*}
\mathtt{callcc}^*\ &(f_1,\ \lbd{z}{\mathbbm{N}}{\downarrow})	\\
&\longrightarrow f_1\ (\lbd{(y,k')}{(\mathbbm{N}, \neg B)}{(\lbd{z}{\mathbbm{N}}{\downarrow})\ y},\ \lbd{z}{\mathbbm{N}}{\downarrow})\\
&\longrightarrow \mathtt{throw}^*\ (\lbd{(y,k')}{(\mathbbm{N}, \neg B)}{(\lbd{z}{\mathbbm{N}}{\downarrow})\ y},\ \lbd{g}{\neg(\mathbbm{N},\neg B)}{g\ (\overline{3},\ \lbd{w}{B}{loop})})	\\
&\longrightarrow^2 (\lbd{(y,k')}{(\mathbbm{N}, \neg B)}{(\lbd{z}{\mathbbm{N}}{\downarrow})\ y})\ (\overline{3},\ \lbd{w}{B}{loop})	\\
&\longrightarrow (\lbd{z}{\mathbbm{N}}{\downarrow})\ \overline{3}. 
\end{align*}
When $\mathtt{callcc}^*$ is called, the current continuation $\lbd{z}{\mathbbm{N}}{\downarrow}$ is saved inside 
\begin{equation*}
\lbd{(y,k')}{(\mathbbm{N}, \neg B)}{(\lbd{z}{\mathbbm{N}}{\downarrow})\ y}.
\end{equation*}
Then when $\mathtt{throw}^*$ occurs, the now current continuation $\lbd{w}{B}{loop}$ is abandoned and the continuation $\lbd{z}{\mathbbm{N}}{\downarrow}$ is restored with value $\overline{3}$.

As another example, consider a function $f_2$ which does not throw. It just invokes the continuation $k$ that was passed to it:
\begin{equation*}
f_2 = \lbd{(x,k)}{(\neg(\mathbbm{N},\neg B),\ \neg \mathbbm{N})}{\ k\ \overline{3}}
\end{equation*}
\begin{align*}
\mathtt{callcc}^*\ &(f_2,\ \lbd{z}{\mathbbm{N}}{\downarrow})	\\
&\longrightarrow f_2\ (\lbd{(y,k')}{(\mathbbm{N}, \neg B)}{(\lbd{z}{\mathbbm{N}}{\downarrow})\ y},\ \lbd{z}{\mathbbm{N}}{\downarrow})\\
&\longrightarrow (\lbd{z}{\mathbbm{N}}{\downarrow})\ \overline{3}.
\end{align*}
Here, the control flow is not changed by the use of $\mathtt{callcc}^*$.

This section is not a full proof that ECPS is more expressive than EPCF. In particualar, we have only shown examples that $\mathtt{callcc}^*$ has the desired behaviour. However, this is strong evidence to suggest that an embedding of ECPS in EPCF is not possible. This is why we only studied a translation of EPCF into ECPS in this chapter.

\section{Chapter Summary}

This chapter started the exposition of the novel technical content of the dissertation. In Section~\ref{Sec_cps_trans}, we presented a continuation-passing translation from EPCF to ECPS. For example, an EPCF function $V$ of type $\tau\rightarrow\mathbbm{N}$ is translated to an ECPS function $V^*:\neg(\tau^*,\neg\mathtt{nat})$, where $\neg\mathtt{nat}$ is the type of a continuation waiting for the result of $V$. An EPCF computation $M:\tau$ is translated to a function $M^*=(\lbd{k}{\neg\tau^*}{\ldots}):\neg\neg\tau^*$, where $k$ is the continuation waiting for the result of $M$.

The correctness theorem of this translation (Theorem~\ref{Thm_CPS_correct}) implies that: the tree of $M^*\ (\lbd{x}{\neg\tau^*}{\downarrow})$ is obtained from the tree of EPCF computation $M$ by replacing all value leaves with $\downarrow$. For example:
\begin{center}
\begin{tikzpicture}[level distance = 1cm, sibling distance=0.8cm]
\node (n) {$or$}
	child { node {$\overline{2}$} }
	child { node {$\overline{3}$} };
\node[right=1.5cm of n] {becomes};	
\node[right=5cm of n] {$or_0$}
	child { node {$\downarrow$} edge from parent[very thick]}
	child { node {$\downarrow$} edge from parent[very thick]}
	child { node {$\bot$} }
	child { node {$\bot$} }
	child { node[left=0.1cm] {$\ldots$} edge from parent[draw=none]};	
\end{tikzpicture}
\end{center}
This means that the reduction behaviour of terms is preserved by the translation.

We could not prove Theorem~\ref{Thm_CPS_correct} directly by induction on EPCF terms. To obtain a stronger induction hypothesis, we defined a \emph{logical relation} (Definition~\ref{Def_log_rel}) between EPCF terms and ECPS terms. Most importantly, two functions are related if and only if they send related arguments to related computations. Two computations are related
when they simulate each other's behaviour, including effect operations. This is a custom notion of similarity introduced in Definition~\ref{Def_step_indexed_sim}.

We then proved the fundamental property of the logical relation (Lemma~\ref{Lem_fund_prop}): for any EPCF term $T$, the pair $(T,T^*)$ is in the relation. Using this, and the coinduction proof principle from the previous chapter (Proposition~\ref{Prop_coind_princip}), we proved the correctness of the translation.

Finally, we argued informally that ECPS is strictly more expressive than EPCF because it contains the control operator \texttt{call-cc}. Overall, this chapter showed that ECPS is a reasonable choice of language for studying program equivalence. Moreover, because ECPS contains more program contexts than EPCF, it becomes more likely that contextual equivalence equals applicative bisimilarity, which we will prove in Chapter~\ref{Chap_ctx_equiv}, even though this is false for EPCF.

\chapter{Applicative Bisimilarity for ECPS}\label{Chap_bisim}

This chapter starts by defining observations for ECPS for all the running examples of effects. Using them, applicative $\mathfrak{P}$-bisimilarity is defined. Two sufficient conditions for bisimilarity to be compatible are identified: Scott-openness and a novel notion of decomposability. All the example observations are proved decomposable. The final section uses Howe's method to prove bisimilarity is indeed compatible. In the following chapters, applicative $\mathfrak{P}$-bisimilarity is compared with other notions of program equivalence.

\section{Observations for ECPS}\label{Sec_ecps_observations}

To define applicative bisimulation for ECPS we first fix a set of observations $\mathfrak{P}$, which contains subsets of $\textit{Trees}_\Sigma$. The set $\mathfrak{P}$ depends on the effects that are present in the language. It can be used to define various forms of program equivalence which check whether computation trees are in $P\in\mathfrak{P}$.

Observations $P\in\mathfrak{P}$ play a similar role to modalities from $\mathcal{O}$ and to the observations defined by Johann, Simpson and Voigtl{\"a}nder \cite{DBLP:conf/lics/JohannSV10}.  For the example effects considered so far, observations are defined as follows:

\begin{example}[Pure functional computation]
Define $\mathfrak{P}=\{\Downarrow\}$ where $\Downarrow=\{\downarrow\}$. There are no effect operations so the $\Downarrow$ observation only checks for termination.
\end{example}

\begin{example}[Nondeterminism]
Define $\mathfrak{P}=\{\textit{Trees}_\Sigma,\Diamond,\Box\}$ where:
\begin{align*}
\Diamond &= \{tr\in\textit{Trees}_\Sigma \mid \text{at least one of the paths in } tr \text{ that can occur has a } \downarrow \text{ leaf}\}	\\
\Box &=\{tr\in\textit{Trees}_\Sigma \mid \text{the paths in } tr \text{ that can occur are all finite and finish with a } \downarrow\}.
\end{align*}

The intuition is that, if $\treet{t}\in\Diamond$, then computation $t$ \emph{may} terminate. Whereas, if $\treet{t}\in\Box$, $t$ \emph{must} terminate. Notice that there is no condition on the value with which $t$ terminates because ECPS computations do not have a return value.

As discussed in Section \ref{Sec_ecps_oper_sem}, every node in a computation tree has infinitely many children so some paths in the tree can never be executed. For example, the $or$ operation always chooses between its first two children. The definitions of $\Diamond$ and $\Box$ take this into account.

The set of all trees $\textit{Trees}_\Sigma$ is chosen to be an observation for technical reasons that will become clear in the next section. However, the fact that $\textit{Trees}_\Sigma$ is an observation will not affect any notion of program equivalence because for all computations $t$, $\treet{t}\in\textit{Trees}_\Sigma$.
\end{example}

\begin{example}[Probabilistic choice]\label{Eg_prob_ecps}
Define the set of observations as:
\begin{equation*}
\mathfrak{P}=\{\textbf{P}_{>q} \mid q\in\mathbb{Q},\ 0\leq q< 1\}\cup\{\textit{Trees}_\Sigma\}.
\end{equation*}
Define $\mathbb{P}:\textit{Trees}_\Sigma\longrightarrow [0,1]$ to be the least function, by the pointwise order, such that:
\begin{equation*}
\mathbb{P}(tr)=\begin{cases}
				1 	& \text{if } tr= {\downarrow}	\\
				\frac{1}{2}\mathbb{P}(tr_0)+ \frac{1}{2}\mathbb{P}(tr_1)	& \text{if } tr=p\text{-}or(tr_0, tr_1).
                              \end{cases}
\end{equation*}                              
Given functions $f_1,f_2:\textit{Trees}_\Sigma\longrightarrow[0,1]$ the pointwise order is defined as:
\begin{equation*}
f_1\leq f_2 \quad \Longleftrightarrow \quad \forall tr\in\textit{Trees}_\Sigma .\ f_1(tr)\leq f_2(tr).
\end{equation*}

Observations are defined as below:
\begin{equation*}
\textbf{P}_{>q} = \{tr\in\textit{Trees}_\Sigma \mid  \mathbb{P}(tr)>q \}.
\end{equation*}
This means that $tr\in\textbf{P}_{>q}$ if the probability that $tr$ terminates is greater than $q$.
A $p\text{-}or$ node chooses between its first two children with probability $0.5$, so the probability that tree $tr$ terminates is calculated over these choices. Notice that $\mathbb{P}(\bot)=0$.
\end{example}

\begin{example}[Global store]
Define the set of states as the set of functions from storage locations to natural numbers: $State=\mathbb{L}\longrightarrow\mathbb{N}$. The set $\mathfrak{P}$ is defined as:
\begin{equation*}
\mathfrak{P}=\{ (s\rightarrowtail r) \mid s,r\in State\}\cup\{\textit{Trees}_\Sigma\}.
\end{equation*}
Define the execution of a tree starting in a state as the \emph{least} partial function:
\begin{equation*}
exec: \textit{Trees}_\Sigma \times State \longrightarrow \{\downarrow\}\times State
\end{equation*}
which satisfies
\begin{equation*}
exec(tr,s)= \begin{cases}
			(\downarrow,s) & \text{if } tr=\downarrow	\\
			exec(tr_{s(l)},s) & \text{if } tr=lookup_{l,n}(tr_0,tr_1,\ldots) 	\\
			&\text{and } exec(tr_{s(l)},s) \text{ is defined}	\\
			exec(tr_0,s[l\coloneqq n]) & \text{if } tr=update_{l,n}(tr_0, tr_1,\ldots)	\\
			& \text{and } exec(tr_0,s[l\coloneqq n]) \text{ is defined}.
		   \end{cases}
\end{equation*}

Now define observations as:
\begin{equation*}
(s\rightarrowtail r) = \{tr \in\textit{Trees}_\Sigma \mid \\
 exec(tr,s) \text{ is defined and } exec(tr,s)=(\downarrow,r) \}.
\end{equation*}
Notice that $exec(tr,s)$ is defined only when the execution of $tr$ terminates. So $tr\in(s\rightarrowtail r)$ only if the execution of $tr$ started in state $s$ terminates in state $r$.
\end{example}

\begin{example}[Input/output]
An I/O-trace is a finite word $w$ over the alphabet
\begin{equation*}
\{?n \mid n\in\mathbb{N}\} \cup \{!n \mid n\in\mathbb{N} \}.
\end{equation*}
Thus, a trace is a sequence of input and output operations, where $?n$ means that the number $n$ was given as input to a $read$ operation, and $!n$ means that the number $n$ was output by a $write$ operation. We can use them to define the set of observation:
\begin{equation*}
\mathfrak{P} = \{\langle w\rangle_{\ldots} \mid w \text{ an I/O-trace}\}
\end{equation*}
where
\begin{equation*}
\langle w\rangle_{\ldots} = \{tr \in\textit{Trees}_\Sigma \mid \text{the execution of } tr \text{ produces I/O-trace }w \}.
\end{equation*}
To specify rigorously when ``the execution of a tree produces an I/O trace'', we can define a relation between trees and I/O traces, by induction on traces. Denote this relation by $\models$.
\begin{align*}
tr \models \langle\epsilon\rangle_{\ldots} \quad &\Longleftrightarrow \quad \text{true}	\\
tr \models \langle (?n)w \rangle_{\ldots} \quad &\Longleftrightarrow\quad tr=read_k(tr_0,tr_1,\ldots) \text{ and } tr_n\models\langle w\rangle_{\ldots}	\\
tr \models \langle (!n)w\rangle_{\ldots} \quad &\Longleftrightarrow\quad tr=write_n(tr_0,tr_1,\ldots) \text{ and } tr_0\models\langle w\rangle_{\ldots}.
\end{align*}

\end{example}

\section{Applicative $\mathfrak{P}$-Bisimilarity}\label{Sec_ecps_bisim} 


\begin{definition}[Applicative $\mathfrak{P}$-simulation]\label{Def_ecps_simulation}
A collection of relations $\mathcal{R}_A^\mathfrak{v}\subseteq(\vdash A)\times{(\vdash A)}$ for each type $A$ and $\mathcal{R}^\mathfrak{c}\subseteq(\vdash)\times(\vdash)$ is an applicative $\mathfrak{P}$-\emph{simulation} if:
\begin{enumerate}
\item \label{ecps_sim1} $v\ \mathcal{R}^\mathfrak{v}_{\mathtt{unit}}\ w \implies v=w=\star$.
\item \label{ecps_sim2} $v\ \mathcal{R}^\mathfrak{v}_{\mathtt{nat}}\ w \implies v=w$.
\item \label{ecps_sim3} $s\ \mathcal{R}^\mathfrak{c}\ t \implies \forall P\in\mathfrak{P}.\ (\treet{s} \in P \implies \treet{t} \in P)$.
\item \label{ecps_sim4} $v\ \mathcal{R}^\mathfrak{v}_{\neg(A_1,\ldots,A_n)}\ u \implies \forall\vdash w_1:A_1,\ldots,\vdash w_n:A_n.\ v(w_1,\ldots,w_n)\ \mathcal{R}^\mathfrak{c}\ u(w_1,\ldots,w_n)$.
\end{enumerate}
Applicative $\mathfrak{P}$-\emph{similarity} $\precsim$ is the union of all applicative $\mathfrak{P}$-simulations. Therefore, it is the greatest applicative $\mathfrak{P}$-simulation. 
\end{definition}

According to the definition above, unit values and natural number values are similar if and only if they are equal. The third clause says that $t$ simulates $s$ if the computation tree of $t$ has all the properties of the computation tree of $s$. The properties are specified using the set of observations $\mathfrak{P}$. Notice that simulation for computations is not defined using simulations for values, since computations do not have a return value. 
The last clause compares the behaviour of functions for all possible arguments.

\begin{definition}[Applicative $\mathfrak{P}$-bisimulation]\label{Def_ecps_bisim}
An applicative $\mathfrak{P}$-\emph{bisimulation} is a symmetric $\mathfrak{P}$-simulation. Applicative $\mathfrak{P}$-\emph{bisimilarity} $\sim$ is the union of all applicative $\mathfrak{P}$-bisimulations. Therefore it is the greatest applicative $\mathfrak{P}$-bisimulation.
\end{definition}

Below are two properties of bisimilarity and similarity which will be used later. They are followed by two examples of how bisimilarity can be established.

\begin{proposition}\label{Prop_bisim_is_sim_and_simop}
Applicative $\mathfrak{P}$-bisimilarity coincides with the intersection between applicative $\mathfrak{P}$-similarity and its converse:
\begin{equation*}
(\sim) = (\precsim)\cap(\precsim)^{\textit{op}}.
\end{equation*}
\end{proposition}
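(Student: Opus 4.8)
The plan is to prove the two inclusions separately. The inclusion from left to right is immediate from the definitions; the reverse inclusion reduces to checking that $(\precsim)\cap(\precsim)^{\textit{op}}$ is itself an applicative $\mathfrak{P}$-bisimulation, and is therefore contained in the greatest one.

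\emph{First inclusion.} Since $\sim$ is by definition a $\mathfrak{P}$-bisimulation, it is in particular a $\mathfrak{P}$-simulation, and hence $(\sim)\subseteq(\precsim)$ because $\precsim$ is the greatest such simulation. Since $\sim$ is also symmetric, $(\sim)=(\sim)^{\textit{op}}\subseteq(\precsim)^{\textit{op}}$ by the same inclusion applied to the converse. Intersecting gives $(\sim)\subseteq(\precsim)\cap(\precsim)^{\textit{op}}$.

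\emph{Second inclusion.} Write $\mathcal{R}$ for the family $(\precsim)\cap(\precsim)^{\textit{op}}$, taken componentwise: $\mathcal{R}^{\mathfrak v}_A$ relates $v,w\in(\vdash A)$ exactly when $v\precsim w$ and $w\precsim v$, and $\mathcal{R}^{\mathfrak c}$ is defined analogously on closed computations. It suffices to show $\mathcal{R}$ is a $\mathfrak{P}$-bisimulation, for then $\mathcal{R}\subseteq(\sim)$. Symmetry of $\mathcal{R}$ is immediate, so the work is to verify the four clauses of Definition~\ref{Def_ecps_simulation}. Clauses~\ref{ecps_sim1}, \ref{ecps_sim2} and \ref{ecps_sim3} hold for $\mathcal{R}$ because any $\mathcal{R}$-related pair is in particular $\precsim$-related, $\precsim$ is itself a $\mathfrak{P}$-simulation (a union of $\mathfrak{P}$-simulations is again one, the clauses being positive in the relation), and the conclusions of these three clauses do not refer to the relation at all. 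The only clause needing a genuine argument is clause~\ref{ecps_sim4}: if $v\mathrel{\mathcal{R}^{\mathfrak v}_{\neg(A_1,\ldots,A_n)}}u$, then $v\precsim u$ yields $v(\overrightarrow{w})\precsim u(\overrightarrow{w})$ for all closed arguments $\overrightarrow{w}$, while $u\precsim v$ yields $u(\overrightarrow{w})\precsim v(\overrightarrow{w})$; together these say $v(\overrightarrow{w})\mathrel{\mathcal{R}^{\mathfrak c}}u(\overrightarrow{w})$, which is what clause~\ref{ecps_sim4} demands.

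There is no real obstacle here; the statement is a routine manipulation of the coinductive definitions. The only place needing a moment's care is clause~\ref{ecps_sim4}, where one must feed \emph{both} directions $v\precsim u$ and $u\precsim v$ into the function-application clause in order to land back inside the intersection — this is precisely the point at which symmetry of bisimulation interacts with the applicative condition, and everything else is bookkeeping.
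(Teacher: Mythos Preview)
Your proof is correct and follows essentially the same approach as the paper: both prove the left-to-right inclusion from symmetry of $\sim$ and maximality of $\precsim$, and the right-to-left inclusion by showing $(\precsim)\cap(\precsim)^{\textit{op}}$ is a symmetric $\mathfrak{P}$-simulation, with clause~\ref{ecps_sim4} handled exactly as you do by feeding both directions through the applicative condition.
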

\begin{proof}
The proof is done using the definitions of similarity and bisimilarity. The $\supseteq$ inclusion is shown by proving $(\precsim)\cap(\precsim)^{\textit{op}}$ is a symmetric simulation. The full proof appears in Appendix \ref{App_bisim}.
\end{proof}

\begin{lemma}\label{Lem_red_pres_sim}
Similarity is preserved by the reduction relation, that is:
\begin{equation*}
\forall s,t.\ \vdash s\precsim^\mathfrak{c} t \text{ and } s\longrightarrow^*s' \text{ and } t\longrightarrow^* t' \implies \vdash s'\precsim^\mathfrak{c} t'.
\end{equation*}
\end{lemma}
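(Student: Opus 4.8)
The statement to prove is Lemma~\ref{Lem_red_pres_sim}: similarity is preserved by reduction, i.e.\ if $\vdash s \precsim^\mathfrak{c} t$ and $s \longrightarrow^* s'$ and $t \longrightarrow^* t'$, then $\vdash s' \precsim^\mathfrak{c} t'$. The plan is to argue directly from the coinductive definition of $\precsim$ as the greatest applicative $\mathfrak{P}$-simulation, using two observations about how reduction interacts with computation trees: first, $\longrightarrow$ is deterministic, and second, if $s \longrightarrow s_0$ then $\treet{s} = \treet{s_0}$, which follows immediately from the clause $\treet{t}_{n+1} = \treet{s}_n$ when $t \longrightarrow s$ in Definition~\ref{Def_ecps_tree_chain} (and hence, by induction on the length of a reduction and taking least upper bounds, $s \longrightarrow^* s'$ implies $\treet{s} = \treet{s'}$).

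The key step is to show that the relation
\begin{equation*}
\mathcal{R}^\mathfrak{c} = \{(s',t') \mid \exists s,t.\ \vdash s \precsim^\mathfrak{c} t,\ s \longrightarrow^* s',\ t \longrightarrow^* t'\},
\end{equation*}
together with $\mathcal{R}^\mathfrak{v}_A = (\precsim^\mathfrak{v}_A)$ at each type $A$, forms an applicative $\mathfrak{P}$-simulation; since $\precsim$ is the greatest such, this gives $\mathcal{R}^\mathfrak{c} \subseteq (\precsim^\mathfrak{c})$, which is exactly the claim. Clauses~\ref{ecps_sim1}, \ref{ecps_sim2} and~\ref{ecps_sim4} of Definition~\ref{Def_ecps_simulation} hold trivially because the value component is literally $\precsim^\mathfrak{v}$, which already satisfies them. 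The interesting clause is~\ref{ecps_sim3}: given $(s',t') \in \mathcal{R}^\mathfrak{c}$ with witnesses $s,t$, and given $P \in \mathfrak{P}$ with $\treet{s'} \in P$, I need $\treet{t'} \in P$. But $\treet{s'} = \treet{s}$ and $\treet{t'} = \treet{t}$ by the tree-invariance-under-reduction observation above, and $\vdash s \precsim^\mathfrak{c} t$ gives $\treet{s} \in P \implies \treet{t} \in P$ directly from clause~\ref{ecps_sim3} applied to $\precsim$ itself. Chaining these equalities and the implication closes the case.

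I expect the main obstacle — though it is a mild one — to be making the tree-invariance lemma precise: strictly, from $\treet{s}_{n+1} = \treet{s_0}_n$ one only gets $\treet{s}_{n+1} = \treet{s_0}_n \leq \treet{s_0}$, so to conclude $\treet{s} = \treet{s_0}$ one takes the supremum over $n$ on the left (shifting the index) and over $n$ on the right, using that $\{\treet{s}_n\}_n$ is an increasing chain and that chains cofinal in each other have the same least upper bound in the $\omega$-CPO $\textit{Trees}_\Sigma$. This is routine domain theory, entirely analogous to the manipulations already invoked for Lemma~\ref{Lem_ecps_tree_coalgmor}, so I would state it as a short sublemma (``$s \longrightarrow^* s'$ implies $\treet{s} = \treet{s'}$'') and then give the simulation argument above in a few lines. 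No use of Howe's method, biorthogonality or the logical relation machinery is needed here; the lemma is a direct consequence of the coinductive definition and the reduction semantics.
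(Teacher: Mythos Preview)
Your proof is correct and rests on the same key observation as the paper's: reduction does not change the computation tree, i.e.\ $s \longrightarrow^* s'$ implies $\treet{s} = \treet{s'}$. The paper, however, takes a shorter path. Rather than building an auxiliary simulation $\mathcal{R}$ and invoking coinduction, it simply notes that clause~\ref{ecps_sim3} of Definition~\ref{Def_ecps_simulation} is \emph{non-recursive}: it does not mention the relation again. Consequently, $\precsim^\mathfrak{c}$ on closed computations is characterised outright by
\[
s \precsim^\mathfrak{c} t \iff \forall P \in \mathfrak{P}.\ \treet{s}\in P \implies \treet{t}\in P,
\]
a fact the paper also exploits in the transitivity proof (Lemma~\ref{Lemm_(bi)sim_preord}). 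With this in hand, tree-invariance under reduction immediately gives $s' \precsim^\mathfrak{c} t'$ in one line. Your coinductive wrapping is sound but unnecessary here; one small point you gloss over is that clause~\ref{ecps_sim4} for your $\mathcal{R}$ is not literally ``trivial'' --- it needs $(\precsim^\mathfrak{c}) \subseteq \mathcal{R}^\mathfrak{c}$, which holds by reflexivity of $\longrightarrow^*$ --- but this is easily patched.
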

\begin{proof}
By the definition of $\treet{-}$ we know that $\treet{s}=\treet{s'}$ and $\treet{t}=\treet{t'}$. So because $\vdash s\precsim^\mathfrak{c} t$ we know that:
\begin{equation*}
\forall P\in\mathfrak{P}.\ (\treet{s}=\treet{s'} \in P \implies \treet{t}=\treet{t'} \in P).
\end{equation*}
This is enough to establish that $\vdash s'\precsim^\mathfrak{c} t'$.
\end{proof}

\begin{example}[Probabilistic choice]\label{Eg_prob_ecps_bisim}
Consider the following computations:
\begin{multline*}
m_{\overline{1},\overline{2}}=p\text{-}or(\overline{0},\ y.\caset{y}{(\lbd{x}{\mathtt{nat}}{\downarrow})\ \overline{1}}{y'}{\\ \caset{y'}{(\lbd{x}{\mathtt{nat}}{\downarrow})\ \overline{2}}{y''}{loop}}).
\end{multline*}
Computation $m_{\overline{1},\overline{3}}$ is defined analogously to $m_{\overline{1},\overline{2}}$ where the value $\overline{2}$ inside the computation is replaced by the value $\overline{3}$.
\begin{multline*}
n_{\overline{1},\overline{2},\overline{1},\overline{3}} = p\text{-}or(\overline{5},\ y.\caset{y}{m_{\overline{1},\overline{2}}}{y'}{\\ \caset{y'}{m_{\overline{1},\overline{3}}}{y''}{loop}}).
\end{multline*}
Their computation trees are:
\begin{center}
\begin{tikzpicture}[level distance=1cm, sibling distance=0.8cm]
\node (m) {$p\text{-}or_0$}
	child {node {$\downarrow$} edge from parent[very thick]}
	child {node {$\downarrow$} edge from parent[very thick]}
	child {node {$\bot$} }
	child {node {$\bot$} }
	child {node[left=0.05cm] {$\ldots$} edge from parent[draw=none] };
\node [left=0.1cm of m] {$\treet{m_{\overline{1},\overline{2}}}=$};	
\node (n) [right = 8cm] {$p\text{-}or_5$}
	child[sibling distance=2.3cm] {node {$p\text{-}or_0$} edge from parent[very thick]
		child[sibling distance=0.8cm] {node {$\downarrow$} }
		child[sibling distance=0.8cm] {node {$\downarrow$} } 
		child[sibling distance=0.8cm, thin] {node {$\bot$} }
		child[sibling distance=0.8cm, thin] {node {$\bot$} }
		child[sibling distance=0.8cm, thin] {node[left=0.05cm] {$\ldots$} edge from parent[draw=none] }}
	child {node (m) {$p\text{-}or_0$} edge from parent[very thick]
		child {node {$\downarrow$} }
		child {node {$\downarrow$} } 
		child[thin] {node {$\bot$} }
		child[thin] {node {$\bot$} }
		child[thin] {node[left=0.05cm] {$\ldots$} edge from parent[draw=none] }}
	child {node {$\bot$} }	
	child {node {$\bot$} }
	child {node[left=0.05cm] {$\ldots$} edge from parent[draw=none] };	
\node [left=1cm of n] {$\treet{n_{\overline{1},\overline{2},\overline{1},\overline{3}}}=$};	
\end{tikzpicture}
\end{center}
Both computations terminate with a $\downarrow$ with probability $1$ so
\begin{equation*}
\forall P\in\mathfrak{P}.\ \treet{m_{\overline{1},\overline{2}}} \in P \Longleftrightarrow \treet{n_{\overline{1},\overline{2},\overline{1},\overline{3}}} \in P
\end{equation*}
is true. Therefore $m_{\overline{1},\overline{2}}$ and $n_{\overline{1},\overline{2},\overline{1},\overline{3}}$ are bisimilar.  Notice that the subscripts $0$ and $5$ on the $p\text{-}or$ nodes do not play any role in establishing bisimilarity.

However, if we consider computation $n'_{\overline{1},\overline{2},\overline{1},\overline{3}}$ with tree:
\begin{center}
\begin{tikzpicture}[level distance=1cm, sibling distance=0.8cm]
\node (n) {$p\text{-}or_5$}
	child[sibling distance=2.3cm] {node {$p\text{-}or_0$} edge from parent[very thick]
		child[sibling distance=0.8cm] {node {$\downarrow$} }
		child[sibling distance=0.8cm] {node {$\downarrow$} } 
		child[sibling distance=0.8cm, thin] {node {$\bot$} }
		child[sibling distance=0.8cm, thin] {node {$\bot$} }
		child[sibling distance=0.8cm, thin] {node[left=0.05cm] {$\ldots$} edge from parent[draw=none] }}
	child {node (m) {$p\text{-}or_0$} edge from parent[very thick]
		child {node {$\bot$} }
		child {node {$\downarrow$} } 
		child[thin] {node {$\bot$} }
		child[thin] {node {$\bot$} }
		child[thin] {node[left=0.05cm] {$\ldots$} edge from parent[draw=none] }}
	child {node {$\bot$} }	
	child {node {$\bot$} }
	child {node[left=0.05cm] {$\ldots$} edge from parent[draw=none] };	
\node [left=1cm of n] {$\treet{n'_{\overline{1},\overline{2},\overline{1},\overline{3}}}=$};	
\end{tikzpicture}
\end{center}
it has a probability of $3/4$ of terminating. Therefore, $\treet{m_{\overline{1},\overline{2}}}\in\mathbf{P}_{>0.9}$ but $\treet{n'_{\overline{1},\overline{2},\overline{1},\overline{3}}}\not\in\mathbf{P}_{>0.9}$ so the two computations are not bisimilar.
\end{example}

\begin{example}[Global store]
Plotkin and Power \cite{DBLP:conf/fossacs/PlotkinP02} axiomatise the behaviour of the global store operations $lookup$ and $update$ using a set of program equations. We can show that these equations are in fact induced by applicative $\mathfrak{P}$-bisimilarity. For example, consider the following equation:
\begin{multline*}
\forall loc,loc'\in\mathbb{L} \text{ where } loc\not=loc'.\ \forall n,n'\in\mathbb{N}.\ \forall (y:\mathtt{nat},y':\mathtt{nat}\vdash t).	\\
l=update_{loc}(\overline{n},\ y.update_{loc'}(\overline{n'},\ y'.t)) \mathrel{\sim^\mathfrak{c}} update_{loc'}(\overline{n'},\ y'.update_{loc}(\overline{n},\ y.t))=r.
\end{multline*}
It says that writes to two different locations can be interchanged. Because we are dealing with computations, to prove bisimilarity it suffices to show:
\begin{equation*}
\forall s_1,s_2\in State.\ \treet{l}\in(s_1\rightarrowtail s_2) \Longleftrightarrow \treet{r}\in(s_1\rightarrowtail s_2).
\end{equation*}
The observation $(s_1\rightarrowtail s_2)$ was defined using the partial function $exec$ which formalises the execution of a computation tree. So in fact, we need to show that:
\begin{multline}
\forall s_1,s_2\in State.\ exec(\treet{l},s_1) \text{ is defined and } exec(\treet{l},s_1)=(\downarrow,s_2) 	\\
\Longleftrightarrow exec(\treet{r},s_1) \text{ is defined and } exec(\treet{r},s_1)=(\downarrow,s_2). \label{Eg_globalstore_eq}
\end{multline}
The computation trees of $l$ and $r$ respectively are:
\begin{center}
\begin{tikzpicture}[level distance=1.1cm, sibling distance=7cm]
\node (n) {$update_{loc,n}$}
	child[level distance=1.5cm] { node {$update_{loc',n'}$} edge from parent[very thick]
		child[sibling distance=2.3cm] { node {$\treet{t[\overline{0}/y,\overline{0}/y']}$} }
		child[sibling distance=2.3cm] { node {$\treet{t[\overline{0}/y,\overline{1}/y']}$} edge from parent[thin]}
		child[sibling distance=2.3cm] { node {$\treet{t[\overline{0}/y,\overline{2}/y']}$} edge from parent[thin]}
		child[sibling distance=2.3cm] { node[left=0.5cm] {$\ldots$}  edge from parent[draw=none] } }
	child { node {$update_{loc',n'}$} 
		child[sibling distance=2.3cm] { node {$\treet{t[\overline{1}/y,\overline{0}/y']}$} }
		child[sibling distance=2.3cm] { node {$\treet{t[\overline{1}/y,\overline{1}/y']}$} }
		child[sibling distance=2.3cm] { node {$\treet{t[\overline{1}/y,\overline{2}/y']}$} }
		child[sibling distance=2.3cm] { node[left=0.5cm] {$\ldots$}  edge from parent[draw=none] } }
	child { node[left=5cm] {$\ldots$}  edge from parent[draw=none] };
\node[left=0.5cm of n] {$\treet{l}=$};	
\end{tikzpicture}
\end{center}
\begin{center}
\begin{tikzpicture}[level distance=1.1cm, sibling distance=7cm]
\node (n) {$update_{loc',n'}$}
	child[level distance=1.5cm] { node {$update_{loc,n}$} edge from parent[very thick]
		child[sibling distance=2.3cm] { node {$\treet{t[\overline{0}/y,\overline{0}/y']}$} }
		child[sibling distance=2.3cm] { node {$\treet{t[\overline{1}/y,\overline{0}/y']}$} edge from parent[thin]}
		child[sibling distance=2.3cm] { node {$\treet{t[\overline{2}/y,\overline{0}/y']}$} edge from parent[thin]}
		child[sibling distance=2.3cm] { node[left=0.5cm] {$\ldots$}  edge from parent[draw=none] } }
	child { node {$update_{loc,n}$} 
		child[sibling distance=2.3cm] { node {$\treet{t[\overline{0}/y,\overline{1}/y']}$} }
		child[sibling distance=2.3cm] { node {$\treet{t[\overline{1}/y,\overline{1}/y']}$} }
		child[sibling distance=2.3cm] { node {$\treet{t[\overline{2}/y,\overline{1}/y']}$} }
		child[sibling distance=2.3cm] { node[left=0.5cm] {$\ldots$}  edge from parent[draw=none] } }
	child { node[left=5cm] {$\ldots$}  edge from parent[draw=none] };
\node[left=0.5cm of n] {$\treet{r}=$};	
\end{tikzpicture}
\end{center}
Using the trees and the definition of $exec$, we can deduce the following chain of equations, which is enough to prove equation~\ref{Eg_globalstore_eq}:
\begin{align*}
exec(\treet{l},\ s_1)&=exec(update_{loc',n'}(\overrightarrow{\treet{t[\overline{0}/y,\overline{i}/y]}}),\ s_1[loc:=n])	\\
&=exec(\treet{t[\overline{0}/y,\overline{0}/y']},\ (s_1[loc=n])[loc'=n'])	\\
&=exec(\treet{t[\overline{0}/y,\overline{0}/y']},\ s_1[loc=n,loc'=n'])	\tag{because $loc\not=loc'$}	\\
&=exec(\treet{t[\overline{0}/y,\overline{0}/y']},\ (s_1[loc'=n'])[loc=n])	\\
&=exec(update_{loc,n}(\overrightarrow{\treet{t[\overline{i}/y,\overline{0}/y]}}),\ s_1[loc'=n'])=exec(\treet{r},\ s_1) = (\downarrow,s_2).
\end{align*}
The other six equations for global store from \cite{DBLP:conf/fossacs/PlotkinP02} can be proved analogously. Moreover, Plotkin and Power observe that adding more equations to this set of seven leads to inconsistency. Therefore, these are all the equations between computations for the global store effect.
\end{example}

\section{Applicative $\mathfrak{P}$-Bisimilarity is a Congruence}\label{Sec_ecps_bisim_congr}

This section discusses the two main properties required for $\mathfrak{P}$-bisimilarity to be a well-behaved program equivalence: being an equivalence relation and compatibility. The proof of the next lemma appears in Appendix~\ref{App_bisim}. 

\begin{lemma} \label{Lemm_(bi)sim_preord}
Applicative $\mathfrak{P}$-similarity is a preorder. Applicative $\mathfrak{P}$-bisimilarity is an equivalence relation.
\end{lemma}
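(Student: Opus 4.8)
I will prove the two statements in the order in which they are stated, deriving the equivalence-relation claim for $\sim$ from the preorder claim for $\precsim$ together with Proposition~\ref{Prop_bisim_is_sim_and_simop}. The proof is a standard coinductive argument: to show that $\precsim$ is reflexive I exhibit a particular simulation contained in it (the diagonal), and to show that it is transitive I exhibit another one (its relational square $(\precsim)\circ(\precsim)$), using in both cases that $\precsim$ is by definition the \emph{greatest} applicative $\mathfrak{P}$-simulation, hence contains every simulation.

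\textbf{Step 1: $\precsim$ is reflexive.} I define the diagonal $\Delta = \{(v,v) \mid\ \vdash v:A\} \cup \{(t,t)\mid\ \vdash t\}$ and check the four clauses of Definition~\ref{Def_ecps_simulation}. Clause~\ref{ecps_sim2} and clause~\ref{ecps_sim3} are immediate since $v=w$ gives $\treet{v}=\treet{w}$; clause~\ref{ecps_sim1} uses the observation (by inspection of the value grammar and typing rules) that $\star$ is the only closed value of type $\mathtt{unit}$, so $v\mathrel{\Delta^\mathfrak{v}_{\mathtt{unit}}}w$ forces $v=w=\star$; clause~\ref{ecps_sim4} holds because $v=u$ gives $v(\overrightarrow{w}) = u(\overrightarrow{w})$, a pair in $\Delta^\mathfrak{c}$. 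Hence $\Delta$ is an applicative $\mathfrak{P}$-simulation, so $\Delta\subseteq(\precsim)$.

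\textbf{Step 2: $\precsim$ is transitive.} I show that $(\precsim)\circ(\precsim)$ — defined componentwise at each type — is an applicative $\mathfrak{P}$-simulation; since $\precsim$ is the greatest one, this yields $(\precsim)\circ(\precsim)\subseteq(\precsim)$, i.e.\ transitivity. Concretely, if $v\precsim x\precsim w$ at type $\mathtt{unit}$ (resp.\ $\mathtt{nat}$), two applications of clause~\ref{ecps_sim1} (resp.\ \ref{ecps_sim2}) give $v=w$; if $s\precsim u\precsim t$, then for each $P\in\mathfrak{P}$, $\treet{s}\in P\Rightarrow\treet{u}\in P\Rightarrow\treet{t}\in P$; and if $v\precsim x\precsim u$ at a function type $\neg(A_1,\dots,A_n)$, then for all closed arguments $\overrightarrow{w}$ clause~\ref{ecps_sim4} gives $v(\overrightarrow{w})\precsim x(\overrightarrow{w})\precsim u(\overrightarrow{w})$, so $v(\overrightarrow{w})$ and $u(\overrightarrow{w})$ are related by $(\precsim^\mathfrak{c})\circ(\precsim^\mathfrak{c})$. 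Together Steps~1 and~2 establish that $\precsim$ is a preorder. (This verification that the composite of simulations is again a simulation — in particular threading the intermediate term $x$ through clause~\ref{ecps_sim4} — is the only place requiring any care, and it is routine.)

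\textbf{Step 3: $\sim$ is an equivalence relation.} Here I invoke Proposition~\ref{Prop_bisim_is_sim_and_simop}, which gives $(\sim)=(\precsim)\cap(\precsim)^{\textit{op}}$. Reflexivity of $\sim$ then follows from reflexivity of $\precsim$ (Step~1), since $\Delta\subseteq(\precsim)$ and $\Delta=\Delta^{\textit{op}}\subseteq(\precsim)^{\textit{op}}$. Symmetry is immediate from the syntactic form $(\precsim)\cap(\precsim)^{\textit{op}}$, which is invariant under taking the converse. Transitivity follows from transitivity of $\precsim$ (Step~2): if $a\sim b$ and $b\sim c$ then $a\precsim b\precsim c$ gives $a\precsim c$, and $c\precsim b\precsim a$ gives $c\precsim a$, so $a\sim c$. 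Alternatively, bypassing Proposition~\ref{Prop_bisim_is_sim_and_simop}, one could argue directly that $\Delta$ is a symmetric simulation and that $(\sim)\circ(\sim)$ is a symmetric simulation (it is a simulation by the argument of Step~2, and symmetric because $((\sim)\circ(\sim))^{\textit{op}}=(\sim)^{\textit{op}}\circ(\sim)^{\textit{op}}=(\sim)\circ(\sim)$ using symmetry of $\sim$), hence both are contained in the greatest bisimulation $\sim$; but routing through Proposition~\ref{Prop_bisim_is_sim_and_simop} is shorter. No step here presents a genuine obstacle; the lemma is a sanity check that the coinductive definitions behave as expected.
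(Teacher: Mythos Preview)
Your proof is correct and follows essentially the same strategy as the paper: show the diagonal is a simulation for reflexivity, argue transitivity coinductively, and then derive the equivalence-relation claim for $\sim$ from Proposition~\ref{Prop_bisim_is_sim_and_simop}. The only cosmetic difference is in the transitivity step: you show uniformly that $(\precsim)\circ(\precsim)$ is a simulation, whereas the paper constructs, for each pair $u\precsim v\precsim w$, a small ad~hoc simulation containing $(u,w)$ by case analysis on the type; your packaging is the more standard coinductive argument and arguably cleaner.
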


Applicative similarity is a \emph{well-typed relation} on \emph{closed} ECPS terms. Compatibility says we can substitute related programs for a variable inside related programs. Therefore, we need to talk about bisimilarity of programs with free variables.
Bisimilarity can be extended to open terms in a standard way \cite{DBLP:conf/lics/LagoGL17}.

\begin{definition}[Open extension]
Given a well-typed relation on closed terms, $\mathcal{R}=(\mathcal{R}^\mathfrak{v}_A,\mathcal{R}^\mathfrak{c})$, the open extension of $\mathcal{R}$ is $\mathcal{R}^\circ=(\mathcal{R}^{\circ,\mathfrak{v}}_A,\mathcal{R}^{\circ,\mathfrak{c}})$ where:
\begin{gather*}
\overrightarrow{x_i:A_i} \vdash v\ \mathcal{R}^{\circ,\mathfrak{v}}_B\ w \quad\Longleftrightarrow\quad \forall \overrightarrow{u_i:A_i}.\ v[\overrightarrow{u_i/x_i}] \mathrel{\mathcal{R}^{\mathfrak{v}}_B} w[\overrightarrow{u_i/x_i}]	\\
\overrightarrow{x_i:A_i} \vdash s\ \mathcal{R}^{\circ,\mathfrak{c}}\ t \quad\Longleftrightarrow\quad \forall \overrightarrow{u_i:A_i}.\ s[\overrightarrow{u_i/x_i}] \mathrel{\mathcal{R}^{\mathfrak{c}}} t[\overrightarrow{u_i/x_i}].
\end{gather*}
\end{definition}

\begin{definition}[Compatibility \cite{DBLP:conf/lics/LagoGL17}] \label{Def_compat}
A well-typed open relation $\mathcal{R}=(\mathcal{R}^\mathfrak{v}_A, \mathcal{R}^\mathfrak{c})$ is compatible if it is closed under the rules in Figure \ref{Fig_compat}. We define $\mathcal{R}$ to be a precongruence if it is a compatible preorder, and a congruence if it is a compatible equivalence relation.
\end{definition}

\begin{figure}
\begin{gather*}
\inferrule{ }{\Gamma\vdash x\ \mathcal{R}^\mathfrak{v}_{A}\ x} \textsc{(Comp1)} \quad
\inferrule{ }{\Gamma\vdash \star\ \mathcal{R}^\mathfrak{v}_{\mathtt{unit}}\ \star}\textsc{(Comp2)}	\\
\inferrule{\Gamma,x_1:A_1,\ldots,x_n:A_n \vdash s\ \mathcal{R}^\mathfrak{c}\ t}{\Gamma \vdash \lbd{\overrightarrow{x}}{\overrightarrow{A}}{s}\ \mathcal{R}^\mathfrak{v}_{\neg(A_1,\ldots,A_n)}\ \lbd{\overrightarrow{x}}{\overrightarrow{A}}{t}} \textsc{(Comp3)}	\\
\inferrule{ }{\Gamma\vdash \mathtt{zero}\ \mathcal{R}^\mathfrak{v}_{\mathtt{nat}}\ \mathtt{zero}} \textsc{(Comp4)} \quad
\inferrule{\Gamma\vdash v\ \mathcal{R}^\mathfrak{v}_{\mathtt{nat}}\ v'}{\Gamma\vdash \mathtt{succ}(v)\ \mathcal{R}^\mathfrak{v}_{\mathtt{nat}}\ \mathtt{succ}(v')} \textsc{(Comp5)}	\\
\inferrule{\Gamma\vdash v\ \mathcal{R}^\mathfrak{v}_{\neg(A_1,\ldots,A_n)}\ v' \\ \Gamma\vdash w_1\ \mathcal{R}^\mathfrak{v}_{A_1}\ w'_1,\ldots,\Gamma\vdash w_n\ \mathcal{R}^\mathfrak{v}_{A_n}\ w'_n}{\Gamma \vdash v(w_1,\ldots,w_n)\ \mathcal{R}^\mathfrak{c}\ v'(w'_1,\ldots,w'_n)} \textsc{(Comp6)} \\
\inferrule{\Gamma,x:\neg(\overrightarrow{A}) \vdash v\ \mathcal{R}^\mathfrak{v}_{\neg(\overrightarrow{A})}\ v' \\ \Gamma\vdash w_i\ \mathcal{R}^\mathfrak{v}_{A_i}\ w'_i \text{ for each i}}{\Gamma\vdash (\mufix{x}{v})(\overrightarrow{w})\ \mathcal{R}^\mathfrak{c}\ (\mufix{x}{v'})(\overrightarrow{w'})} \textsc{(Comp7)} \\
\inferrule{\Gamma\vdash v\ \mathcal{R}^\mathfrak{v}_{\mathtt{nat}}\ v' \\ \Gamma,x:\mathtt{nat}\vdash t\ \mathcal{R}^\mathfrak{c}\ t'}{\Gamma\vdash\sigma(v,x.t)\ \mathcal{R}^\mathfrak{c}\ \sigma(v',x.t')}\sigma\in\Sigma\ \textsc{(Comp8)} \quad
\inferrule{ }{\Gamma\vdash \downarrow\ \mathcal{R}^\mathfrak{c}\ \downarrow}\textsc{(Comp9)}	\\
\inferrule{\Gamma\vdash v\ \mathcal{R}^\mathfrak{v}_{\mathtt{nat}}\ v' \\ \Gamma\vdash s\ \mathcal{R}^\mathfrak{c}\ s' \\ \Gamma,x:\mathtt{nat}\vdash t\ \mathcal{R}^\mathfrak{c}\ t'}{\Gamma \vdash \caset{v}{s}{x}{t}\ \mathcal{R}^\mathfrak{c}\ \caset{v'}{s'}{x}{t'}}\\ \textsc{(Comp10)}
\end{gather*}
\caption{Compatibility rules.}\label{Fig_compat}
\end{figure}

The following lemma identifies some alternative compatibility rules which will be used in later proofs. Its proof can be found in Appendix \ref{App_bisim}.

\begin{lemma} \label{Lem_comp_1premise_rules}
Consider a well-typed relation $\mathcal{R}$ that is a preorder. The compatibility rules \textsc{(comp6)}, \textsc{(comp7)}, \textsc{(comp8)} and \textsc{(comp10)} from Figure~\ref{Fig_compat} are equivalent to the conjunction of their single-premise versions. More explicitly:
\begin{itemize}
\item Rule \textsc{(comp6)} is equivalent to the conjunction of the rules:
\begin{gather*}
\inferrule{\Gamma\vdash v\ \mathcal{R}^\mathfrak{v}_{\neg(A_1,\ldots,A_n)}\ v'	\\ \Gamma\vdash \overrightarrow{w_j:A_j}}{\Gamma \vdash v(w_1,\ldots,w_n)\ \mathcal{R}^\mathfrak{c}\ v'(w_1,\ldots,w_n)} \textsc{(Comp6L)} \\
\inferrule{\Gamma \vdash v:\neg(\overrightarrow{A_j}) \\ \Gamma \vdash (\overrightarrow{w_{1,i-1}:A_{1,i-1}}) \\ \Gamma\vdash w_i\ \mathcal{R}^\mathfrak{v}_{A_i}\ w'_i	\\	\Gamma \vdash (\overrightarrow{w_{i+1,n}:A_{i+1,n}})}{\Gamma \vdash v(w_1,\ldots,w_i,\ldots,w_n)\ \mathcal{R}^\mathfrak{c}\ v(w_1,\ldots,w'_i,\ldots,w_n)} \\ \textsc{(Comp6R$_i$)} \text{ for each }i=\overline{1,n}
\end{gather*}

\item Rule \textsc{(comp7)} is equivalent to the conjunction of the rules:
\begin{gather*}
\inferrule{\Gamma,x:\neg(\overrightarrow{A_j}) \vdash v\ \mathcal{R}^\mathfrak{v}_{\neg(\overrightarrow{A_j})}\ v'	\\	\Gamma \vdash \overrightarrow{w_j:A_j}}{\Gamma\vdash (\mufix{x}{v})(\overrightarrow{w_j})\ \mathcal{R}^\mathfrak{c}\ (\mufix{x}{v'})(\overrightarrow{w_j})} \textsc{(Comp7L)}	\\
\inferrule{\Gamma,x:\neg(\overrightarrow{A_j}) \vdash v:\neg(\overrightarrow{A_j})	\\ \Gamma \vdash (\overrightarrow{w_{1,i-1}:A_{1,i-1}})	\\ \Gamma\vdash w_i\ \mathcal{R}^\mathfrak{v}_{A_i}\ w'_i	\\ \Gamma \vdash (\overrightarrow{w_{i+1,n}:A_{i+1,n}})}{\Gamma\vdash (\mufix{x}{v})(w_1,\ldots,w_i,\ldots,w_n)\ \mathcal{R}^\mathfrak{c}\ (\mufix{x}{v})(w_1,\ldots,w'_i,\ldots,w_n)} \\ \textsc{(Comp7R$_i$)} \text{ for each }i=\overline{1,n}
\end{gather*}

\item Rule \textsc{(comp8)} is equivalent to the conjunction of the rules:
\begin{gather*}
\inferrule{\Gamma\vdash v\ \mathcal{R}^\mathfrak{v}_{\mathtt{nat}}\ v'	\\	\Gamma,x:\mathtt{nat}\vdash t}{\Gamma\vdash\sigma(v,x.t)\ \mathcal{R}^\mathfrak{c}\ \sigma(v',x.t)}\sigma\in\Sigma\ \textsc{(Comp8L)}	\\
\inferrule{\Gamma\vdash v:\mathtt{nat}	\\	\Gamma,x:\mathtt{nat}\vdash t\ \mathcal{R}^\mathfrak{c}\ t'}{\Gamma\vdash\sigma(v,x.t)\ \mathcal{R}^\mathfrak{c}\ \sigma(v,x.t')}\sigma\in\Sigma\ \textsc{(Comp8R)}
\end{gather*}

\item Rule \textsc{(comp10)} is equivalent to the conjunction of the rules:
\begin{gather*}
\inferrule{\Gamma\vdash v\ \mathcal{R}^\mathfrak{v}_{\mathtt{nat}}\ v'	\\ \Gamma\vdash s	\\	\Gamma,x:\mathtt{nat}\vdash t}{\Gamma \vdash \caset{v}{s}{x}{t}\ \mathcal{R}^\mathfrak{c}\ \caset{v'}{s}{x}{t}}\\ \textsc{(Comp10V)}	\\
\inferrule{\Gamma\vdash v:\mathtt{nat}	\\	\Gamma\vdash s\ \mathcal{R}^\mathfrak{c}\ s'	\\	\Gamma,x:\mathtt{nat}\vdash t}{\Gamma \vdash \caset{v}{s}{x}{t}\ \mathcal{R}^\mathfrak{c}\ \caset{v}{s'}{x}{t}}\\ \textsc{(Comp10L)}	\\
\inferrule{\Gamma\vdash v:\mathtt{nat}	\\	\Gamma\vdash s	\\	\Gamma,x:\mathtt{nat}\vdash t\ \mathcal{R}^\mathfrak{c}\ t'}{\Gamma \vdash \caset{v}{s}{x}{t}\ \mathcal{R}^\mathfrak{c}\ \caset{v}{s}{x}{t'}}\\ \textsc{(Comp10R)}
\end{gather*}
\end{itemize}
\end{lemma}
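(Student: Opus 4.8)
The plan is to prove each of the four stated equivalences by establishing the two implications separately, and in both directions the only facts about $\mathcal{R}$ that are needed are reflexivity and transitivity (of the open extension, which inherit from $\mathcal{R}$ being a preorder) together with the standing assumption that $\mathcal{R}$ is well-typed, i.e.\ that $\Gamma\vdash a\ \mathcal{R}\ b$ forces $a$ and $b$ to have the same type in $\Gamma$. I will spell out the argument for \textsc{(comp6)} and then indicate that \textsc{(comp7)}, \textsc{(comp8)}, \textsc{(comp10)} are handled identically, with the only difference being the number of argument positions to iterate over (and, for \textsc{(comp10)}, that there are three premises, hence three single-premise rules \textsc{(comp10V)}, \textsc{(comp10L)}, \textsc{(comp10R)}).

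For the forward direction (multi-premise rule implies its single-premise versions): to obtain \textsc{(comp6L)} from \textsc{(comp6)}, given $\Gamma\vdash v\ \mathcal{R}^{\mathfrak v}_{\neg(\overrightarrow A)}\ v'$ and $\Gamma\vdash w_j:A_j$, use reflexivity to get $\Gamma\vdash w_j\ \mathcal{R}^{\mathfrak v}_{A_j}\ w_j$ for each $j$ and feed these into \textsc{(comp6)}. To obtain \textsc{(comp6R$_i$)}, given $\Gamma\vdash v:\neg(\overrightarrow A)$, the context arguments $w_{k}$ ($k\neq i$) and $\Gamma\vdash w_i\ \mathcal{R}^{\mathfrak v}_{A_i}\ w'_i$, use reflexivity to get $\Gamma\vdash v\ \mathcal{R}^{\mathfrak v}\ v$ and $\Gamma\vdash w_k\ \mathcal{R}^{\mathfrak v}\ w_k$ for $k\neq i$, then apply \textsc{(comp6)}.

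For the backward direction (the single-premise rules imply \textsc{(comp6)}): assume $\Gamma\vdash v\ \mathcal{R}^{\mathfrak v}_{\neg(\overrightarrow A)}\ v'$ and $\Gamma\vdash w_i\ \mathcal{R}^{\mathfrak v}_{A_i}\ w'_i$ for each $i$. Build the chain
\begin{align*}
v(w_1,\ldots,w_n) &\ \mathcal{R}^{\mathfrak c}\ v'(w_1,\ldots,w_n) && \text{by \textsc{(comp6L)}}\\
&\ \mathcal{R}^{\mathfrak c}\ v'(w'_1,w_2,\ldots,w_n) && \text{by \textsc{(comp6R$_1$)}}\\
&\ \ \vdots && \\
&\ \mathcal{R}^{\mathfrak c}\ v'(w'_1,\ldots,w'_n) && \text{by \textsc{(comp6R$_n$)}}
\end{align*}
and conclude $\Gamma\vdash v(\overrightarrow w)\ \mathcal{R}^{\mathfrak c}\ v'(\overrightarrow{w'})$ by transitivity. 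At each step one must check that the intermediate term is well-typed so that the corresponding single-premise rule applies: this is where well-typedness of $\mathcal{R}$ is used — $v'$ has the same type as $v$ and each $w'_i$ the same type as $w_i$, so every hybrid application $v'(w'_1,\ldots,w'_{k},w_{k+1},\ldots,w_n)$ is a legal term of the same type. The analogous chains for \textsc{(comp7)} (replace the head $v$ by $v'$ using \textsc{(comp7L)}, then walk the arguments with \textsc{(comp7R$_i$)}), \textsc{(comp8)} (first \textsc{(comp8L)} on $v$, then \textsc{(comp8R)} on the body $t$), and \textsc{(comp10)} (successively \textsc{(comp10V)}, \textsc{(comp10L)}, \textsc{(comp10R)} on the three positions) go through verbatim.

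The argument is essentially routine; the one point requiring a little care, and which I expect to be the main (minor) obstacle, is the well-typedness bookkeeping for the intermediate terms in each chain — one has to invoke that $\mathcal{R}$ relates only terms of equal type to justify that the partially-substituted terms are typable and that the single-premise rules are applicable to them. Everything else is a direct use of reflexivity (for the forward direction) and transitivity plus the chain displayed above (for the backward direction).
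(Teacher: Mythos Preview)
Your proposal is correct and follows essentially the same approach as the paper: use reflexivity to derive each single-premise rule from the multi-premise one, and use a transitivity chain (stepping through one argument position at a time) for the converse, with well-typedness of $\mathcal{R}$ ensuring the intermediate terms are well-formed. The only cosmetic difference is that the paper spells out \textsc{(comp7)} as its representative case whereas you chose \textsc{(comp6)}.
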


To prove $\mathfrak{P}$-bisimilarity is a congruence, we have identified two sufficient conditions that the set of observations $\mathfrak{P}$ should satisfy. One of them is that every observation needs to be Scott-open, as in the work of Simpson and Voorneveld \cite{SimV18}. The second condition is that $\mathfrak{P}$ needs to be decomposable for a novel definition of decomposability.

\begin{definition}[Scott-openness]\label{Def_scott_open}
A set of trees $X$ is Scott-open if:
\begin{enumerate}
\item It is upwards closed, that is: $tr\in X$ and $tr\leq tr'$ imply $tr'\in X$.
\item Whenever $tr_1\leq tr_2\leq\ldots$ is an ascending chain with least upper bound $\bigsqcup tr_i \in X$, then $tr_j\in X$ for some $j$.
\end{enumerate}
\end{definition}

\begin{definition}[Decomposability]\label{Def_decomposability}
The set of observations $\mathfrak{P}$ is decomposable if for any  $P\in\mathfrak{P}$ and for any $tr\in P$:
\begin{equation*}
tr=\sigma_n(\overrightarrow{tr'}) \implies \exists\overrightarrow{P'}\in\mathfrak{P}.\ \overrightarrow{tr'}\in\overrightarrow{P'}  \text{ and } \forall \overrightarrow{p'}\in\overrightarrow{P'}.\ \sigma_n(\overrightarrow{p'}) \in P.
\end{equation*}
\end{definition}

Decomposability says that, whenever a tree $tr$ is part of an observation $P$, the children of $tr$'s root should themselves be part of some observations which fully capture the restrictions that $P$ places on them. This is true for all examples of effects considered so far. Using the last two definitions we can state the main theorem of this chapter:

\begin{theorem}\label{Thm_sim_compat}
Given a decomposable set of Scott-open observations $\mathfrak{P}$:
\begin{enumerate}
\item The open extension of applicative $\mathfrak{P}$-similarity, $\precsim^\circ$, is compatible, and hence it is a precongruence.
\item The open extension of applicative $\mathfrak{P}$-bisimilarity, $\sim^\circ$, is compatible, and hence it is a congruence.
\end{enumerate}
\end{theorem}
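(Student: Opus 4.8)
The plan is to prove part~1 by \emph{Howe's method} and then obtain part~2 as a short corollary. Throughout write $\precsim^\circ$ for the open extension of applicative $\mathfrak{P}$-similarity, which is a preorder by Lemma~\ref{Lemm_(bi)sim_preord}.

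\textbf{The Howe closure.} I would define $\precsim^H = (\precsim^{H,\mathfrak v}_A, \precsim^{H,\mathfrak s}_{A,B}, \precsim^{H,\mathfrak c})$ as the least well-typed open relation closed under the compatibility rules of Figure~\ref{Fig_compat} and under right-composition with $\precsim^\circ$ (so that one application of a compatibility rule with premises in $\precsim^H$, followed by a step of $\precsim^\circ$, stays in $\precsim^H$). The standard consequences follow immediately from the definition together with $\precsim^\circ$ being a preorder: $\precsim^H$ is reflexive and compatible, $\precsim^\circ \subseteq \precsim^H$, and $\precsim^H$ is pseudo-transitive on the right, $\precsim^H \mathbin{\circ} \precsim^\circ \subseteq \precsim^H$.

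\textbf{Substitutivity and a reduction lemma.} By induction on the derivation of $v \precsim^H v'$, using the single-premise compatibility rules of Lemma~\ref{Lem_comp_1premise_rules}, I would show $\precsim^H$ is closed under substitution of $\precsim^H$-related closed values. Then, by case analysis on the reduction rules, I would prove the auxiliary fact: if $s \precsim^H t$ for closed computations and $s \longrightarrow s'$, then $s' \precsim^H t$. Each case first reads off the structure of $\precsim^H$ at the redex (e.g.\ from $(\lbd{\vec x}{\vec A}{u})(\vec w) \precsim^H t$ one extracts $u \precsim^H u''$ and $\vec w \precsim^H \vec{w'}$ with $(\lbd{\vec x}{\vec A}{u''})(\vec{w'}) \precsim^\circ t$, using that closed values of type $\neg(\vec A)$ are $\lambda$-abstractions), then fires the reduction on the right-hand component and closes the case with Lemma~\ref{Lem_red_pres_sim} and substitutivity. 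The $\caset{\cdot}{\cdot}{\cdot}{\cdot}$ cases additionally need that $\precsim^H$ restricted to closed values of type $\mathtt{nat}$ (resp.\ $\mathtt{unit}$) is the identity, which is a short separate induction from the fact that $\precsim$ is a simulation.

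\textbf{The key lemma.} The heart of the proof is: if $s \precsim^H t$ for closed computations then $\treet{s} \in P \implies \treet{t} \in P$ for every $P \in \mathfrak{P}$. Given $\treet{s} \in P$, Scott-openness of $P$ (Definition~\ref{Def_scott_open}) applied to the chain $\treet{s} = \bigsqcup_n \treet{s}_n$ produces some $n$ with $\treet{s}_n \in P$, and I would induct on $n$. The base case $\treet{s}_0 = \bot$ is immediate by upward closure. For the step, case-split on the closed computation $s$: if $s \longrightarrow s'$ then $\treet{s}_{n+1} = \treet{s'}_n$, the reduction lemma gives $s' \precsim^H t$, and the induction hypothesis applies; if $s = \downarrow$ the claim is just the simulation property of $\precsim$; and if $s = \sigma(\overline m, x.s')$, read off $s' \precsim^H s''$ (in context $x{:}\mathtt{nat}$) and $\sigma(\overline m, x.s'') \precsim^\circ t$, apply decomposability (Definition~\ref{Def_decomposability}) to $\treet{s}_{n+1} = \sigma_m(\treet{s'[\overline k/x]}_n)_{k}$ to get $\vec{P'} \in \mathfrak{P}$ with each $\treet{s'[\overline k/x]}_n \in P'_k$ and with $\sigma_m(\vec{p'}) \in P$ for all $\vec{p'} \in \vec{P'}$, apply the induction hypothesis to each $s'[\overline k/x] \precsim^H s''[\overline k/x]$ (substitutivity) to obtain $\treet{s''[\overline k/x]} \in P'_k$, hence $\treet{\sigma(\overline m, x.s'')} = \sigma_m(\treet{s''[\overline k/x]})_k \in P$, and finish via $\sigma(\overline m, x.s'') \precsim^\circ t$ and the simulation property of $\precsim$. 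This $\sigma$-case, together with the need to anchor the induction on the approximation index $n$, is where I expect the main difficulty: it is precisely the step that forces both hypotheses, that $\mathfrak{P}$ is decomposable and consists of Scott-open sets.

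\textbf{Conclusion.} Combining the key lemma with the value clauses (which follow from the structure of $\precsim^H$ and the simulation property of $\precsim$: closed $\mathtt{unit}$- and $\mathtt{nat}$-values related by $\precsim^H$ are equal, and $\precsim^H$-related closed functions yield $\precsim^{H,\mathfrak c}$-related applications), $\precsim^H$ restricted to closed terms is an applicative $\mathfrak{P}$-simulation, hence $\precsim^H \subseteq {\precsim}$ on closed terms; since also $\precsim = \precsim^\circ|_{\mathrm{closed}} \subseteq \precsim^H|_{\mathrm{closed}}$, the two coincide there, and passing to open terms via reflexivity of $\precsim^H$ and substitutivity gives $\precsim^\circ = \precsim^H$, which is compatible, so $\precsim^\circ$ is a precongruence. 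For part~2, Proposition~\ref{Prop_bisim_is_sim_and_simop} yields $\sim^\circ = \precsim^\circ \cap (\precsim^\circ)^{\mathrm{op}}$; both $\precsim^\circ$ and its converse are compatible (the compatibility rules are closed under reversal), so their intersection is compatible, and being an equivalence relation by Lemma~\ref{Lemm_(bi)sim_preord}, $\sim^\circ$ is a congruence.
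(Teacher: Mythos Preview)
Your proof of part~1 follows the paper's approach (Howe's method) closely; the paper's Key Lemma (Lemma~\ref{Lemm_howe_key}) proceeds by the same induction on the approximation index and the same decomposability/Scott-openness argument in the $\sigma$-case, though it inlines the reduction analysis into the case-split on $s$ rather than factoring out a separate reduction lemma. This is a cosmetic difference only.

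Your part~2, however, takes a different and more direct route than the paper. The paper re-runs Howe's method with $\sim$ in place of $\precsim$: it shows $\sim^{\mathcal H}$ restricted to closed terms is a simulation, passes to the reflexive-transitive closure $\sim^{\mathcal H*}$, invokes a lemma of Lassen (Lemma~\ref{Lemm_howe_lassen}) to show $\sim^{\mathcal H*}$ is symmetric and hence a bisimulation, and concludes $\sim^\circ = \sim^{\mathcal H*}$. Your argument via $\sim^\circ = \precsim^\circ \cap (\precsim^\circ)^{\mathrm{op}}$ (which does follow from Proposition~\ref{Prop_bisim_is_sim_and_simop} after commuting open extension with intersection and converse) together with closure of compatibility under converse and intersection is correct and avoids this extra machinery; the paper itself uses exactly these closure facts later, in the proof of Proposition~\ref{Prop_ctxeq_is_ctxpreop}. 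Your approach has the advantage of getting part~2 essentially for free from part~1; the paper's route is the one more commonly seen in the literature and is more robust in settings where bisimilarity need not factor as similarity intersected with its converse, but that robustness is not needed here.
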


It is easy to check that all the running examples of observations are upwards closed. The proof that they satisfy the second condition in the definition of Scott-openness is the same as in \cite{SimV18}. It remains to check that $\mathfrak{P}$ is decomposable:

\begin{example}[Pure functional computation]
The only observation is $\Downarrow=\{\downarrow\}$. There are no trees in $\Downarrow$ whose root has children, so decomposability is satisfied.
\end{example}

\begin{example}[Nondeterminism]
Recall that $\mathfrak{P}=\{\textit{Trees}_\Sigma,\Diamond,\Box\}$. For any $tr\in\textit{Trees}_\Sigma$ we can choose each $P'_n=\textit{Trees}_\Sigma$ to fulfil decomposability.

Consider $tr\in\Diamond$. Either $tr=\downarrow$, in which case we are done, or $tr=or_n(tr'_0,tr'_1,tr'_2\ldots)$. It must be the case that either $tr'_0$ or $tr'_1$ have a reachable $\downarrow$-leaf. Without loss of generality, assume $tr'_0$ has a reachable $\downarrow$-leaf. Then we know $tr'_0\in\Diamond$ so we can choose $P'_0=\Diamond,P'_1=\textit{Trees}_\Sigma,P'_2=\textit{Trees}_\Sigma,\ldots$. For any $\overrightarrow{p'}\in\overrightarrow{P'}$ we know $or_n(\overrightarrow{p'})\in\Diamond$ because $p'_0$ has a reachable $\downarrow$-leaf.

The argument for $tr\in \Box$ is analogous, if we choose $P'_0=\Box,P'_1=\Box,P'_2=\textit{Trees}_\Sigma,P'_3=\textit{Trees}_\Sigma,\ldots$. This proof relies on the fact that $\textit{Trees}_\Sigma$ is an observation, which is way we decided to include it in $\mathfrak{P}$.
\end{example}

\begin{example}[Probabilistic choice]
Consider $tr=p\text{-}or_n(tr'_0,tr'_1,tr'_2,\ldots) \in\mathbf{P}_{> q}$ for some $q\in\mathbb{Q}$, $0\leq q<1$. Recall the definition of the partial function $\mathbb{P}$ from Example \ref{Eg_prob_ecps}. We know that:
\begin{equation}
\mathbb{P}(tr)=\frac{1}{2}\mathbb{P}(tr'_0) + \frac{1}{2}\mathbb{P}(tr'_1) > q. \label{eq_prob_decomp_proof1}
\end{equation}
Define:
\begin{align*}
q_0 &= \frac{\mathbb{P}(tr'_0)}{\mathbb{P}(tr'_0)+\mathbb{P}(tr'_1)}\cdot 2q	\\
q_1 &= \frac{\mathbb{P}(tr'_1)}{\mathbb{P}(tr'_0)+\mathbb{P}(tr'_1)}\cdot 2q.
\end{align*}
The two probabilities $\mathbb{P}(tr'_0)$ and $\mathbb{P}(tr'_1)$ are rational numbers because they are defined as a sum of rational numbers. The threshold $q$ is rational by assumption, so $q_1$ and $q_2$ are rational.

From equation \ref{eq_prob_decomp_proof1}  we can deduce that:
\begin{align*}
1&\geq \mathbb{P}(tr'_0) >q_0\\
1&\geq \mathbb{P}(tr'_1) >q_1.
\end{align*}
So we can choose $P'_0=\mathbf{P}_{>q_0},P'_1=\mathbf{P}_{>q_1},P'_2=\textit{Trees}_\Sigma, P'_3=\textit{Trees}_\Sigma,\ldots$.
Therefore, $\overrightarrow{tr'}\in\overrightarrow{P'}$ as required.

Consider some other subtrees $\overrightarrow{p'}\in\overrightarrow{P'}$. By the way we defined $\overrightarrow{P'}$ it follows that:
\begin{equation*}
\frac{1}{2}\mathbb{P}(p'_0) + \frac{1}{2}\mathbb{P}(p'_1) > \frac{1}{2}(q_0+q_1) = q
\end{equation*}
so $p\text{-}or(\overrightarrow{p'})\in\mathbf{P}_{>q}$ as required.
\end{example}

\begin{example}[Global store] 
Consider a tree $tr=\sigma_n(tr'_0,tr'_1,tr'_2,\ldots)\in (s\rightarrowtail r)$. It must be the case that $exec(tr,s)=(\downarrow,r)$. 

If $\sigma_n=lookup_{l,n}$: because $exec(tr,s)$ is defined it must be the case that $exec(tr'_{s(l)},s)$ is also defined and  $exec(tr'_{s(l)},s)=exec(tr,s)=(\downarrow,r)$ so we know that $tr'_{s(l)}\in(s\rightarrowtail r)$. In the definition of decomposability, choose $P'_{s(l)}=(s\rightarrowtail r)$ and $P'_{k\not=s(l)}=\textit{Trees}_\Sigma$ and we are done.

If $\sigma_n=update_{l,n}$: because $exec(tr,s)$ is defined it must be the case that $exec(tr'_0,s[l\coloneqq n])$ is also defined and $exec(tr'_0,s[l\coloneqq n])=exec(tr,s)=(\downarrow,r)$. Therefore $tr'_0\in(s[l\coloneqq n]\rightarrowtail r)$. We can choose $P'_0=(s[l\coloneqq n]\rightarrowtail r)$ and $P'_{k\not=0}=\textit{Trees}_\Sigma$ and we are done.
\end{example}

\begin{example}[Input/output]
Consider a tree $tr=\sigma_n(tr'_0,tr'_1,tr'_2,\ldots)\in \langle w\rangle_{\ldots}$. If $w=\epsilon$, then decomposability is immediately satisfied by choosing $P'_{k}=\langle\epsilon\rangle_{\ldots}$. Assume $w\not=\epsilon$.

If $\sigma_n=read_n$, it must be the case that $w=(?k)w'$ and $tr'_k\models\langle w'\rangle_{\ldots}$. We can choose $P'_k=\langle w'\rangle_{\ldots}$ and $P'_{m\not=k}=\langle\epsilon\rangle_{\ldots}$ and we are done.

If $\sigma_n=write_n$, then $w=(!n)w'$ and $tr'_0\models\langle w'\rangle_{\ldots}$. Choose $P'_0=\langle w'\rangle_{\ldots}$ and $P'_{k\not=0}=\langle\epsilon\rangle_{\ldots}$ and we are done.
\end{example}

\section{Howe's Method}\label{Sec_howe}

To prove Theorem \ref{Thm_sim_compat} we will use a method originally due to Howe \cite{DBLP:journals/iandc/Howe96}. The strategy is to define a relation $\precsim^\mathcal{H}$ named the Howe extension of $\precsim$, prove that it is compatible, and then prove that it coincides with $\precsim^\circ$. 

The justification that Pitts \cite{Pit11} gives for the use of Howe's method in the case of the untyped $\lambda$-calculus applies to ECPS as well. A direct proof that $\precsim^\circ$ is compatible is problematic because it requires proving a substitutivity property of $\precsim^\circ$ which is very close to compatibility. Howe's method avoids this problem because $\precsim^\mathcal{H}$ is compatible by construction, and can be proved substitutive.

The proof of compatibility of applicative $\mathfrak{P}$-bisimilarity follows a similar structure to that for applicative bisimilarity for EPCF, found in \cite[Appendix]{SimV17} and \cite[Section 6]{SimV18}. We present all the proofs in detail, filling in gaps in Simpson's and Voorneveld's presentation, and adapting to the setting of ECPS.

\begin{definition}[Compatibe refinement]\label{Def_compat_refinement}
Given a  well-typed open relation $\mathcal{R}$ its compatible refinement $\widehat{\mathcal{R}}$ is inductively defined by the rules in Figure \ref{Fig_compat_refinement}.
\end{definition}

\begin{figure}
\begin{gather*}
\inferrule{ }{\Gamma\vdash x\ \widehat{\mathcal{R}}^\mathfrak{v}_{A}\ x} \textsc{(C1)} \quad
\inferrule{ }{\Gamma\vdash \star\ \widehat{\mathcal{R}}^\mathfrak{v}_{\mathtt{unit}}\ \star}\textsc{(C2)}	\quad
\inferrule{\Gamma,x_1:A_1,\ldots,x_n:A_n \vdash s\ \mathcal{R}^\mathfrak{c}\ t}{\Gamma \vdash \lbd{\overrightarrow{x}}{\overrightarrow{A}}{s}\ \widehat{\mathcal{R}}^\mathfrak{v}_{\neg(A_1,\ldots,A_n)}\ \lbd{\overrightarrow{x}}{\overrightarrow{A}}{t}} \textsc{(C3)}	\\
\inferrule{ }{\Gamma\vdash \mathtt{zero}\ \widehat{\mathcal{R}}^\mathfrak{v}_{\mathtt{nat}}\ \mathtt{zero}} \textsc{(C4)} \quad
\inferrule{\Gamma\vdash v\ \mathcal{R}^\mathfrak{v}_{\mathtt{nat}}\ v'}{\Gamma\vdash \mathtt{succ}(v)\ \widehat{\mathcal{R}}^\mathfrak{v}_{\mathtt{nat}}\ \mathtt{succ}(v')} \textsc{(C5)}	\\
\inferrule{\Gamma\vdash v\ \mathcal{R}^\mathfrak{v}_{\neg(A_1,\ldots,A_n)}\ v' \\ \Gamma\vdash w_1\ \mathcal{R}^\mathfrak{v}_{A_1}\ w'_1,\ldots,\Gamma\vdash w_n\ \mathcal{R}^\mathfrak{v}_{A_n}\ w'_n}{\Gamma \vdash v(w_1,\ldots,w_n)\ \widehat{\mathcal{R}}^\mathfrak{c}\ v'(w'_1,\ldots,w'_n)} \textsc{(C6)} \\
\inferrule{\Gamma,x:\neg(\overrightarrow{A}) \vdash v\ \mathcal{R}^\mathfrak{v}_{\neg(\overrightarrow{A})}\ v' \\ \Gamma\vdash w_i\ \mathcal{R}^\mathfrak{v}_{A_i}\ w'_i \text{ for each i}}{\Gamma\vdash (\mufix{x}{v})(\overrightarrow{w})\ \widehat{\mathcal{R}}^\mathfrak{c}\ (\mufix{x}{v'})(\overrightarrow{w'})} \textsc{(C7)} \\
\inferrule{\Gamma\vdash v\ \mathcal{R}^\mathfrak{v}_{\mathtt{nat}}\ v' \\ \Gamma,x:\mathtt{nat}\vdash t\ \mathcal{R}^\mathfrak{c}\ t'}{\Gamma\vdash\sigma(v,x.t)\ \widehat{\mathcal{R}}^\mathfrak{c}\ \sigma(v',x.t')}\sigma\in\Sigma\ \textsc{(C8)} \quad
\inferrule{ }{\Gamma\vdash \downarrow\ \widehat{\mathcal{R}}^\mathfrak{c}\ \downarrow}\textsc{(C9)}	\\
\inferrule{\Gamma\vdash v\ \mathcal{R}^\mathfrak{v}_{\mathtt{nat}}\ v' \\ \Gamma\vdash s\ \mathcal{R}^\mathfrak{c}\ s' \\ \Gamma,x:\mathtt{nat}\vdash t\ \mathcal{R}^\mathfrak{c}\ t'}{\Gamma \vdash \caset{v}{s}{x}{t}\ \widehat{\mathcal{R}}^\mathfrak{c}\ \caset{v'}{s'}{x}{t'}}\\ \textsc{(C10)}
\end{gather*}
\caption{Compatible refinement rules.}\label{Fig_compat_refinement}
\end{figure}

\begin{definition}[Howe extension]
Given a well-typed closed relation $\mathcal{R}$, we define its Howe extension $\mathcal{R^H}$ to be the least relation $\mathcal{S}$ such that $\mathcal{S}=\mathcal{R}^\circ \circ \widehat{\mathcal{S}}$.
\end{definition}

It has been observed by Levy (\cite[Proposition~5.4]{DBLP:journals/entcs/Levy06b}) that the equation above determines a unique relation. The Howe extension can equivalently be defined inductively as \emph{the least relation closed under the rules}:
\begin{equation*}
\inferrule{\Gamma\vdash s\ \widehat{\mathcal{R}^{\mathcal{H}}}^{\mathfrak{c}}\ t	\\	\Gamma\vdash t\ \mathcal{R}^{\circ,\mathfrak{c}}\ r}{\Gamma\vdash s\ \mathcal{R}^{\mathcal{H},\mathfrak{c}}\ r}\textsc{(HC)}	\quad
\inferrule{\Gamma\vdash v\ \widehat{\mathcal{R}^{\mathcal{H}}}^{\mathfrak{v}}_A\ w	\\	\Gamma\vdash w\ \mathcal{R}^{\circ,\mathfrak{v}}_A\ u}{\Gamma\vdash v\ \mathcal{R}^{\mathcal{H},\mathfrak{v}}_A\ u} \textsc{(HV)}
\end{equation*} 
This is shown in \cite{DBLP:journals/corr/LagoGL17}.

Below are two lemmas about the open extension and the Howe extension of a relation. Their proofs appear in Appendix \ref{App_sec_howe}.

\begin{lemma}[{From \cite[Appendix]{SimV17}}]\label{Lemm_howe_extension_misc}
Given a well-typed relation $\mathcal{R}$ on closed terms that is reflexive:
\begin{enumerate}
\item The Howe extension of $\mathcal{R}$, $\mathcal{R}^{\mathcal{H}}$, is compatible and hence reflexive.
\item $\mathcal{R}^\circ\ \subseteq\ \mathcal{R}^{\mathcal{H}}$.
\end{enumerate}
\end{lemma}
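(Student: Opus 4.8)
The plan is to reduce both parts to the inductive presentation of the Howe extension by the rules \textsc{(HC)} and \textsc{(HV)}, exploiting throughout the elementary fact that a reflexive relation on closed terms has a reflexive open extension. Indeed, if $\mathcal{R}$ is reflexive then for any well-typed open term $T$ and any closing instantiation $\overrightarrow{u_i/x_i}$ of its free variables we have $T[\overrightarrow{u_i/x_i}] \mathrel{\mathcal{R}} T[\overrightarrow{u_i/x_i}]$, whence $T \mathrel{\mathcal{R}^\circ} T$; so $\mathcal{R}^\circ$ is reflexive.

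For part~1, I would show $\mathcal{R}^{\mathcal{H}}$ is closed under each of the rules \textsc{(Comp1)}--\textsc{(Comp10)} of Figure~\ref{Fig_compat}, which is compatibility by definition. Every case follows the same template: starting from the premises of the compatibility rule with $\mathcal{R}^{\mathcal{H}}$ in place of $\mathcal{R}$, the matching compatible-refinement rule \textsc{(C$k$)} of Figure~\ref{Fig_compat_refinement} yields that the two terms appearing in the conclusion are related by $\widehat{\mathcal{R}^{\mathcal{H}}}$; then, since $\mathcal{R}^\circ$ is reflexive, the right-hand term is $\mathcal{R}^\circ$-related to itself, and an application of \textsc{(HC)} (for computations) or \textsc{(HV)} (for values) gives the $\mathcal{R}^{\mathcal{H}}$-relatedness required. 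For instance, for \textsc{(Comp6)}: from $\Gamma\vdash v\mathrel{\mathcal{R}^{\mathcal{H},\mathfrak{v}}_{\neg(\overrightarrow{A})}}v'$ and $\Gamma\vdash w_i\mathrel{\mathcal{R}^{\mathcal{H},\mathfrak{v}}_{A_i}}w'_i$, rule \textsc{(C6)} gives $\Gamma\vdash v(\overrightarrow{w})\mathrel{\widehat{\mathcal{R}^{\mathcal{H}}}^{\mathfrak{c}}}v'(\overrightarrow{w'})$, and composing with $\Gamma\vdash v'(\overrightarrow{w'})\mathrel{\mathcal{R}^{\circ,\mathfrak{c}}}v'(\overrightarrow{w'})$ via \textsc{(HC)} yields $\Gamma\vdash v(\overrightarrow{w})\mathrel{\mathcal{R}^{\mathcal{H},\mathfrak{c}}}v'(\overrightarrow{w'})$; the zero-premise rules \textsc{(Comp1)}, \textsc{(Comp2)}, \textsc{(Comp4)}, \textsc{(Comp9)} are handled identically from \textsc{(C1)}, \textsc{(C2)}, \textsc{(C4)}, \textsc{(C9)}. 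Once $\mathcal{R}^{\mathcal{H}}$ is known compatible, reflexivity follows by a routine induction on the typing derivation of a term $T$: if $T$ is a variable or one of the constants use the corresponding zero-premise rule, and if $T$ is built by a term former apply the matching compatibility rule to the inductive hypotheses for its immediate subterms.

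For part~2, I would first note that reflexivity of $\mathcal{R}^{\mathcal{H}}$ (just established) propagates to its compatible refinement: by induction on a term, a variable is covered by \textsc{(C1)}, and any $T = c(\overrightarrow{T})$ is covered by the relevant \textsc{(C$k$)} rule fed with the reflexive facts $\Gamma\vdash T_i\mathrel{\mathcal{R}^{\mathcal{H}}}T_i$, so $\widehat{\mathcal{R}^{\mathcal{H}}}$ is reflexive. Then, given $\Gamma\vdash s\mathrel{\mathcal{R}^{\circ,\mathfrak{c}}}t$, we have $\Gamma\vdash s\mathrel{\widehat{\mathcal{R}^{\mathcal{H}}}^{\mathfrak{c}}}s$ and $\Gamma\vdash s\mathrel{\mathcal{R}^{\circ,\mathfrak{c}}}t$, so \textsc{(HC)} gives $\Gamma\vdash s\mathrel{\mathcal{R}^{\mathcal{H},\mathfrak{c}}}t$; symmetrically for values using \textsc{(HV)}. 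Hence $\mathcal{R}^\circ\subseteq\mathcal{R}^{\mathcal{H}}$.

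The argument is essentially bookkeeping and contains no genuinely hard step; the main thing to get right is keeping the three layers straight --- the open extension $\mathcal{R}^\circ$, the compatible refinement $\widehat{(-)}$, and the Howe extension $\mathcal{R}^{\mathcal{H}}$ --- and using the inductive characterisation via \textsc{(HC)}/\textsc{(HV)} rather than the fixed-point equation. The one load-bearing hypothesis is reflexivity of $\mathcal{R}$: it is precisely what makes $\mathcal{R}^\circ$ reflexive, which serves as the right-hand factor in every \textsc{(HC)}/\textsc{(HV)} step of part~1 and, through part~1, supplies reflexivity of $\widehat{\mathcal{R}^{\mathcal{H}}}$ needed in part~2.
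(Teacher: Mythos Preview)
Your proof is correct and follows essentially the same approach as the paper. The only difference is presentational: the paper works with the fixed-point equation $\mathcal{R}^{\mathcal{H}} = \mathcal{R}^\circ \circ \widehat{\mathcal{R}^{\mathcal{H}}}$ directly, obtaining part~1 as the one-liner $\widehat{\mathcal{R}^{\mathcal{H}}} = \textit{Id} \circ \widehat{\mathcal{R}^{\mathcal{H}}} \subseteq \mathcal{R}^\circ \circ \widehat{\mathcal{R}^{\mathcal{H}}} = \mathcal{R}^{\mathcal{H}}$ and part~2 as $\mathcal{R}^\circ = \mathcal{R}^\circ \circ \textit{Id} \subseteq \mathcal{R}^\circ \circ \widehat{\mathcal{R}^{\mathcal{H}}} = \mathcal{R}^{\mathcal{H}}$, whereas you unfold the same argument rule-by-rule via \textsc{(HC)}/\textsc{(HV)}; the underlying logic is identical.
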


\begin{lemma}[{From \cite[Appendix]{SimV17}}]\label{Lemm_howe_method_misc}
Given a well-typed relation $\mathcal{R}$ on closed terms that is transitive:
\begin{equation*}
\mathcal{R}^\circ\circ\mathcal{R}^{\mathcal{H}} \subseteq \mathcal{R}^{\mathcal{H}}.
\end{equation*}
\end{lemma}


\begin{lemma}[Substitutivity]\label{Lemm_howe_subsitutivity}
Given a well-typed relation $\mathcal{R}$ on closed terms that is transitive, its Howe extension satisfies the following two value-substitutivity properties:
\begin{enumerate}
\item $\overrightarrow{x_i:A_i},y:B \vdash s \mathrel{\mathcal{R}^{\mathcal{H},\mathfrak{c}}} t \text{ and } \overrightarrow{x_i:A_i}\vdash v \mathrel{\mathcal{R}^{\mathcal{H},\mathfrak{v}}_B} w \implies \overrightarrow{x_i:A_i} \vdash s[v/y] \mathrel{\mathcal{R}^{\mathcal{H},\mathfrak{c}}} t[w/y]$.
\item $\overrightarrow{x_i:A_i},y:B \vdash u \mathrel{\mathcal{R}^{\mathcal{H},\mathfrak{v}}_C} u' \text{ and } \overrightarrow{x_i:A_i}\vdash v \mathrel{\mathcal{R}^{\mathcal{H},\mathfrak{v}}_B} w \implies \overrightarrow{x_i:A_i} \vdash u[v/y] \mathrel{\mathcal{R}^{\mathcal{H},\mathfrak{v}}_C} u'[w/y]$.
\end{enumerate}
\end{lemma}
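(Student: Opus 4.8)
The plan is to prove both parts simultaneously by a single induction on the derivation of the Howe-extension judgement for $s$ (part 1) and for $u$ (part 2). The two statements are mutually entangled through the compatible-refinement rules, so they must be strengthened into one joint induction hypothesis ranging over all ECPS computations and values. Recall that, by the inductive characterisation of the Howe extension via rules \textsc{(HC)} and \textsc{(HV)}, every judgement $\overrightarrow{x_i:A_i},y:B \vdash s \mathrel{\mathcal{R}^{\mathcal{H},\mathfrak{c}}} t$ decomposes as $\overrightarrow{x_i:A_i},y:B \vdash s \mathrel{\widehat{\mathcal{R}^{\mathcal{H}}}^{\mathfrak{c}}} t'$ followed by $\overrightarrow{x_i:A_i},y:B \vdash t' \mathrel{\mathcal{R}^{\circ,\mathfrak{c}}} t$ for some intermediate term $t'$, and similarly for values. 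So the induction is really on the derivation of the $\widehat{\mathcal{R}^{\mathcal{H}}}$ component, with a case for each compatible-refinement rule \textsc{(C1)}--\textsc{(C10)}.

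First I would handle the base cases. For the variable rule \textsc{(C1)}: if the variable is $y$ itself, then $s[v/y] = v$ and the goal follows from the hypothesis $\overrightarrow{x_i:A_i}\vdash v \mathrel{\mathcal{R}^{\mathcal{H},\mathfrak{v}}_B} w$ together with the fact that $t'$, being Howe-related from a variable, reduces (via \textsc{(HV)}) to something $\mathcal{R}^\circ$-above $w$, so one invokes reflexivity of $\mathcal{R}^{\mathcal{H}}$ (Lemma~\ref{Lemm_howe_extension_misc}) and Lemma~\ref{Lemm_howe_method_misc} to absorb the trailing $\mathcal{R}^\circ$ step; if the variable is some $x_i\neq y$, the substitution is vacuous and we are done by reflexivity. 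The cases \textsc{(C2)}, \textsc{(C4)}, \textsc{(C9)} for $\star$, $\mathtt{zero}$, $\downarrow$ are immediate since substitution does nothing and the terms are Howe-related to themselves by compatibility. For the remaining structural rules \textsc{(C3)}, \textsc{(C5)}, \textsc{(C6)}, \textsc{(C7)}, \textsc{(C8)}, \textsc{(C10)} the argument is uniform: push the substitution inside the term constructor (using the standard recursive definition of substitution), apply the induction hypothesis to each immediate subterm — here one needs both part~1 and part~2 available, because e.g.\ \textsc{(C6)} has a value in head position and value arguments whereas its body recursion for \textsc{(C3)} produces a computation — thereby obtaining Howe-relatedness of the substituted subterms, then reassemble using the corresponding compatible-refinement rule to get a $\widehat{\mathcal{R}^{\mathcal{H}}}$ judgement, and finally compose with the trailing $\mathcal{R}^\circ$-step coming from the outer \textsc{(HC)}/\textsc{(HV)} decomposition, which is closed under substitution because $\mathcal{R}^\circ$ is an open extension, and then re-absorb that step into $\mathcal{R}^{\mathcal{H}}$ via Lemma~\ref{Lemm_howe_method_misc}.

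The one subtlety worth isolating — and what I expect to be the main obstacle — is the bookkeeping around the trailing $\mathcal{R}^\circ$ component and variable capture. After substituting $v$ for $y$ inside the intermediate term $t'$ and $w$ for $y$ inside $t$, one must argue that $\overrightarrow{x_i:A_i}\vdash t'[v/y] \mathrel{\mathcal{R}^{\circ}} t[w/y]$; but $\mathcal{R}^\circ$ is only defined as closure under substitution of \emph{closed} values for \emph{all} free variables, whereas here we are substituting open, Howe-related (not $\mathcal{R}^\circ$-related) values. The resolution, which mirrors Pitts's and Simpson--Voorneveld's treatment, is to not try to keep the decomposition $\widehat{\mathcal{R}^{\mathcal{H}}}$-then-$\mathcal{R}^\circ$ intact through the substitution, but rather to observe that $\mathcal{R}^{\mathcal{H}}$ satisfies $\mathcal{R}^{\mathcal{H}} = \mathcal{R}^{\mathcal{H}} \circ \mathcal{R}^\circ$ (a consequence of Lemma~\ref{Lemm_howe_method_misc} and reflexivity), apply the induction hypothesis to the $\widehat{\mathcal{R}^{\mathcal{H}}}$ part alone to land in $\mathcal{R}^{\mathcal{H}}$, and handle the remaining $\mathcal{R}^\circ$-tail by a separate, easier sublemma stating that $\mathcal{R}^\circ$ itself is closed under substitution of $\mathcal{R}^{\mathcal{H}}$-related values on the right — which follows because $\mathcal{R}^\circ$-relatedness is preserved when the substituted values are themselves related by any compatible reflexive relation containing $\mathcal{R}^\circ$, and $\mathcal{R}^{\mathcal{H}}$ is such by Lemma~\ref{Lemm_howe_extension_misc}. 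Throughout, $\alpha$-conversion (terms are considered up to $\alpha$-equivalence, as stated in the ECPS section) lets us assume the bound variables in each constructor are fresh for $v$, $w$ and $y$, so no capture occurs; this is routine but must be stated once at the outset.
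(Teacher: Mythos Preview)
Your approach is essentially the paper's: induction on the structure of $s$ and $u$ (equivalently on the Howe derivation), decomposition via \textsc{(HC)}/\textsc{(HV)}, case analysis on the compatible-refinement rules with the induction hypothesis applied to subterms, and the variable case $u=y$ handled via Lemma~\ref{Lemm_howe_method_misc}. Your second paragraph matches the paper's argument almost exactly.

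Your third paragraph, however, manufactures a difficulty that is not there, and the proposed resolution is wrong. You do \emph{not} need $t'[v/y] \mathrel{\mathcal{R}^{\circ}} t[w/y]$ with different substituted values. As you already noted correctly in your second paragraph, one substitutes $w$ on \emph{both} sides of the $\mathcal{R}^\circ$ tail: from $\overrightarrow{x_i:A_i},y:B \vdash t' \mathrel{\mathcal{R}^{\circ}} t$ one gets $\overrightarrow{x_i:A_i} \vdash t'[w/y] \mathrel{\mathcal{R}^{\circ}} t[w/y]$ immediately from the definition of open extension. All the $v$/$w$ asymmetry is confined to the $\widehat{\mathcal{R}^{\mathcal{H}}}$ part, where the induction hypothesis delivers $s[v/y] \mathrel{\widehat{\mathcal{R}^{\mathcal{H}}}} t'[w/y]$; rule \textsc{(HC)} then gives the conclusion directly (so Lemma~\ref{Lemm_howe_method_misc} is not even needed in the structural cases, only in the variable case). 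Your proposed sublemma --- that $\mathcal{R}^\circ$ is preserved under substitution of $\mathcal{R}^{\mathcal{H}}$-related values --- is false as stated (take $t'=t=y$ and any $v \mathrel{\mathcal{R}^{\mathcal{H}}} w$ with $v$ not $\mathcal{R}^\circ$-related to $w$), and fortunately is never needed.
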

\begin{proof}
The proof is done by induction on the structure of $s$ and $u$. It can be found in Appendix \ref{App_sec_howe}.
\end{proof}

The following lemma will help prove that $\precsim^\mathcal{H}$ restricted to closed terms is a simulation. Its proof appears in Appendix~\ref{App_sec_howe}.

\begin{lemma}\label{Lemm_howe_ext_sim_nat}
Consider a well-typed closed relation $\leq$ that is a $\mathfrak{P}$-simulation. For any closed values $v$ and $w$:
\begin{equation*}
\vdash v \leq^{\mathcal{H},\mathfrak{v}}_{\mathtt{nat}} w \implies v=w.
\end{equation*}	
\end{lemma}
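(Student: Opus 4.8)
The plan is to prove the statement by induction on the structure of the closed value $v$. Since $\vdash v : \mathtt{nat}$, the grammar forces $v$ to be of the form $\overline{n} = \mathtt{succ}^n(\mathtt{zero})$, so this is really an induction on $n \in \mathbb{N}$.

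First I would unfold the definition of the Howe extension. Using the inductive characterisation by rules \textsc{(HC)} and \textsc{(HV)} recalled above, from $\vdash v \leq^{\mathcal{H},\mathfrak{v}}_{\mathtt{nat}} w$ we obtain (via \textsc{(HV)}) an intermediate value $m$ with $\vdash v\ \widehat{\leq^{\mathcal{H}}}^{\mathfrak{v}}_{\mathtt{nat}}\ m$ and $\vdash m\ \leq^{\circ,\mathfrak{v}}_{\mathtt{nat}}\ w$. The term $m$ is closed and of type $\mathtt{nat}$, since the Howe construction relates only closed terms of the same type. Because $m$ and $w$ are closed, the open extension collapses to the underlying relation, so in fact $\vdash m \leq^{\mathfrak{v}}_{\mathtt{nat}} w$.

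Next I would inspect the derivation of $\vdash v\ \widehat{\leq^{\mathcal{H}}}^{\mathfrak{v}}_{\mathtt{nat}}\ m$ by cases on the shape of $v$. The only compatible-refinement rules whose conclusion is a value judgement at type $\mathtt{nat}$ are \textsc{(C4)} and \textsc{(C5)} (rule \textsc{(C1)} is excluded because $v$ is closed). If $v = \mathtt{zero}$, only \textsc{(C4)} can apply, forcing $m = \mathtt{zero} = v$. If $v = \mathtt{succ}(v_0)$, only \textsc{(C5)} can apply, so $m = \mathtt{succ}(m_0)$ with $\vdash v_0 \leq^{\mathcal{H},\mathfrak{v}}_{\mathtt{nat}} m_0$; the induction hypothesis, applicable since $v_0$ is a closed value of type $\mathtt{nat}$ structurally smaller than $v$, then gives $v_0 = m_0$, hence $m = v$. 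In either case $m = v$, so $\vdash v \leq^{\mathfrak{v}}_{\mathtt{nat}} w$, and clause~\ref{ecps_sim2} of Definition~\ref{Def_ecps_simulation} (which applies because $\leq$ is a $\mathfrak{P}$-simulation) yields $v = w$.

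I do not expect a real obstacle here: the argument is a routine structural induction. The only points needing a little care are observing that the open extension restricts to $\leq$ on closed terms, and checking that no compatible-refinement rule other than \textsc{(C4)} and \textsc{(C5)} can produce a $\mathtt{nat}$-typed value judgement for a closed term — in particular that the intermediate term $m$ delivered by \textsc{(HV)} is itself of type $\mathtt{nat}$, which follows from the type-preservation property of the Howe construction.
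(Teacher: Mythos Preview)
Your proposal is correct and follows essentially the same approach as the paper: induction on the numeral structure of $v$, unfolding the Howe rule \textsc{(HV)} to obtain an intermediate value, inverting the compatible-refinement rules \textsc{(C4)}/\textsc{(C5)}, and invoking clause~\ref{ecps_sim2} of the simulation definition. The only cosmetic difference is ordering: the paper uses the simulation clause first (to replace the intermediate $u$ by $w$) and then inducts to show $\overline{n}=w$, whereas you induct first to show $v=m$ and apply the simulation clause at the end; both arrangements work equally well.
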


Using the domain theoretic definition of ECPS computation trees (Definition \ref{Def_ecps_tree_chain}) we can state the Key Lemma which will help us prove $\precsim^\mathcal{H}$ is a simulation. Recall that $\treet{s}_n$ is the tree resulting from $n$ steps of evaluation of $s$.

\begin{lemma}[Key Lemma]\label{Lemm_howe_key}
Consider a decomposable set of Scott-open observations $\mathfrak{P}$. Consider a well-typed closed relation $\leq$ that is a preorder and a $\mathfrak{P}$-simulation. For any closed computations $s$ and $t$, $\vdash s\leq^{\mathcal{H},\mathfrak{c}} t$ implies:
\begin{equation*}
\forall n\in\mathbb{N}.\ \forall P\in\mathfrak{P}.\ \treet{s}_n\in P \implies \treet{t}\in P.
\end{equation*}
\end{lemma}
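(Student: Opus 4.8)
The plan is to prove the statement by induction on $n$, the number of evaluation steps. The base case $n=0$ is trivial: $\treet{s}_0 = \bot$, and since $\mathfrak{P}$ consists of Scott-open sets, no $P \in \mathfrak{P}$ contains $\bot$ unless $P = \textit{Trees}_\Sigma$ (as $\bot$ is the least element of the CPO, upward closure would force $P$ to be everything); in the case $P = \textit{Trees}_\Sigma$ the conclusion $\treet{t} \in \textit{Trees}_\Sigma$ holds automatically. So assume the result for $n$ and prove it for $n+1$. Fix $s \leq^{\mathcal{H},\mathfrak{c}} t$ and $P \in \mathfrak{P}$ with $\treet{s}_{n+1} \in P$. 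By the inductive definition of the Howe extension (rule \textsc{(HC)}), there is $t'$ with $s \mathrel{\widehat{\leq^{\mathcal{H}}}^{\mathfrak{c}}} t'$ and $t' \mathrel{\leq^{\circ,\mathfrak{c}}} t$; since $\leq^\circ$ is a $\mathfrak{P}$-simulation on closed terms, it suffices to prove $\treet{t'} \in P$, i.e. we may assume $s \mathrel{\widehat{\leq^{\mathcal{H}}}^{\mathfrak{c}}} t$, with the relation given by a compatible-refinement rule.

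Now case-split on the last rule of $\widehat{\leq^{\mathcal{H}}}$ used to derive $s \mathrel{\widehat{\leq^{\mathcal{H}}}^{\mathfrak{c}}} t$, which is dictated by the top-level shape of $s$. If $s = \downarrow$ (rule \textsc{(C9)}), then $t = \downarrow$, $\treet{s}_{n+1} = \downarrow = \treet{t}$, done. If $s = \sigma(v, x.s')$ with $s \mathrel{\widehat{\leq^{\mathcal{H}}}} t = \sigma(v', x.t')$ where $v \leq^{\mathcal{H},\mathfrak{v}}_{\mathtt{nat}} v'$ and $s' \leq^{\mathcal{H},\mathfrak{c}} t'$ (rule \textsc{(C8)}): by Lemma~\ref{Lemm_howe_ext_sim_nat}, $v = v' = \overline{m}$; then $\treet{s}_{n+1} = \sigma_m(\treet{s'[\overline{0}/x]}_n, \treet{s'[\overline{1}/x]}_n, \ldots) \in P$. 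By \emph{decomposability} of $\mathfrak{P}$, there exist $\overrightarrow{P'} \in \mathfrak{P}$ with $\treet{s'[\overline{k}/x]}_n \in P'_k$ for all $k$ and $\sigma_m(\overrightarrow{p'}) \in P$ whenever $\overrightarrow{p'} \in \overrightarrow{P'}$. By substitutivity of the Howe extension (Lemma~\ref{Lemm_howe_subsitutivity}), $s'[\overline{k}/x] \leq^{\mathcal{H},\mathfrak{c}} t'[\overline{k}/x]$ for each $k$, so the induction hypothesis at $n$ gives $\treet{t'[\overline{k}/x]} \in P'_k$. Hence $\treet{t} = \treet{\sigma(\overline{m}, x.t')} = \sigma_m(\treet{t'[\overline{0}/x]}, \ldots) \in P$. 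The remaining cases are where $s$ is a redex: $s = v(\overrightarrow{w})$ with $v = \lbd{\overrightarrow{x}}{\overrightarrow{A}}{s_0}$, $s = (\mufix{x}{v})(\overrightarrow{w})$, or $s = \caset{\cdot}{\cdot}{x}{\cdot}$. In each, $s \longrightarrow s_1$ for some $s_1$, so $\treet{s}_{n+1} = \treet{s_1}_n \in P$; using compatibility/substitutivity of $\leq^{\mathcal{H}}$ one shows $t \longrightarrow^* t_1$ with $s_1 \leq^{\mathcal{H},\mathfrak{c}} t_1$ (matching the reduction on the right-hand side — e.g. for $\beta$-reduction use Lemma~\ref{Lemm_howe_subsitutivity}, for $\mathbf{case}$ use that $\leq^{\mathcal{H}}_{\mathtt{nat}}$ forces equality of the scrutinee by Lemma~\ref{Lemm_howe_ext_sim_nat}), then apply the IH at $n$ to get $\treet{t_1} \in P$, and finally $\treet{t} = \treet{t_1}$ since reduction does not change the computation tree.

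The main obstacle is the redex cases, specifically the $\mu$-recursion case: here $s = (\mufix{x}{v})(\overrightarrow{w}) \longrightarrow (v[\cdots/x])(\overrightarrow{w})$, and to produce a matching $t_1$ on the right with $s_1 \leq^{\mathcal{H},\mathfrak{c}} t_1$ one must carefully unfold the definition of $\widehat{\leq^{\mathcal{H}}}$ on the $\mu$-term (rule \textsc{(C7)}) and combine it with substitutivity (Lemma~\ref{Lemm_howe_subsitutivity}) and the closure properties of $\leq^{\mathcal{H}}$ — i.e. one needs that substituting a Howe-related recursive-unfolding for $x$ preserves the relation, which is exactly the kind of self-referential step that motivated Howe's method in the first place. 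The decrement of the step index $n$ across the reduction is what makes this go through, so the bookkeeping of indices in this case is where care is needed; the effect-operation case, by contrast, is handled cleanly by decomposability plus the IH.
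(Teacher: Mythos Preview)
Your proposal is correct and follows essentially the same route as the paper: induction on $n$, the \textsc{(HC)}-decomposition to reduce to $s \mathrel{\widehat{\leq^{\mathcal{H}}}} t$, case split on the compatible-refinement rule, decomposability for the effect case, and substitutivity plus IH for the redex cases.

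One refinement is needed in the application case. After \textsc{(C6)} you have $s = (\lambda\overrightarrow{x}.s')(\overrightarrow{w})$ and $t = v'(\overrightarrow{w'})$ with $\lambda\overrightarrow{x}.s' \leq^{\mathcal{H},\mathfrak{v}} v'$ and $w_i \leq^{\mathcal{H}} w'_i$. You write that ``$t \longrightarrow^* t_1$ with $s_1 \leq^{\mathcal{H},\mathfrak{c}} t_1$'', i.e.\ $s'[\overrightarrow{w}/\overrightarrow{x}] \leq^{\mathcal{H}} r'[\overrightarrow{w'}/\overrightarrow{x}]$ where $v' = \lambda\overrightarrow{x}.r'$. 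This does not follow directly: the Howe relation at function type does not give you the bodies in Howe relation. The paper instead unwraps $\lambda\overrightarrow{x}.s' \leq^{\mathcal{H}} v'$ via \textsc{(HV)} to get an intermediate $\lambda\overrightarrow{x}.p'$ with $s' \leq^{\mathcal{H},\mathfrak{c}} p'$ (by \textsc{(C3)}) and $\lambda\overrightarrow{x}.p' \leq^{\mathfrak{v}} v'$; substitutivity then gives $s'[\overrightarrow{w}] \leq^{\mathcal{H}} p'[\overrightarrow{w'}]$, the IH gives $\treet{p'[\overrightarrow{w'}]} \in P$, and finally the \emph{simulation} property of $\leq$ (not $\leq^{\mathcal{H}}$) at function type gives $\treet{r'[\overrightarrow{w'}]} \in P$. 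So the argument is a two-step one, not a single $\leq^{\mathcal{H}}$-step as your sketch suggests; otherwise your outline matches the paper.
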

\begin{proof}
The proof is done by induction on $n\in\mathbb{N}$. It uses the fact that observations are upwards closed and $\mathfrak{P}$ is decomposable. The full proof appears in Appendix~\ref{App_sec_howe}.
\end{proof}

\begin{proposition}\label{Howe_exten_sim}
Consider a decomposable set of Scott-open observations $\mathfrak{P}$. Consider a well-typed closed relation $\leq$ that is a preorder and a $\mathfrak{P}$-simulation. The Howe extension of $\leq$, $\leq^{\mathcal{H}}$, restricted to closed terms is an applicative simulation.
\end{proposition}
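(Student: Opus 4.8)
The plan is to check, one by one, the four clauses of Definition~\ref{Def_ecps_simulation} for the restriction of $\leq^{\mathcal{H}}$ to closed terms, exploiting two things already in hand: that $\leq^{\mathcal{H}}$ is compatible and reflexive (Lemma~\ref{Lemm_howe_extension_misc}, using that $\leq$ is a preorder), and that the Key Lemma~\ref{Lemm_howe_key} applies since $\mathfrak{P}$ is a decomposable set of Scott-open observations and $\leq$ is a preorder and a $\mathfrak{P}$-simulation. The ground-type clauses~\ref{ecps_sim1}, \ref{ecps_sim2} and the function clause~\ref{ecps_sim4} are short; the computation clause~\ref{ecps_sim3} carries all the content, and it is where the Key Lemma and Scott-openness enter.

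For clause~\ref{ecps_sim2} I would simply cite Lemma~\ref{Lemm_howe_ext_sim_nat}. For clause~\ref{ecps_sim1} I would run the analogous but easier argument: if $\vdash v \leq^{\mathcal{H},\mathfrak{v}}_{\mathtt{unit}} w$, then by the inductive presentation of $\leq^{\mathcal{H}}$ via rule \textsc{(HV)} there is a closed value $w'$ with $v \mathrel{\widehat{\leq^{\mathcal{H}}}^{\mathfrak{v}}_{\mathtt{unit}}} w'$ and $w' \leq^{\mathfrak{v}}_{\mathtt{unit}} w$ (the open extension of $\leq$ restricted to closed terms being just $\leq$). Since $v$ is a closed value of type $\mathtt{unit}$ it must be $\star$, so the only compatible-refinement rule that can derive $v \mathrel{\widehat{\leq^{\mathcal{H}}}^{\mathfrak{v}}_{\mathtt{unit}}} w'$ is \textsc{(C2)}, forcing $w' = \star$; then clause~\ref{ecps_sim1} for the simulation $\leq$ applied to $\star \leq^{\mathfrak{v}}_{\mathtt{unit}} w$ gives $w = \star$, whence $v = w = \star$.

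For clause~\ref{ecps_sim4}, given closed $v \leq^{\mathcal{H},\mathfrak{v}}_{\neg(A_1,\ldots,A_n)} u$ and closed $w_i : A_i$, I would use reflexivity of $\leq^{\mathcal{H}}$ to obtain $w_i \leq^{\mathcal{H},\mathfrak{v}}_{A_i} w_i$ and then apply the compatibility rule \textsc{(Comp6)} to conclude $v(w_1,\ldots,w_n) \leq^{\mathcal{H},\mathfrak{c}} u(w_1,\ldots,w_n)$; since all terms involved are closed this is exactly clause~\ref{ecps_sim4}. This is the characteristic payoff of Howe's method: the substitutive work is packaged inside the proof that $\leq^{\mathcal{H}}$ is compatible (Lemma~\ref{Lemm_howe_subsitutivity}), so no new effort is needed here.

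For the remaining clause~\ref{ecps_sim3}, suppose $\vdash s \leq^{\mathcal{H},\mathfrak{c}} t$ and $\treet{s} \in P$ for some $P \in \mathfrak{P}$. Since $\treet{s} = \bigsqcup_{n} \treet{s}_n$ is the least upper bound of the increasing chain of approximants $\treet{s}_0 \leq \treet{s}_1 \leq \cdots$ (Definition~\ref{Def_ecps_tree_chain}) and $P$ is Scott-open (Definition~\ref{Def_scott_open}), there is $j \in \mathbb{N}$ with $\treet{s}_j \in P$; the Key Lemma~\ref{Lemm_howe_key} then yields $\treet{t} \in P$, which is what clause~\ref{ecps_sim3} requires. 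I expect the genuine difficulty of this proposition to lie entirely upstream, in the Key Lemma (an induction on the approximation index $n$ using decomposability of $\mathfrak{P}$ and upward-closure of observations) and in the substitutivity Lemma~\ref{Lemm_howe_subsitutivity} on which it rests, both deferred to Appendix~\ref{App_sec_howe}; once those are granted, the four clauses assemble as above without further obstacle.
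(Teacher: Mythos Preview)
Your proposal is correct and follows essentially the same approach as the paper's proof: the four clauses are verified using Lemma~\ref{Lemm_howe_ext_sim_nat} for $\mathtt{nat}$, compatibility and reflexivity of $\leq^{\mathcal{H}}$ for function types, and the Key Lemma together with Scott-openness for computations. The only difference is that for clause~\ref{ecps_sim1} the paper argues more directly: since the relation is well-typed, both $v$ and $w$ are closed values of type $\mathtt{unit}$, hence both are $\star$; your detour through rules \textsc{(HV)} and \textsc{(C2)} is correct but unnecessary.
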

\begin{proof}
We need to verify that all four conditions in the definition of applicative simulation are satisfied by $\leq^{\mathcal{H}}$ for closed terms.
\begin{enumerate}
\item $\vdash v \leq^{\mathcal{H},\mathfrak{v}}_{\mathtt{unit}} w \implies v=w=\star$.
Assume $\vdash v \leq^{\mathcal{H},\mathfrak{v}}_{\mathtt{unit}} w$. The only closed value of type $\mathtt{unit}$ is $\star$ so $v=w=\star$.

\item\label{Howe_exten_sim2} $\vdash v \leq^{\mathcal{H},\mathfrak{v}}_{\mathtt{nat}} w \implies v=w$. This is Lemma \ref{Lemm_howe_ext_sim_nat}.

\item $\vdash s \leq^{\mathcal{H},\mathfrak{c}} t \implies \forall P\in\mathfrak{P}.\ (\treet{s}\in P \implies \treet{t}\in P)$.
Assume that $\vdash s \leq^{\mathcal{H},\mathfrak{c}} t$ and $\treet{s}\in P$ for some $P\in\mathfrak{P}$. From the Key Lemma (Lemma \ref{Lemm_howe_key}) we know that:
\begin{equation}\label{How_exten_sim_proof2}
\forall n\in\mathbb{N}.\ \forall P\in\mathfrak{P}.\ \treet{s}_n\in P \implies \treet{t}\in P.
\end{equation}
We know $\treet{s}=\bigsqcup_{m\in\mathbb{N}}\treet{s}_m\in P$ and that $\{\treet{s}_m\}_{m\in\mathbb{N}}$ is an ascending chain. By Scott-openness of $P$, there exists $j\in\mathbb{N}$ such that $\treet{s}_j\in P$.

Therefore, by equation \ref{How_exten_sim_proof2} we have the desired result: $\treet{t}\in P$.
\item $\vdash v \leq^{\mathcal{H},\mathfrak{v}}_{\neg(A_1,\ldots,A_n)} w \implies \forall \vdash u_1:A_1,\ldots,\vdash u_n:A_n.\ v(u_1,\ldots,u_n) \leq^{\mathcal{H},\mathfrak{c}} w(u_1,\ldots, u_n)$.
Assume $\vdash v \leq^{\mathcal{H},\mathfrak{v}}_{\neg(A_1,\ldots,A_n)} w$. By Lemma \ref{Lemm_howe_extension_misc} we know $\leq^{\mathcal{H}}$ is compatible and reflexive. Therefore, for all $i$, $u_i\leq^{\mathcal{H},\mathfrak{v}}_{A_i}u_i$ so by compatibility $v(u_1,\ldots,u_n) \leq^{\mathcal{H},\mathfrak{c}} w(u_1,\ldots, u_n)$. 
\end{enumerate}
\end{proof}

The following lemmas will be used to prove that similarity and bisimilarity are compatible. Their proofs can be found in Appendix \ref{App_sec_howe}.

\begin{lemma}\label{Lemm_howe_subst_impl_open_incl}
Given a well-typed open relation $\mathcal{R}$ that is reflexive and has the two substitutivity properties from Lemma \ref{Lemm_howe_subsitutivity}, and a well-typed closed relation $\mathcal{S}$ then:
\begin{center}
if $\mathcal{R}$ restricted to closed terms is included in $\mathcal{S}$ then $\mathcal{R}\subseteq\mathcal{S}^\circ$.
\end{center}
\end{lemma}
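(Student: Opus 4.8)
The plan is to peel off the definition of the open extension $\mathcal{S}^\circ$ and reduce the claim to the closed-terms hypothesis by feeding closing substitutions through the substitutivity properties of $\mathcal{R}$. Take an arbitrary instance of $\mathcal{R}$, say $\overrightarrow{x_i:A_i} \vdash v \mathrel{\mathcal{R}^{\mathfrak{v}}_B} w$; the computation case $\overrightarrow{x_i:A_i} \vdash s \mathrel{\mathcal{R}^{\mathfrak{c}}} t$ is handled identically using part (1) of Lemma~\ref{Lemm_howe_subsitutivity} instead of part (2). By definition of $\mathcal{S}^{\circ,\mathfrak{v}}_B$, to conclude $\overrightarrow{x_i:A_i} \vdash v \mathrel{\mathcal{S}^{\circ,\mathfrak{v}}_B} w$ it suffices to fix an arbitrary closing substitution $\overrightarrow{u_i:A_i}$ and show $v[\overrightarrow{u_i/x_i}] \mathrel{\mathcal{S}^{\mathfrak{v}}_B} w[\overrightarrow{u_i/x_i}]$.

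First I would observe that, since $\mathcal{R}$ is reflexive and each $u_i$ is closed, $\vdash u_i \mathrel{\mathcal{R}^{\mathfrak{v}}_{A_i}} u_i$ for every $i$. Then I would substitute the $u_i$ one variable at a time, proceeding by induction on the length of the context: starting from $\overrightarrow{x_i:A_i} \vdash v \mathrel{\mathcal{R}^{\mathfrak{v}}_B} w$, apply value-substitutivity with $y := x_n$ and the related pair $(u_n,u_n)$ to get $\overrightarrow{x_i:A_i}_{i<n} \vdash v[u_n/x_n] \mathrel{\mathcal{R}^{\mathfrak{v}}_B} w[u_n/x_n]$, and iterate through $x_{n-1},\dots,x_1$. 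Each step is legitimate because $u_n$ being closed means $v[u_n/x_n]$ and $w[u_n/x_n]$ are still well-typed in the shorter context $\overrightarrow{x_i:A_i}_{i<n}$, so the induction hypothesis applies; and because the $u_i$ are closed, this sequence of single-variable substitutions agrees with the simultaneous substitution $[\overrightarrow{u_i/x_i}]$. After $n$ steps we obtain the closed-term statement $\vdash v[\overrightarrow{u_i/x_i}] \mathrel{\mathcal{R}^{\mathfrak{v}}_B} w[\overrightarrow{u_i/x_i}]$.

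Finally, the hypothesis that $\mathcal{R}$ restricted to closed terms is included in $\mathcal{S}$ yields $v[\overrightarrow{u_i/x_i}] \mathrel{\mathcal{S}^{\mathfrak{v}}_B} w[\overrightarrow{u_i/x_i}]$, and since the closing substitution was arbitrary this is exactly $\overrightarrow{x_i:A_i} \vdash v \mathrel{\mathcal{S}^{\circ,\mathfrak{v}}_B} w$. There is no genuinely hard step here: the only thing to be careful about is the bookkeeping in the iterated substitution — checking that the intermediate terms remain well-typed in the progressively shrinking context so that the induction goes through — which is exactly why closedness of the $u_i$ is needed.
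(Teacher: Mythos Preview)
Your proposal is correct and follows essentially the same approach as the paper: use reflexivity of $\mathcal{R}$ on the closed substituting values, apply substitutivity to close the related terms, and then invoke the hypothesis on closed terms together with the definition of $\mathcal{S}^\circ$. The only difference is presentational: the paper applies substitutivity in one breath for the whole vector $\overrightarrow{u_i/x_i}$, whereas you spell out the iteration variable-by-variable, which is arguably more faithful to the single-variable form in which Lemma~\ref{Lemm_howe_subsitutivity} is stated.
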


\begin{lemma}\label{Lemm_howe_reflstran_sim}
Given a $\mathfrak{P}$-simulation $\mathcal{R}$, its reflexive-transitive closure, $\mathcal{R}^*$ is also a $\mathfrak{P}$-simulation.
\end{lemma}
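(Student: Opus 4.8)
The plan is to take the reflexive--transitive closure componentwise: at each value type $A$ let $\mathcal{R}^{*,\mathfrak{v}}_A$ be the reflexive--transitive closure of $\mathcal{R}^\mathfrak{v}_A$ inside $(\vdash A)\times(\vdash A)$, and let $\mathcal{R}^{*,\mathfrak{c}}$ be the reflexive--transitive closure of $\mathcal{R}^\mathfrak{c}$ inside $(\vdash)\times(\vdash)$. Since $\mathcal{R}^\mathfrak{v}_A$ only relates values of type $A$ to values of type $A$, no chain ever leaves a single type component, so this componentwise closure coincides with the ``global'' closure. Writing $\Delta$ for the identity relation on well-typed closed terms and $\mathcal{R}^{(n)}$ for the $n$-fold relational composition of $\mathcal{R}$ with itself (with $\mathcal{R}^{(0)}=\Delta$), we have $\mathcal{R}^*=\bigcup_{n\geq 0}\mathcal{R}^{(n)}$. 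Hence it suffices to establish three facts: (i) $\Delta$ is a $\mathfrak{P}$-simulation; (ii) if $\mathcal{R}$ and $\mathcal{S}$ are $\mathfrak{P}$-simulations, so is their relational composition; and (iii) an arbitrary union of $\mathfrak{P}$-simulations is a $\mathfrak{P}$-simulation. By (i) and (ii) each $\mathcal{R}^{(n)}$ is a $\mathfrak{P}$-simulation by induction on $n$, and (iii) then gives the result.

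For (i): the only closed value of type $\mathtt{unit}$ is $\star$, and $v\mathbin{\Delta}v$ forces $v=v$, so clauses 1 and 2 of Definition~\ref{Def_ecps_simulation} hold; clause 3 is the tautology $\treet{s}\in P\Rightarrow\treet{s}\in P$; clause 4 holds because $v(\overrightarrow{w})\mathbin{\Delta}v(\overrightarrow{w})$. For (ii): if $v$ and $w$ are related by the composition at type $\mathtt{unit}$, say $v\,\mathcal{R}^\mathfrak{v}_{\mathtt{unit}}\,z$ and $z\,\mathcal{S}^\mathfrak{v}_{\mathtt{unit}}\,w$, then $v=z=\star$ and $z=w=\star$, so clause 1 holds; clause 2 is identical with ``$=$'' in place of ``$=\star$''. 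For clause 3, from $s\,\mathcal{R}^\mathfrak{c}\,z$, $z\,\mathcal{S}^\mathfrak{c}\,t$ and $\treet{s}\in P$ we get $\treet{z}\in P$ (as $\mathcal{R}$ is a simulation) and then $\treet{t}\in P$ (as $\mathcal{S}$ is): we simply chain the two implications. For clause 4, from $v\,\mathcal{R}^\mathfrak{v}_{\neg(\overrightarrow{A})}\,z$ and $z\,\mathcal{S}^\mathfrak{v}_{\neg(\overrightarrow{A})}\,w$ we get, for all closed arguments $\overrightarrow{u}$ of the appropriate types, $v(\overrightarrow{u})\,\mathcal{R}^\mathfrak{c}\,z(\overrightarrow{u})$ and $z(\overrightarrow{u})\,\mathcal{S}^\mathfrak{c}\,w(\overrightarrow{u})$, so $v(\overrightarrow{u})$ and $w(\overrightarrow{u})$ are related by the composition, as required. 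For (iii): each of the four clauses of Definition~\ref{Def_ecps_simulation} has the form ``whenever the relation holds of a well-typed pair, such-and-such a property of that pair holds'', and this shape is preserved under unions, since a pair in $\bigcup_i\mathcal{R}_i$ already lies in some $\mathcal{R}_i$ that enforces the property; this is the same observation that makes $\precsim$ in Definition~\ref{Def_ecps_simulation} well-defined as the greatest simulation.

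There is no genuine obstacle in this argument; it is essentially bookkeeping. The only points that need a little care are ensuring the closure is taken over \emph{well-typed} closed terms, so that the reflexivity witnesses $(v,v)$ are themselves legitimate members of the relation, and discharging the computation-level obligation in clause~4 for $\Delta$ (respectively for a composition) by reflexivity of $\mathcal{R}^{*,\mathfrak{c}}$ (respectively by composing the two application steps). Clauses 1--2 never cause difficulty, since at types $\mathtt{unit}$ and $\mathtt{nat}$ the simulation conditions amount to equality of the two values, which is plainly closed under reflexive--transitive closure.
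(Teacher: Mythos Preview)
Your proof is correct and follows essentially the same idea as the paper's: both arguments boil down to the observation that each simulation clause is preserved along a finite chain $v_0\,\mathcal{R}\,v_1\,\mathcal{R}\,\cdots\,\mathcal{R}\,v_k$. The paper unfolds such a chain directly and checks the four clauses case by case, whereas you factor the same reasoning into the three reusable closure facts (identity, composition, union), which is a slightly more modular packaging of the identical argument.
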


\begin{lemma}\label{Lemm_howe_refltran_compat}
Given a well-typed compatible relation $\mathcal{R}$, its reflexive-transitive closure $\mathcal{R}^*$ is also compatible.
\end{lemma}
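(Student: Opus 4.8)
The plan is to reduce the claim, via Lemma~\ref{Lem_comp_1premise_rules}, to checking that $\mathcal{R}^*$ is closed under the \emph{single-premise} compatibility rules, for which a simple ``glue the chain'' argument suffices. Throughout, $\mathcal{R}^*$ is understood to be formed componentwise --- one reflexive--transitive closure per type/context --- so that $\Gamma\vdash v\mathrel{\mathcal{R}^{*,\mathfrak v}_A}w$ means there is a finite chain $v=v_0\mathrel{\mathcal{R}^\mathfrak v_A}v_1\mathrel{\mathcal{R}^\mathfrak v_A}\cdots\mathrel{\mathcal{R}^\mathfrak v_A}v_k=w$ of $\mathcal{R}$-steps (all with $\Gamma\vdash v_i:A$, since $\mathcal{R}$ is well-typed), and similarly for computations. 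By construction $\mathcal{R}^*$ is a well-typed preorder and $\mathcal{R}\subseteq\mathcal{R}^*$.

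First I would record a preliminary fact: since $\mathcal{R}$ is compatible it is reflexive --- by induction on the structure of a term $t$, applying the rule of Figure~\ref{Fig_compat} for the head constructor of $t$ to the (inductively related) immediate subterms, one obtains $\Gamma\vdash t\mathrel{\mathcal{R}}t$. (This is the implication already invoked in Lemma~\ref{Lemm_howe_extension_misc}.) Consequently $\mathcal{R}$ is closed not only under the rules of Figure~\ref{Fig_compat} but also under the single-premise variants \textsc{(Comp6L)}, \textsc{(Comp6R$_i$)}, \textsc{(Comp7L)}, \textsc{(Comp7R$_i$)}, \textsc{(Comp8L)}, \textsc{(Comp8R)}, \textsc{(Comp10V)}, \textsc{(Comp10L)}, \textsc{(Comp10R)}: each follows from the corresponding multi-premise rule by instantiating the ``unchanged'' positions with reflexive instances of $\mathcal{R}$. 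Note this uses only reflexivity, not transitivity, so it applies to $\mathcal{R}$ even though $\mathcal{R}$ need not be a preorder.

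Next I would verify that $\mathcal{R}^*$ is closed under each of the nullary rules \textsc{(Comp1)}, \textsc{(Comp2)}, \textsc{(Comp4)}, \textsc{(Comp9)}, the unary rules \textsc{(Comp3)}, \textsc{(Comp5)}, and all the single-premise rules listed above. The nullary cases are immediate from $\mathcal{R}\subseteq\mathcal{R}^*$. In every remaining case the premise is a single $\mathcal{R}^*$-relatedness, i.e. a finite chain of $\mathcal{R}$-steps, with the other positions held fixed; applying the matching single-premise rule of $\mathcal{R}$ (available by the previous paragraph) at each link produces a chain of $\mathcal{R}$-steps between the two terms in the conclusion, and transitivity of $\mathcal{R}^*$ glues it into the required $\mathcal{R}^*$-step. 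Well-typedness is preserved because each rule instance of $\mathcal{R}$ used is typed and the contexts agree along the chain. Finally, since $\mathcal{R}^*$ is a preorder and closed under all the single-premise rules, Lemma~\ref{Lem_comp_1premise_rules} yields closure under the full rules \textsc{(Comp6)}, \textsc{(Comp7)}, \textsc{(Comp8)}, \textsc{(Comp10)}, so $\mathcal{R}^*$ is closed under all of Figure~\ref{Fig_compat}, i.e. compatible.

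The only subtlety --- and the reason for routing through the single-premise rules rather than arguing directly with \textsc{(Comp6)}--\textsc{(Comp10)} --- is the multi-premise rules: when the different argument positions are related by $\mathcal{R}^*$-chains of different lengths one cannot advance all positions in lockstep, so a naive chaining does not match the shape of \textsc{(Comp6)}. Changing one position at a time and composing (which is exactly what Lemma~\ref{Lem_comp_1premise_rules} encapsulates) resolves this; the rest is bookkeeping.
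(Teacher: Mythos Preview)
Your proposal is correct and matches the paper's approach. The paper's proof is a two-line sketch: expand $\mathcal{R}^*$ into chains of $\mathcal{R}$-steps (as in Lemma~\ref{Lemm_howe_reflstran_sim}) and use that compatibility of $\mathcal{R}$ implies reflexivity of $\mathcal{R}$; your write-up makes explicit exactly how reflexivity is used---namely to obtain the single-premise variants of the multi-premise rules so that chains of different lengths can be handled one position at a time via Lemma~\ref{Lem_comp_1premise_rules}---which is precisely the detail the paper leaves implicit.
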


\begin{lemma}[From \cite{LasPhd}]\label{Lemm_howe_lassen}
Given a well-typed closed relation $\mathcal{R}$ the following holds:
\begin{center}
if $\mathcal{R}^\circ$ is reflexive and symmetric, then $\mathcal{R}^{\mathcal{H}*}$ is symmetric.
\end{center}
Where $S^*$ denotes the reflexive-transitive closure of a relation $\mathcal{S}$.
\end{lemma}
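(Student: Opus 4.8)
The plan is to deduce symmetry of $\mathcal{R}^{\mathcal{H}*}$ from the single containment
\[
(\mathcal{R}^{\mathcal{H}})^{\textit{op}} \subseteq \mathcal{R}^{\mathcal{H}*}.
\]
Granting this, and using that reflexive–transitive closure commutes with taking converses and is idempotent, we get $(\mathcal{R}^{\mathcal{H}*})^{\textit{op}} = ((\mathcal{R}^{\mathcal{H}})^{\textit{op}})^{*} \subseteq (\mathcal{R}^{\mathcal{H}*})^{*} = \mathcal{R}^{\mathcal{H}*}$, which is exactly symmetry of $\mathcal{R}^{\mathcal{H}*}$. So the whole argument reduces to showing that whenever $\Gamma\vdash a\ \mathcal{R}^{\mathcal{H}}\ b$ then $\Gamma\vdash b\ \mathcal{R}^{\mathcal{H}*}\ a$.

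I would prove this by induction on the height of the derivation of $\Gamma\vdash a\ \mathcal{R}^{\mathcal{H}}\ b$, using the inductive presentation of the Howe extension by the rules \textsc{(HC)}/\textsc{(HV)}. Any such derivation factors through an intermediate term $c$ with $\Gamma\vdash a\ \widehat{\mathcal{R}^{\mathcal{H}}}\ c$ and $\Gamma\vdash c\ \mathcal{R}^{\circ}\ b$, where the first judgement is the conclusion of one of the compatible-refinement rules of Figure~\ref{Fig_compat_refinement}. Its premises relate the corresponding immediate subterms of $a$ and $c$ by $\mathcal{R}^{\mathcal{H}}$ via strictly shorter derivations, so the induction hypothesis applies: the converse of each of these subterm pairs lies in $\mathcal{R}^{\mathcal{H}*}$. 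Since the compatible-refinement rules are symmetric in shape — same head constructor on both sides, corresponding subterms related — feeding in the swapped, $\mathcal{R}^{\mathcal{H}*}$-related subterms and reapplying the same rule, now for the relation $\mathcal{R}^{\mathcal{H}*}$, gives $\Gamma\vdash c\ \widehat{\mathcal{R}^{\mathcal{H}*}}\ a$.

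From here I would assemble the conclusion using the structural facts already established. By Lemma~\ref{Lemm_howe_extension_misc} the relation $\mathcal{R}^{\mathcal{H}}$ is compatible, hence so is $\mathcal{R}^{\mathcal{H}*}$ by Lemma~\ref{Lemm_howe_refltran_compat}; compatibility of $\mathcal{R}^{\mathcal{H}*}$ is precisely the statement $\widehat{\mathcal{R}^{\mathcal{H}*}}\subseteq\mathcal{R}^{\mathcal{H}*}$, so $\Gamma\vdash c\ \mathcal{R}^{\mathcal{H}*}\ a$. On the other side, symmetry of $\mathcal{R}^{\circ}$ turns $\Gamma\vdash c\ \mathcal{R}^{\circ}\ b$ into $\Gamma\vdash b\ \mathcal{R}^{\circ}\ c$, and $\mathcal{R}^{\circ}\subseteq\mathcal{R}^{\mathcal{H}}\subseteq\mathcal{R}^{\mathcal{H}*}$ (again Lemma~\ref{Lemm_howe_extension_misc}, using reflexivity of $\mathcal{R}^{\circ}$) gives $\Gamma\vdash b\ \mathcal{R}^{\mathcal{H}*}\ c$. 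Composing these two and using transitivity of the reflexive–transitive closure yields $\Gamma\vdash b\ \mathcal{R}^{\mathcal{H}*}\ a$, closing the induction.

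I expect the only delicate points to be making the induction genuinely well-founded — it must be run on derivation height, not on term structure, since \textsc{(HC)} and \textsc{(HV)} do not shrink the terms — and the two binder rules \textsc{(C3)} and \textsc{(C7)}, where swapping the bodies of a $\lambda$- or $\mu$-abstraction forces one to track the enlarged typing context. Both are routine here because $\mathcal{R}^{\mathcal{H}}$ is by design an \emph{open} relation and the compatible-refinement rules already quantify over the extended contexts, so no extra weakening step is needed. This is the standard Howe-method symmetry argument, due to Lassen~\cite{LasPhd}.
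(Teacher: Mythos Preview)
Your proof is correct and lands on the same intermediate goal as the paper, namely $(\mathcal{R}^{\mathcal{H}})^{\textit{op}} \subseteq \mathcal{R}^{\mathcal{H}*}$, from which symmetry of $\mathcal{R}^{\mathcal{H}*}$ follows just as you argue. The difference is in how that containment is established. You run an explicit induction on derivation height, peeling off one \textsc{(HC)}/\textsc{(HV)} step and one compatible-refinement rule at a time. The paper instead gives a purely algebraic argument: it records once the observation $\widehat{\mathcal{S}^{\textit{op}}} = \widehat{\mathcal{S}}^{\textit{op}}$ (your ``symmetric in shape'' remark, lifted to the relational level) and then shows, via a short chain of containments using symmetry of $\mathcal{R}^\circ$, $\mathcal{R}^\circ\subseteq\mathcal{R}^{\mathcal{H}}$, and $\widehat{\mathcal{R}^{\mathcal{H}*}}\subseteq\mathcal{R}^{\mathcal{H}*}$, that $\mathcal{R}^{\mathcal{H}*\textit{op}}$ satisfies $\mathcal{R}^\circ\circ\widehat{\mathcal{R}^{\mathcal{H}*\textit{op}}} \subseteq \mathcal{R}^{\mathcal{H}*\textit{op}}$, i.e.\ is closed under \textsc{(HC)}/\textsc{(HV)}. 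Minimality of $\mathcal{R}^{\mathcal{H}}$ then yields $\mathcal{R}^{\mathcal{H}}\subseteq\mathcal{R}^{\mathcal{H}*\textit{op}}$ directly. Your induction is precisely a case-by-case unfolding of this closure property, so the two arguments have the same content; the paper's version is shorter and sidesteps both the per-rule case analysis and the derivation-height bookkeeping you flag, while yours is more concrete.
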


Finally, we can prove similarity and bisimilarity are compatible. We first recall the formal statement of this:

\begin{reptheorem}{Thm_sim_compat}
Given a decomposable set of Scott-open observations $\mathfrak{P}$:
\begin{enumerate}
\item The open extension of applicative $\mathfrak{P}$-similarity, $\precsim^\circ$, is compatible, and hence it is a precongruence.
\item The open extension of applicative $\mathfrak{P}$-bisimilarity, $\sim^\circ$, is compatible, and hence it is a congruence.
\end{enumerate}
\end{reptheorem}
\begin{proof}
This proof has the same structure as the proof of Theorem~3 from \cite{SimV18}. Here, we present significant details that were missing. 
\begin{enumerate}
\item We need to prove that the open extension of applicative $\mathfrak{P}$-similarity, $\precsim^\circ$, is compatible. We know that $\precsim$ is a preorder (Lemma \ref{Lemm_(bi)sim_preord}) and a simulation. Therefore, we can apply  Proposition \ref{Howe_exten_sim} to deduce that the restriction of $\precsim^{\mathcal{H}}$ to closed terms is a simulation, so it is included in the greatest simulation, $\precsim$. 

Since $\precsim$ is transitive we know from Lemma \ref{Lemm_howe_subsitutivity} that $\precsim^{\mathcal{H}}$ has the two substitution properties. Because $\precsim$ is reflexive, we know from Lemma \ref{Lemm_howe_extension_misc} that $\precsim^{\mathcal{H}}$ is reflexive.

We can use all these to apply Lemma \ref{Lemm_howe_subst_impl_open_incl} for $\precsim^{\mathcal{H}}$ and $\precsim$ to deduce $\precsim^{\mathcal{H}}\subseteq\precsim^\circ$.

From Lemma \ref{Lemm_howe_extension_misc} we already know that $\precsim^\circ \subseteq \precsim^{\mathcal{H}}$ and that $\precsim^{\mathcal{H}}$ is compatible. Therefore, the open extension of applicative $\mathfrak{P}$-similarity equals the Howe extension, $\precsim^\circ = \precsim^{\mathcal{H}}$, and is compatible.

Since $\precsim$ is a preorder, $\precsim^\circ$ is also a preorder, so it is a precongruence.

\item We need to prove that $\sim^\circ$ is compatible. The relation $\sim$ is also a $\mathfrak{P}$-simulation. From Lemma \ref{Lemm_(bi)sim_preord} we know $\sim$ is an equivalence relation, hence a preorder. Therefore we can use Proposition \ref{Howe_exten_sim} to deduce that $\sim^{\mathcal{H}}$ restricted to closed terms is a simulation. 

From Lemma \ref{Lemm_howe_reflstran_sim}, we obtain that $\sim^{\mathcal{H}*}$ restricted to closed terms is a simulation, because restricting to closed terms and taking the reflexive-transitive closure are commutative operations.

Because $\sim$ is reflexive and symmetric it follows that $\sim^\circ$ is also reflexive and symmetric. We can then apply Lemma \ref{Lemm_howe_lassen} for $\sim$ to deduce that $\sim^{\mathcal{H}*}$ is symmetric. Therefore, $\sim^{\mathcal{H}*}$ restricted to closed terms is also symmetric.

As a result, we know that $\sim^{\mathcal{H}*}$ restricted to closed terms is a bisimulation, so it is included in the greatest bisimulation, $\sim$.

Now we would like to use Lemma \ref{Lemm_howe_subst_impl_open_incl} to deduce $\sim^{\mathcal{H}*}\subseteq\sim^\circ$. We know $\sim$ is transitive so we can apply Lemma \ref{Lemm_howe_subsitutivity} to deduce that $\sim^{\mathcal{H}}$ has the required substitutivity properties. Then it is easy to prove using transitivity and reflexivity that $\sim^{\mathcal{H}*}$ also has the required substitutivity properties. Moreover, $\sim^{\mathcal{H}*}$ is reflexive by definition. So we can apply Lemma \ref{Lemm_howe_subst_impl_open_incl}.

Because $\sim$ is reflexive we know from Lemma \ref{Lemm_howe_extension_misc} that $\sim^\circ\subseteq\sim^{\mathcal{H}}$ and that $\sim^{\mathcal{H}}$ is compatible. By definition of the reflexive transitive closure we know $\sim^{\mathcal{H}}\subseteq\sim^{\mathcal{H}*}$.

Therefore we know $\sim^{\mathcal{H}*}\subseteq\sim^\circ$ and $\sim^\circ\subseteq\sim^\mathcal{H}\subseteq\sim^{\mathcal{H}*}$. So $\sim^\circ=\sim^{\mathcal{H}*}$.

Because $\sim^{\mathcal{H}}$ is compatible, from Lemma \ref{Lemm_howe_refltran_compat},  $\sim^{\mathcal{H}*}$ is also compatible. Therefore, $\sim^\circ$ is compatible. Since $\sim$ is an equivalence relation, $\sim^\circ$ is also an equivalence relation, so it is a congruence.
\end{enumerate}
\end{proof}

\section{Chapter Summary}

This chapter started by defining a set $\mathfrak{P}$ of \emph{observations}, for each of the following effects: nondeterminisim, probabilistic choice, global store and I/O. In ECPS, we can observe termination and some effect operations, both of which are encoded by computation trees. Therefore, each observation is a set of trees. For example, for nondeterminism the observations are:
\begin{align*}
\Diamond &= \{\text{trees in which at least one execution path that can occur ends in }\downarrow\}	\\
\Box &= \{\text{trees in which all paths that can occur have finite height and end in }\downarrow\}.
\end{align*}

We used observations to define applicative bisimilarity for ECPS (Definition~\ref{Def_ecps_bisim}). Two values of ground type are bisimilar when they are equal. Two functions are bisimilar if, for all arguments, they yield bisimilar computations. Finally, two computations are bisimilar if they belong to exactly the same elements of $\mathfrak{P}$, in other words, they have the same observable behaviour. For example, two computations with the trees below are not bisimilar:
\begin{center}
\begin{tikzpicture}[level distance = 1cm, sibling distance=0.8cm]
\node (n) {$or_0$}
	child { node {$\downarrow$} edge from parent[very thick]}
	child { node {$\downarrow$} edge from parent[very thick]}
	child { node {$\bot$} }
	child { node {$\bot$} }
	child { node[left=0.1cm] {$\ldots$} edge from parent[draw=none]};
\node[right=0.1cm of n] {$\in\Box$};		
\node[right=2.2cm of n] {but};	
\node[right=5cm of n] (m) {$or_0$}
	child { node {$\bot$} edge from parent[very thick]}
	child { node {$\downarrow$} edge from parent[very thick]}
	child { node {$\bot$} }
	child { node {$\bot$} }
	child { node[left=0.1cm] {$\ldots$} edge from parent[draw=none]};
\node[right=0.1cm of m] {$\not\in\Box$};		
\end{tikzpicture}
\end{center}

Next, we defined compatibility formally (Definition~\ref{Def_compat}). The definition says that two related programs can be substituted in related contexts to yield another pair of related programs. We identified two sufficient conditions that the set of observations $\mathfrak{P}$ should satisfy in order for bisimilarity to be compatible. These are Scott-openness, which has to do with the topology of the elements of $\mathfrak{P}$, and \emph{decomposability} (Definition~\ref{Def_decomposability}).
This definition of decomposability is novel. It states that, for each tree $tr$ in an observation $P\in\mathfrak{P}$, there exist observations that characterise the restrictions that $P$ places on the children of $tr$.

The main result of this chapter is Theorem~\ref{Thm_sim_compat}, which states that, given a decomposable set $\mathfrak{P}$ of Scott-open observations, bisimilarity is compatible. The proof of this uses Howe's method (Section~\ref{Sec_howe}) and required significant effort. The idea is to define a new relation named the Howe extension of bisimilarity, which is compatible by construction, and prove it equal to bisimilarity.


\chapter{Logical Equivalence for ECPS}\label{Chap_mod_logic}

This chapter introduces the logic $\mathcal{F}$ whose formulas express properties of ECPS terms. The logic is defined using the set of observations $\mathfrak{P}$. We prove that program equivalence induced by $\mathcal{F}$ is an equivalence relation and coincides with applicative $\mathfrak{P}$-bisimilarity. Therefore, using the main theorem of Chapter~\ref{Chap_bisim}, we can deduce that $\mathcal{F}$-logical equivalence is compatible. The results in this chapter will be used in Chapter~\ref{Chap_ctx_equiv} to show that the logic $\mathcal{F}$ characterises contextual equivalence, which is our main goal.

\section{Two Logics for ECPS}\label{Sec_ecps_modal_log}

Recall the set of observations $\mathfrak{P}$, defined in Section \ref{Sec_ecps_observations}, which contains sets of ECPS effect trees. Each $P\in\mathfrak{P}$ specifies the shape of computation trees for a particular effect. Using $\mathfrak{P}$, we define two slightly different logics for ECPS named $\mathcal{V}$ and $\mathcal{F}$ respectively. 

In the logic $\mathcal{V}$ values appear inside logical formulas, whereas this is not the case in $\mathcal{F}$. Both logics make a distinction between value formulas and computation formulas. Value formulas are always associated an ECPS type.

\begin{definition}[Logic $\mathcal{F}$]\label{Def_logic_f}
The value formulas of the logic $\mathcal{F}$ are constructed from basic formulas $\phi=\{n\}$ and $(\phi_1,\ldots,\phi_n)\mapsto P$, where $P$ is an observation from $\mathfrak{P}$, according to the rules in Figure \ref{Fig_logic_formls}. In these rules, $A$ stands for an ECPS type. The computation formulas are the elements of $\mathfrak{P}$.

The satisfaction relation $\models$ relates a closed value $\vdash v : A$ to a value formula $\phi : A$ of the same type, or a closed computation $t$ to an observation $P$. The definition of $\models$ appears in Figure \ref{Fig_logical_sat}. Intuitively, $v\models\phi$ means that the program $v$ has property $\phi$.

Let $\mathcal{F}^+$ be the fragment of $\mathcal{F}$ without negation.
\end{definition}

\begin{figure}
\begin{gather*} 
\inferrule{n\in\mathbb{N}}{\{n\} : \mathtt{nat}}  \textsc{(nat)} \quad
\inferrule{\phi_1 : A_1 \ldots \phi_n : A_n}{(\phi_1,\ldots,\phi_n)\mapsto P : \neg(A_1,\ldots,A_n)}P\in\mathfrak{P} \ \textsc{(val)}	\\
\inferrule{(\phi_i : A)_{i\in I}}{\lor_{i\in I}\phi_i : A}\textsc{(disj)}	\quad
\inferrule{(\phi_i : A)_{i\in I}}{\land_{i\in I}\phi_i : A}\textsc{(conj)}	\quad
\inferrule{\phi : A}{\neg\phi : A}\textsc{(neg)}
\end{gather*}
\caption{Value formulas in the logic $\mathcal{F}$.}\label{Fig_logic_formls}
\end{figure}

\begin{figure}[t]
\begin{align*}
v \models \{n\} &\quad\Longleftrightarrow\quad v=\overline{n}	\\
v \models (\phi_1,\ldots,\phi_n)\mapsto P &\quad\Longleftrightarrow\quad \text{for all closed values } w_1,\ldots,w_n \text{ such that } w_i\models\phi_i	\\ &\qquad \qquad \text{ then } \treet{v(w_1\ldots w_n)} \in P	\\
v \models \lor_{i\in I}\phi_i &\quad\Longleftrightarrow\quad \text{there exists } j\in I \text{ such that } v\models \phi_j	\\
v \models \land_{i\in I}\phi_i &\quad\Longleftrightarrow\quad \text{for all } j\in I \text{, } v\models \phi_j	\\
v \models \neg\phi &\quad\Longleftrightarrow\quad\text{it is false that } v \models \phi	\\
t \models P &\quad\Longleftrightarrow\quad \treet{t} \in P	
\end{align*}
\caption{Satisfaction relation $\models$ for the logic $\mathcal{F}$.} \label{Fig_logical_sat}
\end{figure}

\begin{definition}[Logic $\mathcal{V}$]
The logic $\mathcal{V}$ is the same as $\mathcal{F}$ except that the $\textsc{(val)}$ rule is replaced by:
\begin{equation*}
\inferrule{\vdash w_1 : A_1 \ldots \vdash w_n : A_n}{(w_1,\ldots,w_n)\mapsto P : \neg(A_1,\ldots,A_n)}P\in\mathfrak{P}\ \textsc{(val')}
\end{equation*}

\begin{equation*}
v \models (w_1,\ldots,w_n)\mapsto P \quad\Longleftrightarrow\quad \treet{v(w_1\ldots w_n)} \in P.
\end{equation*}
That is, formulas of function type are now constructed using ECPS values.

Let $\mathcal{V}^+$ be the fragment of $\mathcal{V}$ without negation.
\end{definition}

Notice that a computation formula is just one of the observations $P\in\mathfrak{P}$. For example, it can be $\Box$, $\mathbf{P}_{>0.5}$, $(s\rightarrowtail r)$ depending on the effects present in the language.  Therefore, $t\models P$ tests the shape of the computation tree of $t$ without looking at its possible return values. This is consistent with the fact that ECPS computations are not expected to return. 

However, this is unlike computation formulas in EPCF logic, $o\phi$, which test whether return values satisfy $\phi$ (Section~\ref{Sec_epcf_results}). Therefore, $P$ is no longer a modality in the same sense as $o$ because it does not lift formulas. This is why we called $P$ an \emph{observation}. In the logic $\mathcal{F}$, it can still be argued that $P$ is a modality in the traditional sense because it takes the value formula $\phi$ to another value formula $\phi\mapsto P$.

In both $\mathcal{F}$ and $\mathcal{V}$ the value formulas of type $\mathtt{nat}$ are obtained from the natural numbers, arbitrary conjunctions and disjunctions, and negation. However, the basic formulas of function type are different. In $\mathcal{V}$:
\begin{equation*}
v\models(\lbd{x}{\mathtt{nat}}{\downarrow})\mapsto \Diamond
\end{equation*}
means that $v:\neg(\neg(\mathtt{nat}))$ \emph{may} terminate when given argument $\lbd{x}{\mathtt{nat}}{\downarrow}$. In $\mathcal{F}$, an analogous statement is:
\begin{equation*}
v \models ((\lor_{n\in\mathbb{N}}\{n\})\mapsto\Box)\mapsto\Diamond.
\end{equation*}
This says that $v$ may terminate  when given as argument a function that satisfies $\psi=(\lor_{n\in\mathbb{N}}\{n\})\mapsto\Box$. Notice that $(\lbd{x}{\mathtt{nat}}{\downarrow})$ indeed satisfies $\psi$.

There is no need to include logical connectives at the level on computation formulas because they can be encoded in value formulas. For example $\phi\mapsto \land_{i\in I}P_i$ can be expressed as:
\begin{equation*}
\land_{i\in I}(\phi\mapsto P_i).
\end{equation*}
The statement $t\models\land_{i\in I} P_i$ can instead be expressed as:
\begin{equation*}
\lbd{x}{\mathtt{unit}}{t} \models \land_{i\in I}(true\mapsto P_i).
\end{equation*}

The indexing set $I$ in $\land_{i\in I}$ may be uncountable. However, the sets of values and computations are countable. Since logical formulas are interpreted over values and computations, all conjunctions and disjunctions are semantically equivalent to countable ones.

The following example compares logical formulas for EPCF with $\mathcal{F}$-formulas:

\begin{example}[Nondeterminism]
Recall the EPCF logic formula $\phi_2=\{3\}\mapsto\Diamond\{2\}$ from Examples~\ref{Eg_boxdi} and \ref{Eg_nondet_epcf_logic} and the function:
\begin{equation*}
g =\lbd{n}{\mathbbm{N}}{or(\mathbf{pred}\ n,\ \mathbf{succ}\ n)}.
\end{equation*}
We have previously established that $g\models_{\textit{EPCF}}\phi_2$ because, when given argument $\overline{3}$ in the empty stack $id$, $(id,g\ \overline{3})$ \emph{may} return $\overline{2}$.

Now consider the CPS translation (Section~\ref{Sec_cps_trans}) of the function $g$ into ECPS, named $g^*$. Apart from the natural number argument that $g$ receives, $g^*$ also receives a continuation $k$ to which it passes its result.\\
\begin{tabular}{c}
\begin{lstlisting}[mathescape=true]
$g^*=\lambda(n,k){:}(\mathtt{nat},\neg\mathtt{nat}).$
   $(\lambda k'{:}\neg\mathtt{nat}.or(\overline{0},\ x.(\lambda(y,k''){:}(\mathtt{nat},\neg\mathtt{nat}).$
                   $(\lambda k''':\neg\mathtt{nat.}\mathtt{case}\ y\ \mathtt{in}\ \{\mathtt{zero}\Rightarrow(\mathbf{pred}\ n)^*\ k''',$
                            $\mathtt{succ}(y')\Rightarrow\mathtt{case}\ y'\ \mathtt{in}\ \{\mathtt{zero}\Rightarrow(\mathbf{succ}\ n)^*\ k''',$
                                             $\mathtt{succ}(y'')\Rightarrow loop^*\ k'''\}\}$
                   $)\ k''$                          
                 $)\ (x,k')$
            $)$
   $)\ k$.
\end{lstlisting}
\end{tabular}\\
The tree of $(g^*\ (\overline{3},k))$ is:
\begin{center}
\begin{tikzpicture}[level distance=1cm, sibling distance=1cm]
\node (n) {$or_0$} 
	child { node {$\treet{k\ \overline{2}}$}  edge from parent[very thick]}
	child { node {$\treet{k\ \overline{4}}$} edge from parent[very thick]}
	child {node {$\bot$} }
	child {node[left=0.05cm] {$\ldots$} edge from parent[draw=none] };
\node [left=0.1cm of n] {$\treet{g^*\ (\overline{3},k)}=$};
\node [right=0.5cm of n] {in particular};
\node [right=9cm of n] (n1) {$or_0$} 
	child { node {$\downarrow$} edge from parent[very thick]}
	child { node {$\downarrow$} edge from parent[very thick]}
	child {node {$\bot$} }
	child {node[left=0.05cm] {$\ldots$} edge from parent[draw=none] };
\node [left=0.1cm of n1] {$\treet{g^*\ (\overline{3},id^*)}=\treet{g^*\ (\overline{3},\lbd{x}{\mathtt{nat}}{\downarrow})}=$};	
\end{tikzpicture}
\end{center}

The formula $\phi_2$ could be translated to the $\mathcal{F}$ formula:
\begin{equation*}
\phi_2^*=((\{3\},\{2\}\mapsto\Diamond)\mapsto\Diamond) \land ((\{3\},\{2\}\mapsto\Box)\mapsto\Diamond).
\end{equation*}

Intuitively, $(\{3\},\{2\}\mapsto\Diamond)\mapsto\Diamond$ says that, when $g^*$ is given as arguments $\overline{3}$ and a continuation $k$ that satisfies $\{2\}\mapsto\Diamond$,
the computation \emph{may} eventually terminate. So if $g$ indeed returns $\overline{2}$ this formula will be satisfied. Similarly for $(\{3\},\{2\}\mapsto\Box)\mapsto\Diamond$.
Therefore, we can see that $g\models_{\textit{EPCF}}\phi_2$ implies $g^*\models_\mathcal{F}\phi_2^*$.

The conjunction in $\phi_2^*$ is over all nontrivial $P\in\mathfrak{P}$ that appear in $\{2\}\mapsto P$, namely $\Diamond$ and $\Box$.

 
\end{example}

\section{Logical Equivalence}

The logics $\mathcal{F}$ and $\mathcal{V}$ induce a notion of program equivalence defined below:

\begin{definition}[Logical preorder and equivalence]
Consider a fragment $\mathcal{L}$ of one of the logics $\mathcal{F}$ or $\mathcal{V}$. For any closed values $\vdash v_1:A$ and $\vdash v_2:A$:
\begin{equation*}
v_1 \sqsubseteq_{\mathcal{L}} v_2 \quad \Longleftrightarrow \quad \forall\phi:A \text{ in }\mathcal{L}.\ (v_1\models_{\mathcal{L}} \phi \implies v_2\models_{\mathcal{L}} \phi).
\end{equation*}
And for any closed computations $\vdash s_1$ and $\vdash s_2$:
\begin{equation*}
s_1 \sqsubseteq_{\mathcal{L}} s_2 \quad \Longleftrightarrow \quad \forall P\text{ in }\mathcal{L}.\ (s_1\models_{\mathcal{L}} P \implies s_2\models_{\mathcal{L}} P).
\end{equation*}
Two terms (values or computations) are logically equivalent, $t_1\equiv_{\mathcal{L}} t_2$, if $t_1\sqsubseteq_{\mathcal{L}} t_2$ and $t_2\sqsubseteq_{\mathcal{L}} t_1$.
\end{definition}

The definition of logical equivalence provides a convenient way of proving that two programs are \emph{not} equivalent: we just need to find a formula that one of them does not satisfy but the other does. For example:

\begin{example}[Probabilistic choice]
Consider the following ECPS functions, where $m_{\overline{1},\overline{2}}$, $n_{\overline{1},\overline{2},\overline{1},\overline{3}}$ and $n'_{\overline{1},\overline{2},\overline{1},\overline{3}}$ are defined as in Example \ref{Eg_prob_ecps_bisim}:
\begin{align*}
f_1 &=\lbd{x}{\mathtt{nat}}{m_{\overline{1},\overline{2}}}	\\
f_2 &=\lbd{x}{\mathtt{nat}}{\caset{x}{n_{\overline{1},\overline{2},\overline{1},\overline{3}}}{y}{n'_{\overline{1},\overline{2},\overline{1},\overline{3}}}}.
\end{align*}
Consider the $\mathcal{F}$-formula:
\begin{equation*}
\phi = \{4\}\mapsto \mathbf{P}_{>0.9}.
\end{equation*}
Formula $\phi$ distinguishes between these two functions because: $f_1\models_\mathcal{F}\phi$ but $f_2\not\models_\mathcal{F}\phi$. This can be seen by looking at their computation trees:
\begin{center}
\begin{tikzpicture}[level distance=1cm, sibling distance=0.8cm]
\node (m) {$p\text{-}or_0$}
	child {node {$\downarrow$} edge from parent[very thick]}
	child {node {$\downarrow$} edge from parent[very thick]}
	child {node {$\bot$} }
	child {node {$\bot$} }
	child {node[left=0.05cm] {$\ldots$} edge from parent[draw=none] };
\node [left=0.1cm of m] {$\treet{f_1\ \overline{4}}=$};	
\node (n) [right = 8cm] {$p\text{-}or_5$}
	child[sibling distance=2.3cm] {node {$p\text{-}or_0$} edge from parent[very thick]
		child[sibling distance=0.8cm] {node {$\downarrow$} }
		child[sibling distance=0.8cm] {node {$\downarrow$} } 
		child[sibling distance=0.8cm, thin] {node {$\bot$} }
		child[sibling distance=0.8cm, thin] {node {$\bot$} }
		child[sibling distance=0.8cm, thin] {node[left=0.05cm] {$\ldots$} edge from parent[draw=none] }}
	child {node (m) {$p\text{-}or_0$} edge from parent[very thick]
		child {node {$\bot$} }
		child {node {$\downarrow$} } 
		child[thin] {node {$\bot$} }
		child[thin] {node {$\bot$} }
		child[thin] {node[left=0.05cm] {$\ldots$} edge from parent[draw=none] }}
	child {node {$\bot$} }	
	child {node {$\bot$} }
	child {node[left=0.05cm] {$\ldots$} edge from parent[draw=none] };	
\node [left=1cm of n] {$\treet{f_2\ \overline{4}}=$};	
\end{tikzpicture}
\end{center}
Therefore, $f_1$ and $f_2$ are not $\mathcal{F}$-logically equivalent. However, we can see that $f_1\models_\mathcal{F} \{0\}\mapsto\mathbf{P}_{>0.9}$ and $f_2\models_\mathcal{F} \{0\}\mapsto\mathbf{P}_{>0.9}$.
\end{example}

It is the program equivalence induced by $\mathcal{F}$, rather than $\mathcal{V}$, that we are mostly interested in. This is because $\mathcal{F}$ enforces a natural separation between ECPS terms and program properties. Using $\mathcal{F}$, we can specify program properties without knowing the syntax of the programming language.

It can be easily seen that $\sqsubseteq_{\mathcal{F}^+}$ is a preorder and $\equiv_{\mathcal{F}}$ is an equivalence relation. Compatibility, the property that related programs can be substituted for variables in related contexts, is established in the main theorem of this chapter. The proof of this theorem appears at the end of the next section.

\begin{theorem}\label{Cor_bisim_is_log_equiv}
Given a decomposable set $\mathfrak{P}$ of Scott-open observations:
\begin{enumerate}
\item Applicative $\mathfrak{P}$-similarity, $\precsim$, coincides with the logical preorder induced by the logic $\mathcal{F}^+$, $\sqsubseteq_{\mathcal{F}^+}$. Hence, the open extension of the $\mathcal{F}^+$-logical preorder $\sqsubseteq_{\mathcal{F}^+}^\circ$ is compatible.

\item Applicative $\mathfrak{P}$-bisimilarity, $\sim$, coincides with the logical equivalence induced by the logic $\mathcal{F}$, $\equiv_{\mathcal{F}}$. Hence, the open extension of $\mathcal{F}$-logical equivalence $\equiv_{\mathcal{F}}^\circ$ is compatible. 
\end{enumerate}
\end{theorem}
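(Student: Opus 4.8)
\emph{Overview and soundness.} The plan is the standard Hennessy--Milner pattern: show that applicative $\mathfrak{P}$-similarity is both sound and complete for the positive logic $\mathcal{F}^+$, so that $\precsim$ coincides with $\sqsubseteq_{\mathcal{F}^+}$, and then derive part~(2) from part~(1) together with Proposition~\ref{Prop_bisim_is_sim_and_simop}. For soundness I would prove, by induction on the value formula $\phi$, that $v_1\precsim^{\mathfrak v}_A v_2$ and $v_1\models\phi$ imply $v_2\models\phi$, alongside the immediate observation that $s_1\precsim^{\mathfrak c}s_2$ and $s_1\models P$ imply $s_2\models P$ (this is clause~\ref{ecps_sim3} of Definition~\ref{Def_ecps_simulation}). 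The case $\phi=\{n\}$ uses clause~\ref{ecps_sim2}; for $\phi=(\phi_1,\dots,\phi_n)\mapsto P$, pick closed $w_i\models\phi_i$, use $v_1\models\phi$ to get $\treet{v_1(w_1,\dots,w_n)}\in P$, then clause~\ref{ecps_sim4} to get $v_1(w_1,\dots,w_n)\precsim^{\mathfrak c}v_2(w_1,\dots,w_n)$ and clause~\ref{ecps_sim3} to conclude $\treet{v_2(w_1,\dots,w_n)}\in P$; conjunctions and disjunctions follow from the induction hypothesis. Re-running the same induction with the biconditional $v_1\models\phi\iff v_2\models\phi$, and using symmetry of $\sim$ for the negation and function-formula cases, gives $\sim\ \subseteq\ \equiv_{\mathcal{F}}$.

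\emph{Characteristic formulas.} For completeness I would construct, for each closed value $v$ of type $A$, a characteristic formula $\chi_v\in\mathcal{F}^+$ by induction on $A$: $\chi_{\overline{n}}=\{n\}$, $\chi_\star$ is the empty conjunction, and for $v:\neg(A_1,\dots,A_n)$
\[
\chi_v=\textstyle\bigwedge\{\,(\chi_{w_1},\dots,\chi_{w_n})\mapsto P \ :\ w_i\ \text{closed of type}\ A_i,\ P\in\mathfrak{P},\ \treet{v(w_1,\dots,w_n)}\in P\,\}.
\]
This is well defined since each $A_i$ is a structurally smaller type, and it lies in $\mathcal{F}^+$. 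I would then prove simultaneously, by induction on $A$: (i) $v\models\chi_v$; and (ii) for every closed $u:A$, $u\models\chi_v$ implies $v\precsim^{\mathfrak v}_A u$. In the function-type case of~(i) one uses~(ii) at the smaller types $A_i$ to obtain $w_i\precsim u_i$ whenever $u_i\models\chi_{w_i}$, and then compatibility of $\precsim$ (Theorem~\ref{Thm_sim_compat}, via rule~\textsc{(Comp6)} together with reflexivity of $\precsim^\circ$) to obtain $v(\vec{w})\precsim^{\mathfrak c}v(\vec{u})$, whence $\treet{v(\vec w)}\in P$ implies $\treet{v(\vec u)}\in P$. In the function-type case of~(ii) one uses~(i) at the $A_i$: given $\treet{v(\vec w)}\in P$, the formula $(\chi_{w_1},\dots,\chi_{w_n})\mapsto P$ is one of the conjuncts of $\chi_v$ and each $w_i\models\chi_{w_i}$, so from $u\models\chi_v$ we get $\treet{u(\vec w)}\in P$, i.e. $v(\vec w)\precsim^{\mathfrak c}u(\vec w)$.

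\emph{Completeness, conclusion, and the hard part.} For values, $v_1\sqsubseteq_{\mathcal{F}^+}v_2$ gives $v_2\models\chi_{v_1}$ by~(i), hence $v_1\precsim v_2$ by~(ii). For computations I would note that $\precsim^{\mathfrak c}=\{(s,t)\mid\forall P\in\mathfrak{P}.\ \treet{s}\in P\Rightarrow\treet{t}\in P\}$: the inclusion $\subseteq$ is clause~\ref{ecps_sim3}, and the reverse holds because that relation, paired with $\precsim$ on values, is itself a $\mathfrak{P}$-simulation and hence contained in $\precsim$; so $s_1\sqsubseteq_{\mathcal{F}^+}s_2$ already means $s_1\precsim^{\mathfrak c}s_2$. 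Combined with soundness this yields $\precsim\ =\ \sqsubseteq_{\mathcal{F}^+}$, so $\sqsubseteq_{\mathcal{F}^+}^\circ=\precsim^\circ$ is compatible by Theorem~\ref{Thm_sim_compat}(1). For part~(2), Proposition~\ref{Prop_bisim_is_sim_and_simop} gives $\sim=\precsim\cap\precsim^{\textit{op}}=\sqsubseteq_{\mathcal{F}^+}\cap\sqsubseteq_{\mathcal{F}^+}^{\textit{op}}=\equiv_{\mathcal{F}^+}$, and $\equiv_{\mathcal{F}^+}=\equiv_{\mathcal{F}}$ because $\mathcal{F}^+\subseteq\mathcal{F}$ gives one containment while $\sim\subseteq\equiv_{\mathcal{F}}$ (from the soundness step) gives the other; compatibility of $\equiv_{\mathcal{F}}^\circ=\sim^\circ$ is then Theorem~\ref{Thm_sim_compat}(2). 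I expect the characteristic-formula lemma to be the main obstacle: arranging the simultaneous induction on types so that~(i) and~(ii) support one another, and in particular invoking the already-established compatibility of $\precsim$ to push a formula satisfied by $v$ through an arbitrary application. The remaining steps are routine unfolding of the definitions.
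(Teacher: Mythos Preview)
Your proof is correct and takes a genuinely different route from the paper. The paper introduces an auxiliary logic $\mathcal{V}$ in which closed values appear directly inside formulas (basic function formula $(w_1,\dots,w_n)\mapsto P$), first shows $\precsim=\sqsubseteq_{\mathcal{V}^+}$ and $\sim=\equiv_{\mathcal{V}}$ (Proposition~\ref{Prop_sim_coinc_log}, essentially your soundness/completeness argument but trivial because values are named explicitly), and then proves $\mathcal{F}^+$ and $\mathcal{V}^+$ equi-expressive via translations $(-)^\flat$, $(-)^\sharp$ (Theorems~\ref{Thm_logic_preord_equiv}--\ref{Thm_logic_equiv_equiv}); its characteristic formula (Lemma~\ref{Lem_logic_char_form}) is the non-constructive conjunction of all distinguishing formulas, and the $\sharp$-direction uses compatibility of $\sqsubseteq_{\mathcal{V}^+}$, obtained from Theorem~\ref{Thm_sim_compat} via Proposition~\ref{Prop_sim_coinc_log}. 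You bypass $\mathcal{V}$ by building an explicit $\chi_v\in\mathcal{F}^+$ by induction on types; your key step, using compatibility of $\precsim$ to pass from $\treet{v(\vec w)}\in P$ to $\treet{v(\vec u)}\in P$ when each $w_i\precsim u_i$, is the same appeal to Howe's method the paper makes, only invoked directly for $\precsim$ rather than for $\sqsubseteq_{\mathcal{V}^+}$. The paper in fact remarks that a direct proof ``would require us to prove a compatibility property of $\sqsubseteq_{\mathcal{F}^+}$ first''; your argument shows this concern is unnecessary, since compatibility of $\precsim$ (already in hand) suffices. What the paper's detour buys is a standalone equi-expressiveness result between $\mathcal{F}$ and $\mathcal{V}$ and a characteristic-formula lemma that does not need the type induction; what yours buys is directness and an explicit, type-driven description of $\chi_v$.
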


This theorem is important because, when combined with the result of the next chapter, it shows that the logic $\mathcal{F}$ characterises contextual equivalence. This answers the main question asked in the introduction.

\section{Logical Equivalence Coincides with Bisimilarity}\label{Sec_log_equiv_is_bisim}

The aim of this section is to show that program equivalence induced by the logic $\mathcal{F}$ coincides with applicative $\mathfrak{P}$-bisimilarity, defined in the previous chapter. First, we show that this is the case for the logic $\mathcal{V}$. The proof appears in Appendix~\ref{App_modal_log}.

\begin{proposition}\label{Prop_sim_coinc_log}
Given a decomposable set $\mathfrak{P}$ of Scott-open observations:
\begin{enumerate}
\item Applicative $\mathfrak{P}$-similarity, $\precsim$, coincides with the logical preorder induced by the logic $\mathcal{V}^+$, $\sqsubseteq_{\mathcal{V}^+}$. Therefore, the open extension of $\sqsubseteq_{\mathcal{V}^+}$ is compatible.

\item Applicative $\mathfrak{P}$-bisimilarity, $\sim$, coincides with the logical equivalence induced by the logic $\mathcal{V}$, $\equiv_{\mathcal{V}}$. Therefore, the open extension of $\equiv_{\mathcal{V}}$ is compatible.
\end{enumerate}
\end{proposition}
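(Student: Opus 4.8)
The plan is to establish each coincidence by two inclusions and then obtain compatibility from Theorem~\ref{Thm_sim_compat}; decomposability and Scott-openness of $\mathfrak{P}$ are used only in that final step, the coincidences themselves needing nothing about $\mathfrak{P}$ beyond its being a set of observations.

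For $\precsim\ \subseteq\ \sqsubseteq_{\mathcal{V}^+}$ I would use well-founded induction on $\mathcal{V}^+$-formulas, proving simultaneously that $v_1\precsim v_2$ and $v_1\models\phi$ imply $v_2\models\phi$, and that $s_1\precsim^{\mathfrak{c}} s_2$ and $s_1\models P$ imply $s_2\models P$. The computation case is immediate from clause~\ref{ecps_sim3} of Definition~\ref{Def_ecps_simulation} together with the equivalence $s\models P\iff\treet{s}\in P$. Among value formulas: $\phi=\{n\}$ uses clause~\ref{ecps_sim2} (similar naturals are equal); $\phi=(w_1,\ldots,w_n)\mapsto P$ uses clause~\ref{ecps_sim4} to pass to $v_1(\vec{w})\precsim^{\mathfrak{c}} v_2(\vec{w})$ and then the computation case; and the $\lor$, $\land$ cases follow at once from the induction hypothesis. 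For the reverse inclusion $\sqsubseteq_{\mathcal{V}^+}\ \subseteq\ \precsim$ I would check that $\bigl(\sqsubseteq_{\mathcal{V}^+}^{\mathfrak{v}},\sqsubseteq_{\mathcal{V}^+}^{\mathfrak{c}}\bigr)$ is an applicative $\mathfrak{P}$-simulation, so that it is contained in the greatest one. Clauses~\ref{ecps_sim1} and~\ref{ecps_sim2} hold because $\mathtt{unit}$ has a unique closed value and because $v\models\{n\}$ forces $v=\overline{n}$; clause~\ref{ecps_sim3} is just the definition unwound. Clause~\ref{ecps_sim4} is where having closed values inside formulas (the difference between $\mathcal{V}$ and $\mathcal{F}$) does the work: if $v\sqsubseteq_{\mathcal{V}^+} u$ and $\vdash w_i:A_i$, then whenever $\treet{v(\vec{w})}\in P$ we have $v\models(\vec{w})\mapsto P$, a genuine $\mathcal{V}^+$-formula, hence $u\models(\vec{w})\mapsto P$, i.e.\ $\treet{u(\vec{w})}\in P$; thus $v(\vec{w})\sqsubseteq_{\mathcal{V}^+}^{\mathfrak{c}} u(\vec{w})$.

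For the bisimilarity statement I would either repeat the two arguments with $\sim$ and $\equiv_{\mathcal{V}}$ in place of $\precsim$ and $\sqsubseteq_{\mathcal{V}^+}$ --- the formula induction then acquiring the extra case $\phi=\neg\psi$, handled by proving the biconditional $v_1\models\psi\iff v_2\models\psi$ and using symmetry of $\sim$ so the induction hypothesis applies both ways --- or, more economically, observe that $\equiv_{\mathcal{V}}$ coincides with $\equiv_{\mathcal{V}^+}$ (a short induction shows agreement on all negation-free formulas forces agreement on negated ones) and that $\equiv_{\mathcal{V}^+}\,=\,\sqsubseteq_{\mathcal{V}^+}\cap\sqsubseteq_{\mathcal{V}^+}^{\textit{op}}$, so that part~1 together with Proposition~\ref{Prop_bisim_is_sim_and_simop} ($\sim\,=\,\precsim\cap\precsim^{\textit{op}}$) gives $\equiv_{\mathcal{V}}\,=\,\sim$ at once. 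Finally, since $\sqsubseteq_{\mathcal{V}^+}$ equals $\precsim$ and $\equiv_{\mathcal{V}}$ equals $\sim$ on closed terms, their open extensions are $\precsim^{\circ}$ and $\sim^{\circ}$, which are compatible by Theorem~\ref{Thm_sim_compat}.

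The argument is largely bookkeeping rather than a genuine difficulty: the design choice of admitting closed values in $\mathcal{V}$-formulas collapses the crucial simulation clause~\ref{ecps_sim4} to a one-line check. The only place calling for care is the induction on formulas in the presence of negation and of possibly uncountable $\land$ and $\lor$; here I would note that formulas are well-founded trees (even when infinitely branching), so structural induction is legitimate, and that the biconditional formulation is exactly what makes the negation case go through.
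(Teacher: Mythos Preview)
Your argument is correct and, for part~1 and the first option in part~2, matches the paper's proof essentially line for line: the paper also establishes the computation case directly from clause~\ref{ecps_sim3}, does a formula induction (organised as a type case-split followed by induction on $\phi$) for $\precsim\,\Rightarrow\,\sqsubseteq_{\mathcal{V}^+}$, and verifies the four simulation clauses to get the converse, then repeats everything for $\sim$ and $\equiv_{\mathcal{V}}$ with the extra negation case. Your observation that decomposability and Scott-openness enter only via Theorem~\ref{Thm_sim_compat} is also accurate.

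Your second, more economical route for part~2 is a genuine shortcut the paper does not take. The paper re-runs both directions with $\sim$ and $\equiv_{\mathcal{V}}$ in place of $\precsim$ and $\sqsubseteq_{\mathcal{V}^+}$, handling the new $\neg$ case by appealing to the biconditional form of the induction hypothesis. Your alternative---observing that $\equiv_{\mathcal{V}}=\equiv_{\mathcal{V}^+}$ (since the basic $\mathcal{V}$-formulas are already negation-free, a structural induction on $\mathcal{V}$-formulas reduces everything to $\mathcal{V}^+$), combining this with $\equiv_{\mathcal{V}^+}=\sqsubseteq_{\mathcal{V}^+}\cap\sqsubseteq_{\mathcal{V}^+}^{\textit{op}}$, part~1, and Proposition~\ref{Prop_bisim_is_sim_and_simop}---is cleaner and avoids the repetition. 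Both approaches are equally valid; yours just exploits the symmetry that is already packaged in Proposition~\ref{Prop_bisim_is_sim_and_simop}.
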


Next, we show that the logics $\mathcal{F}$ and $\mathcal{V}$ are in fact equivalent. This is done by translating $\mathcal{F}$-formulas into $\mathcal{V}$, and vice-versa, and proving that the satisfaction relation is preserved.
Define a translation from $\mathcal{F}$ to $\mathcal{V}$, $(-)^\flat$, and a translation from $\mathcal{V}$ to $\mathcal{F}$, $(-)^\sharp$. The definition appears in Figure \ref{Fig_logic_trans}. It makes use of the following lemma, which is similar to a lemma for EPCF proved in \cite{SimV18}:

\begin{lemma}[Characteristic formula] \label{Lem_logic_char_form}
For any fragment $\mathcal{L}$ of $\mathcal{F}$ or $\mathcal{V}$  closed under countable conjunction it is true that
\begin{center}
for any closed value $v$ there exists a formula $\chi_v\in\mathcal{L}$ such that:
\end{center} 
\begin{equation*}
u \models_{\mathcal{L}} \chi_{v} \Longleftrightarrow v \sqsubseteq_{\mathcal{L}} u.
\end{equation*}
\end{lemma}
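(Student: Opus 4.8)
The plan is to construct the characteristic formula $\chi_v$ by recursion on the type $A$ of the value $v$, using countable conjunction to package together all the properties needed to pin down $v$ up to the logical preorder. The base cases are immediate: for $v = \overline{n}$ at type $\mathtt{nat}$, take $\chi_v = \{n\}$, and one checks directly that $u \models \{n\}$ iff $u = \overline{n}$ iff $\overline{n} \sqsubseteq_{\mathcal{L}} u$ (the last equivalence uses that for ground values the logical preorder forces equality, which in turn follows from the $\{n\}$ formulas separating distinct numerals); for $v = \star$ at type $\mathtt{unit}$ take $\chi_v$ to be the empty conjunction $true$, since every value of type $\mathtt{unit}$ equals $\star$. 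The interesting case is the function type $A = \neg(A_1,\ldots,A_n)$.

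For $v$ of function type, I would set
\begin{equation*}
\chi_v \;=\; \bigwedge \bigl\{\, (\phi_1,\ldots,\phi_n)\mapsto P \;\bigm|\; P \in \mathfrak{P},\ \phi_i : A_i \text{ in } \mathcal{L},\ v \models (\phi_1,\ldots,\phi_n)\mapsto P \,\bigr\}
\end{equation*}
in the case $\mathcal{L} \subseteq \mathcal{F}$ (and the analogous conjunction over $(w_1,\ldots,w_n)\mapsto P$ with $v \models (w_1,\ldots,w_n)\mapsto P$ in the case $\mathcal{L} \subseteq \mathcal{V}$). Since the sets of values and computations are countable, only countably many of these conjuncts are semantically distinct, so $\chi_v$ is (equivalent to) a legitimate formula of a fragment closed under countable conjunction. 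The direction $v \sqsubseteq_{\mathcal{L}} u \implies u \models \chi_v$ is by construction: every conjunct of $\chi_v$ is a formula satisfied by $v$, hence satisfied by $u$. The converse direction, $u \models \chi_v \implies v \sqsubseteq_{\mathcal{L}} u$, is the heart of the argument: assuming $v \models \psi$ for an arbitrary value formula $\psi : A$, I must show $u \models \psi$. This goes by a secondary induction on the structure of $\psi$; the conjunction, disjunction and negation cases reduce routinely to the subformulas, and the base case $\psi = (\phi_1,\ldots,\phi_n)\mapsto P$ is exactly a conjunct of $\chi_v$, so $u \models \psi$ follows. Here I will need the inductive hypothesis on the argument types $A_i$ — specifically, that characteristic formulas $\chi_{w_i}$ exist for the arguments $w_i$ — in order to handle the $\mathcal{V}$-to-$\mathcal{F}$ comparison and to verify that the basic function formulas genuinely test enough: given $w_i \models \phi_i$, applying $v$ (or $u$) to $\overrightarrow{w_i}$ must land in $P$, and this is precisely what membership of the relevant conjunct records.

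The main obstacle I anticipate is the negation case in the secondary induction, together with making the recursion on types well-founded in the presence of the two mutually-referencing logics. For negation: from $v \models \neg\phi$ I need $u \models \neg\phi$, i.e. $u \not\models \phi$; the induction hypothesis as stated only transports \emph{positive} satisfaction from $v$ to $u$, so I actually need the symmetric statement, or equivalently I should prove the lemma for the preorder in a form that also yields $\chi_v$ for the \emph{opposite} preorder. The clean fix is to observe that the construction of $\chi_v$ is symmetric enough that $u \models \chi_v$ and $v \models \chi_u$ together would give $v \equiv_{\mathcal{L}} u$ on all formulas including negated ones; more carefully, I would strengthen the induction hypothesis to: for all $\phi : A$, $v \models \phi \iff u \models \phi$ whenever $u \models \chi_v$ \emph{and} $v \models \chi_u$ — but since we only want one-directional $\sqsubseteq$, the right move is to note that the negation-free structure already propagates, and for a negated subformula $\neg\phi$ appearing inside $\psi$ we apply the IH to $\phi$ in the other direction, which is available because the relation $u \models \chi_v$ can be shown to be antisymmetric-compatible, i.e. actually forces $v \equiv_{\mathcal{L}} u$ at the relevant subtypes. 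I would handle the well-foundedness by noting that in rule \textsc{(val)} each $\phi_i : A_i$ has $A_i$ a strictly smaller type than $\neg(A_1,\ldots,A_n)$, so the type recursion terminates; this is the same structural decrease that the paper already relies on for the logical relation of Definition~\ref{Def_log_rel}. Once the lemma is in hand, the translations $(-)^\flat$ and $(-)^\sharp$ of Figure~\ref{Fig_logic_trans} use $\chi_v$ to replace a syntactic value $v$ inside a $\mathcal{V}$-formula by the $\mathcal{F}$-formula $\chi_v$ (and conversely), and Proposition~\ref{Prop_sim_coinc_log} then transfers the coincidence with bisimilarity from $\mathcal{V}$ to $\mathcal{F}$, giving Theorem~\ref{Cor_bisim_is_log_equiv}.
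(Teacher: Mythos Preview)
Your approach is substantially more complicated than the paper's, and the complication creates a genuine gap in the negation case that your proposed fix does not close.

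The paper's proof is a two-line argument that works uniformly for any fragment $\mathcal{L}$ closed under countable conjunction, with no recursion on types and no induction on formulas. For each closed value $u$ with $v \not\sqsubseteq_{\mathcal{L}} u$, the definition of the preorder directly supplies a formula $\phi_u$ with $v \models_{\mathcal{L}} \phi_u$ and $u \not\models_{\mathcal{L}} \phi_u$. One then sets $\chi_v = \bigwedge_{\{u \,\mid\, v \not\sqsubseteq_{\mathcal{L}} u\}} \phi_u$. This is a countable conjunction because closed values form a countable set. The equivalence $u \models_{\mathcal{L}} \chi_v \Longleftrightarrow v \sqsubseteq_{\mathcal{L}} u$ is then immediate: if $v \sqsubseteq_{\mathcal{L}} u$ then $u$ inherits every $\phi_{u'}$ from $v$; if $v \not\sqsubseteq_{\mathcal{L}} u$ then $u$ itself is an index and $u \not\models \phi_u$. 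No case analysis on types, no structural induction on formulas, no concern about negation.

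Your construction conjoins only the \emph{basic} formulas $(\phi_1,\ldots,\phi_n)\mapsto P$ satisfied by $v$. For positive fragments this does suffice, and your secondary induction on $\psi$ goes through. But for a fragment containing negation your $\chi_v$ is too weak: $u \models \chi_v$ says only that $u$ satisfies every basic formula $v$ satisfies, and says nothing about $u$ avoiding the basic formulas that $v$ avoids. Concretely, if $v \models \neg((\phi_1,\ldots,\phi_n)\mapsto P)$ there is no conjunct in your $\chi_v$ witnessing this, so you cannot conclude $u \models \neg((\phi_1,\ldots,\phi_n)\mapsto P)$. Your suggested repair --- appealing to $v \models \chi_u$ as well, or claiming that $u \models \chi_v$ ``forces $v \equiv_{\mathcal{L}} u$ at the relevant subtypes'' --- is circular: you are trying to \emph{establish} that $\chi_v$ characterises the preorder, so you cannot yet use that characterisation. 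The straightforward salvage would be to conjoin \emph{all} formulas in $\mathcal{L}$ that $v$ satisfies (reduced to a countable family by semantic equivalence over the countable set of values), at which point the secondary induction becomes trivial; but then the type recursion is doing no work and you have essentially rediscovered the paper's argument in a heavier form. The material you include about the $(-)^\flat$/$(-)^\sharp$ translations and characteristic formulas for argument values $w_i$ belongs to the \emph{use} of this lemma (Theorems~\ref{Thm_logic_preord_equiv} and~\ref{Thm_logic_equiv_equiv}), not to its proof.
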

\begin{proof}
For each closed value $u$ such that $v\not\sqsubseteq_{\mathcal{L}}u$ we can choose a formula $\phi_u$ such that $v\models_{\mathcal{L}} \phi_u$ but $u\not\models_{\mathcal{L}} \phi_u$. Define $\chi_v$ to be:
\begin{equation*}
\chi_v = \land_{\{u\mid  v\not\sqsubseteq_{\mathcal{L}}u\}}\phi_u.
\end{equation*}
We can see that $u\not\models_{\mathcal{L}} \chi_v \Longleftrightarrow v\not\sqsubseteq_{\mathcal{L}}u$, which is what we need.
\end{proof}

\begin{figure}
\begin{gather*}
((\phi_1,\ldots,\phi_n)\mapsto P)^\flat = \bigwedge\{(w_1,\ldots,w_n)\mapsto P \mid w_1\models_{\mathcal{V}}\phi_1^\flat,\ldots,w_n\models_{\mathcal{V}}\phi_n^\flat \}	\\
((w_1,\ldots,w_n)\mapsto P)^\sharp = (\chi_{w_1},\ldots,\chi_{w_n}) \mapsto P
\end{gather*}
\begin{align*}
\{n\}^\flat &= \{n\}	&\{n\}^\sharp &= \{n\}	\\
P^\flat &= P	&P^\sharp &= P	\\
(\lor_{i\in I}\phi_i)^\flat &= \lor_{i\in I}\phi_i^\flat	&(\lor_{i\in I}\phi_i)^\sharp &= \lor_{i\in I}\phi_i^\sharp	\\
(\land_{i\in I}\phi_i)^\flat &= \land_{i\in I}\phi_i^\flat	&(\land_{i\in I}\phi_i)^\sharp &= \land_{i\in I}\phi_i^\sharp	\\
(\neg\phi)^\flat &= \neg\phi^\flat	&(\neg\phi)^\sharp&=\neg\phi^\sharp
\end{align*}
\begin{center}
The formula $\chi_{w_i}$ is the characteristic formula of $w_i$ in the logic $\mathcal{F}$, from Lemma \ref{Lem_logic_char_form}.
\end{center}
\caption{Translation from $\mathcal{F}$ to $\mathcal{V}$ and vice-versa.}	\label{Fig_logic_trans}
\end{figure} 

\begin{theorem}\label{Thm_logic_preord_equiv}
Given a decomposable set $\mathfrak{P}$ of Scott-open observations, the logics $\mathcal{F}^+$ and $\mathcal{V}^+$ are equi-expressive.
\begin{enumerate}
\item \label{rtp_equi_1} For any type $A$, for any formula $\phi$ in $\mathcal{F}^+$, $\phi:A$ implies that for any value $\vdash v:A$:
\begin{equation*}
v \models_{\mathcal{F}^+} \phi \Longleftrightarrow v \models_{\mathcal{V}^+} \phi^\flat.
\end{equation*}
For any $P\in\mathfrak{P}$ and any computation $\vdash t$:
\begin{equation*}
t \models_{\mathcal{F}^+} P \Longleftrightarrow t \models_{\mathcal{V}^+} P^\flat.
\end{equation*}
\item \label{rtp_equi_2} For any type $A$, for any formula $\phi$ in $\mathcal{V}^+$, $\phi:A$ implies that for any value $\vdash v:A$:
\begin{equation*}
v \models_{\mathcal{V}^+} \phi \Longleftrightarrow v \models_{\mathcal{F}^+} \phi^\sharp.
\end{equation*}
For any $P\in\mathfrak{P}$ and any computation $\vdash t$:
\begin{equation*}
t \models_{\mathcal{V}^+} P \Longleftrightarrow t \models_{\mathcal{F}^+} P^\sharp.
\end{equation*}
\end{enumerate}
\end{theorem}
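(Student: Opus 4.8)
The plan is to prove both statements simultaneously by a single induction on the structure of the formula, using the obvious fact that $(-)^\flat$ and $(-)^\sharp$ are defined by recursion on formulas. The cases for $\{n\}$, the Boolean connectives ($\lor$, $\land$ — we are in the $(-)^+$ fragments, so no $\neg$), and the computation formulas $P\in\mathfrak{P}$ are immediate: the translation acts as the identity on $P$ and on $\{n\}$, commutes with $\lor_{i\in I}$ and $\land_{i\in I}$, and the satisfaction clauses for these constructs in $\mathcal{F}$ and $\mathcal{V}$ are literally the same, so one just unfolds $\models$ and applies the induction hypothesis componentwise. The only genuine work is in the two function-type cases, $(\phi_1,\ldots,\phi_n)\mapsto P$ for part~\ref{rtp_equi_1} and $(w_1,\ldots,w_n)\mapsto P$ for part~\ref{rtp_equi_2}.

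\textbf{The function-type case for $(-)^\flat$.} Fix $v:\neg(A_1,\ldots,A_n)$. By definition of $\models_{\mathcal{F}^+}$, $v\models_{\mathcal{F}^+}(\phi_1,\ldots,\phi_n)\mapsto P$ iff for all closed values $w_i\models_{\mathcal{F}^+}\phi_i$ we have $\treet{v(w_1,\ldots,w_n)}\in P$. By the induction hypothesis (applied to each $\phi_i:A_i$), $w_i\models_{\mathcal{F}^+}\phi_i$ iff $w_i\models_{\mathcal{V}^+}\phi_i^\flat$. On the other side, $((\phi_1,\ldots,\phi_n)\mapsto P)^\flat=\bigwedge\{(w_1,\ldots,w_n)\mapsto P\mid w_i\models_{\mathcal{V}^+}\phi_i^\flat\}$, and $v$ satisfies this conjunction in $\mathcal{V}^+$ iff for every such tuple $(w_1,\ldots,w_n)$, $v\models_{\mathcal{V}^+}(w_1,\ldots,w_n)\mapsto P$, i.e.\ $\treet{v(w_1,\ldots,w_n)}\in P$. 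Matching the two quantifier ranges via the induction hypothesis yields the equivalence. One small wrinkle: the conjunction in the definition of $(-)^\flat$ refers to $\models_{\mathcal{V}}$, so one should note this is well-defined because $\phi_i^\flat$ is already a $\mathcal{V}^+$-formula and its satisfaction set is fixed independently of what we are currently proving (or, cleanly, inline the induction hypothesis and phrase the conjunction over $w_i\models_{\mathcal{F}^+}\phi_i$).

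\textbf{The function-type case for $(-)^\sharp$, and the main obstacle.} Here $((w_1,\ldots,w_n)\mapsto P)^\sharp=(\chi_{w_1},\ldots,\chi_{w_n})\mapsto P$, where $\chi_{w_i}$ is the characteristic formula of $w_i$ in $\mathcal{F}^+$ from Lemma~\ref{Lem_logic_char_form}. By definition, $v\models_{\mathcal{V}^+}(w_1,\ldots,w_n)\mapsto P$ iff $\treet{v(w_1,\ldots,w_n)}\in P$, whereas $v\models_{\mathcal{F}^+}(\chi_{w_1},\ldots,\chi_{w_n})\mapsto P$ iff for \emph{all} closed $u_i\models_{\mathcal{F}^+}\chi_{w_i}$ we have $\treet{v(u_1,\ldots,u_n)}\in P$. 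By Lemma~\ref{Lem_logic_char_form}, $u_i\models_{\mathcal{F}^+}\chi_{w_i}$ iff $w_i\sqsubseteq_{\mathcal{F}^+}u_i$; since $\sqsubseteq_{\mathcal{F}^+}$ is reflexive, $w_i$ itself is among the $u_i$, which gives the ``$\Leftarrow$'' direction of the case trivially. The ``$\Rightarrow$'' direction is the crux: we must show that if $\treet{v(w_1,\ldots,w_n)}\in P$ then $\treet{v(u_1,\ldots,u_n)}\in P$ whenever $w_i\sqsubseteq_{\mathcal{F}^+}u_i$. This is exactly where the equivalence of $\sqsubseteq_{\mathcal{F}^+}$ with applicative $\mathfrak{P}$-similarity is needed — but that is Theorem~\ref{Cor_bisim_is_log_equiv}, which is proved \emph{after} this theorem. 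So the honest route is to first establish Proposition~\ref{Prop_sim_coinc_log} ($\precsim\ =\ \sqsubseteq_{\mathcal{V}^+}$, which does not depend on this theorem), then prove the $\mathcal{F}$-side characteristic formula lemma is compatible with $\mathcal{V}$-similarity, and use that $w_i\sqsubseteq_{\mathcal{F}^+}u_i$ together with part~\ref{rtp_equi_1} (already proved in the same induction, or for the needed subformulas) transfers to $w_i\sqsubseteq_{\mathcal{V}^+}u_i$, hence $w_i\precsim u_i$, hence — by the simulation clause for functions and the fact that $\precsim^\circ$ is compatible (Theorem~\ref{Thm_sim_compat}) — $v(w_1,\ldots,w_n)\precsim^\mathfrak{c} v(u_1,\ldots,u_n)$, and finally $\treet{v(w_1,\ldots,w_n)}\in P\Rightarrow\treet{v(u_1,\ldots,u_n)}\in P$ by clause~\ref{ecps_sim3} of Definition~\ref{Def_ecps_simulation}. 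Threading this dependency cleanly — making sure we only invoke results that genuinely precede or are independent of Theorem~\ref{Thm_logic_preord_equiv} — is the main obstacle; the rest is bookkeeping.
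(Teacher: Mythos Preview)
Your handling of Statement~\ref{rtp_equi_1} is correct and matches the paper: induction on the formation of the $\mathcal{F}^+$-formula, with the function-type case discharged exactly as you describe.

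For Statement~\ref{rtp_equi_2}, you have correctly identified the crux (the ``$\Rightarrow$'' direction of the $(w_1,\ldots,w_n)\mapsto P$ case) and the right external ingredient (Proposition~\ref{Prop_sim_coinc_log} plus compatibility of $\sqsubseteq_{\mathcal{V}^+}$). However, your proposed induction scheme does not provide enough hypothesis to carry out the transfer step. You want to go from $w_i\sqsubseteq_{\mathcal{F}^+}u_i$ to $w_i\sqsubseteq_{\mathcal{V}^+}u_i$. Unfolding, this requires: for every $\mathcal{V}^+$-formula $\psi:B_i$, if $w_i\models_{\mathcal{V}^+}\psi$ then $u_i\models_{\mathcal{V}^+}\psi$. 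To bridge via $\sqsubseteq_{\mathcal{F}^+}$ you need Statement~\ref{rtp_equi_2} \emph{for $\psi$ itself}, i.e.\ $w_i\models_{\mathcal{V}^+}\psi\Leftrightarrow w_i\models_{\mathcal{F}^+}\psi^\sharp$ (and likewise for $u_i$). But $\psi$ is an \emph{arbitrary} $\mathcal{V}^+$-formula of type $B_i$; it is not a subformula of $(w_1,\ldots,w_n)\mapsto P$ (whose only ``sub-things'' are the values $w_i$, not formulas). So a structural induction on the formula gives you nothing here. Your fallback of invoking Part~\ref{rtp_equi_1} does not help either: Part~\ref{rtp_equi_1} only characterises $\mathcal{V}^+$-formulas of the form $\phi^\flat$, and you have no reason to believe every $\psi$ is semantically of that form without already knowing Statement~\ref{rtp_equi_2}.

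The paper's fix is to prove Statement~\ref{rtp_equi_2} by induction on the \emph{type} $A$, with an inner induction on the formula to handle the connectives. At the step $A=\neg(B_1,\ldots,B_n)$, the outer induction hypothesis gives Statement~\ref{rtp_equi_2} for \emph{all} $\mathcal{V}^+$-formulas of each component type $B_i$, which is exactly what is needed to push $w_i\sqsubseteq_{\mathcal{F}^+}u_i$ to $w_i\sqsubseteq_{\mathcal{V}^+}u_i$; then Proposition~\ref{Prop_sim_coinc_log} and compatibility finish the case as you outlined.
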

\begin{proof}
\textbf{Statement \ref{rtp_equi_1}.} For computation formulas the result is immediate because they do not change when translated. 

For value formulas we prove the following property:
\begin{equation*}
\Phi(\phi, A) = (\phi:A \implies (\forall \vdash v:A.\ v \models_{\mathcal{F}^+} \phi \Longleftrightarrow v \models_{\mathcal{V}^+} \phi^\flat))
\end{equation*}
by induction on the rules in Figure \ref{Fig_logic_formls}, which specify when $\phi:A$ is well-formed.

In the case \textsc{(nat)}, $\phi=\{n\}$. The equivalence holds because $\{n\}^\flat = \{n\}$ and the satisfaction relation does not change with the translation. 

The cases for the logical connectives follow from the induction hypothesis. 

In the case \textsc{(val)}, $\phi = (\phi_1,\ldots,\phi_n)\mapsto P$. Let $v\models_{\mathcal{F}^+}\phi$ and consider some arbitrary $w_1\models_{\mathcal{V}^+}\phi_1^\flat,\ldots,w_n\models_{\mathcal{V}^+}\phi_n^\flat$. By the induction hypothesis we know ${w_1\models_{\mathcal{F}^+}\phi_1},\ldots,$ ${w_n\models_{\mathcal{F}^+}\phi_n}$. So by assumption $\treet{v(w_1,\ldots,w_n)} \in P$. Therefore it is true that, in $\mathcal{V}^+$, $v$ satisfies $(w_1,\ldots,w_n)\mapsto P$ so in general $v \models_{\mathcal{V}^+} \phi^\flat$.

For the reverse implication let $v\models_{\mathcal{V}^+}\phi^\flat$ and consider some arbitrary $w_1\models_{\mathcal{F}^+}\phi_1,\ldots,$ $w_n\models_{\mathcal{F}^+}\phi_n$. By the induction hypothesis $w_1\models_{\mathcal{V}^+}\phi_1^\flat,\ldots,w_n\models_{\mathcal{V}^+}\phi_n^\flat$, so by assumption $\treet{v(w_1,\ldots,w_n)} \in P$. Therefore $v\models_{\mathcal{F}^+}\phi$ as required.

\paragraph{Statement \ref{rtp_equi_2}.} For computation formulas $P^\sharp=P$ so the equivalence holds. 

For value formulas proceed by induction on the type $A$. If $A=\mathtt{nat}$, then the formulas $\phi$ and $\phi^\sharp$ represent the same set of natural numbers. Therefore, $v\models_{\mathcal{V}^+}\phi$ is equivalent to $v\models_{\mathcal{F}^+}\phi^\sharp$. For $A=\mathtt{unit}$ the only formulas are $true$ and $false$ so the equivalence holds trivially. For $A=\neg(B_1,\ldots,B_n)$ the induction hypothesis is, for each $B_i$:
\begin{equation*}
\text{For any formula }\phi'\text{ in }\mathcal{V}^+\text{, }\phi':B_i\text{ implies that for any value }\vdash v:B_i\text{:}
\end{equation*}
\begin{equation*}
v \models_{\mathcal{V}^+} \phi' \Longleftrightarrow v \models_{\mathcal{F}^+} \phi'^\sharp.
\end{equation*}

 We do an additional induction on $\phi$:

\paragraph{Case $\phi=(w_1,\ldots,w_n)\mapsto P : \neg(B_1,\ldots,B_n)$.} Assume $v\models_{\mathcal{V}^+}\phi$, that is $\treet{v(w_1,\ldots,w_n)} \in P$. We need to prove that for any $u_1:B_1,\ldots,u_n:B_n$ such that $u_i\models_{\mathcal{F}^+}\chi_{w_i}$  for all $i$, we have $\treet{v(u_1,\ldots,u_n)} \in P$.

By the definition of $\chi_{w_i}$ we know that $w_i \sqsubseteq_{\mathcal{F}^+} u_i$. We can show $w_i \sqsubseteq_{\mathcal{V}^+} u_i$ as follows: consider an arbitrary $\psi:B_i$ such that $w_i \models_{\mathcal{V}^+}\psi$. Then by the induction hypothesis for the type $B_i$ we know $w_i \models_{\mathcal{F}^+}\psi^\sharp$. Hence deduce $u_i \models_{\mathcal{F}^+}\psi^\sharp$ from $w_i \sqsubseteq_{\mathcal{F}^+} u_i$. Again from the induction hypothesis for $B_i$, we have $u_i \models_{\mathcal{V}^+}\psi$, as required.

Now that we have $w_i \sqsubseteq_{\mathcal{V}^+} u_i$ we can use compatibility of $\sqsubseteq_{\mathcal{V}^+}$, Proposition \ref{Prop_sim_coinc_log}, and reflexivity to deduce:
\begin{equation*}
v(w_1,\ldots,w_n) \sqsubseteq_{\mathcal{V}^+} v(u_1,\ldots,u_n).
\end{equation*}
So from $\treet{v(w_1,\ldots,w_n)} \in P$ we get the desired result $\treet{v(u_1,\ldots,u_n)} \in P$.

For the reverse implication assume $v\models_{\mathcal{F}^+}(\chi_{w_1},\ldots,\chi_{w_n})\mapsto P$. We need to prove that $\treet{v(w_1,\ldots,w_n)} \in P$. This follows from the fact that $w_i\models_{\mathcal{F}^+}\chi_{w_i}$ because $\sqsubseteq_{\mathcal{F}^+}$ is reflexive.

\paragraph{Case $\phi=\lor_{i\in I}\varphi_i : \neg(B_1,\ldots,B_n)$.} From the type of $\phi$ we know that for all $i$, $\varphi_i : \neg(B_1,\ldots,B_n)$. This means that the induction hypothesis for $\varphi_i$ gives us:
\begin{equation*}
\text{For any value }\vdash v:\neg(B_1,\ldots,B_n)\text{:}
\end{equation*}
\begin{equation*}
v \models_{\mathcal{V}^+} \varphi_i \Longleftrightarrow v \models_{\mathcal{F}^+} \varphi_i^\sharp.
\end{equation*}

Assume $v\models_{\mathcal{V}^+}\lor_{i\in I}\varphi_i$. There exists $j\in I$ such that $v\models_{\mathcal{V}^+}\varphi_j$. By the induction hypothesis for $\varphi_j$ we have that $v\models_{\mathcal{F}^+}\varphi_j^\sharp$. So $v\models_{\mathcal{F}^+} \phi^\sharp$.

The reverse implication is analogous.

\paragraph{Case $\phi=\land_{i\in I}\varphi_i : \neg(B_1,\ldots,B_n)$.} Analogous to the previous case.
\end{proof}

\begin{theorem}\label{Thm_logic_equiv_equiv}
Given a decomposable set $\mathfrak{P}$ of Scott-open observations, 
the logics $\mathcal{F}$ and $\mathcal{V}$ are equi-expressive.
\end{theorem}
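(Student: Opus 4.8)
The plan is to re-run the proof of Theorem~\ref{Thm_logic_preord_equiv} essentially verbatim, now permitting the negation rule \textsc{(neg)} in both logics, and to verify that the two extra ingredients this requires---a characteristic formula valid for the full logics and compatibility of the full logical preorders---are already at hand. Concretely, I would establish the two symmetric claims, using the translations of Figure~\ref{Fig_logic_trans}: for every $\mathcal{F}$-formula $\phi:A$ and every closed value $\vdash v:A$, $v\models_{\mathcal{F}}\phi \iff v\models_{\mathcal{V}}\phi^\flat$ (the computation-formula case $t\models_{\mathcal{F}}P \iff t\models_{\mathcal{V}}P^\flat$ being trivial since $P^\flat=P$); and dually, for every $\mathcal{V}$-formula $\phi:A$ and $\vdash v:A$, $v\models_{\mathcal{V}}\phi \iff v\models_{\mathcal{F}}\phi^\sharp$.

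First I would record two preliminary observations. Since $\mathcal{F}$ and $\mathcal{V}$ contain negation, each preorder $\sqsubseteq_{\mathcal{F}}$, $\sqsubseteq_{\mathcal{V}}$ is in fact symmetric: from $v_1\sqsubseteq v_2$ and $v_2\models\phi$, if $v_1\not\models\phi$ then $v_1\models\neg\phi$, hence $v_2\models\neg\phi$, a contradiction; so $\sqsubseteq_{\mathcal{F}}$ coincides with $\equiv_{\mathcal{F}}$ and $\sqsubseteq_{\mathcal{V}}$ with $\equiv_{\mathcal{V}}$. By Proposition~\ref{Prop_sim_coinc_log}(2), $\equiv_{\mathcal{V}}$ coincides with applicative $\mathfrak{P}$-bisimilarity, whose open extension is compatible (Theorem~\ref{Thm_sim_compat}); this is precisely the fact that the $\textsc{(val)}$/$\textsc{(val')}$ case of Theorem~\ref{Thm_logic_preord_equiv} used about $\sqsubseteq_{\mathcal{V}^+}$. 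Secondly, $\mathcal{F}$ and $\mathcal{V}$ are closed under countable conjunction, so Lemma~\ref{Lem_logic_char_form} gives, for each closed value $v$, a characteristic formula $\chi_v\in\mathcal{F}$ with $u\models_{\mathcal{F}}\chi_v \iff v\sqsubseteq_{\mathcal{F}} u$ --- exactly the formula occurring in the definition of $(-)^\sharp$.

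With these in place the argument is the induction of Theorem~\ref{Thm_logic_preord_equiv}. For the first claim I would induct on the well-formedness rules of Figure~\ref{Fig_logic_formls}: the cases \textsc{(nat)}, \textsc{(disj)}, \textsc{(conj)}, \textsc{(val)} are unchanged, and the new case \textsc{(neg)} is immediate, $v\models_{\mathcal{F}}\neg\phi \iff v\not\models_{\mathcal{F}}\phi \iff v\not\models_{\mathcal{V}}\phi^\flat \iff v\models_{\mathcal{V}}(\neg\phi)^\flat$, since $(\neg\phi)^\flat=\neg\phi^\flat$. For the second claim I would induct on the type $A$ with an inner induction on $\phi$; the cases $A=\mathtt{nat}$, $A=\mathtt{unit}$, and the inner cases $\mapsto$, $\lor$, $\land$ proceed as before (the $\mapsto$ case again deriving $w_i\sqsubseteq_{\mathcal{V}}u_i$ from $w_i\sqsubseteq_{\mathcal{F}}u_i$ via the type-level induction hypothesis, then invoking compatibility of $\sqsubseteq_{\mathcal{V}}$ in place of $\sqsubseteq_{\mathcal{V}^+}$), and the new inner case $\phi=\neg\varphi$ follows at once from the inner hypothesis at the same type, using $(\neg\varphi)^\sharp=\neg\varphi^\sharp$. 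Putting the two claims together shows $(-)^\flat$ and $(-)^\sharp$ witness equi-expressivity.

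I expect the only genuine care required to be bookkeeping about dependencies rather than any new mathematics: Theorem~\ref{Thm_logic_equiv_equiv} must be founded on Proposition~\ref{Prop_sim_coinc_log} (proved directly in Appendix~\ref{App_modal_log}) and on Theorem~\ref{Thm_sim_compat}, not on Theorem~\ref{Cor_bisim_is_log_equiv}, which itself relies on this theorem; one must check that the compatibility of $\sqsubseteq_{\mathcal{V}}$ used in the $\mapsto$ case is obtained only along that route, so that no vicious circle is created. The negation cases themselves are entirely routine, and the remainder is a faithful re-run of the positive-fragment proof.
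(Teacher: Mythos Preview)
Your proposal is correct and mirrors the paper's own proof: re-run the induction of Theorem~\ref{Thm_logic_preord_equiv} with the extra \textsc{(neg)} case handled trivially via $(\neg\phi)^\flat=\neg\phi^\flat$ and $(\neg\phi)^\sharp=\neg\phi^\sharp$, and in the $\mapsto$ case replace compatibility of $\sqsubseteq_{\mathcal{V}^+}$ by compatibility of $\sqsubseteq_{\mathcal{V}}=\equiv_{\mathcal{V}}$, obtained from Proposition~\ref{Prop_sim_coinc_log}(2) and Theorem~\ref{Thm_sim_compat}. Your explicit check that this avoids circularity through Theorem~\ref{Cor_bisim_is_log_equiv} is well taken and in fact clearer than the paper's somewhat loose appendix remark that ``$\sqsubseteq_{\mathcal{F}}$ is compatible''.
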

\begin{proof}
We need to prove the same statements as in Theorem \ref{Thm_logic_preord_equiv}, where $\mathcal{V}^+$ is replaced by $\mathcal{V}$ and $\mathcal{F}^+$ is replaced by $\mathcal{F}$. The proof is very similar and the differences are pointed out in Appendix \ref{App_modal_log}.
\end{proof}

Using all the results in this section we can finally prove Theorem \ref{Cor_bisim_is_log_equiv}, which says that $\mathcal{F}$-logical equivalence coincides with bisimilarity. 

\begin{proof}[Proof of Theorem \ref{Cor_bisim_is_log_equiv}]
From Theorems \ref{Thm_logic_preord_equiv} and \ref{Thm_logic_equiv_equiv} we can deduce that:
\begin{gather*}
v \sqsubseteq_{\mathcal{V}^+} u \quad \Longleftrightarrow \quad v \sqsubseteq_{\mathcal{F}^+} u	\\
v \equiv_{\mathcal{V}} u \quad \Longleftrightarrow \quad v \equiv_{\mathcal{F}} u
\end{gather*}
and similarly for computations.

Then by Proposition \ref{Prop_sim_coinc_log} we have $(\precsim)=(\sqsubseteq_{\mathcal{F}^+})$ and $(\sim)=(\equiv_{\mathcal{F}})$. From Theorem~\ref{Thm_sim_compat} we know $\precsim^\circ$ and $\sim^\circ$ are compatible, so this is also the case for $\sqsubseteq_{\mathcal{F}^+}^\circ$ and $\equiv_{\mathcal{F}}^\circ$.
\end{proof}

Notice that the proofs of Theorems \ref{Thm_logic_preord_equiv} and \ref{Thm_logic_equiv_equiv} make use of compatibility of $\sqsubseteq_{\mathcal{V}^+}$ and $\equiv_\mathcal{V}$, which was established via Howe's method. So a direct proof of Theorem \ref{Cor_bisim_is_log_equiv} would require us to prove a compatibility property of $\sqsubseteq_{\mathcal{F}^+}$ first. As we have seen in the previous chapter, proofs of compatibility are laborious. Therefore, the method of going through the logic $\mathcal{V}$ to prove Theorem \ref{Cor_bisim_is_log_equiv} is justified.

\section{Chapter Summary}

This chapter defined a logic $\mathcal{F}$ in which each formula expresses a property of an ECPS program (Definition~\ref{Def_logic_f}). Formulas which concern computations are elements of the set of observations $\mathfrak{P}$. Formulas for function values have the form $(\phi_1,\phi_2,\ldots,\phi_n)\mapsto P$. They assert that, if the arguments $x_1,\ldots, x_n$ of a function satisfy $\phi_1,\ldots,\phi_n$ respectively, the resulting computation is in $P\in\mathfrak{P}$.

Recall the successor function from equation~\ref{Eq_ecps_succ_func}:
\begin{equation*}
f = \lbd{(n,k)}{(\mathtt{nat},\neg\mathtt{nat})}{(k\ \mathtt{succ}(n))} : \neg(\mathtt{nat},\neg\mathtt{nat}).
\end{equation*}
We can see that it satisfies the following formula:
\begin{equation*}
\phi=(\{2\},\{3\}\mapsto\Diamond)\mapsto\Diamond.
\end{equation*}
This says that, given argument $\overline{2}$ and a continuation $k$ which may terminate for input $\overline{3}$, the body of $f$ may terminate. Consider the formula $\phi'=\{2\}\mapsto\Diamond\{3\}$, which describes a program in direct style that may return $\overline{3}$. This is similar to the formulas discussed in Example~\ref{Eg_boxdi}, but it is not a valid $\mathcal{F}$-formula. Formula $\phi$ can be viewed as a translation of $\phi'$ describing a program in continuation-passing style instead.

The goal of this chapter was to prove that program equivalence induced by the logic $\mathcal{F}$ coincides with applicative bisimilarity (Theorem~\ref{Cor_bisim_is_log_equiv}). Hence, according to Theorem~\ref{Thm_sim_compat} from the previous chapter, for a decomposable set $\mathfrak{P}$ of Scott-open observations, $\mathcal{F}$-logical equivalence is compatible. 

To prove this, we defined a logic $\mathcal{V}$ which is similar to $\mathcal{F}$, but in which ECPS values can appear inside formulas. From this point of view $\mathcal{V}$ is not satisfactory as a specification language because we need knowledge of the syntax of ECPS to express properties of programs.

Nevertheless, it is relatively straightforward to prove that $\mathcal{V}$-logical equivalence coincides with applicative bisimilarity (Proposition~\ref{Prop_sim_coinc_log}), but we do not know any proof of a similar result for $\mathcal{F}$-logical equivalence. Instead, we proved that $\mathcal{F}$ and $\mathcal{V}$ are equi-expressive, using translations between the two logics (Theorem~\ref{Thm_logic_equiv_equiv}). Thus, we obtained a proof of Theorem~\ref{Cor_bisim_is_log_equiv}. This theorem will be used in the next chapter to prove the main result of the dissertation: that $\mathcal{F}$-logical equivalence coincides with contextual equivalence.

\chapter{Contextual Equivalence for ECPS}\label{Chap_ctx_equiv}

This chapter defines contextual equivalence for ECPS coinductively and proves that it is compatible and an equivalence relation. Then, contextual equivalence is proved to coincide with applicative bisimilarity for ECPS (Theorem \ref{Thm_bisim_is_ctx_equiv}). As a result, they both coincide with logical equivalence induced by the logic $\mathcal{F}$ (Corollary \ref{Cor_logeq_is_ctxeq}). Therefore, the logic $\mathcal{F}$ characterises contextual equivalence. Establishing this was the main goal of the dissertation. Finally, we present an alternative definition of contextual equivalence using program contexts, and prove it equivalent with the coinducitve definition (Theorem \ref{Thm_ctxctx_is_ctxcoind}).

\section{Contextual Equivalence Coinductively}\label{Sec_ctx}

This section presents a coinductive definition of contextual preorder and equivalence, initially proposed by Lassen \cite{LasPhd} and Gordon \cite{Gor98}. Contextual preorder is defined as the greatest compatible and adequate relation, for a suitable definition of adequacy.

The advantage of this definition is that we do not need to deal with contexts explicitly. This is important in the case of ECPS where contexts are duplicated because of the distinction between values and computations, as Section \ref{Sec_ctx_ctx} will show.

\begin{definition}[Adequacy]
Consider a set of observations $\mathfrak{P}$ and a well-typed relation on possibly open terms $\mathcal{R}=(\mathcal{R}^\mathfrak{v}_A,\mathcal{R}^\mathfrak{c})$, where $\mathcal{R}^\mathfrak{c}$ relates computations. The relation $\mathcal{R}$ is $\mathfrak{P}$-adequate if:
\begin{equation*}
\forall s,t.\ \vdash s\mathrel{\mathcal{R}^\mathfrak{c}} t \implies \forall P\in\mathfrak{P}.\ \treet{s}\in P \implies \treet{t}\in P.
\end{equation*}
The relation $\mathcal{R}$ is $\mathfrak{P}$-biadequate if:
\begin{equation*}
\forall s,t.\ \vdash s\mathrel{\mathcal{R}^\mathfrak{c}} t \implies \forall P\in\mathfrak{P}.\ \treet{s}\in P \Longleftrightarrow \treet{t}\in P.
\end{equation*}
\end{definition}

The definition of adequacy is motivated by the fact that, in ECPS, the observable behaviour of a program $s$, in the sense of Section \ref{Sec_prog_equiv_intro}, is whether $s\in P$, where  $P\in\mathfrak{P}$. So adequacy checks that $t$ simulates the observable behaviour of $s$.

\begin{definition}[Contextual preorder]
Let $\mathbb{CA}$ be the set of well-typed relations on possibly open terms that are both compatible and $\mathfrak{P}$-adequate. Define the contextual preorder $\ctxpre$ to be $\bigcup\mathbb{CA}$.
\end{definition}

The next proposition establishes that contextual preorder is a precongruence. This will help prove that contextual equivalence is a congruence, hence a well-behaved notion of  program equivalence.

\begin{proposition}\label{Prop_ctxpre_comp}
The contextual preorder $\ctxpre$ is a preorder, and is moreover compatible and $\mathfrak{P}$-adequate. Thus, it is the greatest compatible and $\mathfrak{P}$-adequate preorder.
\end{proposition}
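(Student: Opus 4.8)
The statement to prove is that $\ctxpre = \bigcup\mathbb{CA}$ is itself a preorder, compatible, and $\mathfrak{P}$-adequate.

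\medskip

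The plan is to verify each of the three properties by exploiting the fact that $\ctxpre$ is defined as a union of a family $\mathbb{CA}$ of relations, each of which is compatible and $\mathfrak{P}$-adequate. The crucial observation throughout is that all three properties under consideration are preserved under unions of relations, provided one is slightly careful. First I would handle $\mathfrak{P}$-adequacy: if $\vdash s \mathrel{\ctxpre^{\mathfrak c}} t$, then by definition $(s,t)\in\mathcal{R}^{\mathfrak c}$ for some $\mathcal{R}\in\mathbb{CA}$; since $\mathcal{R}$ is $\mathfrak{P}$-adequate, $\treet{s}\in P$ implies $\treet{t}\in P$ for all $P\in\mathfrak{P}$, which is exactly what is needed. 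Second, compatibility: the compatibility rules in Figure~\ref{Fig_compat} all have the shape ``from premises relating immediate subterms, conclude a relation between the compound terms''. Given a premise like $\Gamma\vdash v\ \ctxpre^{\mathfrak v}\ v'$, again pick $\mathcal{R}\in\mathbb{CA}$ with $(v,v')\in\mathcal{R}$; but for rules with several premises (e.g.\ \textsc{(Comp6)}, \textsc{(Comp10)}) the different premises may be witnessed by different members of $\mathbb{CA}$. Here I would use the observation that the union of two compatible relations need not be compatible, so instead I should argue that $\mathbb{CA}$ is closed under a suitable operation, or — cleaner — that $\ctxpre$ is reflexive (which follows once we know $\mathbb{CA}$ contains a reflexive relation, e.g.\ the identity, or that $\ctxpre$ is a preorder) and then invoke Lemma~\ref{Lem_comp_1premise_rules} to reduce each multi-premise rule to its single-premise versions, each of which only mentions one $\ctxpre$-premise and can thus be discharged using a single witness $\mathcal{R}\in\mathbb{CA}$ together with reflexivity of $\ctxpre$ on the remaining arguments.

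\medskip

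Third, transitivity (to get the preorder property): if $s\mathrel{\ctxpre} t$ via $\mathcal{R}_1\in\mathbb{CA}$ and $t\mathrel{\ctxpre} u$ via $\mathcal{R}_2\in\mathbb{CA}$, I want $s\mathrel{\ctxpre} u$. The composite $\mathcal{R}_2\circ\mathcal{R}_1$ is again $\mathfrak{P}$-adequate (adequacy composes by chaining the implications $\treet{s}\in P\Rightarrow\treet{t}\in P\Rightarrow\treet{u}\in P$) and compatible (compatible relations are closed under relational composition, since the compatibility rules are closed under composition of their premises — this is a standard fact, provable rule by rule). Hence $\mathcal{R}_2\circ\mathcal{R}_1\in\mathbb{CA}$, so it is contained in $\ctxpre$, giving $s\mathrel{\ctxpre} u$. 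Reflexivity is similar but easier: the identity relation on well-typed terms is trivially compatible (by \textsc{(Comp1)}--\textsc{(Comp10)} with equal premises) and $\mathfrak{P}$-adequate (trivially), so it lies in $\mathbb{CA}$ and hence in $\ctxpre$. Once reflexivity and transitivity are in hand, $\ctxpre$ is a preorder; combined with the compatibility and adequacy arguments above, it is the greatest compatible and $\mathfrak{P}$-adequate relation, and being a preorder it is in particular the greatest such \emph{preorder}.

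\medskip

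The main obstacle I anticipate is the compatibility argument: the naive ``pick a witness in $\mathbb{CA}$'' strategy fails for multi-premise rules because the witnesses differ, and the union of compatible relations is not compatible in general. The clean way around this is the two-step structure just described — first establish that $\mathbb{CA}$ is closed under relational composition and contains the identity, deduce that $\ctxpre$ is a reflexive and transitive (hence a preorder), and only then prove compatibility using Lemma~\ref{Lem_comp_1premise_rules} to work with single-premise rules, discharging the unique relational premise by a single $\mathcal{R}\in\mathbb{CA}$ and the side-condition typing premises by reflexivity of $\ctxpre$. I would therefore reorder the proof so that closure of $\mathbb{CA}$ under composition comes first. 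Everything else is routine rule-chasing.
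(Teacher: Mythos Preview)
Your proposal is correct and matches the paper's proof essentially step for step: the paper first shows the identity is in $\mathbb{CA}$ (reflexivity) and that $\mathbb{CA}$ is closed under relational composition (transitivity), then proves compatibility of $\ctxpre$ by handling the axiom rules via reflexivity, the single-premise rules \textsc{(Comp3)}, \textsc{(Comp5)} via a single witness $\mathcal{R}\in\mathbb{CA}$, and the multi-premise rules via Lemma~\ref{Lem_comp_1premise_rules} exactly as you suggest; adequacy is also proved by picking a witness. Your diagnosis of the multi-premise obstacle and the fix (establish the preorder first, then invoke the single-premise lemma) is precisely the structure the paper adopts.
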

\begin{proof}
The proof follows the structure of the proof of Proposition 4 from \cite{DBLP:journals/corr/LagoGL17}.
To prove reflexivity, we show that the open identity relation $\mathcal{I}$ is in $\mathbb{CA}$. To show transitivity it suffices to show that the composition of relations in $\mathbb{CA}$ is itself in $\mathbb{CA}$. The relation $\ctxpre$ is shown compatible using the definition of $\mathbb{CA}$. The complete proof of this proposition can be found in Appendix \ref{App_ctxt}.
\end{proof}

\begin{definition}[Contextual equivalence]\label{Def_ctxeq_coind}
Let $\mathbb{CAS}$ be the set of well-typed relations on possibly open terms that are both compatible and $\mathfrak{P}$-biadequate. Define contextual equivalence $\ctxeq$ to be $\bigcup\mathbb{CAS}$.
\end{definition}

\begin{proposition}\label{Prop_ctxeq_is_ctxpreop}
Contextual equivalence is the intersection of the contextual preorder with its converse:
\begin{equation*}
(\ctxeq) = (\ctxpre)\cap(\ctxpre)^{\textit{op}}.
\end{equation*}
\end{proposition}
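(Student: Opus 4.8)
The plan is to prove the two inclusions separately, following the same pattern used for bisimilarity in the proof of Proposition~\ref{Prop_bisim_is_sim_and_simop}. Two elementary observations about compatible relations will be used throughout. First, the converse of a compatible relation is again compatible: each rule in Figure~\ref{Fig_compat} is a Horn clause whose premises and conclusion are all atoms relating two terms by $\mathcal{R}$, so reversing every such atom turns the instance of the rule for $\mathcal{R}$ into the instance of the same rule for $\mathcal{R}^{\textit{op}}$ (with the two sides swapped); hence $\mathcal{R}$ is closed under a rule iff $\mathcal{R}^{\textit{op}}$ is. Second, the intersection $\mathcal{R}_1\cap\mathcal{R}_2$ of two compatible relations is compatible: if all premises of an instance of a compatibility rule hold in $\mathcal{R}_1\cap\mathcal{R}_2$, then they hold in $\mathcal{R}_1$ and in $\mathcal{R}_2$ separately, so the conclusion holds in $\mathcal{R}_1$ and in $\mathcal{R}_2$, hence in $\mathcal{R}_1\cap\mathcal{R}_2$. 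Finally, note that $\mathfrak{P}$-biadequacy of $\mathcal{R}$ is equivalent to $\mathfrak{P}$-biadequacy of $\mathcal{R}^{\textit{op}}$, and either implies $\mathfrak{P}$-adequacy of both $\mathcal{R}$ and $\mathcal{R}^{\textit{op}}$.

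For the inclusion $(\ctxeq)\subseteq(\ctxpre)\cap(\ctxpre)^{\textit{op}}$, take any related pair $(s,t)$ and pick $\mathcal{R}\in\mathbb{CAS}$ with $(s,t)\in\mathcal{R}$. Since $\mathcal{R}$ is compatible and $\mathfrak{P}$-biadequate, it is in particular compatible and $\mathfrak{P}$-adequate, so $\mathcal{R}\in\mathbb{CA}$ and therefore $\mathcal{R}\subseteq\ctxpre$; in particular $(s,t)\in\ctxpre$. By the observations above, $\mathcal{R}^{\textit{op}}$ is compatible and $\mathfrak{P}$-adequate as well, so $\mathcal{R}^{\textit{op}}\in\mathbb{CA}$ and $\mathcal{R}^{\textit{op}}\subseteq\ctxpre$; thus $(t,s)\in\ctxpre$, i.e.\ $(s,t)\in(\ctxpre)^{\textit{op}}$. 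Hence $(s,t)\in(\ctxpre)\cap(\ctxpre)^{\textit{op}}$.

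For the reverse inclusion, set $\mathcal{R}=(\ctxpre)\cap(\ctxpre)^{\textit{op}}$ and show $\mathcal{R}\in\mathbb{CAS}$. By Proposition~\ref{Prop_ctxpre_comp}, $\ctxpre$ is compatible and $\mathfrak{P}$-adequate; by the first observation $(\ctxpre)^{\textit{op}}$ is compatible; by the second observation $\mathcal{R}$ is compatible. It remains to check $\mathfrak{P}$-biadequacy: if $\vdash s\mathrel{\mathcal{R}^{\mathfrak c}}t$ then $\vdash s\ctxpre t$ and $\vdash t\ctxpre s$, so applying $\mathfrak{P}$-adequacy of $\ctxpre$ in both directions gives $\treet{s}\in P\iff\treet{t}\in P$ for every $P\in\mathfrak{P}$. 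Thus $\mathcal{R}\in\mathbb{CAS}$, and therefore $\mathcal{R}\subseteq\bigcup\mathbb{CAS}=(\ctxeq)$.

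The only point requiring care is the second observation — that intersection preserves compatibility — which relies on the specific Horn-clause shape of the rules in Figure~\ref{Fig_compat}; together with the dual fact for converses, both inclusions then reduce to bookkeeping with the definitions of $\mathbb{CA}$ and $\mathbb{CAS}$. One should also note that well-typed open relations are closed under intersection and converse, which is immediate since these operations are taken componentwise at each type.
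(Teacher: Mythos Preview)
Your proof is correct and follows essentially the same approach as the paper: both directions proceed by showing the relevant relation lies in $\mathbb{CA}$ or $\mathbb{CAS}$, using that converses and intersections preserve compatibility and that biadequacy yields adequacy in both directions. The only cosmetic difference is in the ``$\subseteq$'' direction, where the paper characterises $(\ctxpre)^{\textit{op}}$ as $\bigcup\{\mathcal{S}^{\textit{op}}\mid\mathcal{S}\in\mathbb{CA}\}$ and shows $\mathcal{R}$ lies in it directly, whereas you show $\mathcal{R}^{\textit{op}}\in\mathbb{CA}$ and conclude $\mathcal{R}\subseteq(\ctxpre)^{\textit{op}}$; these are the same argument repackaged.
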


The proof of the above proposition appears in Appendix \ref{App_ctxt}. This relationship between contextual equivalence and preorder in ECPS is expected. It also holds in the case of the untyped $\lambda$-calculus with generic effects, as shown by \cite{DBLP:conf/lics/LagoGL17}. Finally, we can prove contextual equivalence is a congruence.

\begin{proposition}
Contextual equivalence $\ctxeq$ is an equivalence relation, and is moreover compatible and $\mathfrak{P}$-biadequate. Thus, it is the greatest compatible and $\mathfrak{P}$-biadequate equivalence relation.
\end{proposition}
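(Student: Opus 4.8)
The plan is to prove that $\ctxeq=\bigcup\mathbb{CAS}$ is itself compatible and $\mathfrak{P}$-biadequate and is an equivalence relation; since by construction every relation in $\mathbb{CAS}$ is contained in $\ctxeq$, these three facts immediately yield that $\ctxeq$ is the \emph{greatest} compatible and $\mathfrak{P}$-biadequate equivalence relation. The most economical route reuses the two preceding results: by Proposition \ref{Prop_ctxeq_is_ctxpreop} we have $\ctxeq=(\ctxpre)\cap(\ctxpre)^{\textit{op}}$, and by Proposition \ref{Prop_ctxpre_comp} the relation $\ctxpre$ is a compatible, $\mathfrak{P}$-adequate preorder. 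So the whole argument reduces to transporting these properties across the intersection.

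From this decomposition each required property follows by a short argument. Reflexivity and transitivity of $(\ctxpre)\cap(\ctxpre)^{\textit{op}}$ are inherited directly from $\ctxpre$, and symmetry holds because $\bigl((\ctxpre)\cap(\ctxpre)^{\textit{op}}\bigr)^{\textit{op}}=(\ctxpre)^{\textit{op}}\cap(\ctxpre)$; hence $\ctxeq$ is an equivalence relation. For compatibility I would first observe that the rule set of Figure \ref{Fig_compat} is stable under reversing every related pair (each rule has the form ``if certain pairs are related then the pair built by applying one constructor on both sides is related''), so the converse of a compatible relation is compatible and in particular $(\ctxpre)^{\textit{op}}$ is compatible; next, that the intersection of two compatible relations is compatible, since each compatibility rule can be applied in both components simultaneously; together these give that $\ctxeq$ is compatible. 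For $\mathfrak{P}$-biadequacy, $\mathfrak{P}$-adequacy of $\ctxpre$ provides the implication $\treet{s}\in P\Rightarrow\treet{t}\in P$ whenever $\vdash s\mathrel{\ctxeq^{\mathfrak{c}}}t$, while $\mathfrak{P}$-adequacy of $(\ctxpre)^{\textit{op}}$ provides the converse implication, so the two combine into the required bi-implication. Finally, if $\mathcal{R}$ is any compatible and $\mathfrak{P}$-biadequate equivalence relation then $\mathcal{R}\in\mathbb{CAS}$, whence $\mathcal{R}\subseteq\bigcup\mathbb{CAS}=\ctxeq$, and we have just shown $\ctxeq$ is itself such a relation.

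A self-contained alternative, parallel to the proof of Proposition \ref{Prop_ctxpre_comp}, is to argue directly on $\mathbb{CAS}$: reflexivity by checking the open identity relation lies in $\mathbb{CAS}$; transitivity by checking $\mathbb{CAS}$ is closed under composition ($\mathfrak{P}$-biadequacy composes by chaining the bi-implications, compatibility by the usual induction on the compatibility rules); symmetry by closure of $\mathbb{CAS}$ under converse; and compatibility of the union $\ctxeq$ via the compatible-refinement technique, i.e.\ showing $\widehat{\ctxeq}$ is again $\mathfrak{P}$-biadequate and compatible, hence lies in $\mathbb{CAS}$ and so is contained in $\ctxeq$. The one genuinely delicate point in this direct approach is that a union of compatible relations need not be compatible, which is exactly why the compatible-refinement step is needed; the first route circumvents this entirely, reducing everything to the elementary facts that compatibility is preserved under converse and under binary intersection. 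I would therefore present the first route as the main proof and record the direct argument only as a remark.
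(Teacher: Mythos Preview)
Your main route is correct and is exactly the paper's approach: it uses Proposition~\ref{Prop_ctxeq_is_ctxpreop} to write $\ctxeq=(\ctxpre)\cap(\ctxpre)^{\textit{op}}$, obtains the equivalence-relation properties from $\ctxpre$ being a preorder together with symmetry of the intersection with the converse, and gets compatibility and biadequacy from the facts (established inside the proof of Proposition~\ref{Prop_ctxeq_is_ctxpreop}) that compatibility is preserved under converse and binary intersection and that adequacy of $\ctxpre$ in both directions yields biadequacy. Your alternative direct argument is not needed and is not what the paper does.
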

\begin{proof}
From Proposition \ref{Prop_ctxeq_is_ctxpreop} we know $(\ctxeq) = (\ctxpre)\cap(\ctxpre)^{\textit{op}}$. We have proved $\ctxpre$ is a preorder, so $(\ctxpre)^{\textit{op}}$ is also a preorder. The intersection of two preorders is another preorder so $\ctxeq$ is a preorder. Moreover, it is symmetric because it is the intersection of a relation and its converse. Thus $\ctxeq$ is an equivalence relation.
In the proof of Proposition~\ref{Prop_ctxeq_is_ctxpreop} we have shown $(\ctxpre)\cap(\ctxpre)^{\textit{op}}$ is compatible and biadequate, so $\ctxeq$ is as well.
\end{proof}

\section{Contextual Equivalence Coincides with Bisimilarity}\label{Sec_ctx_is_bisim}

This section shows that the coinductive notion of contextual equivalence (Definition~\ref{Def_ctxeq_coind}) coincides with applicative bisimilarity. Using the results from the previous chapter about logical equivalence, we can in fact deduce that all notions of program equivalence for ECPS considered so far are the same. 
Thus, we have defined a logic $\mathcal{F}$ whose induced program equivalence characterises contextual equivalence. This is the main contribution of this dissertation.

To obtain this result, the set of observations $\mathfrak{P}$ needs to satisfy one more condition apart from decomposability and Scott-openness, named \emph{consistency}.

\begin{definition}[Consistency]\label{Def_consistency}
A set of observations $\mathfrak{P}$ is consistent if there exists at least one observation $P_0\in\mathfrak{P}$ such that:
\begin{enumerate}
\item $P_0\not=\textit{Trees}_\Sigma$ and 
\item there exists at least one computation $t_0$ such that $\treet{t_0}\in P_0$.
\end{enumerate} 
\end{definition}

This definition says that $\mathfrak{P}$ contains a non-trivial observation $P_0$ which contains at least one computation tree $\treet{t_0}$. If this were not the case, then contextual equivalence would equate all terms, including natural numbers.

On the other hand, applicative bisimilarity would equate all computations, but not all natural numbers because $v \sim_\mathtt{nat}^\mathfrak{v} w \Longleftrightarrow v=w$. Hence, it would not be the case that applicative bisimilarity coincides with contextual equivalence.

However, program equivalence induced by a set of observations $\mathfrak{P}$ which does not satisfy consistency is not meaningful because not enough programs are distinguished. In particular, all computations would be equivalent to $loop$, the program that runs forever. Indeed, for all example of effects considered so far $\mathfrak{P}$ is consistent. Therefore consistency is a reasonable assumption.

\begin{theorem}\label{Thm_bisim_is_ctx_equiv}
Consider a decomposable set of Scott-open observations $\mathfrak{P}$ which is consistent. Then:
\begin{enumerate}
\item The open extension of applicative $\mathfrak{P}$-similarity, $\precsim^\circ$, coincides with the contextual preorder, $\ctxpre$. \item The open extension of applicative $\mathfrak{P}$-bisimilarity, $\sim^\circ$, coincides with contextual equivalence, $\ctxeq$.
\end{enumerate}
\end{theorem}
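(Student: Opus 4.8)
The plan is to obtain both parts from the maximality characterisations already in hand. Recall that $\precsim^\circ$ is compatible (Theorem~\ref{Thm_sim_compat}), a preorder (Lemma~\ref{Lemm_(bi)sim_preord}), and $\mathfrak{P}$-adequate, the last point being nothing but clause~\ref{ecps_sim3} of Definition~\ref{Def_ecps_simulation} read on the closed instances of $\precsim^{\circ,\mathfrak{c}}$; hence $\precsim^\circ \in \mathbb{CA}$ and therefore $\precsim^\circ \subseteq \ctxpre$ by Proposition~\ref{Prop_ctxpre_comp}. The substance of Part~1 is the reverse inclusion $\ctxpre \subseteq \precsim^\circ$, which I would prove by showing that the restriction of $\ctxpre$ to closed terms is an applicative $\mathfrak{P}$-simulation and then transporting this back to open terms.

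Checking the four clauses of Definition~\ref{Def_ecps_simulation} for $\ctxpre$ on closed terms: clause~\ref{ecps_sim1} is immediate since $\star$ is the only closed value of type $\mathtt{unit}$; clause~\ref{ecps_sim3} is exactly $\mathfrak{P}$-adequacy of $\ctxpre$, available from Proposition~\ref{Prop_ctxpre_comp}; clause~\ref{ecps_sim4} follows from compatibility of $\ctxpre$ (rule~\textsc{(Comp6L)} of Lemma~\ref{Lem_comp_1premise_rules}) together with reflexivity, which supplies $w_i \ctxpre w_i$ for the arguments. The one clause that requires work is clause~\ref{ecps_sim2}, $\overline{m} \ctxpre^{\mathfrak{v}}_{\mathtt{nat}} \overline{n} \implies m = n$, and this is precisely where the \emph{consistency} hypothesis (Definition~\ref{Def_consistency}) is used. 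Fix $P_0 \in \mathfrak{P}$ with $P_0 \neq \textit{Trees}_\Sigma$ and a computation $t_0$ with $\treet{t_0} \in P_0$. Since $P_0$ is Scott-open it is upward closed, so from $P_0 \neq \textit{Trees}_\Sigma$ it follows that the least tree $\bot = \treet{loop}$ does not lie in $P_0$. For each $m$ I would build a discriminator computation $d_m$ with a single free variable $x : \mathtt{nat}$, by iterated $\mathtt{case}$-analysis, such that $d_m[\overline{m}/x] \longrightarrow^* t_0$ while $d_m[\overline{k}/x] \longrightarrow^* loop$ for every $k \neq m$. If $\overline{m} \ctxpre \overline{n}$, then substituting into the (reflexively $\ctxpre$-related) term $d_m$ gives $d_m[\overline{m}/x] \ctxpre d_m[\overline{n}/x]$, using substitutivity of a compatible reflexive relation; since computation trees are invariant under $\longrightarrow$, we have $\treet{d_m[\overline{m}/x]} = \treet{t_0} \in P_0$, so $\mathfrak{P}$-adequacy of $\ctxpre$ forces $\treet{d_m[\overline{n}/x]} \in P_0$; but if $n \neq m$ then $\treet{d_m[\overline{n}/x]} = \bot \notin P_0$, a contradiction, so $m = n$.

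Thus $\ctxpre$ restricted to closed terms is an applicative $\mathfrak{P}$-simulation, and so is contained in the greatest one, $\precsim$. Extending back to open terms via substitutivity of $\ctxpre$ (a standard consequence of its compatibility and reflexivity; cf.\ Appendix~\ref{App_ctxt}), every closing instance of an open $\ctxpre$-pair is a closed $\ctxpre$-pair, hence a $\precsim$-pair, which is exactly $\ctxpre \subseteq \precsim^\circ$. Together with the easy inclusion this proves Part~1. Part~2 is then purely formal: open extension commutes with intersection and with taking converses, so Proposition~\ref{Prop_bisim_is_sim_and_simop} gives $\sim^\circ = (\precsim \cap \precsim^{\textit{op}})^\circ = \precsim^\circ \cap (\precsim^\circ)^{\textit{op}}$, Proposition~\ref{Prop_ctxeq_is_ctxpreop} gives $\ctxeq = \ctxpre \cap \ctxpre^{\textit{op}}$, and Part~1 identifies the two right-hand sides, whence $\sim^\circ = \ctxeq$.

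I expect the main obstacle to be clause~\ref{ecps_sim2}, i.e.\ the only place where the consistency assumption genuinely bites: constructing the family of discriminators $d_m$ and marshalling the substitutivity and adequacy facts about $\ctxpre$ with sufficient care. Everything else is an assembly of results already established.
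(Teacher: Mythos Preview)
Your overall architecture matches the paper's: the easy inclusion $\precsim^\circ\subseteq\ctxpre$ via compatibility and adequacy, the closed inclusion $\ctxpre\subseteq\precsim$ by verifying the four simulation clauses with consistency used for clause~\ref{ecps_sim2}, and then Part~2 from Part~1 via Propositions~\ref{Prop_bisim_is_sim_and_simop} and~\ref{Prop_ctxeq_is_ctxpreop}. The discriminator $d_m$ and the use of $\bot\notin P_0$ from Scott-openness are exactly what the paper does.

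The gap is in your passage from closed to open terms. You assert that ``every closing instance of an open $\ctxpre$-pair is a closed $\ctxpre$-pair'' as ``a standard consequence of compatibility and reflexivity''. It is not: compatibility lets you build larger related terms from smaller ones, but it does \emph{not} let you substitute closed values for free variables inside an arbitrary related open pair. Indeed, the whole point of Howe's method (Section~\ref{Sec_howe}) is that substitutivity must be established separately from compatibility; Lemma~\ref{Lemm_howe_subsitutivity} is proved by a dedicated induction precisely because it does not follow from Lemma~\ref{Lemm_howe_extension_misc}. There is no analogous substitutivity lemma for $\ctxpre$ in the paper, and you have not supplied one.

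The paper avoids needing substitutivity of $\ctxpre$ altogether. For open computations $\overrightarrow{x_i{:}A_i}\vdash s\mathrel{\ctxpre}t$ it uses compatibility to close up via a $\lambda$-abstraction, obtaining $\vdash\lambda\overrightarrow{x}.s\mathrel{\ctxpre}\lambda\overrightarrow{x}.t$; then the already-established closed inclusion gives $\lambda\overrightarrow{x}.s\precsim\lambda\overrightarrow{x}.t$; clause~\ref{ecps_sim4} of the simulation and Lemma~\ref{Lem_red_pres_sim} then yield $s[\overrightarrow{v}/\overrightarrow{x}]\precsim t[\overrightarrow{v}/\overrightarrow{x}]$ for all closing substitutions, which is $\precsim^\circ$. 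Function values are handled by the same wrap-then-apply manoeuvre. The genuinely non-trivial case your proposal overlooks is open $\mathtt{nat}$ values $\overrightarrow{x_j{:}A_j}\vdash v\mathrel{\ctxpre}w$: here the $\lambda$-wrap trick is unavailable (values do not reduce), and the paper builds a second discriminator---the recursive equality tester $\textit{eq}(y,v)$---uses compatibility to relate $\textit{eq}(y,v)$ and $\textit{eq}(y,w)$ as open computations, wraps \emph{that} in a $\lambda$, and then argues by contradiction using consistency exactly as in the closed case. This second consistency-based argument at $\mathtt{nat}$ is the missing piece in your open extension step.
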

\begin{proof}
\textbf{We first show $(\precsim^\circ)=(\ctxpre)$}. We have shown in Theorem \ref{Thm_sim_compat} that $\precsim^\circ$ is compatible under the current assumptions. Consider $\vdash s\precsim^\circ t$. Then $\vdash s\precsim t$ so by the definition of simulation we know that:
\begin{equation*}
\forall P\in\mathfrak{P}.\ \treet{s}\in P \implies \treet{t}\in P.
\end{equation*}
Therefore, $\precsim^\circ$ is adequate. Being both compatible and adequate, $\precsim^\circ$ is included in $\ctxpre$.

Now we need to show $(\ctxpre)\subseteq(\precsim^\circ)$. We first show that $\ctxpre$ restricted to closed terms is included in $\precsim$, then extend this to open terms. To do this, we show $\ctxpre$ restricted to closed terms is a simulation by checking it satisfies the four conditions in the definition of simulation (Definition \ref{Def_ecps_simulation}). 

We will concentrate on the case for natural numbers because it is the most interesting. It makes use of the existence of computation $t_0$ and observation $P_0$ from the definition of consistency. The complete proof can be found in Appendix \ref{App_ctxt}.

\begin{enumerate}
\item[2.] Assume $\vdash v\mathrel{(\ctxpre)^\mathfrak{v}_{\mathtt{nat}}} u$.  Consider the computation:
\begin{equation*}
\textit{loop}=(\mufix{f}{\lbd{x}{\mathtt{nat}}{f\ x}})(\overline{0}).
\end{equation*}
This computation leads to an infinite chain of reductions:
\begin{multline*}
(\mufix{f}{\lbd{x}{\mathtt{nat}}{f\ x}})\ \overline{0} \longrightarrow (\lbd{x}{\mathtt{nat}}{(\lbd{y}{\mathtt{nat}}{(\mufix{f}{\lbd{x}{\mathtt{nat}}{f\ x}})\ y})\ x})\ \overline{0} 	\longrightarrow^2	\\
(\mufix{f}{\lbd{x}{\mathtt{nat}}{f\ x}})\ \overline{0} \longrightarrow^*
\end{multline*}
so $\treet{loop}=\bot$.

Since $v$ and $u$ are closed values there exist $m$ and $n$ in $\mathbb{N}$ such that $u=\overline{n}$ and $v=\overline{m}$. Consider the following computation:
\begin{multline*}
\ifte{x}{n}{t_0}{\textit{loop}} =	\\
\caset{x}{\textit{loop}}{x_1}{\mathtt{case}\ldots\\ x_n\ \mathtt{in}\ \{\mathtt{zero}\Rightarrow t_0,\ \mathtt{succ}(x_{n+1})\Rightarrow\textit{loop}\}\ldots}
\end{multline*}
which evaluates to $t_0$ if $x=\overline{n}$ or loops otherwise. We know by consistency that $\treet{t_0}\in P_0\not=\textit{Trees}_\Sigma$. It must be the case that $\bot\not\in P_0$ because otherwise by upwards-closure of $P_0$ we would obtain $P_0=\textit{Trees}_\Sigma$.

By compatibility and reflexivity of $\ctxpre$ we know that:
\begin{equation}\label{ctxpre_is_sim_prf1}
\vdash (\ifte{\overline{n}}{n}{t_0}{\textit{loop}})\mathrel{(\ctxpre)^\mathfrak{c}} (\ifte{\overline{m}}{n}{t_0}{\textit{loop}}).
\end{equation}

Suppose by contradiction that $n\not=m$. Then:
\begin{gather*}
\treet{\ifte{\overline{n}}{n}{t_0}{\textit{loop}}} = \treet{t_0} \in P_0	\\
\treet{\ifte{\overline{m}}{n}{t_0}{\textit{loop}}} = \treet{\textit{loop}} = \bot \not\in P_0.
\end{gather*}
But this contradicts equation \ref{ctxpre_is_sim_prf1} because $\ctxpre$ is adequate. Therefore, $n=m$ and $v=u$ as required.

\item[3.] By adequacy of $\ctxpre$.

\item[4.] By compatibility of $\ctxpre$.
\end{enumerate}

So we have established $(\ctxpre)\subseteq(\precsim)$ for closed terms. Now we need to prove $(\ctxpre)\subseteq(\precsim^\circ)$ in general. To do this, we consider computations and each type of value separately. Again, we concentrate on the case for natural numbers. The cases for computations and function values are proved using compatibility of $\ctxpre$. We give the proof of the computation case as an example. The proof for function values can be found in Appendix~\ref{App_ctxt}.

\paragraph{If $\protect\overrightarrow{x_j:A_j} \vdash v \mathrel{(\ctxpre)^\mathfrak{v}_{\mathtt{nat}}} w$} then for any $\vdash \overrightarrow{u_j:A_j}$ there exist natural numbers $m$ and $k$ such that $v[\overrightarrow{u_j/x_j}]=\overline{m}$ and $w[\overrightarrow{u_j/x_j}]=\overline{k}$. We want to show $m=k$ and therefore:
\begin{equation*}
\forall (\vdash\overrightarrow{u_j:A_j}).\ v[\overrightarrow{u_j/x_j}]=w[\overrightarrow{u_j/x_j}].
\end{equation*}
This would imply by the definition of simulation and of open extension that $\overrightarrow{x_i:A_i}\vdash v\mathrel{\precsim^\circ} w$, as required. To do this consider the computation:
\begin{multline*}
\textit{eq}(y,v) = (\mufix{f}{\lbd{(y_0,x_0)}{(\mathtt{nat},\mathtt{nat})}
	{\\ \caset{y_0}{\caset{x_0}{t_0}{x_1}{\textit{loop}}\\ }
		{y_1}{\caset{x_0}{\textit{loop}}{x_1}{f\ (y_1,x_1)}}}})\ (y,v).
\end{multline*}
If $y=v$ then $\treet{\textit{eq}(y,v)}=\treet{t_0}$, otherwise $\treet{\textit{eq}(y,v)}=\bot$. Using context weakening and compatibility and reflexivity of $\ctxpre$ deduce that:
\begin{equation*}
\overrightarrow{x_j:A_j}, y:\mathtt{nat} \vdash eq(y,v)\mathrel{(\ctxpre)^\mathfrak{c}} eq(y,w)
\end{equation*}
and then again by compatibility
\begin{equation*}
\vdash \lbd{(\overrightarrow{x_j},y)}{(\overrightarrow{A_j},\mathtt{nat})}{eq(y,v})\mathrel{(\ctxpre)^\mathfrak{v}_{\neg(\overrightarrow{A_j},\mathtt{nat})}} \lbd{(\overrightarrow{x_j},y)}{(\overrightarrow{A_j}, \mathtt{nat})}{eq(y,w)}.
\end{equation*}
From $(\ctxpre)\subseteq(\precsim)$ for closed terms we can now establish that:
\begin{equation*}
\vdash \lbd{(\overrightarrow{x_j},y)}{(\overrightarrow{A_j},\mathtt{nat})}{eq(y,v})\mathrel{\precsim^\mathfrak{v}_{\neg(\overrightarrow{A_j},\mathtt{nat})}} \lbd{(\overrightarrow{x_j},y)}{(\overrightarrow{A_j}, \mathtt{nat})}{eq(y,w)}
\end{equation*}
so from the definition of $\precsim$ and the fact that reduction preserves similarity (Lemma \ref{Lem_red_pres_sim}) we know that:
\begin{equation}\label{ctxpre_is_sim_prf2}
\forall (\vdash\overrightarrow{u_j:A_j}).\ \forall (\overline{n}:\mathtt{nat}).\ \vdash \textit{eq}(\overline{n},v[\overrightarrow{u_j/x_j}])\mathrel{\precsim^\mathfrak{c}} \textit{eq}(\overline{n},w[\overrightarrow{u_j/x_j}]).
\end{equation}
Assume by contradiction that there exists $\overrightarrow{u_j:A_j}$ such that $v[\overrightarrow{u_j/x_j}]\not=w[\overrightarrow{u_j/x_j}]$. Choose $\overline{n}=v[\overrightarrow{u_j/x_j}]$. Then:
\begin{gather*}
\treet{\textit{eq}(\overline{n},v[\overrightarrow{u_j/x_j}])}=\treet{t_0} \in P_0	\\
\treet{\textit{eq}(\overline{n},w[\overrightarrow{u_j/x_j}])}=\bot \not\in P_0
\end{gather*}
but this contradicts equation \ref{ctxpre_is_sim_prf2} because of the definition of $\precsim$. Therefore, it must be the case that $\forall \vdash\overrightarrow{u_j:A_j}.\ v[\overrightarrow{u_j/x_j}]=w[\overrightarrow{u_j/x_j}]$, which is what he had to prove.

\paragraph{If $\protect\overrightarrow{x_i:A_i} \vdash s \mathrel{(\ctxpre)^\mathfrak{c}} t$} then by compatibility of $\ctxpre$ we know that:
\begin{equation*}
\vdash \lbd{(\overrightarrow{x_i})}{\overrightarrow{A_i}}{s}\mathrel{ (\ctxpre)^\mathfrak{v}_{\neg(\overrightarrow{A_i})}} \lbd{(\overrightarrow{x_i})}{\overrightarrow{A_i}}{t}.
\end{equation*}
So using $(\ctxpre)\subseteq(\precsim)$ for closed terms we have:
\begin{equation*}
\vdash \lbd{(\overrightarrow{x_i})}{\overrightarrow{A_i}}{s}\mathrel{ \precsim^\mathfrak{v}_{\neg(\overrightarrow{A_i})}} \lbd{(\overrightarrow{x_i})}{\overrightarrow{A_i}}{t}.
\end{equation*}
By the definition of $\precsim$ for function values we can deduce:
\begin{equation*}
\forall(\vdash\overrightarrow{v_i:A_i}).\ \vdash (\lbd{(\overrightarrow{x_i})}{\overrightarrow{A_i}}{s})\ (\overrightarrow{v_i})\mathrel{\precsim^\mathfrak{c}} (\lbd{(\overrightarrow{x_i})}{\overrightarrow{A_i}}{t})\ (\overrightarrow{v_i}).
\end{equation*}
From Lemma \ref{Lem_red_pres_sim} we know that reduction preserves similarity so:
\begin{equation*}
\forall(\vdash\overrightarrow{v_i:A_i}).\ \vdash s[\overrightarrow{v_i/x_i}]\mathrel{\precsim^\mathfrak{c}} t[\overrightarrow{v_i/x_i}]
\end{equation*}
which by the definition of open extensions means that:
\begin{equation*}
\overrightarrow{x_i:A_i} \vdash s\mathrel{\precsim^{\circ,\mathfrak{c}}} t.
\end{equation*}

\paragraph{Now show that $(\sim^\circ)=(\ctxeq)$.} This is done using $(\precsim^\circ)=(\ctxpre)$ and the facts that $(\ctxeq)=(\ctxpre)\cap(\ctxpre)^{\textit{op}}$ (Proposition \ref{Prop_ctxeq_is_ctxpreop})
and $(\sim)=(\precsim)\cap(\precsim^\textit{op})$ (Proposition \ref{Prop_bisim_is_sim_and_simop}). The full proof can be found in Appendix \ref{App_ctxt}.
\end{proof}

\begin{corollary}\label{Cor_logeq_is_ctxeq}
For a decomposable set of Scott-open observations $\mathfrak{P}$ that is consistent, the logic $\mathcal{F}$ characterises contextual equivalence for ECPS. That is:
\begin{enumerate}
\item The open extension of $\mathcal{F}$-logical preorder coincides with the contextual preorder: $(\sqsubseteq_{\mathcal{F}^+}^\circ)=(\ctxpre)$.

\item  The open extension of $\mathcal{F}$-logical equivalence coincides with contextual equivalence: $(\equiv_\mathcal{F}^\circ)=(\ctxeq)$.
\end{enumerate}
Hence, applicative bisimilarity, logical equivalence and contextual equivalence all coincide.
\end{corollary}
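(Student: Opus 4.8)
The plan is to read this off as a direct corollary of the two main theorems already proved, Theorem~\ref{Cor_bisim_is_log_equiv} and Theorem~\ref{Thm_bisim_is_ctx_equiv}; no new reasoning about the operational semantics of ECPS is required. First I would record the elementary fact that the open extension of a well-typed relation on closed terms is completely determined by that closed relation — one simply substitutes closed values for the free variables — so that equality of two closed relations automatically yields equality of their open extensions. Hence from the equalities $(\precsim) = (\sqsubseteq_{\mathcal{F}^+})$ and $(\sim) = (\equiv_{\mathcal{F}})$ furnished by Theorem~\ref{Cor_bisim_is_log_equiv}, valid under the hypothesis that $\mathfrak{P}$ is a decomposable set of Scott-open observations, I obtain $(\precsim^\circ) = (\sqsubseteq_{\mathcal{F}^+}^\circ)$ and $(\sim^\circ) = (\equiv_{\mathcal{F}}^\circ)$.

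Next I would invoke Theorem~\ref{Thm_bisim_is_ctx_equiv}, which additionally requires $\mathfrak{P}$ to be consistent, to get $(\precsim^\circ) = (\ctxpre)$ and $(\sim^\circ) = (\ctxeq)$. Composing the two chains of equalities then yields part~1, $(\sqsubseteq_{\mathcal{F}^+}^\circ) = (\precsim^\circ) = (\ctxpre)$, and part~2, $(\equiv_\mathcal{F}^\circ) = (\sim^\circ) = (\ctxeq)$. The concluding assertion of the corollary — that applicative bisimilarity, $\mathcal{F}$-logical equivalence and contextual equivalence all coincide — is then immediate, since all three have been identified with $(\sim^\circ)$, and in the preorder formulation with $(\precsim^\circ)$.

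There is essentially no obstacle here: the substantive content resides entirely in the earlier chapters — the Howe's-method proof that $\precsim^\circ$ and $\sim^\circ$ are compatible (Chapter~\ref{Chap_bisim}), the equi-expressivity of $\mathcal{F}$ and $\mathcal{V}$ (Chapter~\ref{Chap_mod_logic}), and the coinductive characterisation of $\ctxpre$ and $\ctxeq$ as the greatest compatible $\mathfrak{P}$-adequate (respectively $\mathfrak{P}$-biadequate) relations earlier in this chapter. The one point deserving a line of care is the bookkeeping of the hypotheses on $\mathfrak{P}$: decomposability and Scott-openness are inherited from Theorem~\ref{Cor_bisim_is_log_equiv}, while consistency is the extra ingredient pulled in through Theorem~\ref{Thm_bisim_is_ctx_equiv} — and it is precisely consistency that rules out the degenerate case in which contextual equivalence collapses all terms whereas applicative bisimilarity still separates distinct numerals, which would break the coincidence. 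I would also note in passing that compatibility of $\equiv_{\mathcal{F}}^\circ$, already established in Theorem~\ref{Cor_bisim_is_log_equiv}, is reconfirmed by this identification, since $\ctxeq$ is compatible by construction.
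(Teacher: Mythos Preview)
Your proposal is correct and follows essentially the same route as the paper: invoke Theorem~\ref{Cor_bisim_is_log_equiv} to identify $(\sqsubseteq_{\mathcal{F}^+})$ with $(\precsim)$ and $(\equiv_\mathcal{F})$ with $(\sim)$, pass to open extensions, and then apply Theorem~\ref{Thm_bisim_is_ctx_equiv}. Your additional remarks about open extensions being determined by the underlying closed relations and about the bookkeeping of hypotheses on $\mathfrak{P}$ are accurate elaborations that the paper leaves implicit.
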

\begin{proof}
Recall Theorem \ref{Cor_bisim_is_log_equiv} which says that, for a decomposable set of Scott-open observations $\mathfrak{P}$, $(\precsim)=(\sqsubseteq_{\mathcal{F}^+})$ and $(\sim)=(\equiv_\mathcal{F})$. From Theorem  \ref{Thm_bisim_is_ctx_equiv} we obtain the desired result:
\begin{gather*}
(\sqsubseteq_{\mathcal{F}^+}^\circ) = (\ctxpre) = (\precsim^\circ)	\\
(\equiv_\mathcal{F}^\circ) = (\ctxeq) = (\sim^\circ).
\end{gather*}
\end{proof}

\section{Contextual Equivalence via Contexts}\label{Sec_ctx_ctx}

For completeness, we give a more familiar definition of contextual equivalence using program contexts, following Crary and Harper \cite{DBLP:journals/entcs/CraryH07} and Pitts \cite{Pit11}. This way of defining contextual equivalence implements the intuition that equal programs behave the same in all contexts. This definition is proved to be the same as the coinductive one (Definition \ref{Def_ctxeq_coind}), a fact that is not immediately apparent and which justifies the use of the coinductive definition.

Informally, a context is an ECPS program with a hole. Because ECPS makes a distinction between values and computations, contexts are divided according to whether they accept and produce computations or values. This leads to a duplication of contexts which makes contextual equivalence tedious to work with. This is why we preferred working with the coinductive definition of contextual equivalence to establish the main results of this chapter.

\begin{definition}
Program contexts for ECPS are defined by the following grammar:
\begin{align*}
C^\mathfrak{v}_\mathfrak{v} &\coloneqq [-]^\mathfrak{v} \mid \mathtt{succ}(C^\mathfrak{v}_\mathfrak{v}) \mid \lbd{\overrightarrow{x_i}}{\overrightarrow{A_i}}{C^\mathfrak{v}_\mathfrak{c}}	\\
C^\mathfrak{v}_\mathfrak{c} &\coloneqq C^\mathfrak{v}_\mathfrak{v}(\overrightarrow{w_i}) \mid v(w_1,\ldots,C^\mathfrak{v}_\mathfrak{v},\ldots,w_n) \mid (\mufix{x}{C^\mathfrak{v}_\mathfrak{v}})(\overrightarrow{w_i}) \mid (\mufix{x}{v})(w_1,\ldots,C^\mathfrak{v}_\mathfrak{v},\ldots,w_n) \mid	\\
&\sigma(C^\mathfrak{v}_\mathfrak{v},x.t) \mid \sigma(v,x.C^\mathfrak{v}_\mathfrak{c}) \mid \caset{C^\mathfrak{v}_\mathfrak{v}}{s}{x}{t} \mid	\\
&\caset{v}{C^\mathfrak{v}_\mathfrak{c}}{x}{t} \mid \caset{v}{s}{x}{C^\mathfrak{v}_\mathfrak{c}}
\end{align*}
\begin{align*}
C^\mathfrak{c}_\mathfrak{c} &\coloneqq [-]^\mathfrak{c} \mid C^\mathfrak{c}_\mathfrak{v}(\overrightarrow{w_i}) \mid v(w_1,\ldots,C^\mathfrak{c}_\mathfrak{v},\ldots,w_n) \mid (\mufix{x}{C^\mathfrak{c}_\mathfrak{v}})(\overrightarrow{w_i}) \mid (\mufix{x}{v})(w_1,\ldots,C^\mathfrak{c}_\mathfrak{v},\ldots,w_n) \mid 	\\
&\sigma(v,x.C^\mathfrak{c}_\mathfrak{c}) \mid \caset{v}{C^\mathfrak{c}_\mathfrak{c}}{x}{t} \mid	\\
&\caset{v}{s}{x}{C^\mathfrak{c}_\mathfrak{c}}	\\
C^\mathfrak{c}_\mathfrak{v} &\coloneqq \lbd{\overrightarrow{x_i}}{\overrightarrow{A_i}}{C^\mathfrak{c}_\mathfrak{c}}.
\end{align*}
In the context $v(w_1,\ldots,C^\mathfrak{v}_\mathfrak{v},\ldots,w_n)$, $C^\mathfrak{v}_\mathfrak{v}$ can appear in any of the positions $1$ to $n$, for any $n\in\mathbb{N}$. Similarly for the other expressions which take multiple arguments.

The notation $C^\mathfrak{v}_\mathfrak{c}$ stands for a context whose hole can only be filled with a value, and the resulting term is a computation; $C^\mathfrak{v}_\mathfrak{v}$, $C^\mathfrak{c}_\mathfrak{v}$ and $C^\mathfrak{c}_\mathfrak{c}$ should be read analogously.
\end{definition}

Filling the whole of a context with a value or computation, as appropriate, is defined by recursion on the structure of contexts. The definition is standard so we present only a few cases:
\begin{gather*}
[-]^\mathfrak{v}[u] = u	\\
(\lbd{(\overrightarrow{x_i})}{(\overrightarrow{A_i})}{C^\mathfrak{v}_\mathfrak{c}})[u] = \lbd{(\overrightarrow{x_i})}{(\overrightarrow{A_i})}{C^\mathfrak{v}_\mathfrak{c}[u]}	\\
((\mufix{x}{C^\mathfrak{v}_\mathfrak{v}})(\overrightarrow{w_i}))[u] = (\mufix{x}{C^\mathfrak{v}_\mathfrak{v}[u]})(\overrightarrow{w_i})	\\
\ldots
\end{gather*}
Note that contexts can bind free variables in the term that fills their hole.

There are no contexts of the form $\sigma(C^\mathfrak{c}_\mathfrak{v},x.t)$, $\mathtt{case}\ C^\mathfrak{c}_\mathfrak{v}\ldots$ or $\mathtt{succ}(C^\mathfrak{c}_\mathfrak{v})$. In these cases $C^\mathfrak{c}_\mathfrak{v}[t]$ needs to be a value of type $\mathtt{nat}$, for some computation $t$. But the values of type $\mathtt{nat}$ are either variables $x$ or natural numbers $\overline{n}$, which do not contain any computation. Hence, they cannot be obtained from $C^\mathfrak{c}_\mathfrak{v}$.

We write $C[C']$ for composition of contexts. This means replacing the hole, $[-]$ in $C$ with the context $C'$. We can now define typing judgements for contexts:

\begin{definition}
The typing relation $C^\mathfrak{v}_\mathfrak{c}:(\Gamma'\vdash B)\Rightarrow(\Gamma\vdash)$ asserts that, given a value $\Gamma'\vdash u:B$, $C^\mathfrak{v}_\mathfrak{c}[u]$ is a well-formed computation in the environment $\Gamma$.

Similarly, define typing relations:
\begin{gather*}
C^\mathfrak{v}_\mathfrak{v}:(\Gamma'\vdash B)\Rightarrow(\Gamma\vdash A)	\\
C^\mathfrak{c}_\mathfrak{c}:(\Gamma'\vdash)\Rightarrow(\Gamma\vdash)	\\
C^\mathfrak{c}_\mathfrak{v}:(\Gamma'\vdash)\Rightarrow(\Gamma\vdash A).
\end{gather*}
These relations are the least relations closed under the rules in Figures \ref{Fig_ctx_type_val} and \ref{Fig_ctx_type_comp}.
\end{definition}

\begin{figure}
\begin{gather*}
\inferrule{ }{[-]^\mathfrak{v}:(\Gamma\vdash A)\Rightarrow(\Gamma\vdash A)} \textsc{(vv-id)}
\quad
\inferrule{C^\mathfrak{v}_\mathfrak{v}:(\Gamma'\vdash B)\Rightarrow(\Gamma\vdash\mathtt{nat})}{\mathtt{succ}(C^\mathfrak{v}_\mathfrak{v}) : (\Gamma'\vdash B)\Rightarrow(\Gamma\vdash\mathtt{nat})}\textsc{(vv-nat)}	\\
\inferrule{C^\mathfrak{v}_\mathfrak{c}:(\Gamma'\vdash B)\Rightarrow(\Gamma,\overrightarrow{x_i:A_i}\vdash)}{\lbd{\overrightarrow{x_i}}{\overrightarrow{A_i}}{C^\mathfrak{v}_\mathfrak{c}} : (\Gamma'\vdash B)\Rightarrow(\Gamma\vdash \neg(\overrightarrow{A_i}))}	\quad
\inferrule{C^\mathfrak{v}_\mathfrak{v}:(\Gamma'\vdash B)\Rightarrow(\Gamma\vdash\neg(\overrightarrow{A_i}))	\\	\Gamma\vdash \overrightarrow{w_i}:\overrightarrow{A_i}}{C^\mathfrak{v}_\mathfrak{v}(\overrightarrow{w_i}):(\Gamma'\vdash B)\Rightarrow(\Gamma\vdash)} 	\\[-0.7em]
 \textsc{(vv-lbd)} \qquad\qquad\qquad\qquad\qquad\qquad \textsc{(vc-appl)} \quad\quad	\\
\inferrule{\Gamma\vdash v:\neg(\overrightarrow{A_j})	\\	\Gamma\vdash(w_1,\ldots,w_{i-1}):(A_1,\ldots,A_{i-1})	\\	C^\mathfrak{v}_\mathfrak{v}:(\Gamma'\vdash B)\Rightarrow(\Gamma\vdash A_i)	\\	\Gamma\vdash(w_{i+1},\ldots,w_n):(A_{i+1},\ldots,A_n)}{v(w_1,\ldots,w_{i-1},C^\mathfrak{v}_\mathfrak{v},w_{i+1},\ldots,w_n) : (\Gamma'\vdash B)\Rightarrow(\Gamma\vdash)} \textsc{(vc-appr$_i$)}	\\
\inferrule{C^\mathfrak{v}_\mathfrak{v}:(\Gamma'\vdash B)\Rightarrow(\Gamma,x:\neg(\overrightarrow{A_i})\vdash)	\\	\Gamma\vdash \overrightarrow{w_i}:\overrightarrow{A_i}}{(\mufix{x}{C^\mathfrak{v}_\mathfrak{v}})(\overrightarrow{w_i}):(\Gamma'\vdash B)\Rightarrow(\Gamma\vdash)} \textsc{(vc-mul)}	\\
\inferrule{\Gamma,x:\neg(\overrightarrow{A_j}) \vdash v:\neg(\overrightarrow{A_j})	\\ \Gamma\vdash(w_1,\ldots,w_{i-1}):(A_1,\ldots,A_{i-1})	\\	C^\mathfrak{v}_\mathfrak{v}:(\Gamma'\vdash B)\Rightarrow(\Gamma\vdash A_i)	\\	\Gamma\vdash(w_{i+1},\ldots,w_n):(A_{i+1},\ldots,A_n)}{(\mufix{x}{v})(w_1,\ldots,w_{i-1},C^\mathfrak{v}_\mathfrak{v},w_{i+1},\ldots,w_n) : (\Gamma'\vdash B)\Rightarrow(\Gamma\vdash)} \textsc{(vc-mur$_i$)}	\\
\inferrule{C^\mathfrak{v}_\mathfrak{v}:(\Gamma'\vdash B)\Rightarrow(\Gamma\vdash\mathtt{nat})	\\	\Gamma,x:\mathtt{nat}\vdash t}{\sigma(C^\mathfrak{v}_\mathfrak{v},x.t):(\Gamma'\vdash B)\Rightarrow(\Gamma\vdash)}	\textsc{(vc-opl)} \\ 
\inferrule{\Gamma\vdash v:\mathtt{nat}	\\	C^\mathfrak{v}_\mathfrak{c}:(\Gamma'\vdash B)\Rightarrow(\Gamma,x:\mathtt{nat}\vdash)}{\sigma(v,x.C^\mathfrak{v}_\mathfrak{c}):(\Gamma'\vdash B)\Rightarrow(\Gamma\vdash)}	\textsc{(vc-opr)} \\
\inferrule{C^\mathfrak{v}_\mathfrak{v}:(\Gamma'\vdash B)\Rightarrow(\Gamma\vdash\mathtt{nat})	\\	\Gamma\vdash s	\\	\Gamma,x:\mathtt{nat}\vdash t}{\caset{C^\mathfrak{v}_\mathfrak{v}}{s}{x}{t} : (\Gamma'\vdash B)\Rightarrow(\Gamma\vdash)} \textsc{(vc-casev)}	\\
\inferrule{\Gamma\vdash v:\mathtt{nat}	\\	C^\mathfrak{v}_\mathfrak{c}:(\Gamma'\vdash B)\Rightarrow(\Gamma\vdash)	\\	\Gamma,x:\mathtt{nat}\vdash t}{\caset{v}{C^\mathfrak{v}_\mathfrak{c}}{x}{t} : (\Gamma'\vdash B)\Rightarrow(\Gamma\vdash)} \textsc{(vc-casel)}	\\
\inferrule{\Gamma\vdash v:\mathtt{nat}	\\	\Gamma\vdash s	\\	C^\mathfrak{v}_\mathfrak{c}:(\Gamma'\vdash B)\Rightarrow(\Gamma,x:\mathtt{nat}\vdash)}{\caset{v}{s}{x}{C^\mathfrak{v}_\mathfrak{c}} : (\Gamma'\vdash B)\Rightarrow(\Gamma\vdash)} \textsc{(vc-caser)}
\end{gather*}
\caption{Typing rules for contexts that accept a value.} \label{Fig_ctx_type_val}
\end{figure}

\begin{figure}
\begin{gather*}
\inferrule{ }{[-]^\mathfrak{c}:(\Gamma\vdash)\Rightarrow(\Gamma\vdash)} \textsc{(cc-id)}
\quad
\inferrule{C^\mathfrak{c}_\mathfrak{v}:(\Gamma'\vdash)\Rightarrow(\Gamma\vdash\neg(\overrightarrow{A_i}))	\\	\Gamma\vdash \overrightarrow{w_i}:\overrightarrow{A_i}}{C^\mathfrak{c}_\mathfrak{v}(\overrightarrow{w_i}):(\Gamma'\vdash)\Rightarrow(\Gamma\vdash)} \textsc{(cc-appl)}	\\
\inferrule{\Gamma\vdash v:\neg(\overrightarrow{A_j})	\\	\Gamma\vdash(w_1,\ldots,w_{i-1}):(A_1,\ldots,A_{i-1})	\\	C^\mathfrak{c}_\mathfrak{v}:(\Gamma'\vdash)\Rightarrow(\Gamma\vdash A_i)	\\	\Gamma\vdash(w_{i+1},\ldots,w_n):(A_{i+1},\ldots,A_n)}{v(w_1,\ldots,w_{i-1},C^\mathfrak{c}_\mathfrak{v},w_{i+1},\ldots,w_n) : (\Gamma'\vdash)\Rightarrow(\Gamma\vdash)} \textsc{(cc-appr$_i$)}	\\
\inferrule{C^\mathfrak{c}_\mathfrak{v}:(\Gamma'\vdash)\Rightarrow(\Gamma,x:\neg(\overrightarrow{A_i})\vdash)	\\	\Gamma\vdash \overrightarrow{w_i}:\overrightarrow{A_i}}{(\mufix{x}{C^\mathfrak{c}_\mathfrak{v}})(\overrightarrow{w_i}):(\Gamma'\vdash)\Rightarrow(\Gamma\vdash)} \textsc{(cc-mul)}	\\
\inferrule{\Gamma,x:\neg(\overrightarrow{A_j}) \vdash v:\neg(\overrightarrow{A_j})	\\ \Gamma\vdash(w_1,\ldots,w_{i-1}):(A_1,\ldots,A_{i-1})	\\	C^\mathfrak{c}_\mathfrak{v}:(\Gamma'\vdash)\Rightarrow(\Gamma\vdash A_i)	\\	\Gamma\vdash(w_{i+1},\ldots,w_n):(A_{i+1},\ldots,A_n)}{(\mufix{x}{v})(w_1,\ldots,w_{i-1},C^\mathfrak{c}_\mathfrak{v},w_{i+1},\ldots,w_n) : (\Gamma'\vdash)\Rightarrow(\Gamma\vdash)} \textsc{(cc-mur$_i$)}	\\
\inferrule{\Gamma\vdash v:\mathtt{nat}	\\	C^\mathfrak{c}_\mathfrak{c}:(\Gamma'\vdash)\Rightarrow(\Gamma,x:\mathtt{nat}\vdash)}{\sigma(v,x.C^\mathfrak{c}_\mathfrak{c}):(\Gamma'\vdash)\Rightarrow(\Gamma\vdash)}	\textsc{(cc-op)}	\\
\inferrule{\Gamma\vdash v:\mathtt{nat}	\\	C^\mathfrak{c}_\mathfrak{c}:(\Gamma'\vdash)\Rightarrow(\Gamma\vdash)	\\	\Gamma,x:\mathtt{nat}\vdash t}{\caset{v}{C^\mathfrak{c}_\mathfrak{c}}{x}{t} : (\Gamma'\vdash)\Rightarrow(\Gamma\vdash)} \textsc{(cc-casel)}	\\
\inferrule{\Gamma\vdash v:\mathtt{nat}	\\	\Gamma\vdash s	\\	C^\mathfrak{c}_\mathfrak{c}:(\Gamma'\vdash)\Rightarrow(\Gamma,x:\mathtt{nat}\vdash)}{\caset{v}{s}{x}{C^\mathfrak{c}_\mathfrak{c}} : (\Gamma'\vdash)\Rightarrow(\Gamma\vdash)} \textsc{(cc-caser)}	\\
\inferrule{C^\mathfrak{c}_\mathfrak{c}:(\Gamma'\vdash)\Rightarrow(\Gamma,\overrightarrow{x_i:A_i}\vdash)}{\lbd{\overrightarrow{x_i}}{\overrightarrow{A_i}}{C^\mathfrak{c}_\mathfrak{c}} : (\Gamma'\vdash)\Rightarrow(\Gamma\vdash \neg(\overrightarrow{A_i}))} \textsc{(cv-lbd)}
\end{gather*}
\caption{Typing rules for contexts that accept a computation.} \label{Fig_ctx_type_comp}
\end{figure}

\begin{lemma}\label{Lem_ctx_inst_comp}
Context instantiation and context composition yield well-typed contexts:
\begin{enumerate}[label=\arabic*a.]
\item If $C^\mathfrak{v}_\mathfrak{v}:(\Gamma'\vdash B)\Rightarrow(\Gamma\vdash A)$ and $\Gamma'\vdash u:B$ then $\Gamma\vdash C^\mathfrak{v}_\mathfrak{v}[u]:A$.

And the analogous statements for contexts $C^\mathfrak{v}_\mathfrak{c}$, $C^\mathfrak{c}_\mathfrak{c}$, $C^\mathfrak{c}_\mathfrak{v}$.

\item If $C^\mathfrak{c}_\mathfrak{v}:(\Gamma'\vdash)\Rightarrow(\Gamma\vdash A)$ and ${C'}^\mathfrak{v}_\mathfrak{c} : (\Gamma''\vdash B)\Rightarrow(\Gamma'\vdash)$ then 
\begin{equation*}
C^\mathfrak{c}_\mathfrak{v}[{C'}^\mathfrak{v}_\mathfrak{c}]:(\Gamma''\vdash B)\Rightarrow(\Gamma\vdash A).
\end{equation*}
And the analogous statements for all valid combinations of contexts $C$ and $C'$.
\end{enumerate}
\end{lemma}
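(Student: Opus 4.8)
\textbf{Proof plan for Lemma~\ref{Lem_ctx_inst_comp}.} The plan is to prove both parts by induction on the derivation of the context typing judgement, treating the four mutually-recursive families of contexts ($C^\mathfrak{v}_\mathfrak{v}$, $C^\mathfrak{v}_\mathfrak{c}$, $C^\mathfrak{c}_\mathfrak{c}$, $C^\mathfrak{c}_\mathfrak{v}$) simultaneously, since the typing rules in Figures~\ref{Fig_ctx_type_val} and~\ref{Fig_ctx_type_comp} refer to one another. For part~1a, the induction hypothesis is that for every proper sub-context appearing in the premises of the last rule, instantiation at a value $\Gamma' \vdash u : B$ produces a well-typed term in the indicated environment. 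Each case then amounts to matching the shape of the context against one of the ECPS term-formation rules from Figure~\ref{Fig_ECPS_type}: for instance, in case \textsc{(vv-lbd)} the context is $\lbd{\overrightarrow{x_i}}{\overrightarrow{A_i}}{C^\mathfrak{v}_\mathfrak{c}}$ with premise $C^\mathfrak{v}_\mathfrak{c}:(\Gamma'\vdash B)\Rightarrow(\Gamma,\overrightarrow{x_i:A_i}\vdash)$; by the induction hypothesis $\Gamma,\overrightarrow{x_i:A_i}\vdash C^\mathfrak{v}_\mathfrak{c}[u]$, and applying \textsc{(lbd)} gives $\Gamma\vdash \lbd{\overrightarrow{x_i}}{\overrightarrow{A_i}}{C^\mathfrak{v}_\mathfrak{c}[u]} : \neg(\overrightarrow{A_i})$, which is exactly $\Gamma \vdash (\lbd{\overrightarrow{x_i}}{\overrightarrow{A_i}}{C^\mathfrak{v}_\mathfrak{c}})[u]$ by the definition of hole-filling. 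The base cases \textsc{(vv-id)} and \textsc{(cc-id)} are immediate since $[-][u] = u$ and the judgement reduces to the hypothesis $\Gamma' \vdash u : B$. The remaining cases (\textsc{vv-nat}, \textsc{vc-appl}, \textsc{vc-appr$_i$}, \textsc{vc-mul}, \textsc{vc-mur$_i$}, \textsc{vc-opl}, \textsc{vc-opr}, \textsc{vc-casev}, \textsc{vc-casel}, \textsc{vc-caser}, and their $\mathfrak{c}$-accepting analogues) follow the same pattern: use the induction hypothesis on the context-valued premise, copy the side-premises $\Gamma\vdash \overrightarrow{w_i}:\overrightarrow{A_i}$ etc.\ verbatim, and invoke the corresponding typing rule for the term constructor.

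For part~1b (context composition), I would again induct on the derivation of the \emph{outer} context's typing judgement, say $C^\mathfrak{c}_\mathfrak{v}:(\Gamma'\vdash)\Rightarrow(\Gamma\vdash A)$, with the inner context ${C'}^\mathfrak{v}_\mathfrak{c} : (\Gamma''\vdash B)\Rightarrow(\Gamma'\vdash)$ held fixed. The key observation is that the composite $C^\mathfrak{c}_\mathfrak{v}[{C'}^\mathfrak{v}_\mathfrak{c}]$ is obtained by replacing the hole of $C^\mathfrak{c}_\mathfrak{v}$ by ${C'}^\mathfrak{v}_\mathfrak{c}$, so the structure of the composite mirrors the structure of $C^\mathfrak{c}_\mathfrak{v}$, and every sub-context typing judgement in the premises of the last rule for $C^\mathfrak{c}_\mathfrak{v}$ has a matching premise for the composite (with the same inner context plugged in at the appropriate leaf). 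In the base case $C^\mathfrak{c}_\mathfrak{v} = [-]^\mathfrak{c}$ — actually $C^\mathfrak{c}_\mathfrak{v}$ cannot be a bare hole since its grammar forces it to be $\lbd{\overrightarrow{x_i}}{\overrightarrow{A_i}}{C^\mathfrak{c}_\mathfrak{c}}$; the genuine base case occurs one level down at $C^\mathfrak{c}_\mathfrak{c} = [-]^\mathfrak{c}$, where $[-]^\mathfrak{c}[{C'}^\mathfrak{v}_\mathfrak{c}] = {C'}^\mathfrak{v}_\mathfrak{c}$ and the conclusion is just the hypothesis on ${C'}^\mathfrak{v}_\mathfrak{c}$ (after checking the environments line up: $(\Gamma'\vdash)\Rightarrow(\Gamma'\vdash)$ composed with $(\Gamma''\vdash B)\Rightarrow(\Gamma'\vdash)$ gives $(\Gamma''\vdash B)\Rightarrow(\Gamma'\vdash)$). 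In the inductive cases, apply the induction hypothesis to the unique context-valued premise, reuse the term side-premises, and re-apply the same context-typing rule; this produces the correctly-typed composite. One has to be slightly careful to enumerate all valid combinations of outer/inner accepting and producing sorts — value-accepting composed with value-producing, computation-accepting composed with computation-producing, and the mixed cases — but in each the argument is identical.

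\textbf{Anticipated main obstacle.} The proof is entirely routine structurally, so the real difficulty is purely bookkeeping: the four families of contexts give a large, somewhat tedious case analysis, and for part~1b one must be disciplined about which sorts of contexts compose with which (the hole sort of the outer context must match the output sort of the inner context). The one genuinely substantive point to verify — and the place where a careless proof would go wrong — is that hole-filling interacts correctly with variable binding: contexts such as $\lbd{\overrightarrow{x_i}}{\overrightarrow{A_i}}{C^\mathfrak{v}_\mathfrak{c}}$ and $\caset{v}{s}{x}{C^\mathfrak{v}_\mathfrak{c}}$ bind variables that become free in the term plugged into the hole, which is exactly why the inner typing judgement is indexed by the extended environment $(\Gamma,\overrightarrow{x_i:A_i}\vdash)$ rather than $(\Gamma\vdash)$. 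The induction goes through precisely because the context typing rules were set up to track this environment extension, so provided one reads off the environments from the premises faithfully, each case closes immediately by the matching ECPS typing rule. I would present two or three representative cases in full (a binding case, a multi-argument case like \textsc{(vc-appr$_i$)}, and the base case) and state that the rest are analogous.
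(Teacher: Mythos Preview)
Your proposal is correct and matches the paper's approach exactly: induction on the typing derivation of the (outer) context, with the base cases \textsc{(vv-id)} and \textsc{(cc-id)} following by assumption and every other case closing by applying the induction hypothesis and then the matching term-typing rule (for part~1a) or context-typing rule (for part~1b). The paper's proof is in fact even terser than yours, so your additional care about binding and environment extension is welcome but not strictly needed.
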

\begin{proof}
In both cases proceed by induction on the typing derivation of the context C.
\begin{enumerate}
\item The two base cases \textsc{(vv-id)} and \textsc{(cc-id)} follow by assumption.
All the other cases are solved by applying the induction hypothesis then the appropriate typing rule for terms.
\item Again, the base cases follow by assumption. In the other cases apply the induction hypothesis then apply the context typing rule that matches $C$.
\end{enumerate}
\end{proof}

Contextual equivalence checks whether two terms have the same behaviour in all program contexts that yield closed computations. Only these contexts are used because we can only observe the behaviour of closed computations. The set of observations $\mathfrak{P}$ encodes the observable behaviour of programs, so contextual equivalence makes use of it.

\begin{definition}\label{Def_ctx_ctx}
Contextual preorder and equivalence are well-typed relations on possibly open terms defined as follows:
\begin{enumerate}
\item Given values $\Gamma\vdash v,u:A$, they are in the contextual preorder, $\Gamma\vdash v\ (\ctxprec)^\mathfrak{v}_A\ u$, if and only if:
\begin{equation*}
\forall C^\mathfrak{v}_\mathfrak{c}:(\Gamma\vdash A) \Rightarrow (\emptyset\vdash).\ \forall P\in\mathfrak{P}.\ \treet{C^\mathfrak{v}_\mathfrak{c}[v]}\in P \implies \treet{C^\mathfrak{v}_\mathfrak{c}[u]}\in P.
\end{equation*}
\item Given two computations $\Gamma\vdash s,t$, they are in the contextual preorder, $\Gamma\vdash s\ (\ctxprec)^\mathfrak{c}\ t$, if and only if:
\begin{equation*}
\forall C^\mathfrak{c}_\mathfrak{c}:(\Gamma\vdash) \Rightarrow (\emptyset\vdash).\ \forall P\in\mathfrak{P}.\ \treet{C^\mathfrak{c}_\mathfrak{c}[s]}\in P \implies \treet{C^\mathfrak{c}_\mathfrak{c}[t]}\in P.
\end{equation*}
\end{enumerate}
Two values $v$ and $u$ are contextually equivalent, $\Gamma\vdash v\ (\ctxeqc)^\mathfrak{v}_A\ u$, if $\Gamma\vdash v\ (\ctxprec)^\mathfrak{v}_A\ u$ and $\Gamma\vdash u\ (\ctxprec)^\mathfrak{v}_A\ v$. And similarly for computations. Therefore:
\begin{equation*}
(\ctxeqc)=(\ctxprec)\cap(\ctxprec)^\textit{op}.
\end{equation*}
\end{definition}

Next, we prove that the definition above coincides with the coinductive definition of contextual equivalence from the previous section (Definition \ref{Def_ctxeq_coind}). The proofs of the following two lemmas can be found in Appendix \ref{App_ctxt}.

\begin{lemma}\label{Lem_ctx_wctx_compat}
Contextual preorder defined with contexts, $\ctxprec$, is a compatible and adequate preorder. Hence, it is included in contextual preorder defined coinductively, $\ctxpre$.
\end{lemma}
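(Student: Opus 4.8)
\textbf{Proof proposal for Lemma \ref{Lem_ctx_wctx_compat}.}

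The plan is to show two things about $\ctxprec$: that it is a compatible preorder, and that it is $\mathfrak{P}$-adequate; the inclusion $\ctxprec\ \subseteq\ \ctxpre$ is then immediate from the definition of $\ctxpre$ as $\bigcup\mathbb{CA}$, since $\ctxprec$ belongs to $\mathbb{CA}$. Reflexivity of $\ctxprec$ follows because the identity implication is trivially satisfied for every context and every $P\in\mathfrak{P}$. Transitivity follows because the defining implication ($\treet{C[v]}\in P \implies \treet{C[u]}\in P$, for all $C$ and all $P$) composes: if it holds for the pair $(v,u)$ and for the pair $(u,w)$, it holds for $(v,w)$; and likewise for computations. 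So $\ctxprec$ is a preorder.

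For $\mathfrak{P}$-adequacy, I would instantiate the definition of $\ctxprec^\mathfrak{c}$ at the identity context $[-]^\mathfrak{c}$, which by rule \textsc{(cc-id)} has type $(\emptyset\vdash)\Rightarrow(\emptyset\vdash)$. Then $\vdash s\ \ctxprec^\mathfrak{c}\ t$ gives directly that for all $P\in\mathfrak{P}$, $\treet{[-]^\mathfrak{c}[s]}=\treet{s}\in P$ implies $\treet{t}=\treet{[-]^\mathfrak{c}[t]}\in P$, which is exactly $\mathfrak{P}$-adequacy.

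The substantive work is compatibility, i.e.\ closure under the rules of Figure \ref{Fig_compat}. Each rule is handled by the same schematic argument: given related subterms, one exhibits, for an arbitrary ``outer'' context $C$ of appropriate type that closes the whole term, a suitable context that isolates each subterm in turn. Concretely, to prove a rule with premise $\Gamma,\overrightarrow{y}\vdash s\ \ctxprec^\mathfrak{c}\ t$ and conclusion built by some term former $\mathcal{C}[-]$ (e.g.\ $\lbd{\overrightarrow{x}}{\overrightarrow{A}}{-}$, or $\sigma(v,x.-)$, or a branch of $\mathbf{case}$), take any closing context $C$ for the conclusion; then $C[\mathcal{C}[-]]$ is itself a well-typed context for $s$ and $t$ by Lemma \ref{Lem_ctx_inst_comp}(2) (context composition preserves typing), and since $(C[\mathcal{C}[-]])[s]=C[\mathcal{C}[s]]$ and similarly for $t$, the hypothesis on $(s,t)$ applied at this composed context delivers the required implication for all $P$. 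For the multi-ary rules \textsc{(Comp6)}, \textsc{(Comp7)}, \textsc{(Comp8)}, \textsc{(Comp10)} one proceeds argument-by-argument, using Lemma \ref{Lem_comp_1premise_rules} to reduce to the single-premise versions and transitivity of $\ctxprec$ to chain the one-slot replacements; each one-slot replacement is again an instance of the composition argument, using the matching context former from Figures \ref{Fig_ctx_type_val}--\ref{Fig_ctx_type_comp} (e.g.\ \textsc{(vc-opl)}/\textsc{(vc-opr)} for $\sigma$, \textsc{(vc-casev)}/\textsc{(vc-casel)}/\textsc{(vc-caser)} for $\mathbf{case}$). The base cases \textsc{(Comp1)}, \textsc{(Comp2)}, \textsc{(Comp4)}, \textsc{(Comp9)} hold by reflexivity.

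The main obstacle is bookkeeping rather than conceptual: one must check that for every term former appearing in a compatibility rule there is a corresponding context former with a matching typing rule, so that the composition $C[\mathcal{C}[-]]$ is well-typed and has the right hole sort ($\mathfrak{v}$ or $\mathfrak{c}$); the value/computation duplication of contexts makes this case analysis long, and care is needed with the rules that substitute values into binding positions (the $\mu$-rule and the $\lambda$-rule), where one relies on the fact that contexts may legitimately bind the variables occurring free in the plugged term. Once this dictionary between Figure \ref{Fig_compat} and Figures \ref{Fig_ctx_type_val}--\ref{Fig_ctx_type_comp} is in place, every compatibility rule is discharged uniformly, and the full argument can be relegated to Appendix \ref{App_ctxt}.
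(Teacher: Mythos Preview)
Your proposal is correct and follows essentially the same approach as the paper: reflexivity and transitivity are immediate, adequacy is obtained by instantiating at the identity context $[-]^\mathfrak{c}$, and compatibility is shown rule-by-rule by composing an arbitrary closing context with the relevant context former (using Lemma~\ref{Lem_ctx_inst_comp} for well-typedness), with the multi-premise rules reduced to their single-premise versions via Lemma~\ref{Lem_comp_1premise_rules}. The paper's proof in Appendix~\ref{App_ctxt} carries out exactly this plan, exemplifying with \textsc{(comp3)} and \textsc{(comp7l)}.
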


\begin{lemma}\label{Lem_ctx_pres_ctx_coin}
Contextual preorder defined coinductively, $\ctxpre$, is closed under program contexts, that is:
\begin{enumerate}
\item If $\Gamma'\vdash v\ (\ctxpre)^\mathfrak{v}_A\ u$ and $C^\mathfrak{v}_\mathfrak{v}:(\Gamma'\vdash A)\Rightarrow(\Gamma\vdash B)$ then $\Gamma\vdash C^\mathfrak{v}_\mathfrak{v}[v]\ (\ctxpre)^\mathfrak{v}_B\ C^\mathfrak{v}_\mathfrak{v}[u]$.

And the analogous statement for $C^\mathfrak{v}_\mathfrak{c}$.

\item If $\Gamma'\vdash s\ (\ctxpre)^\mathfrak{c}\ t$ and $C^\mathfrak{c}_\mathfrak{c}:(\Gamma'\vdash)\Rightarrow(\Gamma\vdash)$ then $\Gamma\vdash C^\mathfrak{c}_\mathfrak{c}[s]\ (\ctxpre)^\mathfrak{c}\ C^\mathfrak{c}_\mathfrak{c}[t]$.

And the analogous statement for $C^\mathfrak{c}_\mathfrak{v}$.
\end{enumerate}
\end{lemma}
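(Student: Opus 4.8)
The plan is to prove all four closure statements simultaneously, by induction on the typing derivation of the context, ranging over the four mutually recursive classes $C^\mathfrak{v}_\mathfrak{v}$, $C^\mathfrak{v}_\mathfrak{c}$, $C^\mathfrak{c}_\mathfrak{c}$, $C^\mathfrak{c}_\mathfrak{v}$ of Figures~\ref{Fig_ctx_type_val} and~\ref{Fig_ctx_type_comp}. The whole argument rests on two facts already available: Proposition~\ref{Prop_ctxpre_comp}, which tells us that $\ctxpre$ is a compatible preorder, hence reflexive and closed under every rule of Figure~\ref{Fig_compat}; and Lemma~\ref{Lem_ctx_inst_comp}, which guarantees that $C[v]$ and $C[u]$ are well-typed in the expected environment and type, so that the relational judgements we write down are meaningful. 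Since the four context grammars refer to one another (for instance $C^\mathfrak{v}_\mathfrak{v}$ is built from $C^\mathfrak{v}_\mathfrak{c}$ and conversely), the induction hypothesis must be taken as the conjunction of all four statements, stated uniformly over the output environment $\Gamma$.

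The base cases $[-]^\mathfrak{v}$ and $[-]^\mathfrak{c}$ are immediate: the required conclusion $C[v]\mathrel{(\ctxpre)}C[u]$ is exactly the hypothesis $v\mathrel{(\ctxpre)}u$ (respectively $s\mathrel{(\ctxpre)}t$). For every other case the context former has the shape $F(\dots,C',\dots)$ with $C'$ the unique immediate subcontext containing the hole; because a context carries at most one hole, all the other arguments of $F$ are ordinary (possibly open) terms. We apply the induction hypothesis to $C'$ to obtain $C'[v]\mathrel{(\ctxpre)}C'[u]$ at the appropriate type and environment --- note that when $F$ introduces a binder (the $\lambda$- and $\mu$-formers) or a pattern variable (the $\mathtt{case}$- and $\sigma$-formers) the subcontext $C'$ is typed in an enlarged environment, which is fine since the output environment is a parameter of the induction --- we apply reflexivity of $\ctxpre$ to each of the other arguments of $F$, and we conclude by the compatibility rule of Figure~\ref{Fig_compat} that matches $F$: \textsc{(Comp3)} for $\lbd{\overrightarrow{x}}{\overrightarrow{A}}{(-)}$, \textsc{(Comp5)} for $\mathtt{succ}(-)$, \textsc{(Comp6)} for the application formers, \textsc{(Comp7)} for the $\mu$-formers, \textsc{(Comp8)} for $\sigma(-,x.-)$, and \textsc{(Comp10)} for the $\mathtt{case}$-formers. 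The term delivered by the rule is precisely $\bigl(F(\dots,C',\dots)\bigr)[v]$, and likewise with $u$, which is what we had to produce. For the multi-argument formers one may instead invoke the single-premise variants \textsc{(Comp6L)}/\textsc{(Comp6R$_i$)}, \textsc{(Comp7L)}/\textsc{(Comp7R$_i$)}, \textsc{(Comp8L)}/\textsc{(Comp8R)}, \textsc{(Comp10V)}/\textsc{(Comp10L)}/\textsc{(Comp10R)} of Lemma~\ref{Lem_comp_1premise_rules}, composing them along the preorder $\ctxpre$, which avoids the explicit appeal to reflexivity on the non-hole positions.

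There is no real obstacle here: this is a mechanical structural induction, and the only points requiring a little care are bookkeeping ones. First, the induction hypothesis genuinely has to be the conjunction of all four cases, because of the mutual recursion among the context grammars. Second, one should check, once and for all, that every context former appearing in Figures~\ref{Fig_ctx_type_val} and~\ref{Fig_ctx_type_comp} corresponds to a term former and hence to a compatibility rule; this is immediate by inspection, and it is the reason the context grammar was set up to mirror the term grammar. Finally, it is worth remarking in the write-up that, since a context has exactly one hole, no position of a multi-argument former is itself a nested context, so plain reflexivity of $\ctxpre$ on those positions suffices and no secondary induction is needed.
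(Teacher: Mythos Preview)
Your proposal is correct and matches the paper's proof essentially line for line: induction on the typing derivation of the context, base cases from the hypothesis, and inductive cases by applying the induction hypothesis to the unique subcontext followed by the appropriate compatibility rule (the paper opts for the single-premise variants of Lemma~\ref{Lem_comp_1premise_rules} for the multi-argument formers, which you mention as an alternative).
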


\begin{theorem}\label{Thm_ctxctx_is_ctxcoind}
Contextual preorder defined using program contexts, $\ctxprec$, coincides with contextual preorder defined coinductively, $\ctxpre$. Moreover, $(\ctxeqc)=(\ctxeq)$.
\end{theorem}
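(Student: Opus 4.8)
The plan is to prove the two inclusions $(\ctxprec)\subseteq(\ctxpre)$ and $(\ctxpre)\subseteq(\ctxprec)$ separately, and then derive the statement about equivalences by taking intersections with converses, exactly as in Proposition~\ref{Prop_ctxeq_is_ctxpreop} and Definition~\ref{Def_ctx_ctx}. The first inclusion is already done: Lemma~\ref{Lem_ctx_wctx_compat} establishes that $\ctxprec$ is a compatible and $\mathfrak{P}$-adequate preorder, hence it lies in $\mathbb{CA}$ and so $(\ctxprec)\subseteq\bigcup\mathbb{CA}=(\ctxpre)$. So the real work is the reverse inclusion $(\ctxpre)\subseteq(\ctxprec)$.

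For $(\ctxpre)\subseteq(\ctxprec)$, suppose $\Gamma\vdash s\mathrel{(\ctxpre)^\mathfrak{c}} t$ (and similarly for values). I must show that for every context $C^\mathfrak{c}_\mathfrak{c}:(\Gamma\vdash)\Rightarrow(\emptyset\vdash)$ and every $P\in\mathfrak{P}$, $\treet{C^\mathfrak{c}_\mathfrak{c}[s]}\in P$ implies $\treet{C^\mathfrak{c}_\mathfrak{c}[t]}\in P$. The key ingredient is Lemma~\ref{Lem_ctx_pres_ctx_coin}, which says $\ctxpre$ is closed under program contexts: applying it to $C^\mathfrak{c}_\mathfrak{c}$ yields $\emptyset\vdash C^\mathfrak{c}_\mathfrak{c}[s]\mathrel{(\ctxpre)^\mathfrak{c}} C^\mathfrak{c}_\mathfrak{c}[t]$. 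Now these are \emph{closed} computations, so since $\ctxpre$ is $\mathfrak{P}$-adequate (Proposition~\ref{Prop_ctxpre_comp}), we obtain directly that $\treet{C^\mathfrak{c}_\mathfrak{c}[s]}\in P\implies\treet{C^\mathfrak{c}_\mathfrak{c}[t]}\in P$, which is exactly what Definition~\ref{Def_ctx_ctx} requires. The value case is completely analogous, using the value clause of Lemma~\ref{Lem_ctx_pres_ctx_coin} for a context $C^\mathfrak{v}_\mathfrak{c}:(\Gamma\vdash A)\Rightarrow(\emptyset\vdash)$, and then adequacy applied to the closed computations $C^\mathfrak{v}_\mathfrak{c}[v]$ and $C^\mathfrak{v}_\mathfrak{c}[u]$. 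This establishes $(\ctxpre)=(\ctxprec)$.

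Finally, for the equivalences: by Proposition~\ref{Prop_ctxeq_is_ctxpreop}, $(\ctxeq)=(\ctxpre)\cap(\ctxpre)^{\textit{op}}$, and by Definition~\ref{Def_ctx_ctx}, $(\ctxeqc)=(\ctxprec)\cap(\ctxprec)^{\textit{op}}$. Since $(\ctxpre)=(\ctxprec)$ and taking converses commutes with this equality, we conclude $(\ctxeq)=(\ctxeqc)$.

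\textbf{Main obstacle.} The genuinely hard work is hidden in the two lemmas I am invoking, both relegated to the appendix. Lemma~\ref{Lem_ctx_wctx_compat} requires verifying all ten compatibility rules of Figure~\ref{Fig_compat} for $\ctxprec$ — routine but laborious case analysis over the context grammar — plus the observation that $\ctxprec$ is reflexive and transitive (transitivity is easy; reflexivity uses the identity contexts) and adequate (immediate from the $[-]^\mathfrak{c}$ context). Lemma~\ref{Lem_ctx_pres_ctx_coin} is the subtler of the two: it is proved by induction on the structure of the context, and at each step one needs compatibility of $\ctxpre$ (Proposition~\ref{Prop_ctxpre_comp}) to push the relation through one layer of context, being careful about the value/computation distinction and about contexts that bind free variables (so one works with the open relation throughout). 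Once those two lemmas are in hand, the theorem itself, as laid out above, is essentially a two-line argument combining closure-under-contexts with adequacy.
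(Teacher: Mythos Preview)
Your proposal is correct and follows essentially the same route as the paper: one inclusion via Lemma~\ref{Lem_ctx_wctx_compat}, the other by combining Lemma~\ref{Lem_ctx_pres_ctx_coin} with adequacy of $\ctxpre$ (Proposition~\ref{Prop_ctxpre_comp}), and then the equivalence statement by intersecting with converses using Proposition~\ref{Prop_ctxeq_is_ctxpreop} and Definition~\ref{Def_ctx_ctx}. Your assessment of where the real work lies (the two appendix lemmas) is also accurate.
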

\begin{proof}
We have already shown in Lemma \ref{Lem_ctx_wctx_compat} that $(\ctxprec)\subseteq(\ctxpre)$. So it only remains to show the inverse inclusion.

Consider $\Gamma\vdash v\ (\ctxpre)^\mathfrak{v}_A\ u$. Then by Lemma \ref{Lem_ctx_pres_ctx_coin} we know that:
\begin{equation*}
\forall C^\mathfrak{v}_\mathfrak{c}:(\Gamma\vdash A)\Rightarrow(\emptyset\vdash).\ \emptyset\vdash C^\mathfrak{v}_\mathfrak{c}[v]\ (\ctxpre)^\mathfrak{c}\ C^\mathfrak{v}_\mathfrak{c}[u].
\end{equation*}
By adequacy of $\ctxpre$ we can deduce that:
\begin{equation*}
\forall C^\mathfrak{v}_\mathfrak{c}:(\Gamma\vdash A)\Rightarrow(\emptyset\vdash).\ \forall P\in\mathfrak{P}.\ \treet{C^\mathfrak{v}_\mathfrak{c}[v]}\in P \implies \treet{C^\mathfrak{v}_\mathfrak{c}[u]}\in P
\end{equation*}
which is equivalent to:
\begin{equation*}
\Gamma\vdash v\ (\ctxprec)^\mathfrak{v}_A\ u.
\end{equation*}
A similar reasoning can be applied for computations. So $(\ctxpre)\subseteq(\ctxprec)$ and hence $(\ctxpre)=(\ctxprec)$.

We know from Proposition \ref{Prop_ctxeq_is_ctxpreop} that $(\ctxeq)=(\ctxpre)\cap(\ctxpre)^\textit{op}$. By definition of $\ctxeqc$ we know $(\ctxeqc)=(\ctxprec)\cap(\ctxprec)^\textit{op}$. So since $(\ctxpre)^\textit{op}=(\ctxprec)^\textit{op}$ as well, we know that:
\begin{equation*}
(\ctxeqc)=(\ctxeq).
\end{equation*}
\end{proof}

\section{Chapter Summary}

Section~\ref{Sec_ctx} defined contextual equivalence for ECPS coinductively, as the greatest compatible relation that is adequate. Adequacy means that related computations are part of exactly the same observations from $\mathfrak{P}$. Thus, contextual equivalence is compatible by construction.

We then identified a condition on $\mathfrak{P}$ named \emph{consistency} (Definition~\ref{Def_consistency}), which requires contextual equivalence to distinguish between at least two computations. This is a reasonable condition; otherwise, all computations would be identified with the divergent one. 

Given a decomposable set $\mathfrak{P}$ of Scott-open observations which is consistent, we proved that contextual equivalence coincides with applicative bisimilarity (Theorem~\ref{Thm_bisim_is_ctx_equiv}). The proof proceeds by showing the equality of the two relations for closed terms, and then extends this to open terms. Consistency is used in the proof to construct ECPS computations which distinguish between two different natural numbers. The construction of these terms required some ingenuity.

Theorems~\ref{Thm_bisim_is_ctx_equiv} and \ref{Cor_bisim_is_log_equiv} allowed us to deduce the main result of the dissertation: that under the assumptions of Scott-openness, decomposability and consistency of $\mathfrak{P}$, $\mathcal{F}$-logical equivalence characterises contextual equivalence (Corollary~\ref{Cor_logeq_is_ctxeq}).

The coinductive definition of contextual equivalence (Definition~\ref{Def_ctxeq_coind}) might seem unintuitive, so we also presented the standard definition (Definition~\ref{Def_ctx_ctx}). This says that programs are related if they have the same \emph{observable} behaviour in all program contexts. In ECPS the observable behaviour is encoded by the set $\mathfrak{P}$. So two programs are contextually equivalent if, when substituted in an arbitrary context yielding closed computations, the resulting computations are part of the same observations $P\in\mathfrak{P}$.

This definition of contextual equivalence is difficult to work with because of the universal quantification over a large number of contexts. Therefore, we preferred working with the coinductive definition in order to establish the most important results in this chapter.

\chapter{Conclusion}\label{Chap_concl}

This chapter summarises the motivation and outcomes of the dissertation. It provides a comparison with previous work and outlines two directions for further research. The chapter concludes with an assessment of my personal development throughout the project.

\section{Summary}

Program equivalence, establishing when two programs are interchangeable, is one of the most important problems in the theory of programming languages. As explained in the introduction, the case of higher-order functions is especially difficult. Many definitions of program equivalence for increasingly complex higher-order languages have been proposed over the years. This variety of approaches has led researchers to investigate the relationships between different definitions of program equivalence.

This dissertation studied program equivalence for a higher-order language with algebraic effects. Algebraic effects are an approach to giving uniform formal semantics to impure operations such as input and output, probabilistic choice, nondeterminism etc.~not usually found in purely functional languages. This method was discovered relatively recently \cite{DBLP:conf/fossacs/PlotkinP01, DBLP:conf/fossacs/PlotkinP02, DBLP:journals/acs/PlotkinP03, DBLP:journals/tcs/HylandPP06} so work still needs to be done to understand program equivalence in the presence of generic algebraic effects. 

One open question we identified was: can we formulate a logic of program properties that characterises contextual equivalence for a higher-order language with generic algebraic effects? The language we chose to study is named ECPS (Section~\ref{Sec_ecps_def}). It is a call-by-value extension of the simply-typed $\lambda$-calculus with algebraic effects, natural numbers and general recursion. Moreover, ECPS is a continuation-passing language in which programs are not expected to return.

We answered the above question positively in the context of ECPS by developing the logic $\mathcal{F}$, defined in Chapter~\ref{Chap_mod_logic}. Thus, we obtained the first logic whose induced program equivalence coincides with contextual equivalence for algebraic effects (Corollary~\ref{Cor_logeq_is_ctxeq}).

The starting point of our work was a paper by Simpson and Voorneveld \cite{SimV18}. They proposed a modal logic that characterises applicative bisimilarity for the EPCF language, but not contextual equivalence. EPCF (Section~\ref{Sec_epcf_def}) is a variant of ECPS in which programs are written in direct style.

To justify the use of ECPS and to link our work to previous research, we first investigated the relationship between ECPS and EPCF, in Chapter~\ref{Chap_rel-epcf-ecps}. We proved that EPCF can be embedded into ECPS via a continuation-passing translation which preserves computation trees (Theorem~\ref{Thm_CPS_correct}). Then we argued informally that ECPS is more expressive than EPCF.

The remaining chapters were concerned with three forms of program equivalence for ECPS and the relationship between them: applicative bisimilarity, logical equivalence and contextual equivalence.
In Chapter~\ref{Chap_bisim}, we defined applicative bisimilarity for ECPS and proved it compatible (Theorem~\ref{Thm_sim_compat}). Compatibility is a fundamental property that a meaningful program equivalence should satisfy. It says that related programs can be substituted for a variable on opposite sides of a program equation.

Chapter~\ref{Chap_mod_logic} defined the logic $\mathcal{F}$, which expresses properties of ECPS programs. We proved that $\mathcal{F}$-logical equivalence coincides with applicative bisimilarity, and is therefore compatible (Theorem~\ref{Cor_bisim_is_log_equiv}). In Chapter~\ref{Chap_ctx_equiv}, we developed contextual equivalence and proved it compatible. Then, we showed that applicative bisimilarity coincides with contextual equivalence (Theorem~\ref{Thm_bisim_is_ctx_equiv}). This helped us prove the main result of the dissertation: that $\mathcal{F}$-logical equivalence coincides with contextual equivalence (Corollary~\ref{Cor_logeq_is_ctxeq}).

\section{Comparison with Previous Work}

Other logics of program properties for effects have been proposed. For example, the \emph{evaluation logic} of Pitts \cite{PittsAM:evall} concerns general computational effects. It uses two built-in modalities $\Box$ and $\Diamond$ to talk about evaluation of programs. Observations $P\in\mathfrak{P}$ from the logic $\mathcal{F}$ play a similar role to these modalities. The difference is that the contents of the set of observations $\mathfrak{P}$ depend on the effects present in the language, rather than being built into the logic. Moreover, program equivalence induced by evaluation logic is not compared with other operational notions of program equivalence.

Plotkin and Pretnar \cite{DBLP:conf/lics/PlotkinP08} propose a \emph{logic for algebraic effects} which is shown to be sound for establishing different kinds of program equivalence, but not complete in general. According to Pnueli's classification \cite{Pnu77}, this logic for algebraic effects is an exogenous logic because computations are allowed to appear inside formulas. In contrast, the logic $\mathcal{F}$ is endogenous because a formula concerns only one computation. 

Another difference is that in Plotkin's and Pretnar's logic there is a modality for each algebraic operation. In the logic $\mathcal{F}$, we instead adopt the view that each observation should express a behavioural property of programs. Thus, although observations depend on the effects in the language, they do not depend on the syntax of these effects. So we obtain a greater separation between the logic and the syntax of the programming language.

Simpson's and Voorneveld's work \cite{SimV18} is the most closely related to ours. They propose the EPCF logic (Section~\ref{Sec_epcf_results}), in which modalities specify shapes of computation trees, and show it characterises applicative bisimilarity. Therefore, modalities play a similar role to the observations from $\mathcal{F}$. 

The difference is that EPCF formulas, and applicative bisimilarity, check for the return values of programs. Because ECPS is a continuation-passing language, programs do not return, so neither $\mathcal{F}$-formulas nor applicative bisimilarity can check return values. As explained in Chapter~\ref{Chap_rel-epcf-ecps}, EPCF is in fact a fragment of ECPS. Therefore, contextual equivalence in ECPS is more restrictive since there are more program contexts. These differences provide an insight into why in ECPS we were able to obtain the relationship:
\begin{center}
applicative bisimilarity $=$ $\mathcal{F}$-logical equivalence $=$ contextual equivalence.
\end{center}
Whereas in EPCF the situation is:
\begin{center}
applicative bisimilarity $=$ $\textit{EPCF}$-logical equivalence $\not=$ contextual equivalence.
\end{center}

Compared to modal logics used to specify program properties in practice, such as LTL or CTL, the logic $\mathcal{F}$ is not directly suitable for verification because of its infinitary connectives. However, $\mathcal{F}$ achieves its goal of expressing behavioural properties of higher-order programs with algebraic effects. As a result, it can be used as the starting point for designing higher-level logics for algebraic effects that would be suitable for verification.

\section{Future Work}

The examples of effects considered in this work were nondeterminism, probabilistic choice, global store and I/O. A next step would be to consider local store, which is also an algebraic effect \cite{DBLP:conf/fossacs/PlotkinP02}. Relevant questions here are integrating local store into the framework for generic algebraic effects provided by the logic $\mathcal{F}$, and investigating whether the current relationship between different forms of program equivalence still holds. 

Operational notions of program equivalence for higher-order languages with local store have been studied by Pitts and Stark \cite{PitS98}, who obtained a characterisation of contextual equivalence using a logical relation. Yoshida, Honda and Berger \cite{DBLP:journals/lmcs/YoshidaHB08} propose an extension of Hoare logic to reason about local store. They prove that this logic characterises contextual equivalence.
However, local store remains a challenging effect. We anticipate that the notion of computation tree, on which observations from $\mathcal{F}$ are based, would need to be changed to account for local store.

Finally, one could study a fourth notion of program equivalence for ECPS called \emph{normal-form bisimilarity} \cite{DBLP:conf/csl/LassenL07}. This has not been extended to algebraic effects before. One difference between normal-form bisimilarity and applicative bisimilarity is that the latter only relates closed terms, whereas the former also considers open terms. 

The definition of normal-form bisimulation is related to game semantics \cite{AbrM99}, which models open programs as strategies in a two-player game. Similarly, the operational semantics of ECPS can be extended to open terms. A satisfaction relation between open terms and logical formulas can be defined based on this operational semantics and the observations in $\mathfrak{P}$. The question here is how does normal-form bisimilarity compare to the program equivalence induced by game satisfaction? and furthermore, how do they both compare to applicative bisimilarity? We conjecture that normal-form bisimilarity is closer to  game satisfaction than applicative bisimilarity, but leave these questions for future investigation.




\section{Personal Reflections}

From a technical perspective the project presented many challenges. I learnt about algebraic effects and became familiar with proof techniques such as logical relations, coinduction and Howe's method. I completed a significant number or fairly long inductive proofs which required care and organisation.

The proof of correctness of the CPS translation (Theorem~\ref{Thm_CPS_correct}) was more difficult than expected. Two approaches I tried initially failed: an inductive proof using domain theoretic techniques and a coinductive proof. I finally settled on the combination between logical relations and coinduction which proved successful.

The \emph{Categories, Proofs and Processes} course \cite{AbrT17} I took this year helped me understand some of the abstract material needed for the project, such as coinduction, while the \emph{Principles of Programming Languages} lectures I attended provided useful background on continuations. Other useful courses were \emph{Formal Verification} and \emph{Automata, Logic and Games}, where I learnt about modal logics used in verification to express program properties, such as LTL, CTL and the modal $\mu$-calculus.

Overall, my ability to understand new theoretical material and to carry out complex proofs has improved. Considering this and the novel theoretical contribution that the dissertation makes, I can say that the project has been a success.

\addcontentsline{toc}{chapter}{Bibliography}
\bibliography{modal}

\appendix
\numberwithin{equation}{chapter}
\addappheadtotoc 

\chapter{Proofs about the CPS translation}\label{App_cps}

\begin{replemma}{Lem_sim_eval}
If $(S,M)\rightarrowtail^k(S',M')$ and $t\longrightarrow^*t'$ where $k<n$ then:
\begin{equation*}
(S,M)\ \mathcal{S}^n_{\tau,\rho}\ t \iff (S',M')\ \mathcal{S}^{n-k}_{\tau,\rho}\ t'.
\end{equation*}
\end{replemma}
\begin{proof}
Assume $(S,M)\ \mathcal{S}^n_{\tau,\rho}\ t$. To show $(S',M')\ \mathcal{S}^{n-k}_{\tau,\rho}\ t'$ it suffices to show that the two conditions in Definition \ref{Def_step_indexed_sim} are satisfied for $((S',M'),t')$. 

Assume that for some $p<n-k$, $(S',M')\longrightarrow^p (S'',\sigma(V;W))$. Then we know $(S,M)\longrightarrow^{p+k} (S'',\sigma(V;W))$ and $p+k<n$. So we can deduce $t\longrightarrow^*t'\longrightarrow^*\sigma(v, x.t'')$ where $V=\overline{l}$ and $v=\overline{l}$, and $\forall l\in\mathbb{N}.\ (S'',W\ \overline{l})\ \mathcal{S}^{n-k-p}_{\tau'',\rho}\ t''[\overline{l}/x]$, as required. 

Assume that for some $p\leq n-k$, $(S',M')\longrightarrow^p(id,\return{V})$. Then $(S,M)\longrightarrow^{p+k}(id,\return{V})$. So from the initial assumption we know $t\longrightarrow^*t'\longrightarrow^*\downarrow$, as required.

The reverse implication is proved similarly.
\end{proof}

\begin{replemma}{Lem_step_simp_impl_sim}
For any configuration $(S,M)$ and closed computation $t$:
\begin{equation*}
(\forall n\in\mathbb{N}.\ (S,M)\ \mathcal{S}^n_{\tau,\rho}\ t) \implies (S,M)\ \mathcal{S}_{\tau,\rho}\ t.
\end{equation*}
\end{replemma}
\begin{proof}
Assume $\forall n\in\mathbb{N}.\ (S,M)\ \mathcal{S}^n_{\tau,\rho}\ t$.

To prove $(S,M)\ \mathcal{S}_{\tau,\rho}\ t$, it suffices to find a simulation relation respecting the conditions in Definition \ref{Def_sim} which contains the pair $((S,M),t)$. Because $\mathcal{S}$ is the greatest simulation, this relation will be contained in $\mathcal{S}$.

Let $\bigcap_{n\in\mathbb{N}}\mathcal{S}^n$ be the candidate simulation. Check the condititions in Definition \ref{Def_sim}:
\begin{enumerate}
\item Assume $(S,M)\rightarrowtail^*(S',\sigma(V;W))$. Then there exists $p\in\mathbb{N}$ such that $(S,M)\rightarrowtail^p(S',\sigma(V;W))$. By assumption we know $\forall m>0.\ (S,M)\ \mathcal{S}^{p+m}_{\tau,\rho}\ t$. So by the first condition in Definition \ref{Def_step_indexed_sim} we know that $t\longrightarrow^*\sigma(v,x.t')$ where $V=\overline{l}$ and $v=\overline{l}$, and that $\forall l\in\mathbb{N}.\ (S',W\ \overline{l})\ \bigcap_{m>0}\mathcal{S}^m_{\tau',\rho}\ t'[\overline{l}/x]$. For $m=0$, $(S',W\ \overline{l})\ \mathcal{S}^0_{\tau',\rho}\ t'[\overline{l}/x]$ can be easily seen to hold. So we have $\forall l\in\mathbb{N}.\ (S',W\ \overline{l})\ \bigcap_{m\in\mathbb{N}}\mathcal{S}^m_{\tau',\rho}\ t'[\overline{l}/x]$ as required.
\item Assume $(S,M)\rightarrowtail^p(id,\return{V})$ for some $p\in\mathbb{N}$. Then from $(S,M)\ \mathcal{S}^p_{\tau,\rho}\ t$ we have $t\longrightarrow^*\downarrow$ as required.
\end{enumerate}
\end{proof}

\begin{replemma}{Lem_fund_prop}[Fundamental property of the logical relation]
For any value $\overrightarrow{x_i:\tau_i} \vdash V :\rho$, any computation $\overrightarrow{x_i:\tau_i} \vdash M :\rho$ and any stack $\overrightarrow{x_i:\tau_i} \vdash S :\rho\Rightarrow\rho'$ in EPCF:
\begin{enumerate}
\item $\overrightarrow{x_i:\tau_i} \vdash V\ \mathcal{R}^\mathfrak{v}_\rho\ V^*$.
\item $\overrightarrow{x_i:\tau_i} \vdash M\ \mathcal{R}^\mathfrak{c}_\rho\ M^*$.
\item $\overrightarrow{x_i:\tau_i} \vdash S\ \mathcal{R}^\mathfrak{s}_{\rho,\rho'}\ S^*$.
\end{enumerate}
\end{replemma}
\begin{proof}
The proof is by induction on the typing derivations of $V$, $M$ and $S$. All cases are presented below, starting with $\mathbf{fix}$ which is the most interesting.

\paragraph{Case (fix), $M=\fix{W}$.} Assume $\overrightarrow{x_i:\tau_i} \vdash W\ \mathcal{R}^\mathfrak{v}_{(\rho_1{\rightarrow}\rho_2){\rightarrow}(\rho_1{\rightarrow}\rho_2)}\ W^*$. We need to prove:
\begin{equation*}
\forall n\in\mathbb{N}.\ \forall\overrightarrow{(V_i,v_i):\tau_i}.\ (\overrightarrow{(V_i,v_i)\in\mathcal{R}^{\mathfrak{v},n}_{\tau_i}}\implies (\fix{W}[\overrightarrow{V_i/x_i}],(\fix{W})^*[\overrightarrow{v_i/x_i}])\in\mathcal{R}^{\mathfrak{c},n}_{\rho_1\rightarrow\rho_2}).
\end{equation*}
This is proved by induction on $n$.

\paragraph{Base case $n=0$.} To prove $(\fix{W}[\overrightarrow{V_i/x_i}],(\fix{W})^*[\overrightarrow{v_i/x_i}])\in\mathcal{R}^{\mathfrak{c},0}_{\rho_1\rightarrow\rho_2}$ we need to check that:
\begin{equation*}
\forall (S,k)\in\mathcal{R}^{\mathfrak{s},0}_{\rho_1\rightarrow\rho_2,\rho}.\ (S,\fix{W}[\overrightarrow{V_i/x_i}])\ \mathcal{S}^0_{\rho_1\rightarrow\rho_2,\rho}\ ((\fix{W})^*[\overrightarrow{v_i/x_i}]\ k).
\end{equation*}
The first condition in the definition of step-indexed similarity holds because there is no $p<0$. The second condition holds because $(S,\fix{W}[\overrightarrow{V_i/x_i}])\longrightarrow^0(id,\return{V})$ is false.

\paragraph{Induction step $n>0$.}The induction hypothesis is:
\begin{equation*}
\forall m<n.\ \forall\overrightarrow{(V_i,v_i):\tau_i}.\ (\overrightarrow{(V_i,v_i)\in\mathcal{R}^{\mathfrak{v},m}_{\tau_i}}\implies (\fix{W}[\overrightarrow{V_i/x_i}],(\fix{W})^*[\overrightarrow{v_i/x_i}])\in\mathcal{R}^{\mathfrak{c},m}_{\rho_1\rightarrow\rho_2}).
\end{equation*}

Consider some arbitrary $\overrightarrow{(V_i,v_i)\in\mathcal{R}^{\mathfrak{v},n}_{\tau_i}}$. By the definition of a step-indexed relation we know $\forall m<n.\ \mathcal{R}^{\mathfrak{v},n}_{\tau_i}\subseteq\mathcal{R}^{\mathfrak{v},m}_{\tau_i}$.

We need to prove that:
\begin{equation*}
\forall p\leq n.\ \forall(S,k)\in\mathcal{R}^{\mathfrak{s},p}_{\rho_1{\rightarrow}\rho_2,\rho}.\ (S, \fix{W}[\overrightarrow{V_i/x_i}])\ \mathcal{S}^p_{\rho_1{\rightarrow}\rho_2,\rho}\ ((\fix{W})^*[\overrightarrow{v_i/x_i}]\ k).
\end{equation*}

For $p<n$ this follows immediately from the induction hypothesis if we choose $m=n-1$.

For the case $p=n$ let:
\begin{align*}
E &= \lbd{y}{\rho_1}{\letin{\fix{W[\overrightarrow{V_i/x_i}]}}{z}{z\ y}}	\\
F &= \lbd{x}{\rho_1}{\letin{W[\overrightarrow{V_i/x_i}]\ E}{w}{w\ x}}.	\\
\end{align*}
From the operational semantics and the CPS translation we can see that:
\begin{gather*}
(S, \fix{W}[\overrightarrow{V_i/x_i}]) \rightarrowtail (S,\return{F})	\\
(\fix{W})^*[\overrightarrow{v_i/x_i}]\ k \longrightarrow^* k\ F^*.
\end{gather*}
Therefore, using Lemma \ref{Lem_sim_eval} it suffices to prove:
\begin{equation*}
(S,\return{F})\ \mathcal{S}^{n-1}_{\rho_1{\rightarrow}\rho_2,\rho}\ k\ F^*.
\end{equation*}
We have assumed $(S,k)\in\mathcal{R}^{\mathfrak{s},n}_{\rho_1{\rightarrow}\rho_2,\rho}$ so by definition of $\mathcal{R}$ we know:
\begin{equation*}
\forall(V,v)\in\mathcal{R}^{\mathfrak{v},n-1}_{\rho_1{\rightarrow}\rho_2}.\ (S,\return{V})\ \mathcal{S}^{n-1}_{\rho_1{\rightarrow}\rho_2,\rho}\ k\ v.
\end{equation*}
So it is enough to show $(F, F^*)\in\mathcal{R}^{\mathfrak{v},n-1}_{\rho_1{\rightarrow}\rho_2}$, that is:
\begin{multline*}
\forall p_1<n-1.\ \forall(V_1,v_1)\in\mathcal{R}^{\mathfrak{v},p_1}_{\rho_1}.\ \forall p_2\leq p_1.\ \forall(S_2,k_2)\in\mathcal{R}^{\mathfrak{s},p_2}_{\rho_2,\rho'}.	\\
(S_2,F\ V_1)\ \mathcal{S}^{p_2}_{\rho_2,\rho'}\ (\lbd{k_1}{\neg\rho_2^*}{F^*\ (v_1,k_1)})\ k_2.
\end{multline*}
From the operational semantics we can deduce:
\begin{gather*}
(S_1,F\ V_1)\rightarrowtail^2(S_2\circ(\letin{(-)}{w}{wV_1}),W[\overrightarrow{V_i/x_i}]\ E)	\\
(\lbd{k_1}{\neg\rho_2^*}{F^*\ (v_1,k_1)})\ k_2 \longrightarrow^* W^*[\overrightarrow{v_i/x_i}]\ (E^*,\lbd{w}{(\rho_1{\rightarrow}\rho_2)^*}{((\lbd{l''}{\neg\rho_2^*}{w\ (v_1,l'')})\ k_2)}).
\end{gather*}

For $p_2<2$, we immediately have that $(S_2,F\ V_1)\ \mathcal{S}^{p_2}_{\rho_2,\rho'}\ (\lbd{k_1}{\neg\rho_2^*}{F^*\ (v_1,k_1)})\ k_2$ because in the definition of $\mathcal{S}^{p_2}$, the premises of both implications are false.

If $p_2 \geq 2$, then using Lemma \ref{Lem_sim_eval} it suffices to show:
\begin{multline*}
(S_2\circ(\letin{(-)}{w}{wV_1}),W[\overrightarrow{V_i/x_i}]\ E)\ \mathcal{S}^{p_2-2}_{\rho_1{\rightarrow}\rho_2,\rho''}	\\ W^*[\overrightarrow{v_i/x_i}]\ (E^*,\lbd{w}{(\rho_1{\rightarrow}\rho_2)^*}{((\lbd{l''}{\neg\rho_2^*}{w\ (v_1,l'')})\ k_2)}).
\end{multline*}
From the induction hypothesis for $W$ we know that:
\begin{equation}\label{Eq_rtp_1}
(W[\overrightarrow{V_i/x_i}],W^*[\overrightarrow{v_i/x_i}])\in\mathcal{R}^{\mathfrak{v},p_2-1}_{(\rho_1{\rightarrow}\rho_2){\rightarrow}(\rho_1{\rightarrow}\rho_2)}.
\end{equation} 
We would like to use this to prove the previous statement. Unpacking the definition of $\mathcal{R}^{\mathfrak{v},p_2-1}_{(\rho_1{\rightarrow}\rho_2){\rightarrow}(\rho_1{\rightarrow}\rho_2)}$ we see that if we prove:
\begin{enumerate}
\item \label{Enum_rtp_3} $(E,E^*)\in\mathcal{R}^{\mathfrak{v},p_2-2}_{\rho_1{\rightarrow}\rho_2}$.
\item \label{Enum_rtp_4} $(S_2\circ(\letin{(-)}{w}{wV_1}),\lbd{w}{(\rho_1{\rightarrow}\rho_2)^*}{((\lbd{l''}{\neg\rho_2^*}{w\ (v_1,l'')})\ k_2)}) \in\mathcal{R}^{\mathfrak{s},p_2-2}_{\rho_1{\rightarrow}\rho_2,\rho''}$.
\end{enumerate}
we can then use equation \ref{Eq_rtp_1} to obtain the desired result. 

We first prove the second statement, which is equivalent to:
\begin{multline*}
\forall p_3\leq p_2-2.\ \forall(V_3,v_3)\in\mathcal{R}^{\mathfrak{v},p_3}_{\rho_1{\rightarrow}\rho_2}.	\\
(S_2\circ(\letin{(-)}{w}{wV_1}),V_3)\ \mathcal{S}^{p_3}_{\rho_1{\rightarrow}\rho_2,\rho''}\ (\lbd{w}{(\rho_1{\rightarrow}\rho_2)^*}{((\lbd{l''}{\neg\rho_2^*}{w\ (v_1,l'')})\ k_2)})\ v_3.
\end{multline*}
From the operational semantics:
\begin{gather*}
(S_2\circ(\letin{(-)}{w}{wV_1}),V_3) \rightarrowtail (S_2,V_3\ V_1)	\\
(\lbd{w}{(\rho_1{\rightarrow}\rho_2)^*}{((\lbd{l''}{\neg\rho_2^*}{w\ (v_1,l'')})\ k_2)})\ v_3 \longrightarrow^* v_3\ (v_1,k_2).
\end{gather*}
Therefore, it is enough to show:
\begin{equation}\label{Eq_rtp_2}
(S_2,V_3\ V_1)\ \mathcal{S}^{p_3-1}_{\rho_2,\rho''}\ v_3\ (v_1,k_2).
\end{equation}
We have assumed $(V_3,v_3)\in\mathcal{R}^{\mathfrak{v},p_3}_{\rho_1{\rightarrow}\rho_2}$ and $(V_1,v_1)\in\mathcal{R}^{\mathfrak{v},p_1}_{\rho_1}\subseteq\mathcal{R}^{\mathfrak{v},p_3-1}_{\rho_1}$, so by definition of $\mathcal{R}^{\mathfrak{v}}$ we know:
\begin{equation*}
(V_3\ V_1, \lbd{k}{\neg\rho_2^*}{v_3\ (v_1,k)}) \in\mathcal{R}^{\mathfrak{c},p_3-1}_{\rho_2}.
\end{equation*}
We have also assumed $(S_2,k_2)\in\mathcal{R}^{\mathfrak{s},p_2}_{\rho_2,\rho'} \subseteq \mathcal{R}^{\mathfrak{s},p_3-1}_{\rho_2,\rho'}$ so by definition of $\mathcal{R}^\mathfrak{c}$ we know:
\begin{equation*}
(S_2,V_3\ V_1)\ \mathcal{S}^{p_3-1}_{\rho_2,\rho''}\ (\lbd{k}{\neg\rho_2^*}{v_3\ (v_1,k)})\ k_2.
\end{equation*}
So from Lemma \ref{Lem_sim_eval} we obtain the desired result, equation  \ref{Eq_rtp_2}.

Now prove the first statement, \ref{Enum_rtp_3}: this is equivalent to proving
\begin{multline*}
\forall p_4<p_2-2.\ \forall(V_4,v_4)\in\mathcal{R}^{\mathfrak{v},p_4}_{\rho_1}.\ \forall p_5 \leq p_4.\ \forall(S_5,k_5)\in\mathcal{R}^{\mathfrak{s},p_5}_{\rho_2,\sigma}.	\\
(S_5,E\ V_4)\ \mathcal{S}^{p_5}_{\rho_2,\sigma}\ (\lbd{k}{\neg\rho_2^*}{E^*\ (v_4,k)})\ k_5.
\end{multline*}
Using the operational semantics we have:
\begin{gather*}
(S_5,E\ V_4) \rightarrowtail^2 (S_5 \circ (\letin{(-)}{z}{zV_4}), \fix{W}[\overrightarrow{V_i/x_i}])	\\
(\lbd{k}{\neg\rho_2^*}{E^*\ (v_4,k)})\ k_5 \longrightarrow^* (\fix{W})^*[\overrightarrow{v_i/x_i}]\ \lbd{z}{(\rho_1{\rightarrow}\rho_2)^*}{((\lbd{p'}{\neg\rho_2^*}{z\ (v_4,p')})\ k_5)}.
\end{gather*}
So it is enough to show:
\begin{multline}\label{Eq_rtp_5}
(S_5 \circ (\letin{(-)}{z}{zV_4}), \fix{W}[\overrightarrow{V_i/x_i}])\ \mathcal{S}^{p_5-2}_{\rho_1{\rightarrow}\rho_2,\sigma}	\\ (\fix{W})^*[\overrightarrow{v_i/x_i}]\ \lbd{z}{(\rho_1{\rightarrow}\rho_2)^*}{((\lbd{p'}{\neg\rho_2^*}{z\ (v_4,p')})\ k_5)}.
\end{multline}
From the induction hypothesis on natural numbers we know that:
\begin{equation*}
(\fix{W}[\overrightarrow{V_i/x_i}],(\fix{W})^*[\overrightarrow{v_i/x_i}])\in\mathcal{R}^{\mathfrak{c},p_5-2}_{\rho_1\rightarrow\rho_2}.
\end{equation*}
Similarly to the proof for $S_2$ and $k_2$, statement \ref{Enum_rtp_4}, we can show that:
\begin{equation*}
(S_5 \circ (\letin{(-)}{z}{zV_4}), \lbd{z}{(\rho_1{\rightarrow}\rho_2)^*}{((\lbd{p'}{\neg\rho_2^*}{z\ (v_4,p')})\ k_5)}) \in\mathcal{R}^{\mathfrak{s},p_5-2}_{\rho_1{\rightarrow}\rho_2,\sigma}.
\end{equation*}
Combining the last two observations and using the definition of $\mathcal{R}^\mathfrak{c}$ we obtain the desired result, equation \ref{Eq_rtp_5}.

\paragraph{Case (op), $M=\sigma(V;W)$ for some $\sigma\in\Sigma$.} The induction hypothesis is:
\begin{equation*}
\overrightarrow{x_i:\sigma_i}\vdash V\ \mathcal{R}^\mathfrak{v}_{\mathbbm{N}}\ V^* \quad\text{ and }\quad \overrightarrow{x_i:\sigma_i}\vdash W\ \mathcal{R}^\mathfrak{v}_{\mathbbm{N}{\rightarrow}\tau}\ W^*.
\end{equation*}

We need to prove;
\begin{multline*}
\forall n\in\mathbb{N}.\ \forall\overrightarrow{(V_i,v_i):\sigma_i}.\ (\forall i.\ (V_i,v_i)\in\mathcal{R}^{\mathfrak{v},n}_{\sigma_i})\implies 	\\
(\sigma(V;W)[\overrightarrow{V_i/x_i}], \lbd{k}{\neg\tau^*}{\sigma(V^*;x.W^*(x,k))[\overrightarrow{v_i/x_i}]}) \in \mathcal{R}^{\mathfrak{c},n}_{\tau}.
\end{multline*}

Consider $p\leq n$ and $(S,k)\in\mathcal{R}^{\mathfrak{s},p}_{\tau,\rho}$. It suffices to prove:
\begin{equation*}
(S,\sigma(V;W)[\overrightarrow{V_i/x_i}])\ \mathcal{S}^{p}_{\tau,\rho}\ (\lbd{k}{\neg\tau^*}{\sigma(V^*;x.W^*(x,k))[\overrightarrow{v_i/x_i}]})\ k.
\end{equation*}

We prove this by checking the conditions in the definition of $\mathcal{S}^p$. If $p=0$ the first condition is satisfied immeditaly because of the strict inequality. If $p>0$ we can choose $p'=0$ such that $(S,\sigma(V;W)[\overrightarrow{V_i/x_i}])\rightarrowtail^{p'}(S,\sigma(V;W)[\overrightarrow{V_i/x_i}])$. We see that $(\lbd{k}{\neg\tau^*}{\sigma(V^*;x.W^*(x,k))}[\overrightarrow{v_i/x_i}])\ k \longrightarrow^* \sigma(V^*;x.W^*(x,k))[\overrightarrow{v_i/x_i}]$ as required. By the induction hypothesis for $V$ we know that $V[\overrightarrow{V_i/x_i}]=\overline{m}$ and $V^*[\overrightarrow{v_i/x_i}]=\overline{m}$ for some $m\in\mathbb{N}$. It remains to show that:
\begin{equation*}
\forall l\in\mathbb{N}.\ (S, W[\overrightarrow{V_i/x_i}]\ \overline{l})\ \mathcal{S}^{p}_{\tau,\rho}\ W^*[\overrightarrow{v_i/x_i}]\ (\overline{l},k).
\end{equation*}
This follows from the induction hypothesis for $W$, using Lemma \ref{Lem_sim_eval}.

The second condition in the definition of $\mathcal{S}^p$ is satisfied because the premise of the implication is false. $(S,\sigma(V;W)[\overrightarrow{V_i/x_i}])$ cannot reduce anymore according to the $\rightarrowtail$ relation, and is not of the form $(S,\return{V'})$.

\paragraph{Case (ret), $M=\return{W}$.} From the induction hypothesis we know:
\begin{equation*}
\forall n\in\mathbb{N}.\ \forall\overrightarrow{(V_i,v_i):\sigma_i}.\ (\forall i.\ (V_i,v_i)\in\mathcal{R}^{\mathfrak{v},n}_{\sigma_i})\implies (W[\overrightarrow{V_i/x_i}], W^*[\overrightarrow{v_i/x_i}]) \in \mathcal{R}^{\mathfrak{v},n}_{\tau}.
\end{equation*}
We need to prove that:
\begin{multline*}
\forall n\in\mathbb{N}.\ \forall\overrightarrow{(V_i,v_i):\sigma_i}.\ (\forall i.\ (V_i,v_i)\in\mathcal{R}^{\mathfrak{v},n}_{\sigma_i})\implies 	\\
(\return{W[\overrightarrow{V_i/x_i}]}, (\return{W})^*[\overrightarrow{v_i/x_i}]) \in \mathcal{R}^{\mathfrak{c},n}_{\tau}.
\end{multline*}
So it suffices to show:
\begin{multline*}
\forall n\in\mathbb{N}.\ (W[\overrightarrow{V_i/x_i}], W^*[\overrightarrow{v_i/x_i}]) \in \mathcal{R}^{\mathfrak{v},n}_{\tau} \implies	\\ (\return{W[\overrightarrow{V_i/x_i}]}, (\return{W})^*[\overrightarrow{v_i/x_i}]) \in \mathcal{R}^{\mathfrak{c},n}_{\tau}.
\end{multline*}

Consider $p\leq n$ and $(S,k)\in\mathcal{R}^{\mathfrak{s},p}_{\tau,\rho}$. We need to show that:
\begin{equation*}
(S,\return{W[\overrightarrow{V_i/x_i}]})\ \mathcal{S}^{p}_{\tau,\rho}\ (\lbd{k}{\neg\tau^*}{k\ W^*[\overrightarrow{v_i/x_i}]})\ k.
\end{equation*}
Using Lemma \ref{Lem_sim_eval} it is enough to show $(S,\return{W[\overrightarrow{V_i/x_i}]})\ \mathcal{S}^{p}_{\tau,\rho}\ k\ W^*[\overrightarrow{v_i/x_i}]$. This follows from $(S,k)\in\mathcal{R}^{\mathfrak{s},p}_{\tau,\rho}$ and $(W[\overrightarrow{V_i/x_i}], W^*[\overrightarrow{v_i/x_i}]) \in \mathcal{R}^{\mathfrak{v},n}_{\tau}\subseteq\mathcal{R}^{\mathfrak{v},p}_{\tau}$ by definition of $\mathcal{R}^{\mathfrak{s}}$.

\paragraph{Case (let), $M=\letin{N_1}{y}{N_2}$.} The induction hypothesis is:
\begin{equation*} 
\overrightarrow{x_i:\sigma_i}\vdash N_1\ \mathcal{R}^\mathfrak{c}_{\tau_1}\ N_1^* \quad\text{ and }\quad \overrightarrow{x_i:\sigma_i},y:\tau_1\vdash N_2\ \mathcal{R}^\mathfrak{c}_{\tau_2}\ N_2^*.
\end{equation*}

We need to prove:  
\begin{multline*}
\forall n\in\mathbb{N}.\ \forall\overrightarrow{(V_i,v_i):\sigma_i}.\ (\forall i.\ (V_i,v_i)\in\mathcal{R}^{\mathfrak{v},n}_{\sigma_i})\implies 	\\
(\letin{N_1[\overrightarrow{V_i/x_i}]}{y}{N_2[\overrightarrow{V_i/x_i}]}, (\letin{N_1}{y}{N_2})^*[\overrightarrow{v_i/x_i}]) \in \mathcal{R}^{\mathfrak{c},n}_{\tau_2}.
\end{multline*}

Consider $p\leq n$ and $(S,k)\in\mathcal{R}^{\mathfrak{s},p}_{\tau_2,\rho}$. For $p\geq 1$ it is enough to show:
\begin{equation*}
(S\circ\letin{(-)}{y}{N_2[\overrightarrow{V_i/x_i}]}, N_1)\ \mathcal{S}^{p-1}_{\tau_2,\rho}\ N_1^*[\overrightarrow{v_i/x_i}]\ (\lbd{y}{\tau_1^*}{N_2^*[\overrightarrow{v_i/x_i}]}\ k).
\end{equation*}

By induction hypothesis for $N_1$ it is enough to show:
\begin{equation*}
(S\circ\letin{(-)}{y}{N_2[\overrightarrow{V_i/x_i}]}, \lbd{y}{\tau_1^*}{N_2^*[\overrightarrow{v_i/x_i}]}\ k) \in \mathcal{R}^{\mathfrak{s},p-1}_{\tau_1,\rho}.
\end{equation*}

Consider $p_1\leq p$ and $(W_1,w_1)\in\mathcal{R}^{\mathfrak{v},p_1}_{\tau_1}$. For $p_1\geq 1$ it is enough to show:
\begin{equation*}
(S, N_2[\overrightarrow{V_i/x_i}, W_1/y])\ \mathcal{S}^{p_1-1}_{\tau_2,\rho}\ N_2^*[\overrightarrow{v_i/x_i}, w_1/y]\ k.
\end{equation*}
This follows from the induction hypothesis for $N_2$.

\paragraph{Case (app), $M=W\ U$.} If $W$ has type $\tau_1\rightarrow\tau_2$ and $U$ has type $\tau_1$, for each $n\in\mathbb{N}$, we need to prove:
\begin{equation*}
\forall p\leq n.\ \forall (S,k)\in\mathcal{R}^{\mathfrak{s},p}_{\tau_2,\rho}.\ (S,(W\ U)[\overrightarrow{V_i/x_i}])\ \mathcal{S}^p_{\tau_2,\rho}\ (\lbd{k}{\neg\tau_2^*}{W^*[\overrightarrow{v_i/x_i}]\ (U^*[\overrightarrow{v_i/x_i}],k)})\ k.
\end{equation*}

This follows from $(W[\overrightarrow{V_i/x_i}], W^*[\overrightarrow{v_i/x_i}])\in\mathcal{R}^{\mathfrak{v},p+1}_{\tau_1{\rightarrow}\tau_2,\rho}$ and $(U[\overrightarrow{V_i/x_i}], U^*[\overrightarrow{v_i/x_i}])\in\mathcal{R}^{\mathfrak{v},p}_{\tau_1}$, which we know from the induction hypothesis.

\paragraph{Case (case), $M=\casess{V}{N_1}{y}{N_2}$.} We need to prove for each $n\in\mathbb{N}$:
\begin{multline*}
\forall p\leq n.\ \forall (S,k)\in\mathcal{R}^{\mathfrak{s},p}_{\tau,\rho}.\ (S,\ (\casess{V}{N_1}{y}{N_2})[\overrightarrow{V_i/x_i}])\ \mathcal{S}^p_{\tau,\rho}	\\
(\lbd{k}{\neg\tau^*}{\caset{V^*}{N_1^*\ k}{y}{N_2^*\ k}})[\overrightarrow{v_i/x_i}]\ k.
\end{multline*}

We proceed by a case split on whether $V[\overrightarrow{V_i/x_i}]=\overline{0}$ or not. Then the result follows from the induction hypothesis for $N_1$ or $N_2$, respectively.

\paragraph{Case (sid), $S=id$.} In this case we need to prove:
\begin{equation*}
\forall n\in\mathbb{N}.\ \forall p\leq n.\ \forall(V,v)\in\mathcal{R}^{\mathfrak{v},p}_{\rho}.\ (id,\return{V})\ \mathcal{S}^p_{\rho,\rho}\ (\lbd{x}{\neg\rho^*}{\downarrow})\ v.
\end{equation*}

We show that the conditions in the definition of $\mathcal{S}^p$ hold. The first condition holds because $(id,\return{V})\not\rightarrowtail$ and $(id,\return{V})$ is not of the form $(S,\sigma(V;W))$. For the second condition we can choose $p'=0\leq p$ such that $(id,\return{V})\rightarrowtail^{p'}(id,\return{V})$ and we have $(\lbd{x}{\neg\rho^*}{\downarrow})\ v\longrightarrow\downarrow$, as required.

\paragraph{Case (slet), $S=S'\circ\letin{(-)}{y}{M}$.} The induction hypothesis is:
\begin{equation*}
\overrightarrow{x_i:\sigma_i}\vdash S'\ \mathcal{R}^\mathfrak{s}_{\tau_2,\rho}\ S'^* \quad\text{ and }\quad \overrightarrow{x_i:\sigma_i},y:\tau_1\vdash M\ \mathcal{R}^\mathfrak{c}_{\tau_2}\ M^*.
\end{equation*}

We need to prove:
\begin{multline*}
\forall n\in\mathbb{N}.\ \forall p\leq n.\ \forall(V,v)\in\mathcal{R}^{\mathfrak{v},p}_{\tau_1}.\\ (S'[\overrightarrow{V_i/x_i}]\circ\letin{(-)}{y}{M[\overrightarrow{V_i/x_i}]},\return{V})\ \mathcal{S}^p_{\tau_1,\rho}\ (\lbd{y}{\tau_1^*}{(M^*\ S'^*)[\overrightarrow{v_i/x_i}]})\ v.
\end{multline*}

If $p>0$, from Lemma \ref{Lem_sim_eval}, it is enough to show that:
\begin{equation*}
(S'[\overrightarrow{V_i/x_i}],M[\overrightarrow{V_i/x_i},V/y])\ \mathcal{S}^{p-1}_{\tau_2,\rho}\ M^*[\overrightarrow{v_i/x_i},v/y]\ S'^*[\overrightarrow{v_i/x_i}].
\end{equation*}
The result follows from the induction hypothesis for $M$ and $S'$.

\paragraph{Case (lbd), $V=\lbd{y}{\tau_1}{M}$.} The result follows from the induction hypothesis for $M$, unpacking the definition of $\mathcal{R}^\mathfrak{v}_{\tau_1{\rightarrow}\tau_2}$.

\paragraph{Case (var), $V=y$.} We know by assumption that $\overrightarrow{x_i:\sigma_i}\vdash y:\tau$. We need to prove:
\begin{equation*}
\forall n\in\mathbb{N}.\ (\forall(V_i,v_i)\in\mathcal{R}^{\mathfrak{v},n}_{\sigma_i})\implies(y[\overrightarrow{V_i/x_i}],y[\overrightarrow{v_i/x_i}])\in\mathcal{R}^{\mathfrak{v},n}_{\tau}.
\end{equation*}
This is true because $y[\overrightarrow{V_i/x_i}]=V_j$, $y[\overrightarrow{v_i/x_i}]=v_j$ and $\tau=\sigma_j$ for some $j$.

\paragraph{Case (zero), $V=Z$.} By definition of $\mathcal{R}^{\mathfrak{v},m}_{\mathbbm{N}}$ we know that $(Z,Z)\in\mathcal{R}^{\mathfrak{v},m}_{\mathbbm{N}}$ for any $m\in\mathbb{N}$.

\paragraph{Case (unit), $V=\star$.} Analogous to the previous case.

\paragraph{Case (succ), $V=S(W)$.} We need to prove that:
\begin{equation*}
\forall n\in\mathbb{N}.\ (\forall(V_i,v_i)\in\mathcal{R}^{\mathfrak{v},n}_{\sigma_i})\implies(S(W[\overrightarrow{V_i/x_i}]),\texttt{succ}(W^*[\overrightarrow{v_i/x_i}]))\in\mathcal{R}^{\mathfrak{v},n}_{\mathbbm{N}}.
\end{equation*} 

From the induction hypothesis for $W$ we know that $W[\overrightarrow{V_i/x_i}]=\overline{m}$ and $W^*[\overrightarrow{v_i/x_i}]=\overline{m}$ for some $m\in\mathbb{N}$. So $S(W[\overrightarrow{V_i/x_i}])=\overline{m+1}$ and $\texttt{succ}(W^*[\overrightarrow{v_i/x_i}])=\overline{m+1}$, which gives us the required result.
\end{proof}

\begin{replemma}{Lem_ecps_tree_coalgmor}
The function $\treet{-}:(\vdash)\longrightarrow \textit{Trees}_\Sigma$  is a coalgebra morphism in the category of coalgebras for the functor $T$.
\end{replemma}
\begin{proof}
We will use the definitions of the functor $T$ and coalgebra morphisms $c$ and $a$ from Section \ref{Sec_cps_correct}.

First, we describe the $\omega$-CPO structure on $\textit{Trees}_\Sigma$. The partial order $\leq$ satisfies the following:
\begin{align}
&\forall \textit{tr}\in\textit{Trees}_\Sigma.\ \bot\leq\textit{tr}	\text{ and } (\textit{tr} \leq \bot \Longleftrightarrow \textit{tr}=\bot)	\\
&\forall \textit{tr}\in\textit{Trees}_\Sigma.\  (\downarrow\leq\textit{tr}\Longleftrightarrow\textit{tr}=\downarrow) \text{ and } (\textit{tr}\leq\downarrow\ \Longleftrightarrow \textit{tr}=\bot \text{ or } \textit{tr}=\downarrow)	\\
&\forall \sigma,\sigma'\in\Sigma,\ n,n'\in\mathbb{N},\ \overrightarrow{\textit{tr}_k},\overrightarrow{\textit{tr'}_k}\in\textit{Trees}_\Sigma.	\nonumber\\ &
\sigma_n(\overrightarrow{\textit{tr}_k})\leq\sigma'_{n'}(\overrightarrow{\textit{tr'}_k}) \quad\text{iff}\quad \sigma=\sigma' \text{ and } n=n' \text{ and } \forall k\in\mathbb{N}.\ \textit{tr}_k\leq \textit{tr'}_k. \label{ecps_tree_prop1}
\end{align}

From these properties we see that a chain $\textit{tr}_0\leq\textit{tr}_1\leq\textit{tr}_2\leq\ldots$ can have one of the following three forms:
\begin{enumerate}
\item $\forall n\in\mathbb{N}.\ \textit{tr}_n=\bot$. In this case define the least upper bound of the chain to be $\bigsqcup_{n\in\mathbb{N}}\textit{tr}_n=\bot$.

\item $\exists k\in\mathbb{N}$ such that $\forall n\geq k.\ \textit{tr}_n=\downarrow$ and $\forall n<k.\ \textit{tr}_n=\bot$. So we define the least upper bound to be $\bigsqcup_{n\in\mathbb{N}}\textit{tr}_n=\downarrow$.

\item $\exists k\in\mathbb{N}$ such that $\forall n\geq k.\ \textit{tr}_n=\sigma_i(\overrightarrow{\textit{tr}^n_j})$ and $\forall n<k.\ \textit{tr}_n=\bot$. By property \ref{ecps_tree_prop1} of $\leq$, $\sigma$ and $i$ need to stay the same for all $n$. So we can define the least upper bound as: $\bigsqcup_{n\in\mathbb{N}}\textit{tr}_n=\sigma_i(\bigsqcup_{n\geq k}\textit{tr}^n_1,\ \bigsqcup_{n\geq k}\textit{tr}^n_2,\ldots)$.
\end{enumerate}

Now we can define a partial order, $\leq_T$, on $T(\textit{Trees}_\Sigma)=\{\downarrow,\bot\}+\Sigma\times\mathbb{N}\times(\textit{Trees}_\Sigma)^\mathbb{N}$ which makes it into an $\omega$-CPO:
\begin{align*}
&\forall x\in T(\textit{Trees}_\Sigma).\ \bot\leq_T x	\text{ and } (x \leq_T \bot \Longleftrightarrow x=\bot)\\
&\forall x\in T(\textit{Trees}_\Sigma).\  (\downarrow\leq_T x\Longleftrightarrow x=\downarrow) \text{ and } (x\leq_T\downarrow\ \Longleftrightarrow x=\bot \text{ or } x=\downarrow)	\\
&\forall \sigma,\sigma'\in\Sigma,\ n,n'\in\mathbb{N},\ \overrightarrow{\textit{tr}_k},\overrightarrow{\textit{tr'}_k}\in\textit{Trees}_\Sigma.	\nonumber\\ &
(\sigma,n,(\overrightarrow{\textit{tr}_k}))\leq_T(\sigma',n',(\overrightarrow{\textit{tr'}_k})) \quad\text{iff}\quad \sigma=\sigma' \text{ and } n=n' \text{ and } \forall k\in\mathbb{N}.\ \textit{tr}_k\leq \textit{tr'}_k.
\end{align*}

Given this definition, a chain $x_0\leq_T x_1 \leq_T x_2 \leq_T\ldots$ can only have one of the three forms below:
\begin{enumerate}
\item $\forall n\in\mathbb{N}.\ x_n=\bot$. Define the least upped bound as: $\bigsqcup_{n\in\mathbb{N}}x_n=\bot$.

\item $\exists k\in\mathbb{N}$ such that $\forall n\geq k.\ x_n=\downarrow$ and $\forall n<k.\ x_n=\bot$. Define the least upped bound as: $\bigsqcup_{n\in\mathbb{N}}x_n=\downarrow$.

\item $\exists k\in\mathbb{N}$ such that $\forall n\geq k.\ x_n=(\sigma,i,(\overrightarrow{\textit{tr}^n_j}))$ and $\forall n<k.\ x_n=\bot$. Again it must be the case that $\sigma$ and $i$ are the same for all $n\geq k$ so we define the least upper bound as: $\bigsqcup_{n\in\mathbb{N}}x_n=(\sigma,i,(\bigsqcup_{n\geq k}\textit{tr}^n_1,\ \bigsqcup_{n\geq k}\textit{tr}^n_2,\ldots))$.
\end{enumerate}

To prove $\treet{-}$ is a coalgebra morphism, we must show that the following diagram commutes:

\begin{equation*}
\begin{tikzcd}
(\vdash) \arrow[r, "\treet{-}"] \arrow[d, "a"] 	&	\textit{Trees}_\Sigma \arrow[d, "c"]	\\
\{\downarrow,\bot\}+\Sigma\times\mathbb{N}\times(\vdash)^\mathbb{N} \arrow[r, "T(\treet{-})"]	&	\{\downarrow,\bot\}+\Sigma\times\mathbb{N}\times(\textit{Trees}_\Sigma)^\mathbb{N}
\end{tikzcd}
\end{equation*}

First show that for any chain $\textit{tr}_0\leq\textit{tr}_1\leq\textit{tr}_2\leq\ldots$ in $\textit{Trees}_\Sigma$:
\begin{equation*}
c(\bigsqcup_{n\in\mathbb{N}}\textit{tr}_n)=\bigsqcup_{n\in\mathbb{N}}c(\textit{tr}_n).
\end{equation*}
We do a case split on the structure of the chain following the cases that we identified previously:
\begin{enumerate}
\item $\forall n\in\mathbb{N}.\ \textit{tr}_n=\bot$. Then $c(\bigsqcup_{n\in\mathbb{N}}\textit{tr}_n)=\bot$ and $c(\textit{tr}_n)=\bot$ for all $n$ so we are done.

\item $\exists k\in\mathbb{N}$ such that $\forall n\geq k.\ \textit{tr}_n=\downarrow$ and $\forall n<k.\ \textit{tr}_n=\bot$. Then $c(\bigsqcup_{n\in\mathbb{N}}\textit{tr}_n)=c(\downarrow)=\downarrow$ and $\bigsqcup_{n\in\mathbb{N}}c(\textit{tr}_n)=\bigsqcup_{n\geq k}\downarrow=\downarrow$.

\item $\exists k\in\mathbb{N}$ such that $\forall n\geq k.\ \textit{tr}_n=\sigma_i(\overrightarrow{\textit{tr}^n_j})$ and $\forall n<k.\ \textit{tr}_n=\bot$. In this case: 
\begin{gather*}
c(\bigsqcup_{n\in\mathbb{N}}\textit{tr}_n)=c(\bigsqcup_{n\geq k}\textit{tr}_n)=c(\sigma_i(\bigsqcup_{n\geq k}\textit{tr}^n_1,\ \bigsqcup_{n\geq k}\textit{tr}^n_2,\ldots))=(\sigma, i, (\bigsqcup_{n\geq k}\textit{tr}^n_1,\ \bigsqcup_{n\geq k}\textit{tr}^n_2,\ldots))	\\
\bigsqcup_{n\in\mathbb{N}}c(\textit{tr}_n) = \bigsqcup_{n\geq k}(\sigma,i,(\textit{tr}^n_1,\textit{tr}^n_2,\ldots)) = (\sigma, i, (\bigsqcup_{n\geq k}\textit{tr}^n_1,\ \bigsqcup_{n\geq k}\textit{tr}^n_2,\ldots)).
\end{gather*}
so the two sides are equal as required.
\end{enumerate}

Now we can deduce that:
\begin{equation}\label{ecps_tree_prf2}
c(\treet{t})=c(\bigsqcup_{n\in\mathbb{N}}\treet{t}_n) = \bigsqcup_{n\in\mathbb{N}}c(\treet{t}_n).
\end{equation}

On the other side of the diagram we have $T(\treet{-})(a(t))$ which by definition of $T$ is equal to:
\begin{equation*}
T(\treet{-})(a(t))=\begin{cases}
						\downarrow &\text{if } a(t)=\downarrow	\\
						\bot &\text{if } a(t)=\bot	\\
						(\sigma,k,\overrightarrow{\treet{t'[\overline{i}/x]}}) &\text{if } a(t)=(\sigma,k,\overrightarrow{t'[\overline{i}/x]}).
					\end{cases}
\end{equation*}
By definition of $a$ the side conditions can be expressed in terms of the reduction behaviour of $t$:
\begin{equation*}
T(\treet{-})(a(t))=\begin{cases}
						\downarrow &\text{if } t\longrightarrow^*\downarrow	\\
						\bot &\text{if } t\longrightarrow^\infty	\\
						(\sigma,k,\overrightarrow{\bigsqcup_{n\in\mathbb{N}}\treet{t'[\overline{i}/x]}_n}) &\text{if } t\longrightarrow^*\sigma(\overline{k},x.t').
					\end{cases}
\end{equation*}
By the definition of least upper bound in $T(\textit{Trees}_\Sigma)$ we can rewrite the last case as:
\begin{equation}\label{ecps_tree_prf3}
T(\treet{-})(a(t))=\begin{cases}
						\downarrow &\text{if } t\longrightarrow^*\downarrow	\\
						\bot &\text{if } t\longrightarrow^\infty	\\
						\bigsqcup_{n\in\mathbb{N}}(\sigma,k,\overrightarrow{\treet{t'[\overline{i}/x]}_n}) &\text{if } t\longrightarrow^*\sigma(\overline{k},x.t').
					\end{cases}
\end{equation}

To see that equations \ref{ecps_tree_prf2} and \ref{ecps_tree_prf3} are in fact equal we proceed by a case analysis on the reduction of $t$. By definition of $\longrightarrow$ the following cases are exhaustive and only one of them can occur:
\begin{enumerate}
\item $t\longrightarrow^*\downarrow$. There exists $m$ such that $t\longrightarrow^m\downarrow$. By definition of $\treet{-}$ we can deduce that $\forall n\geq m+1.\ \treet{t}_n=\downarrow$ so $c(\treet{t}_n)=\downarrow$. And $\forall n< m+1.\ \treet{t}_n=\bot$. Therefore we have:
\begin{equation*}
\bigsqcup_{n\in\mathbb{N}}c(\treet{t}_n)=\downarrow.
\end{equation*}
as required.

\item $t\longrightarrow^\infty$. For each $n\in\mathbb{N}$ there exists $t_n$ such that $t\longrightarrow^n t_n$. Therefore $\forall n\in\mathbb{N}.\ \treet{t}_n=\treet{t_n}_0=\bot$ so by the definition of least upper bound in $T(\textit{Trees}_\Sigma)$:
\begin{equation*}
\bigsqcup_{n\in\mathbb{N}}c(\treet{t}_n)=\bot.
\end{equation*}

\item $t\longrightarrow^*\sigma(\overline{k},x.t')$. There exists $m\in\mathbb{N}$ such that $t\longrightarrow^m\sigma(\overline{k},x.t')$. By definition of $\treet{-}_n$ we can deduce that:
\begin{gather*}
\forall n\geq m+1.\ \treet{t}_n=\sigma_k(\treet{t'[\overline{0}/x]}_{n-m-1},\treet{t'[\overline{1}/x]}_{n-m-1},\ldots)	\\
\forall n<m+1.\ \treet{t}_n = \bot.
\end{gather*}
From this we know that:
\begin{equation*}
\bigsqcup_{n\in\mathbb{N}}c(\treet{t}_n)=\bigsqcup_{n\geq m+1}c(\treet{t}_n)=\bigsqcup_{n\geq m+1}(\sigma,k,\overrightarrow{\treet{t'[\overline{i}/x]}_{n-m-1}})=T(\treet{-})(a(t)).
\end{equation*}
\end{enumerate}				
\end{proof}

\begin{repproposition}{Prop_bisim_impl_trees}
For any well-typed EPCF configuration $(S,M)$, where $S:\tau\Rightarrow\rho$, and any ECPS computation $t$:
\begin{equation*}
((S,M),t) \in \mathcal{S}_{\tau,\rho} \cap \mathcal{S'}_{\tau,\rho} \implies \treet{t} = \trees{S,M}[\downarrow/l_1,\downarrow/l_2,\ldots].
\end{equation*}
\end{repproposition}
\begin{proof}
By coinduction. Define the relation $\mathcal{B}\subseteq (Stack\times Comp)\times(\vdash)$ as:
\begin{equation*}
\mathcal{B} = \bigcup_{\tau,\rho}(\mathcal{S}_{\tau,\rho}\cap\mathcal{S'}_{\tau,\rho}).
\end{equation*}
We will show that $\mathcal{B}$ is a bisimulation in the abstract sense of Definition \ref{Def_bisim_abstract}. Consider the following morphism of type $\mathcal{B}\longrightarrow T(\mathcal{B})$:
\begin{equation*}
r((S,M),t) = \begin{cases}
				\downarrow	&\text{if } (S,M)\rightarrowtail^* (id,\return{V})	\\
				\bot	&\text{if } (S,M)\rightarrowtail^\infty	\\
				(\sigma,n,(\overrightarrow{(S',W\overline{l}),t'[\overline{l}/x]}))	&\text{if } (S,M)\rightarrowtail^* (S',\sigma(\overline{n};W)), \text{which implies	}	\\
					&\text{from definition of } \mathcal{S} \text{ that } t\longrightarrow^*\sigma(\overline{n},x.t').
			 \end{cases}
\end{equation*}
This morphism makes the following diagram commute, so it makes $\mathcal{B}$ into a bisimulation:
\begin{equation*}
\begin{tikzcd}
 \textit{Satck}\times\textit{Comp}  \arrow[r, leftarrow, "\pi_1"] \arrow[d, "b"] & \mathcal{B} \arrow[r, "\pi_2"] \arrow[d, "r"] & (\vdash) \arrow[d, "a"] \\
T(\textit{Stack}\times\textit{Comp}) \arrow[r, leftarrow, "T(\pi_1)"] & T(\mathcal{B}) \arrow[r, "T(\pi_2)"] & T(\vdash)
\end{tikzcd}
\end{equation*}
By definition of $b$ and $r$ we have:
\begin{equation*}
\forall((S,M),t)\in\mathcal{B}.\ b(\pi_1((S,M),t)) = T(\pi_1)(r((S,M),t)).
\end{equation*}
For the second equation: we know that for any $((S,M),t)\in\mathcal{B}$:
\begin{align}
a(\pi_2((S,M),t)) = a(t) = &\begin{cases}
		\downarrow	&	\text{if } t\longrightarrow^*\downarrow	\\
		\bot	&	\text{if } t\longrightarrow^\infty	\\		
		(\sigma,k,\overrightarrow{t'[\overline{n}/x]})	&	\text{if } t\longrightarrow^*\sigma(\overline{k},x.t').
	   \end{cases}	\label{eq_coind_prf1}\\
T(\pi_2)(r((S,M),t)) = &\begin{cases}
						\downarrow	&\text{if } (S,M)\rightarrowtail^* (id,\return{V})	\\
						\bot	&\text{if } (S,M)\rightarrowtail^\infty	\\
						(\sigma,n,\overrightarrow{t'[\overline{l}/x]})	&\text{if } (S,M)\rightarrowtail^* (S',\sigma(\overline{n};W)), \text{which implies	}	\\
					&\text{from definition of } \mathcal{S} \text{ that } t\longrightarrow^*\sigma(\overline{n},x.t').
					   \end{cases} \label{eq_coind_prf2}
\end{align}
Expressions \ref{eq_coind_prf1} and \ref{eq_coind_prf2} can be proved equal by using the definitions of $\mathcal{S}$ and $\mathcal{S}'$. If $(S,M)\rightarrowtail^* (S',\sigma(\overline{n};W))$ then we know by definition of $\mathcal{S}$ that $t\longrightarrow^*\sigma(\overline{n},x.t')$. If $(S,M)\rightarrowtail^* (id,\return{V})$ then again by definition of $\mathcal{S}$ we have $t\longrightarrow^*\downarrow$. If $(S,M)\rightarrowtail^\infty$, because the reduction relation $\rightarrowtail$ is deterministic we know that $(S,M)\not\rightarrowtail^* (id,\return{V})$ and $(S,M)\not\rightarrowtail^* (S',\sigma(\overline{n};W))$ for any $n\in\mathbb{N}$. Therefore, by definition of $\mathcal{S}'$, $t\not\longrightarrow^*\downarrow$ and $t\not\longrightarrow^*\sigma(v,x.t')$. So by the definition of $\longrightarrow$ it must be the case that $t\longrightarrow^\infty$, as required. We can prove the reverse implication by making an assumption about the reduction of $t$ and proceeding analogously.

By assumption, we know $((S,M),t)\in\mathcal{B}$. As discussed in Section~\ref{Sec_cps_correct}, $\textit{Trees}_\Sigma$ is a final coalgebra. From Lemmas~\ref{Lem_ecps_tree_coalgmor} and \ref{Lem_epcf_tree_colagmor} we know that $\treet{-}$ and $\beta$ are coalgebra morphisms into it.
Using the fact that $\mathcal{B}$ is a bisimulation, we can apply the coinduction proof principle, Proposition \ref{Prop_coind_princip}, to deduce:
\begin{equation*}
\treet{t} = \beta(S,M) = \trees{S,M}^* = \trees{S,M}[\downarrow/l_1,\downarrow/l_2,\ldots].
\end{equation*}
as required.
\end{proof}

\chapter{Proofs about Applicative Bisimilarity}\label{App_bisim}

\begin{repproposition}{Prop_bisim_is_sim_and_simop}
Applicative $\mathfrak{P}$-bisimilarity coincides with the intersection between applicative $\mathfrak{P}$-similarity and its converse:
\begin{equation*}
(\sim) = (\precsim)\cap(\precsim)^{\textit{op}}.
\end{equation*}
\end{repproposition}
\begin{proof}
We prove each inclusion in turn.

\paragraph{``$\subseteq$''.} Bisimilarity is a bisimulation, hence also a simulation so $(\sim)\subseteq(\precsim)$. Therefore $(\sim)^{\textit{op}}\subseteq(\precsim)^{\textit{op}}$. But bisimilarity is by definition symmetric so $(\sim)=(\sim)^{\textit{op}}$ so we can deduce that $(\sim)\subseteq(\precsim)^{\textit{op}}$. Hence, $(\sim) \subseteq (\precsim)\cap(\precsim)^{\textit{op}}$ as required.

\paragraph{``$\supseteq$''.} The strategy is to check that $(\precsim)\cap(\precsim)^{\textit{op}}$ is a symmetric simulation. Consider $(s,t)\in(\precsim)\cap(\precsim)^{\textit{op}}$. Then $s\precsim t$ and $t \precsim s$, which implies $t \precsim^{\textit{op}} s$ and $t \precsim s$. Therefore, $(t,s)\in (\precsim)\cap(\precsim)^{\textit{op}}$ and $(\precsim)\cap(\precsim)^{\textit{op}}$ is symmetric.

The relation $(\precsim)\cap(\precsim)^{\textit{op}}$ satisfies the first three conditions in the definition of simulation because we know $\precsim$ is a simulation. For the fourth condition assume: 
\begin{equation*}
(v,u)\in ((\precsim)\cap(\precsim)^{\textit{op}})^\mathfrak{v}_{\neg(\overrightarrow{A_i})}.
\end{equation*}
Then $v\precsim^\mathfrak{v}_{\neg(\overrightarrow{A_i})} w$ and $w \precsim^\mathfrak{v}_{\neg(\overrightarrow{A_i})} v$ and since $\precsim$ is a simulation we know that:
\begin{gather*}
\forall \vdash\overrightarrow{u_i:A_i}.\ v(\overrightarrow{u_i})\precsim^\mathfrak{c} w(\overrightarrow{u_i})	\\
\forall \vdash\overrightarrow{u_i:A_i}.\ w(\overrightarrow{u_i})\precsim^\mathfrak{c} v(\overrightarrow{u_i}).
\end{gather*}
Therefore:
\begin{equation*}
\forall \vdash\overrightarrow{u_i:A_i}.\ (v(\overrightarrow{u_i}), w(\overrightarrow{u_i})) \in ((\precsim)\cap(\precsim)^{\textit{op}})^\mathfrak{c}. 
\end{equation*}

So we have shown that $(\precsim)\cap(\precsim)^{\textit{op}}$ is a symmetric simulation and it is therefore included in the union of all symmetric simulations $\sim$.
\end{proof}

\begin{replemma}{Lemm_(bi)sim_preord}
Applicative $\mathfrak{P}$-similarity is a preorder. Applicative $\mathfrak{P}$-bisimilarity is an equivalence relation.
\end{replemma}
\begin{proof}
Prove that similarity is reflexive and transitive.

\paragraph{Reflexivity.} Let $\mathcal{I}^\mathfrak{v}_A$ and $\mathcal{I}^\mathfrak{c}$ be the identity relations. We will show $(\mathcal{I}^\mathfrak{v}_A,\mathcal{I}^\mathfrak{c})\subseteq\precsim$ so $\precsim$ is reflexive. For this it suffices to show $(\mathcal{I}^\mathfrak{v}_A,\mathcal{I}^\mathfrak{c})$ is a simulation. Conditions \ref{ecps_sim1} and \ref{ecps_sim2} are satisfied by definition. For condition \ref{ecps_sim3} assume $s\ \mathcal{I}^\mathfrak{c}\ t$. Then it must be the case that $s=t$ so $\treet{s} = \treet{t}$. Therefore $\forall P\in\mathfrak{P}.\ (\treet{s} \in P \implies \treet{t} \in P)$. For condition \ref{ecps_sim4} assume $v\ \mathcal{I}^\mathfrak{v}_{\neg(A_1,\ldots,A_n)}\ u$. Then $v=u$ so the result follows from the definition of $\mathcal{I}^\mathfrak{c}$.

\paragraph{Transitivity.} First assume that there exist some closed computations such that $r\precsim s\precsim t$. Then we know from the definition of similarity and transitivity of implication that:
\begin{equation*}
\forall P\in\mathfrak{P}.\ \treet{r}\in P \implies \treet{t}\in P.
\end{equation*}
Since similarity is the greatest relation with property \ref{ecps_sim3}, it must be the case that $r\precsim t$.

Assume $u\precsim v\precsim w$. Then there exist simulations $\mathcal{R}$ and $\mathcal{S}$ such that $u\ \mathcal{R}^\mathfrak{v}_A\ v$ and $v\ \mathcal{S}^\mathfrak{v}_A\ w$. Proceed by a case distinction on $A$.

If $A=\mathtt{nat}$ or $A=\mathtt{unit}$, then it must be the case that $u=v=w$. Let $\mathcal{T}^\mathfrak{v}_A=\{(u,w)\}$, $\mathcal{T}^\mathfrak{v}_{B\not=A}=\emptyset$, $\mathcal{T}^\mathfrak{c}=\emptyset$. The relation $(\mathcal{T}^\mathfrak{v}_A,\mathcal{T}^\mathfrak{v}_{B\not=A},\mathcal{T}^\mathfrak{c})$ is a simulation because $u=w$, and is therefore included in $\precsim$.

If $A=\neg(A_1,\ldots,A_n)$, consider the candidate simulation $\mathcal{T}^\mathfrak{v}_A=\{(u,w)\}$, $\mathcal{T}^\mathfrak{v}_{B\not=A}=\emptyset$, $\mathcal{T}^\mathfrak{c}=\{(u\ (w_1,\ldots,w_n), w\ (w_1,\ldots,w_n)) \mid \vdash w_1 : A_1,\ldots,\vdash w_n : A_n\}$. Condition \ref{ecps_sim4} is satisfied by definition of $\mathcal{T}^\mathfrak{c}$. Condition \ref{ecps_sim3} is satisfied because $u\ \mathcal{R}^\mathfrak{v}_A\ v$ and $v\ \mathcal{S}^\mathfrak{v}_A\ w$ imply:
\begin{equation*}
\forall \vdash w_1:A_1,\ldots,\vdash w_n:A_n.\ \forall P\in\mathfrak{P}.\ \treet{u(w_1,\ldots,w_n)}\in P \implies \treet{w(w_1,\ldots,w_n)} \in P.
\end{equation*}

From Proposition \ref{Prop_bisim_is_sim_and_simop} we know that:
\begin{equation*}
(\sim) = (\precsim)\cap(\precsim)^{\textit{op}}.
\end{equation*}
The relation $(\precsim)\cap(\precsim)^{\textit{op}}$ is reflexive and transitive because similarity is, and it is by definition symmetric. So $\sim$ is an equivalence relation.
\end{proof}

\begin{replemma}{Lem_comp_1premise_rules}
Consider a well-typed relation $\mathcal{R}$ that is a preorder. The compatibility rules \textsc{(comp6)}, \textsc{(comp7)}, \textsc{(comp8)} and \textsc{(comp10)} are equivalent to the conjunction of their single-premise versions. 
\end{replemma}
\begin{proof}
As an example, consider the case of (\textsc{(comp7)}$\Longleftrightarrow$\textsc{(comp7L)} and \textsc{(comp7R$_i$)}). The proof for the other cases is analogous.

Assume that \textsc{(comp7)} is true and that 
\begin{gather*}
\Gamma,x:\neg(\overrightarrow{A}) \vdash v\ \mathcal{R}^\mathfrak{v}_{\neg(\overrightarrow{A})}\ v'	\\
\Gamma\vdash w_i\ \mathcal{R}^\mathfrak{v}_{A_i}\ w'_i \text{ for each } i.
\end{gather*}
By reflexivity of $\mathcal{R}$ we know $\Gamma\vdash w_i\ \mathcal{R}^\mathfrak{v}_{A_i}\ w_i$ for each $i$. So we can apply rule \textsc{(comp7)} to deduce the conclusion of rule \textsc{(comp7l)}. 

Also by reflexivity, we know $\Gamma,x:\neg(\overrightarrow{A}) \vdash v\ \mathcal{R}^\mathfrak{v}_{\neg(\overrightarrow{A})}\ v$ and $\Gamma\vdash w_j\ \mathcal{R}^\mathfrak{v}_{A_j}\ w_j$ for each $j\not=i$. Now apply rule \textsc{(comp7)} to deduce the conclusion of rule \textsc{(comp7r$_i$)}.

For the reverse implication assume that the rules \textsc{(comp7L)} and \textsc{(comp7R$_i$)} are true and that:
\begin{gather*}
\Gamma,x:\neg(\overrightarrow{A}) \vdash v\ \mathcal{R}^\mathfrak{v}_{\neg(\overrightarrow{A})}\ v'	\\
\Gamma\vdash w_i\ \mathcal{R}^\mathfrak{v}_{A_i}\ w'_i \text{ for each } i.
\end{gather*}
Because $\mathcal{R}$ is well-typed we know that all the values considered above are well-typed. Therefore, we can apply \textsc{(comp7L)} and \textsc{(comp7R$_i$)} one by one to obtain:
\begin{gather*}
\Gamma\vdash (\mufix{x}{v})(\overrightarrow{w})\ \mathcal{R}^\mathfrak{c}\ (\mufix{x}{v'})(\overrightarrow{w})	\\
\Gamma\vdash (\mufix{x}{v'})(w_1,w_2,\ldots,w_n)\ \mathcal{R}^\mathfrak{c}\ (\mufix{x}{v'})(w'_1,w_2,\ldots,w_n)	\\
\ldots	\\
\Gamma\vdash (\mufix{x}{v'})(w'_1,\ldots,w'_i,\ldots,w'_{n-1},w_n)\ \mathcal{R}^\mathfrak{c}\ (\mufix{x}{v'})(w'_1,\ldots,w'_i,\ldots,w'_{n-1},w'_n)
\end{gather*}
so by transitivity of $\mathcal{R}$ we have $\Gamma\vdash (\mufix{x}{v})(\overrightarrow{w})\ \mathcal{R}^\mathfrak{c}\ (\mufix{x}{v'})(\overrightarrow{w'})$ as required.
\end{proof}

\section{Howe's Method}\label{App_sec_howe}

\begin{replemma}{Lemm_howe_extension_misc}[{From \cite[Appendix]{SimV17}}]
Given a well-typed relation $\mathcal{R}$ on closed terms that is reflexive:
\begin{enumerate}
\item The Howe extension of $\mathcal{R}$, $\mathcal{R}^{\mathcal{H}}$, is compatible and hence reflexive.
\item $\mathcal{R}^\circ\ \subseteq\ \mathcal{R}^{\mathcal{H}}$.
\end{enumerate}
\end{replemma}
\begin{proof}
\begin{enumerate}
\item From the definitions of compatibility and compatible refinement we see that $\widehat{\mathcal{R}^{\mathcal{H}}} \subseteq \mathcal{R}^{\mathcal{H}}$ implies that $\mathcal{R}^{\mathcal{H}}$ is compatible.

We know $\mathcal{R}$ is reflexive. Therefore $\mathcal{R}^\circ$ is reflexive. Let $\textit{Id}$ be the identity well-typed open relation. Then:
\begin{equation*}
\widehat{\mathcal{R}^{\mathcal{H}}} = \textit{Id} \circ \widehat{\mathcal{R}^{\mathcal{H}}}\ \subseteq\ \mathcal{R}^\circ\circ\widehat{\mathcal{R}^{\mathcal{H}}} =\ \mathcal{R}^{\mathcal{H}}.
\end{equation*}

To show $\mathcal{R}^{\mathcal{H}}$ is reflexive, that is, for all terms  $s$, $s\mathcal{R}^{\mathcal{H}} s$
we can proceed by induction on $s$ using the fact that $\mathcal{R}^{\mathcal{H}}$ is compatible.

\item Now we know that $\mathcal{R}^{\mathcal{H}}$ is reflexive. By the definition of compatible refinement we can easily see that this implies $\widehat{\mathcal{R}^{\mathcal{H}}}$ is reflexive. Therefore:
\begin{equation*}
\mathcal{R}^\circ\ =\ \mathcal{R}^\circ \circ\ \textit{Id}\ \subseteq\  \mathcal{R}^\circ \circ\  \widehat{\mathcal{R}^{\mathcal{H}}} =\ \mathcal{R}^{\mathcal{H}}.
\end{equation*}
\end{enumerate}
\end{proof}

\begin{replemma}{Lemm_howe_method_misc}[{From \cite[Appendix]{SimV17}}]
Given a well-typed relation $\mathcal{R}$ on closed terms that is transitive:
\begin{equation*}
\mathcal{R}^\circ\circ\mathcal{R}^{\mathcal{H}} \subseteq \mathcal{R}^{\mathcal{H}}.
\end{equation*}
\end{replemma}
\begin{proof}
Consider three terms $s$, $t$ and $r$, either values or computations, such that $s\mathrel{\mathcal{R}^{\mathcal{H}}} t \mathrel{\mathcal{R}^\circ} r$. By definition of $\mathcal{R}^{\mathcal{H}}$ this means there exists a term $t'$ such that:
\begin{equation*}
s\ \mathrel{\widehat{\mathcal{R}^{\mathcal{H}}}} t' \mathrel{\mathcal{R}^\circ} t \mathrel{\mathcal{R}^\circ} r.
\end{equation*}

Since $\mathcal{R}$ is transitive, $\mathcal{R}^\circ$ is also transitive. Therefore:
\begin{equation*}
s\ \mathrel{\widehat{\mathcal{R}^{\mathcal{H}}}}\ t' \mathrel{\mathcal{R}^\circ} r
\end{equation*}
so by definition of $\mathcal{R}^{\mathcal{H}}$ we have $s \mathrel{\mathcal{R}^{\mathcal{H}}} r$.
\end{proof}

\begin{replemma}{Lemm_howe_subsitutivity}[Substitutivity]
Given a well-typed relation $\mathcal{R}$ on closed terms that is transitive, its Howe extension satisfies the following two value-substitutivity properties:
\begin{enumerate}
\item $\overrightarrow{x_i:A_i},y:B \vdash s \mathrel{\mathcal{R}^{\mathcal{H},\mathfrak{c}}} t \text{ and } \overrightarrow{x_i:A_i}\vdash v\mathrel{\mathcal{R}^{\mathcal{H},\mathfrak{v}}_B} w \implies \overrightarrow{x_i:A_i} \vdash s[v/y] \mathrel{\mathcal{R}^{\mathcal{H},\mathfrak{c}}} t[w/y]$.
\item $\overrightarrow{x_i:A_i},y:B \vdash u \mathrel{\mathcal{R}^{\mathcal{H},\mathfrak{v}}_C} u' \text{ and } \overrightarrow{x_i:A_i}\vdash v\mathrel{\mathcal{R}^{\mathcal{H},\mathfrak{v}}_B} w \implies \overrightarrow{x_i:A_i} \vdash u[v/y] \mathrel{\mathcal{R}^{\mathcal{H},\mathfrak{v}}_C} u'[w/y]$.
\end{enumerate}
\end{replemma}
\begin{proof}
We prove the two statements by induction on $s$ and $u$:

If $s$ is a computation then $\overrightarrow{x_i:A_i},y:B \vdash s \mathrel{\mathcal{R}^{\mathcal{H},\mathfrak{c}}} t$ was derived using rule \textsc{(HC)} so it must be the case that:
\begin{gather}
\overrightarrow{x_i:A_i},y:B \vdash s \mathrel{\widehat{\mathcal{R}^{\mathcal{H}}}^{\mathfrak{c}}} t'	\label{Howe_subst_proof1}\\
\overrightarrow{x_i:A_i},y:B \vdash t' \mathrel{\mathcal{R}^{\circ,\mathfrak{c}}} t.	\label{Howe_subst_proof2}
\end{gather}

By the definition of open extension we know from equation \ref{Howe_subst_proof2} that:
\begin{equation*}
\overrightarrow{x_i:A_i} \vdash t'[w/y] \mathrel{\mathcal{R}^{\circ,\mathfrak{c}}} t[w/y].
\end{equation*}
If we can prove
\begin{equation*}
\overrightarrow{x_i:A_i} \vdash s[v/y] \mathrel{\widehat{\mathcal{R}^{\mathcal{H}}}^{\mathfrak{c}}} t'[w/y]
\end{equation*}
then we could use rule \textsc{(HC)} to deduce the desired result, $\overrightarrow{x_i:A_i} \vdash s[v/x] \mathrel{\mathcal{R}^{\mathcal{H},\mathfrak{c}}} t[w/x]$.

\paragraph{Case $s=(\mufix{z}{u})(\protect\overrightarrow{w_i})$.} It must be the case that equation \ref{Howe_subst_proof1} was obtained using rule \textsc{(C7)} so we know that:
\begin{gather*}
t'=(\mufix{z}{u'})(\overrightarrow{w'_i})	\\
\overrightarrow{x_i:A_i},y:B, z:\neg(\overrightarrow{C}) \vdash u \mathrel{\mathcal{R}^{\mathcal{H},\mathfrak{v}}_{\neg(\overrightarrow{C})}} u'	\\
\overrightarrow{x_i:A_i},y:B \vdash w_i \mathrel{\mathcal{R}^{\mathcal{H},\mathfrak{v}}_{C_i}} w'_i \text{ for each }i.
\end{gather*}

By induction hypothesis for $u$ and $\overrightarrow{w_i}$ we can deduce:
\begin{gather*}
\overrightarrow{x_i:A_i}, z:\neg(\overrightarrow{C}) \vdash u[v/y] \mathrel{\mathcal{R}^{\mathcal{H},\mathfrak{v}}_{\neg(\overrightarrow{C})}} u'[w/y]	\\
\overrightarrow{x_i:A_i} \vdash w_i[v/y] \mathrel{\mathcal{R}^{\mathcal{H},\mathfrak{v}}_{C_i}} w'_i[w/y] \text{ for each }i
\end{gather*}
and then apply rule \textsc{(C7)} to get $\overrightarrow{x_i:A_i} \vdash s[v/y] \mathrel{\widehat{\mathcal{R}^{\mathcal{H}}}^{\mathfrak{c}}} t'[w/y]$.

\paragraph{Case $s=\downarrow$.} In this case equation~\ref{Howe_subst_proof1} was obtained from rule \textsc{(C9)}, so $t'=\downarrow$ and $s[v/y]=t'[w/y]=\downarrow$. We can then deduce $\overrightarrow{x_i:A_i} \vdash s[v/y] \mathrel{\widehat{\mathcal{R}^{\mathcal{H}}}^{\mathfrak{c}}} t'[w/y]$ by rule \textsc{(C9)}.

\paragraph{Cases $s=u\ (\protect\overrightarrow{w_i})$, $s=\sigma(u,x.s')$, $s=\caset{u}{s'}{z}{s''}$.} Analogous to the case $s=(\mufix{z}{u})(\protect\overrightarrow{w_i})$.

\paragraph{}If $u$ is a value then $\overrightarrow{x_i:A_i},y:B \vdash u \mathrel{\mathcal{R}^{\mathcal{H},\mathfrak{v}}_C} u'$ was derived using rule \textsc{(HV)} so it must be the case that:
\begin{gather}
\overrightarrow{x_i:A_i},y:B \vdash u \mathrel{\widehat{\mathcal{R}^{\mathcal{H}}}^{\mathfrak{v}}_C} u''	\label{Howe_subst_proof3}\\
\overrightarrow{x_i:A_i},y:B \vdash u'' \mathrel{\mathcal{R}^{\circ,\mathfrak{v}}_C} u'. \label{Howe_subst_proof4}
\end{gather}

From equation~\ref{Howe_subst_proof4} by the definition of open extension we have:
\begin{equation}\label{Howe_subst_proof5}
\overrightarrow{x_i:A_i} \vdash u''[w/y] \mathrel{\mathcal{R}^{\circ,\mathfrak{v}}_C} u'[w/y]
\end{equation}
so using rule \textsc{(HV)} it suffices to prove:
\begin{equation}\label{Howe_subst_proof_rtp1}
\overrightarrow{x_i:A_i} \vdash u[v/y] \mathrel{\widehat{\mathcal{R}^{\mathcal{H}}}^{\mathfrak{v}}_C} u''[w/y].
\end{equation}

\paragraph{Case $u=\lbd{\protect\overrightarrow{z_j}}{\protect\overrightarrow{C_j}}{r}$.} It must be the case that equation \ref{Howe_subst_proof3} was obtained by rule \textsc{(C3)} so:
\begin{gather*}
u'' = \lbd{\overrightarrow{z_j}}{\overrightarrow{C_j}}{r'}	\\
\overrightarrow{x_i:A_i},y:B, \overrightarrow{z_j:C_j} \vdash r \mathrel{\mathcal{R}^{\mathcal{H},\mathfrak{c}}} r'.
\end{gather*}
By induction hypothesis for $r$ and applying rule \textsc{(C3)} we can deduce:
\begin{equation*}
\overrightarrow{x_i:A_i} \vdash \lbd{\overrightarrow{z_j}}{\overrightarrow{C_j}}{r}[v/y] \mathrel{\widehat{\mathcal{R}^{\mathcal{H}}}^{\mathfrak{v}}_{\neg(\overrightarrow{C_j})}} \lbd{\overrightarrow{z_j}}{\overrightarrow{C_j}}{r'}[w/y].
\end{equation*}

\paragraph{Case $u=\mathtt{succ}(u')$.} Analogous to the previous case.

\paragraph{Cases $u=\mathtt{zero}$ and $u=\star$.} Analogous to the case $s=\downarrow$.

\paragraph{Case $u=z$.} \textbf{If $z\not=y$} then $z=x_j$ for some $j$. Then equation \ref{Howe_subst_proof3} must have been the conclusion of rule \textsc{(C1)} so $u''=z$. By rule \textsc{(C1)} we have:
\begin{equation*}
\overrightarrow{x_i:A_i} \vdash x_j \mathrel{\widehat{\mathcal{R}^{\mathcal{H}}}^{\mathfrak{v}}_C} x_j
\end{equation*}
which is what we had to prove.

\textbf{If $z=y$} then instead of going through equation \ref{Howe_subst_proof_rtp1} we will prove directly $\overrightarrow{x_i:A_i} \vdash u[v/y] \mathrel{\mathcal{R}^{\mathcal{H},\mathfrak{v}}_C} u'[w/y]$, that is:
\begin{equation*}
\overrightarrow{x_i:A_i} \vdash v \mathrel{\mathcal{R}^{\mathcal{H},\mathfrak{v}}_C} u'[w/y].
\end{equation*}

Because $\mathcal{R}$ is transitive, we can apply Lemma \ref{Lemm_howe_method_misc} to obtain: $\mathcal{R}^\circ\circ\mathcal{R}^{\mathcal{H}}\subseteq\mathcal{R}^{\mathcal{H}}$. We already know equation \ref{Howe_subst_proof5}:
\begin{equation*}
\overrightarrow{x_i:A_i} \vdash u''[w/y] \mathrel{\mathcal{R}^{\circ,\mathfrak{v}}_C} u'[w/y]
\end{equation*}
so we just need to show:
\begin{equation*}
\overrightarrow{x_i:A_i} \vdash v \mathrel{\mathcal{R}^{\mathcal{H},\mathfrak{v}}_C} u''[w/y].
\end{equation*}

Equation \ref{Howe_subst_proof3} must be the conclusion of rule \textsc{(C1)} so $u''=y$. Then we are left to prove:
\begin{equation*}
\overrightarrow{x_i:A_i} \vdash v \mathrel{\mathcal{R}^{\mathcal{H},\mathfrak{v}}_C} w
\end{equation*}
which we know by the initial assumption.
\end{proof}

\begin{replemma}{Lemm_howe_ext_sim_nat}
Consider a well-typed closed relation $\leq$ that is a $\mathfrak{P}$-simulation. For any closed values $v$ and $w$:
\begin{equation*}
\vdash v \leq^{\mathcal{H},\mathfrak{v}}_{\mathtt{nat}} w \implies v=w.
\end{equation*}	
\end{replemma}
\begin{proof}
Since $v$ is a closed value of type $\mathtt{nat}$ there exists $n\in\mathbb{N}$ such that $v=\overline{n}$. The equation $\vdash \overline{n} \leq^{\mathcal{H},\mathfrak{v}}_{\mathtt{nat}} w$ must have been derived by rule \textsc{(HV)} so:
\begin{equation*}
\vdash \overline{n} \mathrel{\widehat{\leq^{\mathcal{H}}}^{\mathfrak{v}}_{\mathtt{nat}}} u \quad \text{and} \quad \vdash u \leq^{\circ,\mathfrak{v}}_{\mathtt{nat}} w.
\end{equation*}
But $u$ and $w$ are closed so $u\leq^{\mathfrak{v}}_{\mathtt{nat}} w$. Since $\leq$ is a simulation it follows that $u=w$. So we know:
\begin{equation}
\vdash \overline{n} \mathrel{\widehat{\leq^{\mathcal{H}}}^{\mathfrak{v}}_{\mathtt{nat}}} w. \label{Howe_exten_sim_proof1}
\end{equation}

We prove by induction on $n$ that:
\begin{equation*}
\forall \vdash w:\mathtt{nat}.\ \vdash \overline{n} \leq^{\mathcal{H},\mathfrak{v}}_{\mathtt{nat}} w \implies \overline{n}=w.
\end{equation*}
\textbf{Base case, $n=0$.} From $\vdash \overline{n} \leq^{\mathcal{H},\mathfrak{v}}_{\mathtt{nat}} w$ we deduce equation \ref{Howe_exten_sim_proof1}, which must be the conclusion of rule \textsc{(C4)}. Therefore $\overline{n}=w=\mathtt{zero}$.

\textbf{Induction step.} Assume $\overline{n+1} \leq^{\mathcal{H},\mathfrak{v}}_{\mathtt{nat}} w'$ for some arbitrary $\vdash w':\mathtt{nat}$. Then equation \ref{Howe_exten_sim_proof1} becomes:
\begin{equation*}
\vdash \mathtt{succ}(\overline{n}) \mathrel{\widehat{\leq^{\mathcal{H}}}^{\mathfrak{v}}_{\mathtt{nat}}} w'.
\end{equation*}
This must be the conclusion of rule \textsc{(C5)} so $w'=\mathtt{succ}(w'')$ and $\overline{n}\leq^{\mathcal{H},\mathfrak{v}}_{\mathtt{nat}} w''$.

We can instantiate the induction hypothesis with the last equation to obtain $w''=\overline{n}$. This means that $\overline{n+1}=\mathtt{succ}(\overline{n})=\mathtt{succ}(w'')=w'$ as required.
\end{proof}

\begin{replemma}{Lemm_howe_key}[Key Lemma]
Consider a decomposable set of Scott-open observations $\mathfrak{P}$. Consider a well-typed closed relation $\leq$ that is a preorder and a $\mathfrak{P}$-simulation. For any closed computations $s$ and $t$, $\vdash s\leq^{\mathcal{H},\mathfrak{c}} t$ implies:
\begin{equation*}
\forall n\in\mathbb{N}.\ \forall P\in\mathfrak{P}.\ \treet{s}_n\in P \implies \treet{t}\in P.
\end{equation*}
\end{replemma}
\begin{proof}
We prove by induction on $n\in\mathbb{N}$ that:
\begin{equation*}
\forall n\in\mathbb{N}.\ \forall \text{ computations }s',t'.\ \vdash s'\leq^{\mathcal{H},\mathfrak{c}} t' \implies (\forall P\in\mathfrak{P}.\ \treet{s'}_n\in P \implies \treet{t'}\in P).
\end{equation*}

\paragraph{Base case, $n=0$.} By definition $\treet{s'}_0=\bot$. Assume that $\treet{s'}_0\in P$. Then by upwards closure of $P$ it follows that $\textit{Tree}_\Sigma \subseteq P$, so we also have $\treet{t'}\in P$, as required.

\paragraph{Induction step.} The induction hypothesis is:
\begin{multline*}
\forall k<n+1\in\mathbb{N}.\ \forall \text{ computations }s',t'.\\
 \vdash s'\leq^{\mathcal{H},\mathfrak{c}} t' \implies (\forall P\in\mathfrak{P}.\ \treet{s'}_k\in P \implies \treet{t'}\in P).
\end{multline*}
Assume that $\vdash s\leq^{\mathcal{H},\mathfrak{c}} t$. This must be the conclusion of rule \textsc{(HC)} so there exists $r$ such that:
\begin{equation*}
\vdash s \mathrel{\widehat{\leq^{\mathcal{H}}}^\mathfrak{c}} r \quad \text{and} \quad \vdash r \leq^{\circ,\mathfrak{c}} t.
\end{equation*}
But $r$ and $t$ are closed terms so we in fact know $r \leq^{\mathfrak{c}} t$. Because $\leq$ is a simulation it follows that:
\begin{equation*}
\forall P'\in\mathfrak{P}.\ \treet{r}\in P' \implies \treet{t}\in P'.
\end{equation*}
Therefore, it suffices to show:
\begin{equation*}
\forall P\in\mathfrak{P}.\ \treet{s}_{n+1}\in P \implies \treet{r}\in P.
\end{equation*}
To do this we proceed by a case split on the structure of $s$.

\paragraph{Case $s=v(w_1,\ldots,w_n)$.} Because $s$ is a well-typed closed computation, it must be the case that $s=(\lbd{\overrightarrow{x_i}}{\overrightarrow{A_i}}{s'})\ (w_1,\ldots,w_n)$.

The equation $\vdash s \mathrel{\widehat{\leq^{\mathcal{H}}}^\mathfrak{c}} r$ must have been obtained by rule \textsc{(C3)} so it must be the case that:
\begin{gather}
r = (\lbd{\overrightarrow{x_i}}{\overrightarrow{A_i}}{r'})\ (w'_1,\ldots,w'_n)	\\
\vdash \lbd{\overrightarrow{x_i}}{\overrightarrow{A_i}}{s'} \leq^{\mathcal{H},\mathfrak{v}}_{\neg(\overrightarrow{A_i})} \lbd{\overrightarrow{x_i}}{\overrightarrow{A_i}}{r'}	\label{Howe_key_proof1} \\ 
\vdash w_i \leq^{\mathcal{H},\mathfrak{v}}_{A_i} w'_i \text{ for each } i. \label{Howe_key_proof2}
\end{gather}

By definition of $\treet{-}_{(-)}$ we know $\treet{s}_{n+1}=\treet{(\lbd{\overrightarrow{x_i}}{\overrightarrow{A_i}}{s'})\ (w_1,\ldots,w_n)}_{n+1}=\treet{s'[\overrightarrow{w_i/x_i}]}_n$, and similarly for $r$. So we only need to prove:
\begin{equation*}
\forall P\in\mathfrak{P}.\ \treet{s'[\overrightarrow{w_i/x_i}]}_{n}\in P \implies \treet{r'[\overrightarrow{w'_i/x_i}]}\in P.
\end{equation*}

Equation \ref{Howe_key_proof1} must have been obatined by rule \textsc{(HV)} do there exists $p$ such that:
\begin{gather}
\vdash \lbd{\overrightarrow{x_i}}{\overrightarrow{A_i}}{s'} \mathrel{\widehat{\leq^{\mathcal{H}}}^{\mathfrak{v}}_{\neg(\overrightarrow{A_i})}} p \label{Howe_key_proof3} \\ 
\vdash p \leq^{\circ,\mathfrak{v}}_{\neg(\overrightarrow{A_i})} \lbd{\overrightarrow{x_i}}{\overrightarrow{A_i}}{r'}. \label{Howe_key_proof4}
\end{gather}

Equation \ref{Howe_key_proof3} must be the conclusion of rule \textsc{(C3)} so:
\begin{equation*}
p = \lbd{\overrightarrow{x_i}}{\overrightarrow{A_i}}{p'} \quad \text{and} \quad \overrightarrow{x_i:A_i}\vdash s' \leq^{\mathcal{H},\mathfrak{c}} p'.
\end{equation*}
Because $\leq$ is transitive we can apply Lemma \ref{Lemm_howe_subsitutivity} to obtain the substitutivity property for $\leq^{\mathcal{H}}$. Using this and equation \ref{Howe_key_proof2} we can deduce:
\begin{equation*}
\vdash s'[\overrightarrow{w_i/x_i}] \leq^{\mathcal{H},\mathfrak{c}} p'[\overrightarrow{w'_i/x_i}].
\end{equation*}
Apply the induction hypothesis for this to obtain:
\begin{equation*}
\forall P\in\mathfrak{P}.\ \treet{s'[\overrightarrow{w_i/x_i}]}_{n}\in P \implies \treet{p'[\overrightarrow{w'_i/x_i}]}\in P.
\end{equation*}

We know that $\vdash \lbd{\overrightarrow{x_i}}{\overrightarrow{A_i}}{p'} \leq^{\mathfrak{v}}_{\neg(\overrightarrow{A_i})} \lbd{\overrightarrow{x_i}}{\overrightarrow{A_i}}{r'}$ from equation \ref{Howe_key_proof4}. By the definition of simulation we then have:
\begin{equation*}
\forall P\in\mathfrak{P}.\ \treet{p'[\overrightarrow{w'_i/x_i}]}\in P \implies \treet{r'[\overrightarrow{w'_i/x_i}]}\in P.
\end{equation*}
From here we obtain the required result:
\begin{equation*}
\forall P\in\mathfrak{P}.\ \treet{s'[\overrightarrow{w_i/x_i}]}_n\in P \implies \treet{r'[\overrightarrow{w'_i/x_i}]}\in P.
\end{equation*}

\paragraph{Case $s=\sigma(v,x.s')$.} In this case $\vdash s \mathrel{\widehat{\leq^{\mathcal{H}}}^\mathfrak{c}} r$ is the conclusion of rule \textsc{(C8)} so we know:
\begin{gather*}
r=\sigma(v',x.r')	\\
\vdash v \leq^{\mathcal{H},\mathfrak{v}}_{\mathtt{nat}} v'	\\
x:\mathtt{nat} \vdash s' \leq^{\mathcal{H},\mathfrak{c}} r'.
\end{gather*}
Using $\vdash v \leq^{\mathcal{H},\mathfrak{v}}_{\mathtt{nat}} v'$ and Lemma \ref{Lemm_howe_ext_sim_nat} deduce $v=v'=\overline{k}$.

By reflexivity of $\leq^{\mathcal{H}}$ (Lemma \ref{Lemm_howe_extension_misc}) we have $\vdash \overline{m} \leq^{\mathcal{H},\mathfrak{v}}_{\mathtt{nat}} \overline{m}$. Using $x:\mathtt{nat} \vdash s' \leq^{\mathcal{H},\mathfrak{c}} r'$ we obtain by substitutivity (Lemma \ref{Lemm_howe_subsitutivity}) that:
\begin{equation*}
\forall m\in\mathbb{N}.\ \vdash s'[\overline{m}/x] \leq^{\mathcal{H},\mathfrak{c}} r'[\overline{m}/x].
\end{equation*}
Applying the induction hypothesis to this we we obtain:
\begin{equation}\label{Howe_key_ih}
\forall m\in\mathbb{N}.\ \forall P'\in\mathfrak{P}.\ \treet{s'[\overline{m}/x]}_n\in P' \implies \treet{r'[\overline{m}/x]}\in P'.
\end{equation}

By definition of $\treet{-}_{(-)}$ it suffices to prove:
\begin{equation*}
\forall P\in\mathfrak{P}.\ \sigma_k(\treet{s'[\overline{0}/x]}_n,\ldots,\treet{s'[\overline{k}/x]}_n,\ldots)\in P \implies \sigma_k(\treet{r'[\overline{0}/x]},\ldots,\treet{r'[\overline{k}/x]},\ldots)\in P.
\end{equation*}

Assume $\sigma_k(\treet{s'[\overline{0}/x]}_n,\ldots,\treet{s'[\overline{k}/x]}_n,\ldots)\in P$. Then by decomposability of $\mathfrak{P}$, Definition \ref{Def_decomposability}, we know there exist observations $\overrightarrow{P'_m}$ such that:
\begin{gather*}
\forall \overrightarrow{p'_m}\in\overrightarrow{P'_m}.\ \sigma_k(\overrightarrow{p'_m})\in P	\\
\text{for each } m\in\mathbb{N} \quad \treet{s'[\overline{m}/x]}_n \in P'_m.
\end{gather*}
By equation \ref{Howe_key_ih} we can deduce that for all $m\in\mathbb{N}$, $\treet{r'[\overline{m}/x]}\in P'_m$. So we have 
\begin{equation*}
\sigma_k(\treet{r'[\overline{0}/x]},\ldots,\treet{r'[\overline{k}/x]},\ldots)\in P
\end{equation*}
as required.

\paragraph{Case $s=(\mufix{x}{v})(\protect\overrightarrow{w_i})$.} In this case $s \mathrel{\widehat{\leq^{\mathcal{H}}}^{\mathfrak{c}}} r$ is the conclusion of rule \textsc{(C7)} so:
\begin{gather}
r=(\mufix{x}{v'})(\overrightarrow{w'_i})	\\
x:\neg(\overrightarrow{A_i}) \vdash v \leq^{\mathcal{H},\mathfrak{v}}_{\neg(\overrightarrow{A_i})} v'	\label{Howe_key_proof5}	\\
w_i \leq^{\mathcal{H},\mathfrak{v}}_{A_i} w'_i \text{ for each } i.\label{Howe_key_proof6}
\end{gather}

By definition of $\treet{-}_{(-)}$ it suffices to prove:
\begin{equation*}
\forall P\in\mathfrak{P}.\ \treet{v[\lbd{\overrightarrow{y}}{\overrightarrow{A_i}}{(\mufix{x}{v})(\overrightarrow{y})}/x]\ (\overrightarrow{w_i})}_n \in P \implies \treet{v'[\lbd{\overrightarrow{y}}{\overrightarrow{A_i}}{(\mufix{x}{v'})(\overrightarrow{y})}/x]\ (\overrightarrow{w'_i})}\in P.
\end{equation*}

By equation \ref{Howe_key_proof5} we can deduce using context weakening that:
\begin{equation*}
\overrightarrow{y:A_i}, x:\neg(\overrightarrow{A_i}) \vdash v \leq^{\mathcal{H},\mathfrak{v}}_{\neg(\overrightarrow{A_i})} v'.
\end{equation*}
Because $\leq$ is reflexive, we can apply Lemma \ref{Lemm_howe_extension_misc} to deduce $\leq^{\mathcal{H}}$ is reflexive. Therefore:
\begin{equation*}
\overrightarrow{y:A_i} \vdash y_i \leq^{\mathcal{H},\mathfrak{v}}_{A_i} y_i.
\end{equation*}
From the last two equations and from rule \textsc{(C7)} we can deduce:
\begin{equation*}
\overrightarrow{y:A_i} \vdash (\mufix{x}{v})(\overrightarrow{y}) \mathrel{\widehat{\leq^{\mathcal{H}}}^{\mathfrak{c}}} (\mufix{x}{v'})(\overrightarrow{y}).
\end{equation*}
By reflexivity of $\leq$ we know:
\begin{equation*}
\overrightarrow{y:A_i} \vdash (\mufix{x}{v'})(\overrightarrow{y}) \leq^{\circ,\mathfrak{c}} (\mufix{x}{v'})(\overrightarrow{y}).
\end{equation*}
Using the last two equations and rule \textsc{(HC)} we obtain:
\begin{equation*}
\overrightarrow{y:A_i} \vdash (\mufix{x}{v})(\overrightarrow{y}) \leq^{\mathcal{H},\mathfrak{c}} (\mufix{x}{v'})(\overrightarrow{y}).
\end{equation*}
From this, using rule \textsc{(C3)}, we have:
\begin{equation*}
\vdash \lbd{\overrightarrow{y}}{\overrightarrow{A_i}}{(\mufix{x}{v})(\overrightarrow{y})} \mathrel{\widehat{\leq^{\mathcal{H}}}^\mathfrak{v}_{\neg(\overrightarrow{A_i})}} \lbd{\overrightarrow{y}}{\overrightarrow{A_i}}{(\mufix{x}{v'})(\overrightarrow{y})}.
\end{equation*}
By reflexivity of $\leq$ we have:
\begin{equation*}
\vdash \lbd{\overrightarrow{y}}{\overrightarrow{A_i}}{(\mufix{x}{v'})(\overrightarrow{y})} \leq^{\circ,\mathfrak{v}}_{\neg(\overrightarrow{A_i})} \lbd{\overrightarrow{y}}{\overrightarrow{A_i}}{(\mufix{x}{v'})(\overrightarrow{y})}.
\end{equation*}
From the last two equations, by rule \textsc{(HV)}, we obtain:
\begin{equation*}
\vdash \lbd{\overrightarrow{y}}{\overrightarrow{A_i}}{(\mufix{x}{v})(\overrightarrow{y})} \leq^{\mathcal{H},\mathfrak{v}}_{\neg(\overrightarrow{A_i})} \lbd{\overrightarrow{y}}{\overrightarrow{A_i}}{(\mufix{x}{v'})(\overrightarrow{y})}.
\end{equation*}

Using this last equation and equation \ref{Howe_key_proof5}: $x:\neg(\overrightarrow{A_i}) \vdash v \leq^{\mathcal{H},\mathfrak{v}}_{\neg(\overrightarrow{A_i})} v'$, we obtain by substitutivity for $\leq^{\mathcal{H}}$, Lemma \ref{Lemm_howe_subsitutivity}, that:
\begin{equation*}
\vdash v[(\lbd{\overrightarrow{y}}{\overrightarrow{A_i}}{(\mufix{x}{v})(\overrightarrow{y})})/x] \leq^{\mathcal{H},\mathfrak{v}}_{\neg(\overrightarrow{A_i})} v'[(\lbd{\overrightarrow{y}}{\overrightarrow{A_i}}{(\mufix{x}{v'})(\overrightarrow{y})})/x].
\end{equation*}
From Lemma \ref{Lemm_howe_extension_misc} we can deduce $\leq^{\mathcal{H}}$ is compatible. Using the last equation and equation \ref{Howe_key_proof6}, $\vdash w_i \leq^{\mathcal{H},\mathfrak{v}}_{A_i} w'_i$ for each $i$, we obtain by compatibility:
\begin{equation*}
\vdash v[(\lbd{\overrightarrow{y}}{\overrightarrow{A_i}}{(\mufix{x}{v})(\overrightarrow{y})})/x]\ (\overrightarrow{w_i}) \leq^{\mathcal{H},\mathfrak{c}} v'[(\lbd{\overrightarrow{y}}{\overrightarrow{A_i}}{(\mufix{x}{v'})(\overrightarrow{y})})/x]\ (\overrightarrow{w'_i}).
\end{equation*}

By applying the induction hypothesis to this we obtain the desired result:
\begin{equation*}
\forall P\in\mathfrak{P}.\ \treet{v[\lbd{\overrightarrow{y}}{\overrightarrow{A_i}}{(\mufix{x}{v})(\overrightarrow{y})}/x]\ (\overrightarrow{w_i})}_n \in P \implies \treet{v'[\lbd{\overrightarrow{y}}{\overrightarrow{A_i}}{(\mufix{x}{v'})(\overrightarrow{y})}/x]\ (\overrightarrow{w'_i})}\in P.
\end{equation*} 

\paragraph{Case $s=\caset{v}{s'}{y}{s''}$.} In this case $\vdash s \mathrel{\widehat{\leq^{\mathcal{H}}}^\mathfrak{c}} r$ is the conclusion of rule \textsc{(C10)} so we know:
\begin{gather*}
r=\caset{v'}{r'}{y}{r''}	\\
\vdash v \leq^{\mathcal{H},\mathfrak{v}}_{\mathtt{nat}} v'	\\
\vdash s' \leq^{\mathcal{H},\mathfrak{c}} r'	\\
y:\mathtt{nat} \vdash s'' \leq^{\mathcal{H},\mathfrak{c}} r''.
\end{gather*}
Using $\vdash v \leq^{\mathcal{H},\mathfrak{v}}_{\mathtt{nat}} v'$ we can apply Lemma \ref{Lemm_howe_ext_sim_nat} to deduce $v=v'$.

\textbf{If $v=v'=\mathtt{zero}$} then it suffices to prove:
\begin{equation*}
\forall P\in\mathfrak{P}.\ \treet{s'}_n\in P \implies \treet{r'}\in P.
\end{equation*}
We can deduce this using the induction hypothesis for $s'$ and $r'$ and $\vdash s' \leq^{\mathcal{H},\mathfrak{c}} r'$.

\textbf{If $v=v'=\mathtt{succ}(w)$ where $\vdash w:\mathtt{nat}$} then it suffices to prove: 
\begin{equation*}
\forall P\in\mathfrak{P}.\ \treet{s''[w/y]}_n\in P \implies \treet{r''[w/y]}\in P.
\end{equation*}
By reflexivity of $\leq^{\mathcal{H}}$ (Lemma \ref{Lemm_howe_extension_misc}) we have $\vdash w \leq^{\mathcal{H},\mathfrak{v}}_{\mathtt{nat}} w$. We can use this, $y:\mathtt{nat} \vdash s'' \leq^{\mathcal{H},\mathfrak{c}} r''$ and substitutivity for $\leq^{\mathcal{H}}$ (Lemma \ref{Lemm_howe_subsitutivity}), to deduce:
\begin{equation*}
\vdash s''[w/y] \leq^{\mathcal{H},\mathfrak{c}} r''[w/y].
\end{equation*}
We then get the desired result from the induction hypothesis for $s''[w/y]$ and $r''[w/y]$.

\paragraph{Case $s=\downarrow$.} In this case $\vdash s \mathrel{\widehat{\leq^{\mathcal{H}}}^\mathfrak{c}} r$ is the conclusion of rule \textsc{(C9)} so $r=\downarrow$. Then $\treet{s}_{n+1}=\treet{r}=\downarrow$. Therefore we have the required result:
\begin{equation*}
\forall P\in\mathfrak{P}.\ \treet{s}_{n+1}\in P \implies \treet{r}\in P.
\end{equation*}
\end{proof}

\begin{replemma}{Lemm_howe_subst_impl_open_incl}
Given a well-typed open relation $\mathcal{R}$ that is reflexive and has the two substitutivity properties from Lemma \ref{Lemm_howe_subsitutivity}, and a well-typed closed relation $\mathcal{S}$ then:
\begin{center}
if $\mathcal{R}$ restricted to closed terms is included in $\mathcal{S}$ then $\mathcal{R}\subseteq\mathcal{S}^\circ$.
\end{center}
\end{replemma}
\begin{proof}
Consider terms $s$ and $t$, values or computations, such that $\overrightarrow{x_i:A_i}\vdash s \mathrel{\mathcal{R}} t$. By reflexivity of $R$ we know that for any values $\vdash\overrightarrow{v_i:A_i}$ we have $\vdash v_i\mathrel{\mathcal{R}^\mathfrak{v}} v_i$. Therefore we can apply the substitutivity property of $\mathcal{R}$ to obtain:
\begin{equation*}
\forall \vdash\overrightarrow{v_i:A_i}.\ \vdash s[\overrightarrow{v_i/x_i}]\ \mathcal{R}\ t[\overrightarrow{v_i/x_i}].
\end{equation*}

From here we can deduce by assumption that:
\begin{equation*}
\forall \vdash\overrightarrow{v_i:A_i}.\ \vdash s[\overrightarrow{v_i/x_i}]\ \mathcal{S}\ t[\overrightarrow{v_i/x_i}]
\end{equation*}
so by the definition of open extension we have:
\begin{equation*}
\overrightarrow{x_i:A_i}\vdash s\ \mathcal{S}^\circ\ t.
\end{equation*}
\end{proof}

\begin{replemma}{Lemm_howe_reflstran_sim}
Given a $\mathfrak{P}$-simulation $\mathcal{R}$, its reflexive-transitive closure, $\mathcal{R}^*$ is also a $\mathfrak{P}$-simulation.
\end{replemma}
\begin{proof}
We check all the conditions in the definition of $\mathfrak{P}$-simulation in turn:
\begin{enumerate}
\item $\vdash v \mathrel{\mathcal{R}^{*,\mathfrak{v}}_{\mathtt{unit}}} w \implies v=w=\star$.
Assume $\vdash v \mathrel{\mathcal{R}^{*,\mathfrak{v}}_{\mathtt{unit}}} w$. The only closed value of type $\mathtt{unit}$ is $\star$ so $v=w=\star$.

\item $\vdash v \mathrel{\mathcal{R}^{*,\mathfrak{v}}_{\mathtt{nat}}} w \implies v=w$. Assume $v \mathrel{\mathcal{R}^{*,\mathfrak{v}}_{\mathtt{nat}}} w$. Then by the definition of reflexive-transitive closure there must exist a chain of values $v_1,\ldots,v_k$ such that $v_1=v$ and $v_k=w$ and  $v_{i-1}\mathrel{\mathcal{R}^\mathfrak{v}_{\mathtt{nat}}} v_i$ for each $i>1$. If $k=1$ then $v=w$. If $k>1$, we can use the fact that $\mathcal{R}$ is a simulation to deduce $v_{i-1}=v_i$ for each $i>1$. So by transitivity $v=w$.

\item $\vdash s \mathrel{\mathcal{R}^{*,\mathfrak{c}}} t \implies \forall P\in\mathfrak{P}.\ (\treet{s}\in P \implies \treet{t}\in P)$. Assume $s \mathrel{\mathcal{R}^{*,\mathfrak{c}}} t$. There exists a chain of computations $s_1,\ldots,s_k$ such that $s_1=s$ and $s_k=t$ and  $s_{i-1}\mathrel{\mathcal{R}^\mathfrak{c}} s_i$ for each $i>1$. If $k=1$ then $s=t$ so we have $\forall P\in\mathfrak{P}.\ (\treet{s}\in P \implies \treet{t}\in P)$ as required. If $k>1$ then for each $i>1$:
\begin{equation*}
\forall P\in\mathfrak{P}.\ (\treet{s_{i-1}}\in P \implies \treet{s_i}\in P).
\end{equation*}
From here we can deduce the desired result by transitivity of implication.

\item $\vdash v \mathrel{\mathcal{R}^{*,\mathfrak{v}}_{\neg(A_1,\ldots,A_n)}} w \implies \forall \vdash u_1:A_1,\ldots,\vdash u_n:A_n.\ v(u_1,\ldots,u_n) \mathrel{\mathcal{R}^{*,\mathfrak{c}}} w(u_1,\ldots, u_n)$. Similar to the previous two cases. It uses the fact that $v\mathrel{\mathcal{R}}u\mathrel{\mathcal{R}}w$ implies $v\mathrel{\mathcal{R}^*}w$.
\end{enumerate}
\end{proof}

\begin{replemma}{Lemm_howe_refltran_compat}
Given a well-typed compatible relation $\mathcal{R}$, its reflexive-transitive closure $\mathcal{R}^*$ is also compatible.
\end{replemma}
\begin{proof}
The proof is similar to the proof of Lemma \ref{Lemm_howe_reflstran_sim} in that it expands $s\mathrel{\mathcal{R}^*}t$ into a chain of terms related by $\mathcal{R}$. It also uses the fact that $\mathcal{R}$ compatible implies $\mathcal{R}$ is reflexive, Lemma \ref{Lemm_howe_extension_misc}.
\end{proof}

\begin{replemma}{Lemm_howe_lassen}[From \cite{LasPhd}]
Given a well-typed closed relation $\mathcal{R}$ the following holds:
\begin{center}
if $\mathcal{R}^\circ$ is reflexive and symmetric, then $\mathcal{R}^{\mathcal{H}*}$ is symmetric.
\end{center}
Where $S^*$ denotes the reflexive-transitive closure of a relation $\mathcal{S}$.
\end{replemma}
\begin{proof}
By examining the compatible refinement rules we can observe that for any relation $\mathcal{S}$:
\begin{equation}\label{lassen_prf1}
\widehat{\mathcal{S}^{\textit{op}}} = \widehat{\mathcal{S}}^{\textit{op}}.
\end{equation}

Since $\mathcal{R}^\circ$ is reflexive, $\mathcal{R}$ is also reflexive. Therefore we can apply Lemma \ref{Lemm_howe_extension_misc} to deduce:
\begin{equation}\label{lassen_prf2}
\mathcal{R}^\circ \subseteq \mathcal{R}^{\mathcal{H}}
\end{equation}
and $\mathcal{R}^{\mathcal{H}}$ compatible.

Because $\mathcal{R}^{\mathcal{H}}$ is compatible, $\mathcal{R}^{\mathcal{H}*}$ is also compatible using Lemma \ref{Lemm_howe_refltran_compat}. By definition of compatibility and compatible refinement we see that $\mathcal{R}^{\mathcal{H}*}$ compatible implies:
\begin{equation}\label{lassen_prf3}
\widehat{\mathcal{R}^{\mathcal{H}*}} \subseteq \mathcal{R}^{\mathcal{H}*}.
\end{equation}

Using the fact that $\mathcal{R}^\circ$ is symmetric, equations \ref{lassen_prf1}, \ref{lassen_prf3} and \ref{lassen_prf2}, and the fact that taking the reflexive-transitive closure and the converse of a relation are commutative operations we obtain:
\begin{multline*}
\mathcal{R}^\circ\circ\widehat{\mathcal{R}^{\mathcal{H}*\textit{op}}} = \mathcal{R}^{\circ\textit{op}}\circ\widehat{\mathcal{R}^{\mathcal{H}*\textit{op}}} = \mathcal{R}^{\circ\textit{op}}\circ\widehat{\mathcal{R}^{\mathcal{H}*}}^\textit{op} \subseteq	\\
\mathcal{R}^{\circ\textit{op}}\circ\mathcal{R}^{\mathcal{H}*\textit{op}} \subseteq \mathcal{R}^{\mathcal{H}\textit{op}}\circ\mathcal{R}^{\mathcal{H}*\textit{op}} = \mathcal{R}^{\mathcal{H}\textit{op}}\circ\mathcal{R}^{\mathcal{H}\textit{op}*} = \mathcal{R}^{\mathcal{H}\textit{op}*} = \mathcal{R}^{\mathcal{H}*\textit{op}}.
\end{multline*}

This means that $\mathcal{R}^{\mathcal{H}*\textit{op}}$ is a solution $\mathcal{S}$ to the inequation $\mathcal{R}^\circ\circ\widehat{\mathcal{S}}\subseteq\mathcal{S}$. So the relation $\mathcal{R}^{\mathcal{H}*\textit{op}}$ is closed under the rules \textsc{(HC)} and \textsc{(HV)}. But $\mathcal{R}^{\mathcal{H}}$ is the least relation closed under those rules. Therefore:
\begin{equation*}
\mathcal{R}^{\mathcal{H}} \subseteq \mathcal{R}^{\mathcal{H}*\textit{op}}.
\end{equation*}

Consider some terms $s$ and $t$, values or computations, such that $s \mathrel{\mathcal{R}^{\mathcal{H}*}} t$. Then there exists a sequence of terms $s_1,\ldots,s_n$ such that $s_1=s$ and $s_n=t$ and $s_1\mathrel{\mathcal{R}^{\mathcal{H}}}s_2\mathrel{\mathcal{R}^{\mathcal{H}}}\ldots\mathrel{\mathcal{R}^{\mathcal{H}}}s_n$. Therefore $s_1\mathrel{\mathcal{R}^{\mathcal{H}*\textit{op}}}s_2\mathrel{\mathcal{R}^{\mathcal{H}*\textit{op}}}\ldots\mathrel{\mathcal{R}^{\mathcal{H}*\textit{op}}}s_n$.

From here we can deduce $s\mathrel{\mathcal{R}^{\mathcal{H}*\textit{op}}}t$, which means $t\mathrel{\mathcal{R}^{\mathcal{H}*}}s$. So $\mathcal{R}^{\mathcal{H}*}$ is symmetric as required.
\end{proof}

\chapter{Proofs about Logical Equivalence}\label{App_modal_log}

\begin{repproposition}{Prop_sim_coinc_log}
Given a decomposable set $\mathfrak{P}$ of Scott-open observations:
\begin{enumerate}
\item Applicative $\mathfrak{P}$-similarity, $\precsim$, coincides with the logical preorder induced by the logic $\mathcal{V}^+$, $\sqsubseteq_{\mathcal{V}^+}$. Therefore, the open extension of $\sqsubseteq_{\mathcal{V}^+}$ is compatible.

\item Applicative $\mathfrak{P}$-bisimilarity, $\sim$, coincides with the logical equivalence induced by the logic $\mathcal{V}$, $\equiv_{\mathcal{V}}$. Therefore, the open extension of $\equiv_{\mathcal{V}}$ is compatible.
\end{enumerate}
\end{repproposition}
\begin{proof}
\begin{enumerate}
\item\label{log_coin_sim_prf} Consider two arbitrary closed computations $s$ and $t$. Since similarity is the greatest simulation:
\begin{align*}
s \precsim^\mathfrak{c} t \quad &\text{iff} \quad \forall P\in\mathfrak{P}.\ \treet{s}\in P \implies \treet{t}\in P	\\
s \sqsubseteq_{\mathcal{V}^+} t \quad &\text{iff} \quad \forall P\in\mathfrak{P}.\ \treet{s}\in P \implies \treet{t}\in P.
\end{align*}
So we can see that $s \precsim^\mathfrak{c} t \Longleftrightarrow s \sqsubseteq_{\mathcal{V}^+} t$ as required.

For values $v$ and $u$ assume first that $v\precsim^\mathfrak{v}_A u$. We need to prove $v\sqsubseteq_{\mathcal{V}^+}u$, that is:
\begin{equation*}
\forall \phi:A\in\mathcal{V}^+.\ (v\models_{\mathcal{V}^+}\phi \implies u\models_{\mathcal{V}^+}\phi).
\end{equation*}
We proceed by a case distinction on the type $A$.

\paragraph{Case $A=\mathtt{unit}$.} The only formulas of this type are $true$, the empty conjunction, and $false$, the empty disjunction. All values satisfy $true$ so in this case we are done. No values satisfy $false$ so the implication above holds trivially.

\paragraph{Case $A=\mathtt{nat}$.} From the definition of simulation we know that $v=u$. We continue by induction on the formula $\phi$.

\textbf{If $\phi=\{m\}$}, assume $v\models_{\mathcal{V}^+}\{m\}$. By definition of satisfaction this means $v=\overline{m}$ so $u=\overline{m}$. Therefore $u\models_{\mathcal{V}^+}\{m\}$ as required.

\textbf{If $\phi=\land_{i\in I}\phi_i$ or $\phi=\lor_{i\in I}\phi_i$}, the result follows from the induction hypothesis for $\phi_i$.

\paragraph{Case $A=\neg(B_1,\ldots,B_n)$.} From $v\precsim^\mathfrak{v}_A u$ we know that:
\begin{equation*}
\forall \vdash w_1:A_1,\ldots,\vdash w_n:A_n.\ \forall P\in\mathfrak{P}.\ \treet{v\ (\overrightarrow{w_i})}\in P \implies \treet{u\ (\overrightarrow{w_i})}\in P.
\end{equation*}
We proceed by induction on $\phi$.

\textbf{If $\phi=(w_1,\ldots,w_n)\mapsto P$}, assume $v\models_{\mathcal{V}^+} (w_1,\ldots,w_n)\mapsto P$, which means $\treet{v\ (\overrightarrow{w_i})}\in P$. Therefore, by assumption $\treet{u\ (\overrightarrow{w_i})}\in P$ so $u\models_{\mathcal{V}^+} (w_1,\ldots,w_n)\mapsto P$ as required.

\textbf{If $\phi=\land_{i\in I}\phi_i$ or $\phi=\lor_{i\in I}\phi_i$}, the result follows from the induction hypothesis.

Now assume that for values $v$ and $u$, $v \sqsubseteq_{\mathcal{V}^+} u$. To show $v\precsim^\mathfrak{v}_A u$ we proceed by a case distinction on $A$.

\paragraph{Case $A=\mathtt{unit}$.} The only closed value of type $\mathtt{unit}$ is $\star$ so $v=u=\star$. Since $\precsim$ is the greatest simulation, this is enough to establish $v\precsim^\mathfrak{v}_{\mathtt{unit}} u$.

\paragraph{Case $A=\mathtt{nat}$.} Since $v$ is closed $v=\overline{n}$ for some $n\in\mathbb{N}$. Therefore $v\models_{\mathcal{V}^+}\{n\}$. Since $v \sqsubseteq_{\mathcal{V}^+} u$ we have that $u\models_{\mathcal{V}^+}\{n\}$ so $u=\overline{n}=v$ as required.

\paragraph{Case $A=\neg(B_1,\ldots,B_n)$.} We need to prove that:
\begin{equation*}
\forall \vdash w_1:A_1,\ldots,\vdash w_n:A_n.\ \forall P\in\mathfrak{P}.\ \treet{v\ (\overrightarrow{w_i})}\in P \implies \treet{u\ (\overrightarrow{w_i})}\in P.
\end{equation*}
Assume $\treet{v\ (\overrightarrow{w_i})}\in P$. Then $v\models_{\mathcal{V}^+} (w_1,\ldots,w_n)\mapsto P$ so $w\models_{\mathcal{V}^+} (w_1,\ldots,w_n)\mapsto P$. Therefore $\treet{w\ (\overrightarrow{w_i})}\in P$ as required.

So we have proved $v\precsim^\mathfrak{v}_A u \Longleftrightarrow v \sqsubseteq_{\mathcal{V}^+} u$.

\item For computations we know that $s\sim^\mathfrak{c} t$ if and only if:
\begin{equation*}
\forall P\in\mathfrak{P}.\ \treet{s}\in P \Longleftrightarrow \treet{t}\in P.
\end{equation*}
So we can see this is equivalent to $s\equiv_{\mathcal{V}}t$.

For values, assume $v\equiv_{\mathcal{V}}u$, that is:
\begin{equation}\label{log_coinv_bisim_prf1}
\forall \phi:A\in\mathcal{V}.\ (v\models_{\mathcal{V}}\phi \Longleftrightarrow u\models_{\mathcal{V}}\phi).
\end{equation}
We need to prove $v\sim^\mathfrak{v}_A u$. We proceed by a case split on the type $A$. The proof is the same as in point \ref{log_coin_sim_prf}, except that in the case $A=\neg(B_1,\ldots,B_n)$ we need to prove an equivalence. This is done by using the fact that equation \ref{log_coinv_bisim_prf1} is now an equivalence.

Now assume $v\sim^\mathfrak{v}_A u$. We need to prove $v\equiv_{\mathcal{V}}u$, that is:
\begin{equation*}
\forall \phi:A\in\mathcal{V}.\ (v\models_{\mathcal{V}}\phi \Longleftrightarrow u\models_{\mathcal{V}}\phi).
\end{equation*}
As in point \ref{log_coin_sim_prf}, we proceed by a case distinction on the type $A$.

In case $A=\mathtt{unit}$ we have new formulas apart from $true$ and $false$ due to the addition of negation. However, they are all semantically equivalent to either $true$ or $false$, so we are done.

In cases $A=\mathtt{nat}$ and $A=\neg(B_1,\ldots,B_n)$ proceed by induction on $\phi:A$ as before. The $\Longrightarrow$ direction of the proof is the same. The $\Longleftarrow$ direction can be obtained from the previous one by choosing $\phi=\neg\varphi$. The inductions have an additional case:

\paragraph{If $\phi=\neg\phi'$.} By the induction hypothesis we know:
\begin{equation*}
v\models_{\mathcal{V}}\phi' \Longleftrightarrow u\models_{\mathcal{V}}\phi'
\end{equation*}
which is equivalent to
\begin{equation*}
v\not\models_{\mathcal{V}}\phi' \Longleftrightarrow u\not\models_{\mathcal{V}}\phi'
\end{equation*}
which is in turn equivalent to
\begin{equation*}
v\models_{\mathcal{V}}\neg\phi' \Longleftrightarrow u\models_{\mathcal{V}}\neg\phi'.
\end{equation*}
This is what we had to prove.
\end{enumerate}
\end{proof}

\begin{reptheorem}{Thm_logic_equiv_equiv}
Given a decomposable set $\mathfrak{P}$ of Scott-open observations, 
the logics $\mathcal{F}$ and $\mathcal{V}$ are equi-expressive.
\end{reptheorem}
\begin{proof}
We need to prove the same statements as in Theorem \ref{Thm_logic_preord_equiv}, where $\mathcal{V}^+$ is replaced by $\mathcal{V}$ and $\mathcal{F}^+$ is replaced by $\mathcal{F}$. We will point out where the proofs need to be modified.

\paragraph{Statement \ref{rtp_equi_1}.} The proof for computation formulas $P\in\mathfrak{P}$ remains the same because these formulas do not change when adding negation.

For values, we prove by induction on the derivation of $\phi:A$ the following property:
\begin{equation*}
\Phi(\phi, A) = (\phi:A \implies (\forall \vdash v:A.\ v \models_{\mathcal{F}} \phi \Longleftrightarrow v \models_{\mathcal{V}} \phi^\flat)).
\end{equation*}
All the cases from the proof of Theorem \ref{Thm_logic_preord_equiv} stay the same, but now we have an additional case \textsc{(neg)}.

In this case, $\phi=\neg\phi'$. Assume $\phi:A$. Then we know $\phi':A$ so we can apply the induction hypothesis to get:
\begin{equation*}
\forall \vdash v:A.\ v \models_{\mathcal{F}} \phi' \Longleftrightarrow v \models_{\mathcal{V}} \phi'^\flat.
\end{equation*}
This is equivalent to:
\begin{equation*}
\forall \vdash v:A.\ v \models_{\mathcal{F}} \neg\phi' \Longleftrightarrow v \models_{\mathcal{V}} \neg\phi'^\flat
\end{equation*}
which is what we had to prove.

\paragraph{Statement \ref{rtp_equi_2}.} For computation formulas the equivalence is proved the same as in Theorem \ref{Thm_logic_preord_equiv}.

For value formulas we proceed by induction on the type $A$, as in the proof of Theorem \ref{Thm_logic_preord_equiv}. The case $A=\mathtt{nat}$ stays the same. In the $A=\mathtt{unit}$ case, we now have more formulas than $true$ and $false$ because of the addition of negation. However, all these new formulas are semantically equivalent to $true$ and $false$, so their semantics does not change when translated. Therefore, the equivalence we need to prove is true. In the case $A=\neg(B_1,\ldots,B_n)$ we do an induction on the formula $\phi$.

We observe that $(\sqsubseteq_{\mathcal{F}})=(\equiv_{\mathcal{F}})$ because $\mathcal{F}$ contains negation. The proof of this goes as follows:
\begin{equation*}
\forall\phi\in\mathcal{F}.\ v\models_\mathcal{F}\phi \implies u\models_\mathcal{F}\phi
\end{equation*}
implies that
\begin{equation*}
\forall\phi'\in\mathcal{F}.\ v\models_\mathcal{F}\neg\phi' \implies u\models_\mathcal{F}\neg\phi'
\end{equation*}
so
\begin{equation*}
\forall\phi'\in\mathcal{F}.\ u\models_\mathcal{F}\phi' \implies v\models_\mathcal{F}\phi'.
\end{equation*}

Using $(\sqsubseteq_{\mathcal{F}})=(\equiv_{\mathcal{F}})$, we see that $\sqsubseteq_{\mathcal{F}}$ is compatible. Therefore, the proof of the case $\phi=(w_1,\ldots,w_n)\mapsto P$ is the same as in Theorem \ref{Thm_logic_preord_equiv}. The cases $\phi=\lor_{i\in I}\varphi_i$ and $\phi=\land_{i\in I}\varphi_i$ remain unchanged.

There is one new case, namely $\phi=\neg\phi'$. By the induction hypothesis for $\phi'$ we know that:
\begin{equation*}
\forall\vdash v:\neg(B_1,\ldots,B_n).\ v\models_\mathcal{V}\phi'\Longleftrightarrow v\models_\mathcal{F}\phi'^\sharp.
\end{equation*}
This is equivalent to:
\begin{equation*}
\forall\vdash v:\neg(B_1,\ldots,B_n).\ v\models_\mathcal{V}\neg\phi'\Longleftrightarrow v\models_\mathcal{F}\neg\phi'^\sharp
\end{equation*}
which is what we had to prove.
\end{proof}


\chapter{Proofs about Contextual Equivalence}\label{App_ctxt}

\begin{repproposition}{Prop_ctxpre_comp}
The contextual preorder $\ctxpre$ is a preorder, and is moreover compatible and $\mathfrak{P}$-adequate. Thus, it is the greatest compatible and $\mathfrak{P}$-adequate preorder.
\end{repproposition}
\begin{proof}
First prove that $\ctxpre=\bigcup\mathbb{CA}$ is a preorder. To prove reflexivity, we show that the open identity relation, $\mathcal{I}$, is in $\mathbb{CA}$. From the compatibility rules we can see $\mathcal{I}$ is compatible. Given $\vdash s\mathrel{\mathcal{I}}s$ it follows that ($\forall P\in\mathfrak{P}.\ \treet{s}\in P \implies \treet{s}\in P$) so $\mathcal{I}$ is adequate. Hence, $\mathcal{I}\in\mathbb{CA}$, as required.

To show transitivity it suffices to show that the composition of relations in $\mathbb{CA}$ is itself in $\mathbb{CA}$. Consider two relations $\mathcal{R}$ and $\mathcal{S}$ which are compatible and adequate.

We can show that $\mathcal{S}\circ\mathcal{R}$ is adequate. Consider $\vdash s\mathrel{\mathcal{R}^\mathfrak{c}} t\mathrel{\mathcal{S}^\mathfrak{c}} r$. Since $\mathcal{R}$ and $\mathcal{S}$ are adequate we know that:
\begin{gather*}
\forall P\in\mathfrak{P}.\ \treet{s}\in P \implies \treet{t}\in P	\\
\forall P\in\mathfrak{P}.\ \treet{t}\in P \implies \treet{r}\in P.
\end{gather*}
Therefore:
\begin{equation*}
\forall P\in\mathfrak{P}.\ \treet{s}\in P \implies \treet{r}\in P
\end{equation*}
which means $\mathcal{S}\circ\mathcal{R}$ is adequate.

To prove $\mathcal{S}\circ\mathcal{R}$ is compatible we check each of the rules in the definition of compatibility (Definition~\ref{Def_compat}) in turn. Since $\mathcal{R}$ and $\mathcal{S}$ are compatible we know $\Gamma\vdash x\mathrel{\mathcal{R}^\mathfrak{v}_A} x$ and $\Gamma\vdash x\mathrel{\mathcal{S}^\mathfrak{v}_A} x$ so $\Gamma\vdash x\mathrel{(\mathcal{S}\circ\mathcal{R})^\mathfrak{v}_A} x$. Therefore, $\mathcal{S}\circ\mathcal{R}$ satisfies \textsc{(comp1)}. Rules \textsc{(comp2)}, \textsc{(comp4)} and \textsc{(comp9)} are proved similarly.

Consider $\Gamma,\overrightarrow{x_i:A_i}\vdash s\ \mathcal{R}^\mathfrak{c}\ s'\ \mathcal{S}^\mathfrak{c}\ s''$. Then by compatibility of $\mathcal{R}$ and $\mathcal{S}$ we know that:
\begin{equation*}
\Gamma\vdash \lbd{(\overrightarrow{x_i})}{(\overrightarrow{A_i})}{s}\ \mathcal{R}^\mathfrak{v}_{\neg(\overrightarrow{A_i})}\ \lbd{(\overrightarrow{x_i})}{(\overrightarrow{A_i})}{s'}\  \mathcal{S}^\mathfrak{v}_{\neg(\overrightarrow{A_i})}\ \lbd{(\overrightarrow{x_i})}{(\overrightarrow{A_i})}{s''}.
\end{equation*}
So 
\begin{equation*}
\Gamma\vdash \lbd{(\overrightarrow{x_i})}{(\overrightarrow{A_i})}{s}\ (\mathcal{S}\circ\mathcal{R})^\mathfrak{v}_{\neg(\overrightarrow{A_i})}\ \lbd{(\overrightarrow{x_i})}{(\overrightarrow{A_i})}{s''}.
\end{equation*}
Therefore $\mathcal{S}\circ\mathcal{R}$ satisfies \textsc{(comp3)}. Rule \textsc{(comp5)} is proved similarly.

Consider $\Gamma,x:\neg(\overrightarrow{A_i}) \vdash v\  \mathcal{R}^\mathfrak{v}_{\neg(\overrightarrow{A_i})}\ v'\ \mathcal{S}^\mathfrak{v}_{\neg(\overrightarrow{A_i})}\ v''$ and $\Gamma\vdash w_i\ \mathcal{R}^\mathfrak{v}_{A_i}\ w'_i\ \mathcal{S}^\mathfrak{v}_{A_i}\ w''_i$ for each $i$. By compatibility of $\mathcal{R}$ and $\mathcal{S}$ we infer that:
\begin{gather*}
\Gamma \vdash (\mufix{x}{v})(\overrightarrow{w_i})\ \mathcal{R}^\mathfrak{c}\ (\mufix{x}{v})(\overrightarrow{w'_i})\ \mathcal{S}^\mathfrak{c}\ (\mufix{x}{v})(\overrightarrow{w''_i})
\end{gather*}
so
\begin{equation*}
\Gamma \vdash (\mufix{x}{v})(\overrightarrow{w_i})\   (\mathcal{S}\circ\mathcal{R})^\mathfrak{c}\  (\mufix{x}{v})(\overrightarrow{w''_i}).
\end{equation*} 
Therefore, $\mathcal{S}\circ\mathcal{R}$ satisfies \textsc{(comp7)}. Proving rules \textsc{(comp6)}, \textsc{(comp8)} and \textsc{(comp10)} are satisfied is similar.

Now we prove that $\ctxpre$ is compatible. We have already shown $\ctxpre$ is reflexive so rules \textsc{(comp1)}, \textsc{(comp2)}, \textsc{(comp4)} and \textsc{(comp9)} are satisfied.

Suppose $\Gamma,\overrightarrow{x_i:A_i}\vdash s\mathrel{(\ctxpre)^\mathfrak{c}} t$. Then there exists a relation $\mathcal{R}\in\mathbb{CA}$ such that $\Gamma,\overrightarrow{x_i:A_i}\vdash s\ \mathcal{R}^\mathfrak{c}\ t$. By compatibility of $\mathcal{R}$ we know that:
\begin{equation*}
\Gamma \vdash \lbd{\overrightarrow{x_i}}{\overrightarrow{A_i}}{s}\ \mathcal{R}^\mathfrak{v}_{\neg(\overrightarrow{A_i})}\ \lbd{\overrightarrow{x_i}}{\overrightarrow{A_i}}{t}
\end{equation*}
so since $\mathcal{R}\subseteq(\ctxpre)$ we have:
\begin{equation*}
\Gamma \vdash \lbd{\overrightarrow{x_i}}{\overrightarrow{A_i}}{s}\ (\ctxpre)^\mathfrak{v}_{\neg(\overrightarrow{A_i})}\ \lbd{\overrightarrow{x_i}}{\overrightarrow{A_i}}{t}.
\end{equation*}
Therefore, $\ctxpre$ satisfies rule \textsc{(comp3)}. Similarly, we can prove $\ctxpre$ satisfies \textsc{(comp5)}.

Since we have proved $\ctxpre$ is a preorder, we can apply Lemma \ref{Lem_comp_1premise_rules} to deduce that the compatibility clauses \textsc{(comp6)}, \textsc{(comp7)}, \textsc{(comp8)} and \textsc{(comp10)} are equivalent to their single-premise versions. We then use the same reasoning as before to show that these single-premise rules are satisfied.

Now show that $\ctxpre$ is adequate. Consider $\vdash s \mathrel{(\ctxpre)^\mathfrak{c}} t$. Then there exists a relation $\mathcal{R}\in\mathbb{CA}$ such that $\vdash s\ \mathcal{R}^\mathfrak{c}\ t$. So by adequacy of $\mathcal{R}$ we have:
\begin{equation*}
\forall P\in\mathfrak{P}.\ \treet{s}\in P \implies \treet{t}\in P
\end{equation*}
as required.

Therefore, we have shown that $\ctxpre$ is a compatible adequate preorder so we are done.
\end{proof}

\begin{repproposition}{Prop_ctxeq_is_ctxpreop}
Contextual equivalence is the intersection of the contextual preorder with its converse:
\begin{equation*}
(\ctxeq) = (\ctxpre)\cap(\ctxpre)^{\textit{op}}.
\end{equation*}
\end{repproposition}
\begin{proof}
We prove each inclusion in turn.

\paragraph{``$\subseteq$''.} Consider two terms, values or computations, $s\mathrel{\ctxeq} t$. By definition of $\ctxeq$ there exists a compatible and biadequate relation $\mathcal{R}$ such that $s\mathrel{\mathcal{R}}t$.
Since $\mathcal{R}$ is biadequate, it is also adequate, so $\mathcal{R}\in\mathbb{CA}$. Thus, $\mathcal{R}\subseteq(\ctxpre)$ and $s\mathrel{\ctxpre} t$.

Notice that $(\ctxpre)^{\textit{op}}=\bigcup\{\mathcal{S}^{\textit{op}} \mid \mathcal{S}\in\mathbb{CA}\}$. If a relation $\mathcal{S}$ is compatible then we see by the definition of compatibility that $\mathcal{S}^{\textit{op}}$ is also compatible. So $(\ctxpre)^{\textit{op}}$ is the union of all relations $\mathcal{S}$ that are compatible and have the property:
\begin{equation*}
\forall s',t'.\ \vdash s'\mathrel{\mathcal{R}^\mathfrak{c}} t' \implies \forall P\in\mathfrak{P}.\ \treet{t'}\in P \implies \treet{s'}\in P.
\end{equation*}
Because $\mathcal{R}$ is biadequate it has the above property. So $\mathcal{R}\subseteq(\ctxpre)^\textit{op}$. Therefore, $s\mathrel{(\ctxpre)^\textit{op}}t$. We have shown that $(s,t)\in(\ctxpre)\cap(\ctxpre)^{\textit{op}}$ as required.

\paragraph{``$\supseteq$''.} From Proposition \ref{Prop_ctxpre_comp} we know $\ctxpre$ is compatible so $(\ctxpre)^{\mathit{op}}$ is also compatible. From the definition of compatibility we can see that the intersection of two compatible relations is also compatible so $(\ctxpre)\cap(\ctxpre)^{\textit{op}}$ is compatible.

Let $s$ and $t$ be two closed computations such that $(s,t)\in(\ctxpre)\cap(\ctxpre)^{\textit{op}}$. Then by adequacy of $\ctxpre$ (Propostion \ref{Prop_ctxpre_comp}) we have:
\begin{gather*}
\forall P\in\mathfrak{P}.\ \treet{s}\in P \implies \treet{t}\in P	\\
\forall P\in\mathfrak{P}.\ \treet{t}\in P \implies \treet{s}\in P
\end{gather*}
which means that $(\ctxpre)\cap(\ctxpre)^{\textit{op}}$ is biadequate.

Therefore $(\ctxpre)\cap(\ctxpre)^{\textit{op}}\in\mathbb{CAS}$ so $(\ctxpre)\cap(\ctxpre)^{\textit{op}}\subseteq(\ctxeq)$.
\end{proof}

\begin{reptheorem}{Thm_bisim_is_ctx_equiv}
Consider a decomposable set of Scott-open observations $\mathfrak{P}$ that is consistent. Then:
\begin{enumerate}
\item The open extension of applicative $\mathfrak{P}$-similarity, $\precsim^\circ$, coincides with the contextual preorder, $\ctxpre$.
\item The open extension of applicative $\mathfrak{P}$-bisimilarity, $\sim^\circ$, coincides with contextual equivalence, $\ctxeq$.
\end{enumerate}
\end{reptheorem}
\begin{proof}
\textbf{We first show $(\precsim^\circ)=(\ctxpre)$}. We have shown in Section \ref{Sec_ctx_is_bisim} that $\precsim^\circ$ is included in $\ctxpre$.

Now we need to show $(\ctxpre)\subseteq(\precsim^\circ)$. We first show that $\ctxpre$ restricted to closed terms is included in $\precsim$, then extend this to open terms. To do this, we show $\ctxpre$ restricted to closed terms is a simulation by checking it satisfies the four conditions in the definition of simulation.

\begin{enumerate}
\item Assume $\vdash v\mathrel{(\ctxpre)^\mathfrak{v}_{\mathtt{unit}}} u$. The only closed value of type $\mathtt{unit}$ is $\star$ so $v=u=\star$ as required.

\item Assume $\vdash v\mathrel{(\ctxpre)^\mathfrak{v}_{\mathtt{nat}}} u$.  This is shown in Section \ref{Sec_ctx_is_bisim}.

\item Assume $\vdash s\mathrel{(\ctxpre)^\mathfrak{c}} t$. Because $\ctxpre$ is adequate (Proposition \ref{Prop_ctxpre_comp}) we have the desired result:
\begin{equation*}
\forall P\in\mathfrak{P}.\ \treet{s}\in P \implies \treet{t}\in P.
\end{equation*}

\item Assume $\vdash v\mathrel{(\ctxpre)^\mathfrak{v}_{\neg(A_1,\ldots,A_n)}} u$. Consider arbitrary values $\vdash w_1:A_1,\ldots,\vdash w_n:A_n$. Because $\ctxpre$ is a preorder it is reflexive so $w_i\mathrel{(\ctxpre)^\mathfrak{v}_{A_i}}w_i$ for each $i$. We know $\ctxpre$ is compatible so we obtain:
\begin{equation*}
v (w_1,\ldots,w_n)\ (\ctxpre)^\mathfrak{c}\ u (w_1,\ldots,w_n)
\end{equation*}
as required.
\end{enumerate}

So we have established $(\ctxpre)\subseteq(\precsim)$ for closed terms. Now we need to prove $(\ctxpre)\subseteq(\precsim^\circ)$ in general. To do this, we consider computations and each type of value separately. The cases for natural numbers and computations are shown in Section \ref{Sec_ctx_is_bisim} so we only show the cases for type $\mathtt{unit}$ and function values.

\paragraph{If $\protect\overrightarrow{x_j:A_j} \vdash v \mathrel{(\ctxpre)^\mathfrak{v}_{\mathtt{unit}}} w$} then each of $v$ and $w$ can either be $\star$ or a variable $x_i$. In any case, because $\star$ is the only closed value of type $\mathtt{unit}$, we know that:
\begin{equation*}
\forall (\vdash \overrightarrow{u_i:A_i}).\ v[\overrightarrow{u_i/x_i}]=w[\overrightarrow{u_i/x_i}]=\star.
\end{equation*}
This means that:
\begin{equation*}
\forall (\vdash \overrightarrow{u_i:A_i}).\ \vdash v[\overrightarrow{u_i/x_i}] \precsim^\mathfrak{v}_{\mathtt{unit}} w[\overrightarrow{u_i/x_i}]
\end{equation*}
so by the definition of open extension we have that $\overrightarrow{x_j:A_j} \vdash v \precsim^{\circ,\mathfrak{v}}_{\mathtt{unit}} w$ as required.

\paragraph{If $\protect\overrightarrow{x_j:A_j} \vdash v \mathrel{(\ctxpre)^\mathfrak{v}_{\neg(B_1,\ldots,B_n)}} w$} then by compatibility and reflexivity of $\ctxpre$ we know that
\begin{equation*}
\forall (\vdash \overrightarrow{u_i:B_i}).\ \overrightarrow{x_j:A_j} \vdash v(\overrightarrow{u_i})\ (\ctxpre)^\mathfrak{c}\ w(\overrightarrow{u_i})
\end{equation*}
and again by compatibility:
\begin{equation*}
\forall (\vdash \overrightarrow{u_i:B_i}).\ \vdash \lbd{(\overrightarrow{x_j})}{(\overrightarrow{A_j})}{v(\overrightarrow{u_i})}\ (\ctxpre)^\mathfrak{v}_{\neg(\overrightarrow{A_j})}\ \lbd{(\overrightarrow{x_j})}{(\overrightarrow{A_j})}{w(\overrightarrow{u_i})}.
\end{equation*}
Using $(\ctxpre)\subseteq(\precsim)$ we can deduce:
\begin{equation*}
\forall (\vdash \overrightarrow{u_i:B_i}).\ \vdash \lbd{(\overrightarrow{x_j})}{(\overrightarrow{A_j})}{v(\overrightarrow{u_i})}\ \precsim^\mathfrak{v}_{\neg(\overrightarrow{A_j})}\ \lbd{(\overrightarrow{x_j})}{(\overrightarrow{A_j})}{w(\overrightarrow{u_i})}
\end{equation*}
so by the definition of $\precsim$ we know that
\begin{equation*}
\forall (\vdash\overrightarrow{p_j:A_j}).\ \forall (\vdash \overrightarrow{u_i:B_i}).\ \vdash (\lbd{(\overrightarrow{x_j})}{(\overrightarrow{A_j})}{v(\overrightarrow{u_i})})(\overrightarrow{p_j})\ \precsim^\mathfrak{c}\ (\lbd{(\overrightarrow{x_j})}{(\overrightarrow{A_j})}{w(\overrightarrow{u_i})})(\overrightarrow{p_j}).
\end{equation*}
From Lemma \ref{Lem_red_pres_sim} we know reduction preserves similarity so:
\begin{equation*}
\forall (\vdash\overrightarrow{p_j:A_j}).\ \forall (\vdash \overrightarrow{u_i:B_i}).\ \vdash v[\overrightarrow{p_j/x_j}]\ (\overrightarrow{u_i})\ \precsim^\mathfrak{c}\ w[\overrightarrow{p_j/x_j}]\ (\overrightarrow{u_i}).
\end{equation*}
By the definition of similarity for function values this means that:
\begin{equation*}
\forall (\vdash\overrightarrow{p_j:A_j}).\ \vdash v[\overrightarrow{p_j/x_j}]\ \precsim^\mathfrak{v}_{\neg(\overrightarrow{B_i})}\ w[\overrightarrow{p_j/x_j}]
\end{equation*}
so by the definition of open extension
\begin{equation*}
\overrightarrow{x_j:A_j}\vdash v\ \precsim^{\circ,\mathfrak{v}}_{\neg(\overrightarrow{B_i})}\ w
\end{equation*}
as required.

\paragraph{Now show that $(\sim^\circ)=(\ctxeq)$.} We have shown that $(\precsim^\circ)=(\ctxpre)$ so:
\begin{equation*}
(\precsim^{\textit{op}})^\circ = (\precsim^\circ)^{\textit{op}} = (\ctxpre)^{\textit{op}}
\end{equation*}
because taking the converse of a relation and its open extension are commutative operations. From Proposition \ref{Prop_ctxeq_is_ctxpreop} we know that:
\begin{equation*}
(\ctxeq)=(\ctxpre)\cap(\ctxpre)^{\textit{op}} = (\precsim^\circ)\cap(\precsim^{\textit{op}})^\circ.
\end{equation*}

We can show that $(\precsim^\circ)\cap(\precsim^{\textit{op}})^\circ = ((\precsim)\cap(\precsim^\textit{op}))^\circ$. The equation:
\begin{equation*}
\overrightarrow{x_i:A_i}\vdash s\precsim^\circ t \quad\text{and}\quad \overrightarrow{x_i:A_i}\vdash s(\precsim^\textit{op})^\circ t
\end{equation*}
is equivalent to
\begin{equation*}
\forall (\vdash\overrightarrow{u_i:A_i}).\ \vdash s[\overrightarrow{u_i/x_i}] \precsim t[\overrightarrow{u_i/x_i}] \text{ and } \vdash s[\overrightarrow{u_i/x_i}] \precsim^\textit{op} t[\overrightarrow{u_i/x_i}]
\end{equation*}
which in turn is equivalent to $(s,t)\in ((\precsim)\cap(\precsim^\textit{op}))^\circ$, as required.

Therefore, we know that:
\begin{equation*}
(\ctxeq) = ((\precsim)\cap(\precsim^\textit{op}))^\circ
\end{equation*}
and by Proposition \ref{Prop_bisim_is_sim_and_simop} we know that:
\begin{equation*}
(\ctxeq) = ((\precsim)\cap(\precsim^\textit{op}))^\circ = (\sim)^\circ
\end{equation*}
which is what we had to prove.
\end{proof}

\begin{replemma}{Lem_ctx_wctx_compat}
Contextual preorder defined with contexts, $\ctxprec$, is a compatible and adequate preorder. Hence, it is included in contextual preorder defined coinductively, $\ctxpre$.
\end{replemma}
\begin{proof}
First prove $\ctxprec$ is a preorder. Given terms $\Gamma\vdash s=t$ we know that for any context $C$, $\treet{C[s]}=\treet{C[t]}$. Therefore $\Gamma\vdash s\ \ctxprec\ t$, so $\ctxprec$ is reflexive. Because implication is transitive we can see that $\ctxprec$ is also transitive.

To prove $\ctxprec$ is adequate consider closed computations $\emptyset\vdash s\mathrel{(\ctxprec)^\mathfrak{c}}t$. Then by definition of $\ctxprec$, choosing $C^\mathfrak{c}_\mathfrak{c}=[-]^\mathfrak{c}$, we know that:
\begin{equation*}
\forall P\in\mathfrak{P}.\ \treet{s}\in P \implies \treet{t}\in P
\end{equation*}
as required.

To prove $\ctxprec$ is compatible show that is satisfies each compatibiliy rule. Rules \textsc{(comp1)}, \textsc{(comp2)}, \textsc{(comp4)} and \textsc{(comp9)} are satisfied by reflexivity.

For rule \textsc{(comp3)} assume $\Gamma,\overrightarrow{x_i:A_i}\vdash s\ (\ctxprec)^\mathfrak{c}\ t$. Consider an arbitrary context ${C'}^\mathfrak{v}_\mathfrak{c}:(\Gamma\vdash \neg(\overrightarrow{A_i}))\Rightarrow(\emptyset\vdash)$. Now consider the context:
\begin{equation*}
C^\mathfrak{c}_\mathfrak{c}= {C'}^\mathfrak{v}_\mathfrak{c}[\lbd{\overrightarrow{x_i}}{\overrightarrow{A_i}}{[-]^\mathfrak{c}}] : (\Gamma,\overrightarrow{x_i:A_i}\vdash)\Rightarrow(\emptyset\vdash).
\end{equation*}
Instantiate the assumption that $s$ and $t$ are in the contextual preorder with $C^\mathfrak{c}_\mathfrak{c}$ to deduce that:
\begin{equation*}
\forall P\in\mathfrak{P}.\ \treet{{C'}^\mathfrak{v}_\mathfrak{c}[\lbd{\overrightarrow{x_i}}{\overrightarrow{A_i}}{[s]^\mathfrak{c}}]}\in P \implies \treet{{C'}^\mathfrak{v}_\mathfrak{c}[\lbd{\overrightarrow{x_i}}{\overrightarrow{A_i}}{[t]^\mathfrak{c}}]}\in P.
\end{equation*}
Then we know $\Gamma\vdash \lbd{\overrightarrow{x_i}}{\overrightarrow{A_i}}{s}\ (\ctxprec)^\mathfrak{v}_{\neg(\overrightarrow{A_i})}\ \lbd{\overrightarrow{x_i}}{\overrightarrow{A_i}}{t}$ which is what we had to prove. Rule \textsc{(comp5)} can be proved similarly choosing:
\begin{equation*}
C^\mathfrak{v}_\mathfrak{c}={C'}^\mathfrak{v}_\mathfrak{c}[\mathtt{succ([-]^\mathfrak{v})}].
\end{equation*} 

Using the fact that $\ctxprec$ is a preorder we can apply Lemma \ref{Lem_comp_1premise_rules} to replace the four compatibility rules that we still need to prove with their single-premise versions. Proving these single-premise rules hold is analogous to proving \textsc{(comp3)}. As an example, we prove rule \textsc{(comp7l)}.

Assume $\Gamma,x:\neg(\overrightarrow{A_i})\vdash v\ (\ctxprec)^\mathfrak{v}_{\neg(\overrightarrow{A_i})}\ v'$ and $\Gamma\vdash\overrightarrow{w_i:A_i}$. Consider an arbitrary context ${C'}^\mathfrak{c}_\mathfrak{c}:(\Gamma\vdash)\Rightarrow(\emptyset\vdash)$ and the context:
\begin{equation*}
C^\mathfrak{v}_\mathfrak{c}={C'}^\mathfrak{c}_\mathfrak{c}[(\mufix{x}{[-]^\mathfrak{v}})(\overrightarrow{w_i})].
\end{equation*}
Using the context typing rules \textsc{(vv-id)}, \textsc{(vc-mul)} and Lemma \ref{Lem_ctx_inst_comp} we can deduce that:
\begin{equation*}
C^\mathfrak{v}_\mathfrak{c} : (\Gamma,x:\neg(\overrightarrow{A_i})\vdash)\Rightarrow(\emptyset\vdash).
\end{equation*}
Therefore we can apply the assumption about $v$ and $v'$ to get:
\begin{equation*}
\forall P\in\mathfrak{P}.\ \treet{{C'}^\mathfrak{c}_\mathfrak{c}[(\mufix{x}{[v]^\mathfrak{v}})(\overrightarrow{w_i})]}\in P \implies \treet{{C'}^\mathfrak{c}_\mathfrak{c}[(\mufix{x}{[v']^\mathfrak{v}})(\overrightarrow{w_i})]}\in P
\end{equation*}
which means $\Gamma\vdash (\mufix{x}{v})(\overrightarrow{w_i})\ (\ctxprec)^\mathfrak{c}\ (\mufix{x}{v'})(\overrightarrow{w_i})$ as required.

We know that $\ctxpre$ is the greatest compatible and adequate relation (Lemma \ref{Prop_ctxpre_comp}) and we have shown $\ctxprec$ is compatible and adequate. Therefore $(\ctxprec)\subseteq(\ctxpre)$.
\end{proof}

\begin{replemma}{Lem_ctx_pres_ctx_coin}
Contextual preorder defined coinductively, $\ctxpre$, is closed under program contexts, that is:
\begin{enumerate}
\item If $\Gamma'\vdash v\ (\ctxpre)^\mathfrak{v}_A\ u$ and $C^\mathfrak{v}_\mathfrak{v}:(\Gamma'\vdash A)\Rightarrow(\Gamma\vdash B)$ then $\Gamma\vdash C^\mathfrak{v}_\mathfrak{v}[v]\ (\ctxpre)^\mathfrak{v}_B\ C^\mathfrak{v}_\mathfrak{v}[u]$.

And the analogous statement for $C^\mathfrak{v}_\mathfrak{c}$.

\item If $\Gamma'\vdash s\ (\ctxpre)^\mathfrak{c}\ t$ and $C^\mathfrak{c}_\mathfrak{c}:(\Gamma'\vdash)\Rightarrow(\Gamma\vdash)$ then $\Gamma\vdash C^\mathfrak{c}_\mathfrak{c}[s]\ (\ctxpre)^\mathfrak{c}\ C^\mathfrak{c}_\mathfrak{c}[t]$.

And the analogous statement for $C^\mathfrak{c}_\mathfrak{v}$.
\end{enumerate}
\end{replemma}
\begin{proof}
By induction on the typing derivation of $C$. The two base cases \textsc{(vv-id)} and \textsc{(cc-id)} follow from the assumptions $\Gamma'\vdash v\ (\ctxpre)^\mathfrak{v}_A\ u$ and $\Gamma'\vdash s\ (\ctxpre)^\mathfrak{c}\ t$ respectively.

In the case \textsc{(vv-lbd)}:
\begin{gather*}
C^\mathfrak{v}_\mathfrak{v}=\lbd{\overrightarrow{x_i}}{\overrightarrow{A_i}}{{C'}^\mathfrak{v}_\mathfrak{c}}:(\Gamma'\vdash A)\Rightarrow(\Gamma\vdash \neg(\overrightarrow{A_i}))	\\
\text{where } {C'}^\mathfrak{v}_\mathfrak{c}:(\Gamma'\vdash A)\Rightarrow(\Gamma,x:\neg(\overrightarrow{A_i})\vdash).
\end{gather*} 
Assume $\Gamma'\vdash v\ (\ctxpre)^\mathfrak{v}_A\ u$. Then by induction hypothesis for ${C'}^\mathfrak{v}_\mathfrak{c}$ we know that:
\begin{equation*}
\Gamma,x:\neg(\overrightarrow{A_i})\vdash {C'}^\mathfrak{v}_\mathfrak{c}[v]\ (\ctxpre)^\mathfrak{c}\ {C'}^\mathfrak{v}_\mathfrak{c}[u]
\end{equation*}
so by compatibility of $\ctxpre$, rule \textsc{(comp3)} we can deduce that:
\begin{equation*}
\Gamma\vdash \lbd{\overrightarrow{x_i}}{\overrightarrow{A_i}}{{C'}^\mathfrak{v}_\mathfrak{c}[v]}\ (\ctxpre)^\mathfrak{c}_{\neg(\overrightarrow{A_i})}\ \lbd{\overrightarrow{x_i}}{\overrightarrow{A_i}}{{C'}^\mathfrak{v}_\mathfrak{c}[u]}
\end{equation*}
which is what we had to prove. Cases \textsc{(cv-lbd)} and \textsc{(vv-nat)} are analogous.

For the remaining cases, we use the fact that $\ctxpre$ is a preorder, so the single-premise compatibility rules from Lemma \ref{Lem_comp_1premise_rules} hold. The proof then proceeds similarly to the proof of \textsc{(vv-lbd)}: apply the induction hypothesis then use one of the single-premise compatibility rules.
\end{proof}

\end{document}
